\def\BibTeX{{\rm B\kern-.05em{\sc i\kern-.025em b}\kern-.08em
    T\kern-.1667em\lower.7ex\hbox{E}\kern-.125emX}}
\newtheorem{claim}{Claim}[section] 
\newtheorem{lem}{Lemma}[section]
\newtheorem{thm}{Theorem}
\newtheorem{definition}[claim]{Definition}
\newcommand{\mc}{\mathcal}
\newcommand{\beq}{\begin{equation}}
\newcommand{\eeq}{\end{equation}}
\DeclareMathOperator*{\argmin}{arg\,min}
\newcommand{\abs}[1]{\left\lvert #1 \right\rvert}
\newcommand{\prob}[1]{\mathbb{P}\left\lbrace #1\right\rbrace}
\def\trace{{\sf trace}}
\def\bigo{\mathcal{O}}
\newcommand{\bigmid}{\,\middle\vert\,}
\DeclarePairedDelimiter{\ceil}{\lceil}{\rceil} 
\def\E{\mathbb{E}} 
\def\supp{{\sf supp}}
\def\i{\textrm{i}} 
\def\normal{{\sf N}}
\def\reals{{\mathbb R}}
\def\complex{{\mathbb C}}
\def\nonNegInt{\mathbb{N}_0} 
\def\tv{\tilde{v}}
\def\bv{\boldsymbol{v}}
\def\tbv{\tilde{\boldsymbol{v}}}
\def\ba{\boldsymbol{a}}
\def\bu{\boldsymbol{u}}
\def\bw{\boldsymbol{w}}
\def\tbw{\tilde{\boldsymbol{w}}}
\def\be{\boldsymbol{e}}
\def\tbu{\tilde{\boldsymbol{u}}}
\def\tu{\tilde{u}}
\def\bx{\boldsymbol{x}}
\def\hbx{\hat{\boldsymbol{x}}}
\def\hx{\hat{x}}
\def\tk{\tilde{k}}
\def\tn{\tilde{n}}
\def\by{\boldsymbol{y}}
\def\bz{\boldsymbol{z}}
\def\bX{\boldsymbol{X}}
\def\hbX{\hat{\boldsymbol{X}}}
\def\hX{\hat{X}}
\def\bW{\boldsymbol{W}}
\def\bA{\boldsymbol{A}}
\def\bY{\boldsymbol{Y}}
\def\bB{\boldsymbol{B}}
\def\bH{\boldsymbol{H}}
\def\bS{\boldsymbol{S}}
\def\bs{\boldsymbol{s}}
\def\bH{\boldsymbol{H}}
\def\bh{\boldsymbol{h}}
\def\bSigma{\boldsymbol{\Sigma}}
\def\bI{\boldsymbol{I}}
\def\bPsi{\boldsymbol{\Psi}}
\def\barV{\bar{V}}
\def\tZ{\tilde{Z}}
\def\bzero{\boldsymbol{0}}
\newcommand{\Ut}[1]{ \mc{U}(#1) }
\newcommand{\Vt}[1]{ \mc{V}(#1) }
\newcommand{\uSt}[1]{ S^{(u, #1)} }
\newcommand{\vSt}[1]{ S^{(v, #1)} }
\def\vtS{S^{v}}
\def\hk{\hat{k}}
\newcommand{\pois}[1]{\text{Pois}\left(#1\right)}
\newcommand{\bin}[2]{\text{Bin}\left(#1, #2\right)}
\newcommand{\bern}[1]{\text{Bern}\left(#1\right)}
\newcommand\whitesqr{\begin{tiny} \square\end{tiny}}
\newcommand\blacksqr{\crule[black]{0.25cm}{0.25cm}}
\newcommand\pinksqr{\crule[red!50!white!100]{0.25cm}{0.25cm}} 
\newcommand\bluesqr{\crule[blue!50!white!100]{0.25cm}{0.25cm}}
\newcommand\cyansqr{\crule[cyan!50!white!100]{0.25cm}{0.25cm}}
\newcommand\pinkbluesqr{\crule[red!50!white!100]{0.125cm}{0.25cm} \crule[blue!50!white!100]{0.125cm}{0.25cm}}
\newcommand\pinkbluecyansqr{\crule[red!50!white!100]{0.08cm}{0.25cm} \crule[blue!50!white!100]{0.08cm}{0.25cm} \crule[cyan!50!white!100]{0.08cm}{0.25cm} }
\def\nMeas{m} 
\def\nBins{R} 
\def\nlnodes{t} 
\def\deltaRangeInLine{\delta\in (0, 1)}
\def\NDiag{N_D} 
\def\usfB{ \sfB^{u} }
\def\vsfB{ \sfB^{v} }
\def\utZ{ Z^u}
\def\vtZ{ Z^v}
\def\sfA{{\sf{A}}}
\def\alphaUB{\alpha^*} 
\def\sfB{{\sf{B}}}
\def\Balti{\sfB_{\text{alt},i}}
\def\nConst{\textcolor{black}{17}} 
\newcommand\crule[3][black]{\textcolor{#1}{\rule{#2}{#3}}} 
\tikzstyle{cnode}=[rectangle, draw, fill=blue!20, minimum size=.5cm, scale=0.8]
\tikzstyle{vnode}=[circle, draw, fill=blue!20, minimum size=.5cm, scale=0.8]
\tikzstyle{cnodeFaded}=[rectangle, draw, fill=blue!20, opacity=0.2, minimum size=.5cm, scale=0.8]
\tikzstyle{vnodeFaded}=[circle, draw, fill=blue!20, opacity=0.2, minimum size=.5cm, scale=0.8]
\tikzstyle{cnodex}=[rectangle, draw, fill=red!20, minimum size=.5cm, scale=0.7]
\tikzstyle{vnodex}=[circle, draw, fill=red!20, minimum size=.5cm, scale=0.8]
\tikzstyle{cnodexFaded}=[rectangle, draw, fill=red!20, opacity=0.2,  minimum size=.5cm,  scale=0.7]
\tikzstyle{vnodexFaded}=[circle, draw, fill=red!20, opacity=0.2, minimum size=.5cm, scale=0.8]
\tikzstyle{line} = [draw, -latex']
\tikzstyle{arrow} = [draw, -latex']
\newcommand{\edit}[1]{{\textcolor{black}{#1}}}
\begin{document}

\title{Sketching sparse low-rank matrices with near-optimal \\ sample- and time-complexity \edit{using message passing}}
\author{Xiaoqi Liu\thanks{Xiaoqi Liu was supported in part by a Schlumberger Cambridge International Scholarship funded by the Cambridge Trust. X. Liu and R. Venkataramanan are with the Department of Engineering, University of Cambridge. Email: \texttt{\{xl394, rv285\}@cam.ac.uk}. This paper was presented in part at the 2022 IEEE International Symposium on Information Theory.}\;\;\;and\;\;\;Ramji Venkataramanan}

\maketitle

\begin{abstract}
We consider the problem of recovering an $n_1 \times n_2$ low-rank matrix with $k$-sparse singular vectors from a small number of linear measurements (sketch). We propose a sketching scheme and an algorithm that can recover the singular vectors with high probability, with a sample complexity and running time that both depend only on $k$ and not on the ambient dimensions $n_1$ and $n_2$. Our sketching operator, based on a scheme for compressed sensing by Li et al. \cite{li2019sublinear} and Bakshi et al. \cite{bakshi2016SHOFA}, uses a combination of a sparse parity check matrix and a partial DFT matrix. Our main contribution is the design and analysis of a two-stage iterative algorithm which recovers the singular vectors by exploiting the simultaneously sparse and low-rank structure of the matrix. We derive a nonasymptotic  bound on the probability of exact recovery, \edit{which holds for any $n_1\times n_2 $ sparse, low-rank matrix}.  We also  show how the scheme can be adapted to tackle matrices that are approximately sparse and low-rank. The theoretical results are validated by numerical simulations \edit{and comparisons with existing schemes that use convex programming for recovery.}
\end{abstract}


\section{Introduction} \label{sec:intro}

We consider the problem of \emph{sketching} a large data matrix which is low-rank with sparse singular vectors.  A sketch is a compressed representation of the matrix, obtained via linear measurements of the matrix entries  \cite{woodruff2014sketching,tropp2017practical}. The goal is to design a sketching scheme so that the singular vectors can be efficiently recovered from a sketch of a much smaller dimension.  

Given a  data matrix $\bX \in \reals^{n_1  \times n_2}$, its sketch \edit{$\by \in \complex^{m}$} is obtained as $\by = \mc{A}(\bX)$, where $\mc{A}$ is a linear operator defined via $\nMeas$  prespecified matrices $\bA_1, \ldots, \bA_\nMeas \in \complex^{{n_1} \times {n_2}}$. The entries of the sketch are:
\beq
\label{eq:general_matrix_recovery_model}
y_i = \trace(\bA_i^T \bX) = \sum_{j=1}^{n_1} \sum_{\ell=1}^{n_2}  (A_i)_{j,\ell} X_{j,\ell}, \quad i =1, \ldots, \nMeas.
\eeq
We consider   data matrices $\bX$ of the form
 \beq
 \bX = \bX_0 + \bW, 
 \label{eq:noisy_observation_model}
 \eeq 
 with
 \begin{equation}\label{eq:def_X0}
        \bX_0 = \sum_{i=1}^r \lambda_i \bv_i \bv_i^{T}
    \ (\text{symmetric case}), \ \text{ or } \  \bX_0 = \sum_{i=1}^r \sigma_i \bu_i \bv_i^{T}  \  (\text{non-symmetric case}).
 \end{equation}
 In the symmetric case, $\lambda_1 \ge \lambda_2 \ge \ldots \ge \lambda_r$ are the eigenvalues, $\{ \bv_i \}_{i \in [r]}$  are the corresponding unit-norm eigenvectors. In the non-symmetric case, $\sigma_1 \ge \sigma_2 \ge \ldots \ge \sigma_r >0$ are the singular values, and $\{ \bu_i, \bv_i \}_{i \in [r]}$ the corresponding unit-norm singular vectors. In both cases, $\bW$ is a noise matrix (assumed to be symmetric when $\bX_0$ is symmetric). 
 Letting $n := \min (n_1, n_2)$,  we consider the regime where $n$ grows and the rank $r$ of $\bX_0$ is a constant that does not scale with $n$.  In the rest of the paper, we refer to the $\{ \bv_i \}$ in the symmetric case and the $\{ \bu_i, \bv_i \}$ in the non-symmetric case as \emph{signal vectors}.

\paragraph{Sparsity constraint} 
\edit{We assume that the signal vectors are $k$-sparse, i.e., they each have at most $k$ nonzero entries, where  $k  \ll n$, and both $k$ and $n$ are large. This is a regime that is particularly relevant in  applications such as genomics and neuroscience, where
$n$ is very large because of the vast number of features and samples, but the number of  factors $k$ determining the structure of the data may be much smaller
\cite{ma2014learning, ma2014adaptive}.
}



The goal is to design a sketching  operator $\mc{A}$ and an efficient algorithm to recover the  signal vectors from the sketch $\by$. In particular, we want the sample complexity and the running time of the  algorithm to both depend only on the sparsity level $k$, and not on the ambient dimension $n$. (The sample complexity is the sketch size $\nMeas$.) We note that the signal vectors can be recovered only up to a sign ambiguity as  the pair $(\bu_i, \bv_i)$ cannot be distinguished from $(-\bu_i, -\bv_i)$. Furthermore, if there are repeated eigenvalues or singular values,  the corresponding signal vectors are recovered up to any rotations within the subspace they span.

 Simultaneously sparse and low-rank matrices  arise in applications such as sparse PCA \cite{zou2006sparse,hastie2015statistical}, sparse bilinear inverse problems \cite{lee2018near},  community detection   \cite{girvan2002community} and biclustering \cite{jiang2004cluster, lee2010biclustering}.  In particular, the adjacency matrices of graphs with community structures such as social networks and protein interaction datasets are symmetric, sparse and block-diagonal in appropriate bases  \cite{richard2012estimation}. In biclustering, the sample-variable associations in high dimensional data, like gene expression microarray datasets, often correspond to sparse non-overlapping submatrices (or biclusters) in the data matrix.

\subsection{Main contributions}

Our sketching operator is defined via a combination of a sparse parity check matrix and a partial DFT matrix. This operator was proposed for compressed sensing of (vector) signals in \cite{li2019sublinear,bakshi2016SHOFA}. \edit{The sparsity and  DFT structure of the operator ensure that the cost of computing the sketch is low.} Our main contribution is the design and analysis of a two-stage iterative algorithm to recover the signal vectors from the sketch $\by$. The two stages use the sparsity and low-rank properties of the signal matrix $\bX_0$ to iteratively identify and solve equations with a single unknown.

\paragraph{Noiseless setting} The sketching scheme and the recovery algorithm are described in Section \ref{sec:scheme_and_algorithm} for the noiseless case, that is, when $\bW = 0$  in \eqref{eq:noisy_observation_model}.  In Section \ref{sec:main_results}, we  provide theoretical guarantees on the performance of the scheme.  
For sufficiently large $k$, Theorem \ref{thm:main_result_symm} shows that the scheme has the following features when $\bX_0$ is symmetric in \eqref{eq:def_X0}:

\begin{enumerate}[label=\arabic*)]
\item \label{enum:main_contribution_disjoint_supports} When the supports of the signal vectors  $\{\bv_i\}$ are disjoint, fix any $\deltaRangeInLine$, with probability at least  $1 \, -  2r\exp(-\frac{1}{30} k^{1-\delta})$,  the algorithm recovers the signal vectors with sample complexity $ \nMeas = 3 rk^2/ (\delta \ln k)$ and running time $\bigo(rk^2/\ln k)$.  \edit{Though this result does not assume any randomness  of the signal vectors}, we note that with $k = \bigo(n^{\nu})$  for $\nu \in (0, \frac{1}{2})$, and the support locations of each $\bv_i$ uniformly random, the supports of the $\{\bv_i\}$ will be disjoint with probability at least $ 1 -3r^2k^2/n$ .

\item For the general case, where the supports of the  signal vectors are not disjoint, with probability at least $1-\bigo(k^{-2})$, the algorithm recovers the  signal vectors  with sample complexity $\nMeas = 2rk^2$ and running time $\bigo\left((rk)^3\right)$. 
\end{enumerate}

Theorem \ref{thm:nonsym_main_results} provides similar guarantees when $\bX_0$ is non-symmetric as defined in \eqref{eq:def_X0}. The numerical simulations in Section \ref{subsec:numerical_results} validate the theoretical guarantees. \edit{The probabilistic statements in Theorems \ref{thm:main_result_symm} and \ref{thm:nonsym_main_results} are with respect to the randomness in the sketching operator $\mc{A}$, and hold for any rank-$r$ matrix with $k$-sparse singular vectors.}


\emph{Near-optimal sample complexity}:  In the symmetric case, with $r$ orthonormal eigenvectors $\{\bv_i\}$, each $k$-sparse, even if the locations of the nonzeros in $\{\bv_i\}$ are known, the degrees of freedom in the eigendecomposition of $\bX_0$ is  close to $rk$. 
In the non-symmetric case,  the degrees of freedom in the singular value decomposition of $\bX_0$ is close to $2r(k-\frac{1}{2})$.
 Therefore the sample complexity of the proposed scheme is larger than the degrees of freedom by a factor of at most $O(k)$. Moreover, neither the sample complexity or the running time depends on the ambient dimension $n$. 

\paragraph{Noisy setting} The sketching scheme and recovery algorithm are extended to the noisy case in Section \ref{sec:noisy_description} (i.e., when $\bW\neq 0$ in \eqref{eq:noisy_observation_model}). Here, the sample complexity is \edit{$\bigo(r k^2 \log (n/k))$}, a factor of \edit{$\bigo(\log (n/k))$} greater than that in the noiseless case. This is because additional sketches are needed to reliably identify the locations of nonzero matrix entries in the presence of noise. This also increases the running time of the  recovery algorithm to 
$\bigo(\max\{n^2\log(n/k), (rk)^3\})$. 
We do not provide theoretical performance guarantees in  the noisy setting, but show via numerical simulations that the scheme is robust to moderate levels of noise. 

 \subsection{Related work}

 The sketching operator we use was  proposed in \cite{bakshi2016SHOFA,li2019sublinear} for compressed sensing of vectors.  Variations of the operator have  been used for numerous applications including  sparse DFT \cite{pawar2018FFAST,janakiraman2015exploring}, sparse  Walsh-Hadamard Transform \cite{li2015SPRIGHT,scheibler2015Fast,chen2015robust}, compressive phase retrieval \cite{ShengSUPER14, pedarsani2017phaseCode}, sparse covariance estimation \cite{pedarsani2015sparse}, sparse polynomial learning and graph sketching \cite{li2015active}, and learning mixtures of sparse linear regressions \cite{yin2019learning}. 

The two-stage recovery algorithm we propose is analogous to the peeling decoder for Low Density Parity Check (LDPC) codes over an erasure channel \cite{luby2001efficient,richardson2008modern}, and the first stage of the algorithm is similar to the one used for compressed sensing in \cite{bakshi2016SHOFA,li2019sublinear}. However, a key difference from these works is that our sketching matrix has row weights that scale with the sparsity level $k$ and therefore, the existing peeling decoder analysis based on density evolution \edit{and Doob Martingales} cannot be applied. 
We characterize the performance of the algorithm  by obtaining nonasymptotic probability bounds on the number of unknown nonzero entries after each stage.  For the first stage, this is done by establishing negative association (defined in Section \ref{subsec:prelim}) between the right node degrees in the associated bipartite graph, and using  concentration inequalities for negatively associated random variables \cite{dubhashi1998balls}. \edit{To the best of our knowledge, this technique has not been used
previously in the sparse-graph codes literature.} 
For the second stage, we model the algorithm as a random graph process on another bipartite graph, and we analyze an alternative random graph process whose evolution is easier to characterize than the original one.

 
 \emph{Other related work}: Recovering low-rank matrices from linear measurements (without sparsity constraints) has been widely studied in the past decade; see \cite{davenport2016anoverview} for an overview. A key result in this area is that if  the linear measurement operator satisfies a matrix restricted isometry property, then the low-rank matrix can be recovered via nuclear-norm minimization \edit{\cite{recht2010guaranteed,candes2011tight}}. 
  For matrices that are simultaneously low-rank and sparse (with $k= o(n)$), such optimization-based approaches are highly sub-optimal with respect to sample complexity. 
  
Several authors have investigated sketching schemes for sparse matrices \cite{wimalajeewa2013recovery,dasarathy2015sketching}, and simultaneously sparse and low-rank matrices  \cite{lee2018near,oymak2015simultaneously,bahmani2016nearopt, ma2014adaptive}. Moreover, \cite{chen2015exact, bahmani2015sketching} studied the related problem of recovering a sparse low-rank covariance matrix from rank-1 sketches of a  sample covariance (the sketching matrices $\bA_1, \ldots, \bA_m$ are rank-1). The recovery algorithms in all these works 
\edit{are based on convex or non-convex optimization, and have running time  polynomial in $n$.} The sample complexity also depends at least logarithmically on $n$.  \edit{The signal models considered in the works vary, some  noise-free and others with additive  noise. We refer the reader to Table \ref{tab:noiseless} (noiseless case) and Table \ref{tab:noisy} (noisy case)
 for a summary of existing works in comparison to our work.}


\begin{table}[H]
\color{black}
\centering
\begin{adjustwidth}{}{}
\small
\begin{tabular}{||c|c|c|c|c||} 
 \hline
 Reference & Model Assumption & Algorithm & Sample ($m$) & Time \\ [0.5ex] 
 \hline\hline
 \cite[Thm. 3(a1)]{oymak2015simultaneously}  & Sparse \& LR & Convex opt.& $\Omega( rn)$  & -\\ 
 \hline
 \cite[Thm. 4]{chen2015exact}  & Sparse \& rank-$1$, symm.  & Convex opt.&  $\bigo(k^2\log n)$ & - \\
 \hline
 \cite[Thm. 1]{pedarsani2015sparse}  & Sparse only, PSD & MP &$\bigo(rk^2)$ &$\bigo(rk^2)$\\
 \hline
 \cite[Thm. 1.1]{candes2013PhaseLift} &  Rank-1 only, symm.& Convex opt.& $\bigo(n\log n)$ & -\\
 \hline
 \cite[Thm. 1]{dasarathy2015sketching}  &Sparse only & Convex opt. & $\bigo(kn \log^2 n)$ & -\\
 \hline
 Our Thm. 1 \& 2 part 1) &Sparse \& LR, disjoint supp. & MP & $\bigo(rk^2/\log k)$ & $\bigo(rk^2/\log k)$\\
 \hline
 Our Thm. 1 \& 2 part 2) & Sparse \& LR, overlapping supp. &MP  & $\bigo(rk^2)$ & $\bigo((rk)^3)$ \\ 
 \hline
\end{tabular}
\end{adjustwidth}
\caption{\small\edit{Comparison of our sketching scheme with existing works on low-rank (LR) and/or sparse matrix recovery. The signal matrix  $\bX \in  \reals^{n \times n}$ has    rank $r$, with $k$-sparse singular vectors.  ``MP" denotes message passing, and the dashes indicate unreported results. The running time of convex optimization programs is typically $\bigo(\textrm{poly} (n))$.}}\label{tab:noiseless}
\end{table}

\vspace{-0.3cm}

\begin{table}[H]
\centering
\color{black}
\begin{adjustwidth}{-0.5cm}{}
\small
\begin{tabular}{||c|c|c|c|c||} 
 \hline
 Reference & Model Assumption& Algorithm & Sample ($m$) & Time  \\ [0.5ex] 
 \hline\hline
 \cite[Thm. 1]{bahmani2015sketching} \cite[Thm. 3]{bahmani2016nearopt}& Sparse \& LR, PSD &Convex opt. & $\bigo(rk\log (n/k))$ & -\\
  \hline
 \cite[Thm. 6]{lee2018near}   & Sparse \& LR &  Non-convex opt. & $\bigo(rk \log(n/k))$ & $\bigo(rkn^2)$  \\
 \hline
 \cite[Thm. 4]{chen2015exact}  & Sparse \& rank-$1$, symm.   & Convex opt.&  $\bigo(n^2\log n)$ & - \\
 \hline
 \cite[Thm. 3]{pedarsani2015sparse}  & Sparse only, PSD & MP &$\bigo(rk^2 \log^2 n)$& - \\
 \hline
 \cite[Thm. 2.3]{candes2011tight} & LR only  & Convex opt.&  $\bigo(rn)$& -\\
 \hline 
 \cite[Sec. 2]{wimalajeewa2013recovery}   & Sparse only (same as \cite{dasarathy2015sketching}) & Convex opt.& -& $\bigo(n^3)$\\
 \hline
 Our scheme& Sparse \& LR & MP& $\bigo(rk^2 \log (n/k))$ & $\bigo(n^2 \log(n/k))$
\\[1ex] 
\hline
\end{tabular}
\end{adjustwidth}
\caption{\small \edit{Comparison of our scheme with existing works on low-rank (LR) and/or sparse matrix recovery in the noisy case, i.e., the signal matrix is $\bX=\bX_0+\bW$ where $\bX_0 \in \reals^{n \times n}$ has rank $r$, with $k$-sparse singular vectors. 
}}
\label{tab:noisy}
\end{table}


\edit{We remark that our scheme is similar in spirit to the algorithm in \cite{bahmani2015sketching, bahmani2016nearopt}, which exploits the sparsity and low-rank structure  in different stages. The  scheme in \cite{bahmani2015sketching, bahmani2016nearopt} uses a nested linear sketching operator, with one part being a restricted isometry for low-rank matrices and the other a restricted isometry for sparse matrices. The recovery algorithm, based on convex programming, correspondingly has two stages, one for low-rank estimation and the other sparse estimation. The sample complexity of the scheme is  $\bigo(rk \log(n/k))$   which is similar to the $\bigo(rk^2\log(n/k))$ required by our scheme in the noisy setting. However the  recovery algorithm is less robust to noise and  significantly slower than ours, as evidenced by the numerical experiments in Section \ref{sec:noisy_comparison}.}

The problem of estimating sparse eigenvectors has also been widely studied in the context of sparse PCA \cite{zou2006sparse, johnstone2009consistency,AminiWain09,BirnbaumSPCA13}. In these works, the principal eigenvector of the population covariance matrix is assumed to be sparse. The goal is to recover the principal eigenvector from the sample covariance matrix. This is distinct from the sketching problem considered here.

Some \edit{bilinear} inverse problems such as phase retrieval \cite{candes2013PhaseLift} and blind deconvolution \cite{ahmed2014blind} can be \edit{linearized by lifting. That is, they can be reformulated into  problems of} recovering a rank-1 signal matrix from linear measurements. \edit{This is equivalent to  low-rank matrix recovery in the rank-1 case, but is different for higher rank. Moreover,} the structure of the linear operator in these problems is constrained by the applications, whereas  the operator in our setting can be designed flexibly.

\subsection{Notation}
We write $[n]$ for the set of integers $\left\lbrace1, 2, \dots, n\right\rbrace$, $\nonNegInt$ for $\{ 0, 1, 2, \ldots \}$, and use $\i$ to denote $\sqrt{-1}$.
For a length-$n$ vector $\bu$, we denote the set of  nonzero locations by  $\supp(\bu) : = \{\ell\in [n]: u[\ell]\neq 0 \}$.  The notation $x_n =  o(n^c)$ is used to denote a positive number $x_n$ such that $\frac{x_n}{n^c} \to 0$ as $n \to \infty$.  

The Bernoulli distribution with parameter $p\in[0,1]$ is denoted by $\bern{p}$, and the Binomial distribution with parameters $n \in  \mathbb {N} $ and $p \in[0,1]$ is denoted by  $\bin{n}{p}$.  $\pois{\lambda}$  denotes a Poisson distribution with parameter $\lambda>0$.

\section{Baseline sketching scheme and recovery algorithm} \label{sec:scheme_and_algorithm}
We first focus on  the noiseless symmetric case in Sections \ref{subsec:sketching_scheme}--\ref{subsec:recovery_alg_rankr}, and then extend the scheme to non-symmetric matrices in Section \ref{subsec:nonsymm}. The noisy setting is discussed in Section \ref{sec:noisy_description}.
\subsection{Sketching scheme}
\label{subsec:sketching_scheme}
Consider the noiseless symmetric sparse, low-rank matrix $\bX = \sum_{i=1}^r \lambda_i \bv_i\bv_i^T$ in \eqref{eq:noisy_observation_model}. 
Let $\tn = \binom{n}{2} +n$, and let  $\bx \in \reals^{\tn}$  be the vectorized upper-triangular part of $\bX$. 
The  sketch is $\by = \bB \bx$, with  the  sketching matrix $\bB \in \mathbb{C}^{2\nBins \times \tn}$ described below.  The sample complexity is $\nMeas= 2\nBins$ with the exact value of $\nBins = \bigo(rk^2/\ln k) $  given in Theorem \ref{thm:main_result_symm}.
\edit{We emphasize that vectorizing the signal matrix is just a way of explaining the scheme that makes the notation and analysis in the sequel cleaner.}

The sketching matrix $\bB \in\mathbb{C}^{2\nBins\times \tn}$ is constructed as in \cite{li2019sublinear} by taking the column-wise Kronecker product of two matrices: a sparse  parity check matrix
\beq 
\bH :=\begin{bmatrix}
 \bh_1  & \bh_2 &\dots & \bh_{\tn}
\end{bmatrix} \in \{0,1\}^{\nBins\times \tn}, 
\eeq 
and a matrix $\bS$ consisting of the first two rows of an $\tn$-point DFT matrix:
\beq
\bS := \begin{bmatrix}
1&1&1& \ldots &1  \\
1&W& W^2& \ldots & W^{\tn-1} 
\end{bmatrix}\quad \text{where }W=\exp\left( \frac{2\pi \i }{\tn}\right). 
\label{eq:two_row_DFT}
\eeq
Then,
\beq\label{eq:def_B_matrix_as_tensor_product_of_H_and_S}
\small{\bB : =\begin{bmatrix}
\bh_1\otimes 
\begin{bmatrix}
1\\ 1
\end{bmatrix} & 
\bh_2\otimes 
\begin{bmatrix}
1\\ W
\end{bmatrix} & \dots & 
\bh_{\tn}\otimes 
\begin{bmatrix}
1\\ W^{\tn-1}
\end{bmatrix}
\end{bmatrix}.}
\eeq
For  column vectors  $\boldsymbol{a}, \boldsymbol{b}$ with lengths $n_a, n_b$, we recall that the Kronecker product $\boldsymbol{a} \otimes \boldsymbol{b}$  is the length-$(n_a+n_b)$ vector $[a_1\boldsymbol{b}^T,  a_2\boldsymbol{b}^T, \dots a_{n_a} \boldsymbol{b}^T]^T$.  As an example, let $\tn=6$, $\nBins=4$  and
\begin{align} 
\label{eq:H_and_S_matrix_example}
&
\bH=\begin{bmatrix}
1&1&1&0&1&1  \\
0&0&0&1&1&1 \\
1&0&0&0&0&0  \\
0&1&1&1&0&0 
\end{bmatrix}, \ \ 
\bS=\begin{bmatrix}
1&1& \ldots &1  \\
1&W&\ldots & W^5 
\end{bmatrix}.
\end{align}

Then the sketching matrix $\bB$  is
\begin{align}\label{eq:B_matrix_example}
& \bB=\begin{bmatrix}
1&1&1          &0       &1&1 \\
1&W&W^2     &0      &W^4&W^5 \\
0&0&0          &1      &1&1 \\
0&0&0          &W^3  &W^4&W^5 \\
1&0&0          &0       &0&0 \\
1&0&0          &0       &0&0  \\
0&1&1          &1       &0&0 \\
0&W&W^2     &W^3  &0&0
\end{bmatrix}.
\end{align}

We choose $\bH$ to be column-regular, with each column containing $d \ge 2$ ones at locations chosen uniformly at random; the example in \eqref{eq:H_and_S_matrix_example} uses $d=2$. The sparse  matrix $\bH$  determines which  nonzero  entries of  $\bx$ contribute to each entry of the sketch $\by$.

It is convenient to view the sketch $\by = \bB \bx \in \complex^{2\nBins}$ as having $\nBins$ pairs of entries.  We denote the $j$-th pair   
$ \begin{bmatrix}
y_{(2j-1)}, y_{(2j)}]
\end{bmatrix}^T$ by $\by_j \in \complex^2$ and call it the $j$-th sketch \emph{bin}. 
Let $\bB_{j} \in \complex^{2\times \tn}$ denote the matrix containing the $j$-th pair of rows in $\bB$, for $j\in [\nBins]$. We observe that the $j$-th bin can be expressed as 
\beq \label{eq:y_j_general_expression}
 \by_j = \bB_j \bx =\bS \bx^*_{j} ,
 \eeq 
 where the entries of $\bx^*_{j} \in \reals^{\tn} $ are  
\begin{equation}\label{eq:def_sparsified_x_vector}
 (x^*_{j})_{\ell}=\left\{
                \begin{array}{ll}
                  x_\ell & \ell\in \mc{N}(j)\\
                  0 & \ell \notin \mc{N}(j)
                \end{array}
              \right. \quad \text{for } 
              \ell \in [\tn] .
\end{equation}
Here, $\mc{N}(j) := \{\ell\in[\tn]: \,  H_{j,\ell} = 1\}$ contains the  locations of the ones in the $j$-th row of $\bH$.    In words, \eqref{eq:y_j_general_expression} shows that the bin $\by_j$ is a  linear combination of the $\bx$ entries  that correspond to the ones in the $j$-th row of $\bH$. Specifically,  $\by_j$ is the sum of these $\bx$ entries, weighted by their respective columns in $\bS$.
Note that \eqref{eq:y_j_general_expression} can be recast into two equations in the form of \eqref{eq:general_matrix_recovery_model}, one for each component in the bin $\by_j$.  Moreover, when only one nonzero $\bx$ entry contributes to $\by_j$, the DFT structure of $\bS$ enables exact recovery of that nonzero  from $\by_j$.

\subsection{Recovery algorithm for rank-1 symmetric matrices}
\label{subsec:recovery_algorithm}

For ease of exposition, we first describe the recovery algorithm for rank-1 matrices (i.e., $\bX=\lambda \bv\bv^T$),  and then explain how to extend the algorithm to tackle rank-$r$ matrices. The algorithm has two stages, which we refer to as stage A  and  stage B. 
\begin{itemize}
\item In stage A, by  exploiting the sparsity of $\bX$, the algorithm  iteratively recovers as many of the nonzero entries in $\bX$ as possible from the sketch  $\by$. This stage is similar to the compressed sensing recovery algorithm used in \cite{bakshi2016SHOFA, li2019sublinear}.  For sufficiently large $k$, stage A  recovers at least one nonzero diagonal entry and a  fraction $ k^{-\delta}$ of the nonzero above-diagonal entries in $\bX$ with high probability, for a constant $\deltaRangeInLine$.

\item In stage B, by exploiting the rank-1 structure of $\bX$, the algorithm iteratively recovers  the nonzero entries in  $\bv$  from the partially recovered matrix $\bX$. Theorem \ref{thm:main_result_symm}  shows that with a sketch of size $3k^2/\ln k$, the algorithm recovers $\bv$ by the end of stage B with high probability, for sufficiently large $k$.
\end{itemize}

We note that the recovery algorithm does not require knowledge of the sparsity $k$.

\subsubsection{Stage A of the algorithm}

Following the terminology in \cite{li2019sublinear},  we classify each of the $\nBins$ bins of $\by$ based on how many  nonzero entries of $\bX$ contribute to the pair of linear combinations defining  the bin. Specifically,  a bin  $\by_j$ is referred to as a 
\emph{zeroton, singleton},  or \emph{multiton} respectively if it involves zero, one,  or more than one nonzero entries  of $\bX$, that is, when $\abs{\supp(\bx) \cap \mc{N}(j) } =0, 1$ or $>1$ with $\mc{N}(j) $ as defined in \eqref{eq:def_sparsified_x_vector}. This is illustrated by the following example with the sketching matrix $\bB$ in \eqref{eq:B_matrix_example}:
%
\begin{align*}
\label{eq:colour_mixture_model_example}
\by\in \complex^{2\nBins} &\quad\quad\quad\quad\bB\in \complex^{2\nBins \times \tn}\quad\quad\quad \bx \in \reals^{\tn} \nonumber\\
\begin{matrix}
\by_1 \\ \\ \by_2 \\\\ \by_3 \\\\ \by_4 
\end{matrix}
\begin{bmatrix}
\pinkbluecyansqr\\
\pinkbluecyansqr\\
\cyansqr\\
\cyansqr \\
\whitesqr\\
\whitesqr\\
\pinkbluesqr\\
\pinkbluesqr
\end{bmatrix}
=&\begin{bmatrix}
 \blacksqr & \blacksqr &\blacksqr & \whitesqr & \blacksqr& \blacksqr\\
 \blacksqr & \blacksqr &\blacksqr & \whitesqr & \blacksqr & \blacksqr  \\
 \whitesqr & \whitesqr &\whitesqr & \blacksqr & \blacksqr & \blacksqr  \\
 \whitesqr & \whitesqr & \whitesqr & \blacksqr & \blacksqr &  \blacksqr \\
\blacksqr &\whitesqr & \whitesqr & \whitesqr & \whitesqr & \whitesqr  \\
\blacksqr &\whitesqr & \whitesqr & \whitesqr & \whitesqr & \whitesqr \\
\whitesqr & \blacksqr& \blacksqr& \blacksqr&\whitesqr & \whitesqr  \\
\whitesqr & \blacksqr& \blacksqr& \blacksqr&\whitesqr & \whitesqr 
\end{bmatrix}
\begin{bmatrix}
\whitesqr\\ \pinksqr \\ \bluesqr \\\whitesqr \\\whitesqr \\\cyansqr  
\end{bmatrix}
\begin{matrix}
x_1 \equiv X_{11}\\x_2 \equiv X_{12}\\x_3 \equiv X_{13}\\x_4 \equiv X_{22}\\x_5\equiv X_{23}\\x_6 \equiv X_{33}
\end{matrix}  \ .
\end{align*}
%
The white squares (\begin{small}$\square$\end{small}) represent zeros, and the black squares (\begin{small}$\blacksquare$\end{small}) represent  the DFT coefficients in $\bB$. The coloured squares in $\bx$ represent the nonzero entries. The  sketch  $\by$  consists of  $\nBins=4$  bins, of which $\by_{3}$ (\begin{small}$\square$\end{small}) is a zeroton, 
 $\by_2$ (\begin{small}$\cyansqr$\end{small}) is a singleton,
and $\by_1$ (\begin{small}$\pinkbluecyansqr$\end{small}) and $\by_4$ (\begin{small}$\pinkbluesqr$\end{small}) are multitons.

The first step is to classify each bin $\by_j$ into one of the three types. For brevity, we denote the two components of  $\by_j$ as $y_{(2j-1)}$  and $y_{(2j)}$. If both $y_{(2j-1)}$  and $y_{(2j)}$  are zero, then $\by_j$ is declared  a zeroton. This is because the DFT coefficients in $\bB$ ensure that $y_{(2j)}$ can be zero only when the $\bx$  entries involved in the linear combination are all zero, i.e., $ \supp(\bx)\cap \mc{N}(j)= \emptyset$. Next, to identify singletons,  \eqref{eq:y_j_general_expression} and \eqref{eq:def_sparsified_x_vector} indicate that a singleton takes the form 
 \beq
 \by_j= \begin{bmatrix} y_{(2j-1)}\\ y_{(2j)}\end{bmatrix} 
 =x_{\ell}\begin{bmatrix}
 1\\ W^{\ell-1}
 \end{bmatrix},
 \label{eq:singleton_form_noiseless}
 \eeq
 where $x_\ell$ is the sole nonzero entry contributing to the bin $\by_j$, i.e., $\supp (\bx)\cap \mc{N}(j) =\ell$. 
 While $y_{(2j-1)}$ directly captures the value of $x_\ell$,  the components $y_{(2j-1)}$ and $y_{(2j)}$ together capture the location $\ell$.  Therefore, to identify a singleton,  the algorithm computes 
 \begin{equation}
\hat{\ell} =\frac{\arg\left\lbrace y_{(2j)}/y_{(2j-1)} \right\rbrace}{ 2\pi/\tn} +1,\quad \hat{x}_{\hat{\ell}}=y_{(2j-1)}.\label{eq:noiseless_singleton_test}
\end{equation}
 If $\abs{y_{(2j-1)}}=\abs{y_{(2j)}}$ and the estimated index $\hat{\ell}$ takes an integer value in $\left[\tn \right]$, the algorithm declares $\by_j$ a singleton and  the $\hat{\ell}$-th entry of $\bx$ as $\hat{x}_{\hat{\ell}}$.  The algorithm declares $\by_j$ a multiton if it is found to be neither a zeroton nor a singleton.  
 
 \edit{Note that when the target signal matrix is complex-valued, the linear measurements in  a bin may be all-zero even when  the matrix entries involved in the bin are not all zeros. Therefore, one way to tackle complex signal matrices would be to measure and recover their real and imaginary parts  separately using the proposed scheme.
}

 Let $t\in \nonNegInt$ denote the iteration number and  let $\mc{S}(t)$ be the set of singletons at time $t$. At $t=0$, the algorithm  identifies the initial set of singletons $\mc{S}(0)$ among  the $\nBins$ bins. Then at each $t \ge 1$,   the algorithm picks a singleton uniformly at random from $\mc{S}(t-1)$, and recovers the support of $\bx$ underlying the singleton using the  index-value pair $(\hat{\ell}, \hat{x}_{\hat{\ell}})$ in  \eqref{eq:noiseless_singleton_test}.
  The algorithm then subtracts (or `peels off') the contribution of the $\hat{\ell}$-th entry of $\bx$  from each of the $d$ bins the entry is involved in,   re-categorizes these bins, and updates the set of singletons to include any new singletons created by the peeling of the $\hat{\ell}$-th entry. The updated set is $\mc{S}(t)$.  The algorithm continues until singletons run out, i.e., when $\mc{S}(t) =  \emptyset$. \edit{The pseudocode for stage A is provided in Algorithm  \ref{alg:nl_stage_A} below, with line 11 providing the formula for each peeling step.}

It is useful to visualize the peeling algorithm using a bipartite graph constructed from the parity check matrix $\bH \in \{0,1\}^{\nBins \times \tn}$  (which was used to define the sketching matrix $\bB$).
 As shown in Fig. \ref{subfig:pruned_graph_1A_t=0}, the $\tk=\binom{k}{2}+k$  left nodes correspond to the unknown nonzero entries in $\bX$ at $t=0$, and the $R$ right nodes correspond to the bins. Each left node connects to $d$ distinct right nodes. The left nodes  connected to each right node (bin) are those that appear in the linear constraints defining the bin. We call Fig. \ref{subfig:pruned_graph_1A_t=0} a `pruned' graph as we have removed the left nodes detected to be zeros (via zeroton bins).  As illustrated in Figs. \ref{subfig:pruned_graph_1A_t=1} and \ref{subfig:pruned_graph_1A_t=2},  the algorithm first recovers and peels off ${X}_{24}$  from the singleton $\by_7$ at $t=1$. This creates a singleton $\by_6$  from which ${X}_{44}$ is recovered and peeled off at $t=2$. The algorithm terminates at $t=2$ as there are no more singletons.

\begin{algorithm}[!t]
\color{black}
\caption{\edit{Peeling decoder in stage A}}\label{alg:nl_stage_A}
\textsc{Input:} sketch vector $\by\in \complex^{2R}$, coding matrix $\bH\in \{0,1\}^{R\times \tn}$.\\
 \textsc{Output:} an estimate  of the vectorized signal matrix $\hat{\bx} \in \reals^{\tn}$. 
\begin{algorithmic}[1]
\State\textbf{Initialize:} $\hat{\bx}\gets\mathbf{0}_{\tn}$, $\mc{R}\gets\{1, 2, \dots,R\}$.
 \Comment{$\mc{R}$: set of right nodes to be checked}
\While{$\mc{R} \neq \emptyset$}
\State $\mc{R}' \gets \mc{R}$
\Comment{$\mc{R}'$: right nodes to be checked in the next  iteration}
\For{each $j\in \mc{R}$} 
\If{$\by_j$ is a zeroton}
\State $\mc{R}' \gets \mc{R}' \setminus j$
\ElsIf{$\by_j$ is a singleton (determined via \eqref{eq:noiseless_singleton_test})}
\State recover the $\hat{\ell}$-th entry of $\hat{\bx}$ as $\hat{x}_{\hat{\ell}}$ according to \eqref{eq:noiseless_singleton_test}
\State $\mc{C} \gets \{i\in [R]:  \ H_{i\hat{\ell}}=1\}$ 
\For{each $i\in \mc{C}$} \Comment{peel off connecting right nodes}
\State $\by_i \gets \by_i - \hat{x}_{\hat{\ell}}\begin{bmatrix}
1\\\exp( 2\pi\i(\hat{\ell}-1)/\tn)
\end{bmatrix}$
\EndFor
\State $\mc{R}' \gets \mc{R}'\cup \mc{C}\setminus j$
\Else
\State $\mc{R}' \gets \mc{R}' \setminus j$ 
\EndIf
\EndFor
\State $\mc{R}\gets \mc{R}'$
\EndWhile\\
\Return $\hbx$
\end{algorithmic}
\end{algorithm}

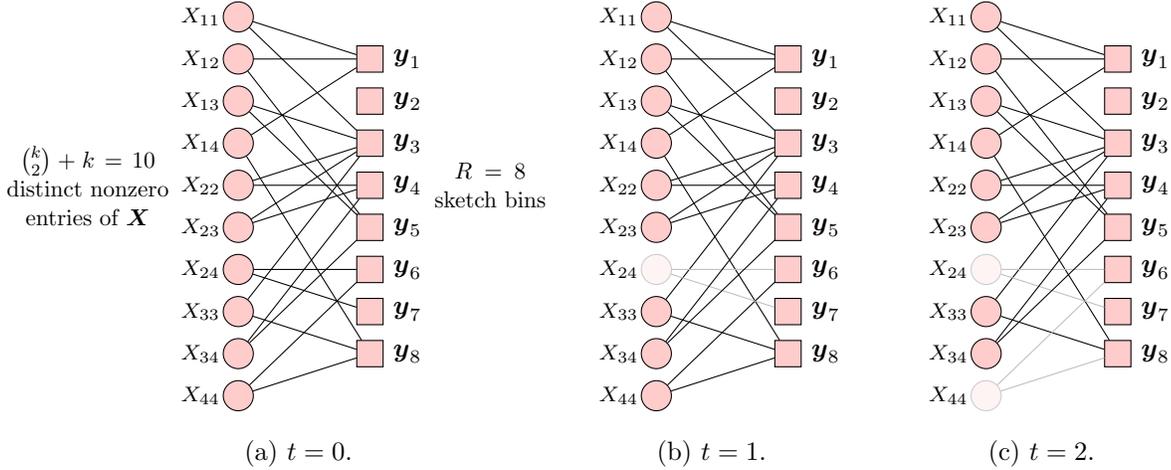
\begin{figure}
\linespread{1}
\begin{subfigure}[B]{0.4\linewidth}
 \centering
 \begin{adjustwidth}{-0.8cm}{}
\begin{tikzpicture}
\node[vnodex](vnodex11){};
\node[left of=vnodex11, node distance=0.5cm, text width=3cm, align=center](vnodex11Label){\scriptsize $X_{11}$};
\node[vnodex, below of= vnodex11, node distance=0.7cm](vnodex12){};
\node[left of=vnodex12, node distance=0.5cm, text width=3cm, align=center](vnodex12Label){\scriptsize $X_{12}$};
\node[vnodex, below of= vnodex11, node distance=0.7*2cm](vnodex13){};
\node[left of=vnodex13, node distance=0.5cm, text width=3cm, align=center](vnodex13Label){\scriptsize $X_{13}$};
\node[vnodex, below of= vnodex11, node distance=0.7*3cm](vnodex14){};
\node[left of=vnodex14, node distance=0.5cm, text width=3cm, align=center](vnodex14Label){\scriptsize $X_{14}$};
\node[vnodex, below of= vnodex11, node distance=0.7*4cm](vnodex22){};
\node[left of=vnodex22, node distance=0.5cm, text width=3cm, align=center](vnodex22Label){\scriptsize $X_{22}$};
\node[vnodex, below of= vnodex11, node distance=0.7*5cm](vnodex23){};
\node[left of=vnodex23, node distance=0.5cm, text width=3cm, align=center](vnodex23Label){\scriptsize $X_{23}$};
\node[vnodex, below of= vnodex11, node distance=0.7*6cm](vnodex24){};
\node[left of=vnodex24, node distance=0.5cm, text width=3cm, align=center](vnodex24Label){\scriptsize $X_{24}$};
\node[vnodex, below of= vnodex11, node distance=0.7*7cm](vnodex33){};
\node[left of=vnodex33, node distance=0.5cm, text width=3cm, align=center](vnodex33Label){\scriptsize $X_{33}$};
\node[vnodex, below of= vnodex11, node distance=0.7*8cm](vnodex34){};
\node[left of=vnodex34, node distance=0.5cm, text width=3cm, align=center](vnodex34Label){\scriptsize $X_{34}$};
\node[vnodex, below of= vnodex11, node distance=0.7*9cm](vnodex44){};
\node[left of=vnodex44, node distance=0.5cm, text width=3cm, align=center](vnodex44Label){\scriptsize $X_{44}$};
\node[cnodex, right of= vnodex12, node distance=2.5cm](cnodey1){};
\node[right of=cnodey1, node distance=0.5cm, text width=3cm, align=center](cnodey1Label){\small $\by_{1}$};
\node[cnodex, right of= vnodex13, node distance=2.5cm](cnodey2){};
\node[right of=cnodey2, node distance=0.5cm, text width=3cm, align=center](cnodey2Label){\small $\by_{2}$};
\node[cnodex, right of= vnodex14, node distance=2.5cm](cnodey3){};
\node[right of=cnodey3, node distance=0.5cm, text width=3cm, align=center](cnodey3Label){\small $\by_{3}$};
\node[cnodex, right of= vnodex22, node distance=2.5cm](cnodey4){};
\node[right of=cnodey4, node distance=0.5cm, text width=3cm, align=center](cnodey4Label){\small $\by_{4}$};
\node[cnodex, right of= vnodex23, node distance=2.5cm](cnodey5){};
\node[right of=cnodey5, node distance=0.5cm, text width=3cm, align=center](cnodey5Label){\small $\by_{5}$};
\node[cnodex, right of= vnodex24, node distance=2.5cm](cnodey6){};
\node[right of=cnodey6, node distance=0.5cm, text width=3cm, align=center](cnodey6Label){\small $\by_{6}$};
\node[cnodex, right of= vnodex33, node distance=2.5cm](cnodey7){};
\node[right of=cnodey7, node distance=0.5cm, text width=3cm, align=center](cnodey7Label){\small $\by_{7}$};
\node[cnodex, right of= vnodex34, node distance=2.5cm](cnodey8){};
\node[right of=cnodey8, node distance=0.5cm, text width=3cm, align=center](cnodey8Label){\small $\by_{8}$};
\draw (vnodex11)--(cnodey1);
\draw (vnodex11)--(cnodey3);
\draw (vnodex22)--(cnodey3);
\draw (vnodex22)--(cnodey4);
\draw (vnodex33)--(cnodey3);
\draw (vnodex33)--(cnodey8);
\draw (vnodex44)--(cnodey6);
\draw (vnodex44)--(cnodey8);
\draw (vnodex12)--(cnodey1);
\draw (vnodex12)--(cnodey5);
\draw (vnodex13)--(cnodey3);
\draw (vnodex13)--(cnodey5);
\draw (vnodex14)--(cnodey1);
\draw (vnodex14)--(cnodey8);
\draw (vnodex23)--(cnodey3);
\draw (vnodex23)--(cnodey4);
\draw (vnodex24)--(cnodey7);
\draw (vnodex24)--(cnodey6);
\draw (vnodex34)--(cnodey5);
\draw (vnodex34)--(cnodey4);
\node[left of=vnodex22, node distance=2cm, text width=3cm, align=center, scale=0.8](numLnodeLabel){$\binom{k}{2}+k=10$ distinct nonzero entries of $\bX$}; 
\node[right of=cnodey4, node distance=1.6cm, text width=2.5cm, align=center, scale=0.8](numRnodeLabel){$\nBins=8$ sketch bins};
\end{tikzpicture}
\end{adjustwidth}
\subcaption{\small $t=0$.}\label{subfig:pruned_graph_1A_t=0}
\end{subfigure}
\begin{subfigure}[B]{0.25\linewidth}
 \centering
 \begin{adjustwidth}{-0.8cm}{}
\begin{tikzpicture}
\node[vnodex](vnodex11){};
\node[left of=vnodex11, node distance=0.5cm, text width=3cm, align=center](vnodex11Label){\scriptsize $X_{11}$};
\node[vnodex, below of= vnodex11, node distance=0.7cm](vnodex12){};
\node[left of=vnodex12, node distance=0.5cm, text width=3cm, align=center](vnodex12Label){\scriptsize $X_{12}$};
\node[vnodex, below of= vnodex11, node distance=0.7*2cm](vnodex13){};
\node[left of=vnodex13, node distance=0.5cm, text width=3cm, align=center](vnodex13Label){\scriptsize $X_{13}$};
\node[vnodex, below of= vnodex11, node distance=0.7*3cm](vnodex14){};
\node[left of=vnodex14, node distance=0.5cm, text width=3cm, align=center](vnodex14Label){\scriptsize $X_{14}$};
\node[vnodex, below of= vnodex11, node distance=0.7*4cm](vnodex22){};
\node[left of=vnodex22, node distance=0.5cm, text width=3cm, align=center](vnodex22Label){\scriptsize $X_{22}$};
\node[vnodex, below of= vnodex11, node distance=0.7*5cm](vnodex23){};
\node[left of=vnodex23, node distance=0.5cm, text width=3cm, align=center](vnodex23Label){\scriptsize $X_{23}$};
\node[vnodexFaded, below of= vnodex11, node distance=0.7*6cm](vnodex24){};
\node[left of=vnodex24, node distance=0.5cm, text width=3cm, align=center](vnodex24Label){\scriptsize $X_{24}$};
\node[vnodex, below of= vnodex11, node distance=0.7*7cm](vnodex33){};
\node[left of=vnodex33, node distance=0.5cm, text width=3cm, align=center](vnodex33Label){\scriptsize $X_{33}$};
\node[vnodex, below of= vnodex11, node distance=0.7*8cm](vnodex34){};
\node[left of=vnodex34, node distance=0.5cm, text width=3cm, align=center](vnodex34Label){\scriptsize $X_{34}$};
\node[vnodex, below of= vnodex11, node distance=0.7*9cm](vnodex44){};
\node[left of=vnodex44, node distance=0.5cm, text width=3cm, align=center](vnodex44Label){\scriptsize $X_{44}$};
\node[cnodex, right of= vnodex12, node distance=2.5cm](cnodey1){};
\node[right of=cnodey1, node distance=0.5cm, text width=3cm, align=center](cnodey1Label){\small $\by_{1}$};
\node[cnodex, right of= vnodex13, node distance=2.5cm](cnodey2){};
\node[right of=cnodey2, node distance=0.5cm, text width=3cm, align=center](cnodey2Label){\small $\by_{2}$};
\node[cnodex, right of= vnodex14, node distance=2.5cm](cnodey3){};
\node[right of=cnodey3, node distance=0.5cm, text width=3cm, align=center](cnodey3Label){\small $\by_{3}$};
\node[cnodex, right of= vnodex22, node distance=2.5cm](cnodey4){};
\node[right of=cnodey4, node distance=0.5cm, text width=3cm, align=center](cnodey4Label){\small $\by_{4}$};
\node[cnodex, right of= vnodex23, node distance=2.5cm](cnodey5){};
\node[right of=cnodey5, node distance=0.5cm, text width=3cm, align=center](cnodey5Label){\small $\by_{5}$};
\node[cnodex, right of= vnodex24, node distance=2.5cm](cnodey6){};
\node[right of=cnodey6, node distance=0.5cm, text width=3cm, align=center](cnodey6Label){\small $\by_{6}$};
\node[cnodex, right of= vnodex33, node distance=2.5cm](cnodey7){};
\node[right of=cnodey7, node distance=0.5cm, text width=3cm, align=center](cnodey7Label){\small $\by_{7}$};
\node[cnodex, right of= vnodex34, node distance=2.5cm](cnodey8){};
\node[right of=cnodey8, node distance=0.5cm, text width=3cm, align=center](cnodey8Label){\small $\by_{8}$};
\draw (vnodex11)--(cnodey1);
\draw (vnodex11)--(cnodey3);
\draw (vnodex22)--(cnodey3);
\draw (vnodex22)--(cnodey4);
\draw (vnodex33)--(cnodey3);
\draw (vnodex33)--(cnodey8);
\draw (vnodex44)--(cnodey6);
\draw (vnodex44)--(cnodey8);
\draw (vnodex12)--(cnodey1);
\draw (vnodex12)--(cnodey5);
\draw (vnodex13)--(cnodey3);
\draw (vnodex13)--(cnodey5);
\draw (vnodex14)--(cnodey1);
\draw (vnodex14)--(cnodey8);
\draw (vnodex23)--(cnodey3);
\draw (vnodex23)--(cnodey4);
\draw [gray!50](vnodex24)--(cnodey7);
\draw [gray!50](vnodex24)--(cnodey6);
\draw (vnodex34)--(cnodey5);
\draw (vnodex34)--(cnodey4);
\node[above of=vnodex11, node distance=0.9cm, text width=3cm, align=center, scale=0.8, xshift=-0.2cm, text=white](numLnodeLabel){}; 
\end{tikzpicture}
\end{adjustwidth}
\subcaption{\small $t=1$.}\label{subfig:pruned_graph_1A_t=1}
\end{subfigure}
\hspace{0cm}
\begin{subfigure}[B]{0.25\linewidth}
 \centering
 \begin{adjustwidth}{-0.8cm}{}
\begin{tikzpicture}
\node[vnodex](vnodex11){};
\node[left of=vnodex11, node distance=0.5cm, text width=3cm, align=center](vnodex11Label){\scriptsize $X_{11}$};
\node[vnodex, below of= vnodex11, node distance=0.7cm](vnodex12){};
\node[left of=vnodex12, node distance=0.5cm, text width=3cm, align=center](vnodex12Label){\scriptsize $X_{12}$};
\node[vnodex, below of= vnodex11, node distance=0.7*2cm](vnodex13){};
\node[left of=vnodex13, node distance=0.5cm, text width=3cm, align=center](vnodex13Label){\scriptsize $X_{13}$};
\node[vnodex, below of= vnodex11, node distance=0.7*3cm](vnodex14){};
\node[left of=vnodex14, node distance=0.5cm, text width=3cm, align=center](vnodex14Label){\scriptsize $X_{14}$};
\node[vnodex, below of= vnodex11, node distance=0.7*4cm](vnodex22){};
\node[left of=vnodex22, node distance=0.5cm, text width=3cm, align=center](vnodex22Label){\scriptsize $X_{22}$};
\node[vnodex, below of= vnodex11, node distance=0.7*5cm](vnodex23){};
\node[left of=vnodex23, node distance=0.5cm, text width=3cm, align=center](vnodex23Label){\scriptsize $X_{23}$};
\node[vnodexFaded, below of= vnodex11, node distance=0.7*6cm](vnodex24){};
\node[left of=vnodex24, node distance=0.5cm, text width=3cm, align=center](vnodex24Label){\scriptsize $X_{24}$};
\node[vnodex, below of= vnodex11, node distance=0.7*7cm](vnodex33){};
\node[left of=vnodex33, node distance=0.5cm, text width=3cm, align=center](vnodex33Label){\scriptsize $X_{33}$};
\node[vnodex, below of= vnodex11, node distance=0.7*8cm](vnodex34){};
\node[left of=vnodex34, node distance=0.5cm, text width=3cm, align=center](vnodex34Label){\scriptsize $X_{34}$};
\node[vnodexFaded, below of= vnodex11, node distance=0.7*9cm](vnodex44){};
\node[left of=vnodex44, node distance=0.5cm, text width=3cm, align=center](vnodex44Label){\scriptsize $X_{44}$};
\node[cnodex, right of= vnodex12, node distance=2.5cm](cnodey1){};
\node[right of=cnodey1, node distance=0.5cm, text width=3cm, align=center](cnodey1Label){\small $\by_{1}$};
\node[cnodex, right of= vnodex13, node distance=2.5cm](cnodey2){};
\node[right of=cnodey2, node distance=0.5cm, text width=3cm, align=center](cnodey2Label){\small $\by_{2}$};
\node[cnodex, right of= vnodex14, node distance=2.5cm](cnodey3){};
\node[right of=cnodey3, node distance=0.5cm, text width=3cm, align=center](cnodey3Label){\small $\by_{3}$};
\node[cnodex, right of= vnodex22, node distance=2.5cm](cnodey4){};
\node[right of=cnodey4, node distance=0.5cm, text width=3cm, align=center](cnodey4Label){\small $\by_{4}$};
\node[cnodex, right of= vnodex23, node distance=2.5cm](cnodey5){};
\node[right of=cnodey5, node distance=0.5cm, text width=3cm, align=center](cnodey5Label){\small $\by_{5}$};
\node[cnodex, right of= vnodex24, node distance=2.5cm](cnodey6){};
\node[right of=cnodey6, node distance=0.5cm, text width=3cm, align=center](cnodey6Label){\small $\by_{6}$};
\node[cnodex, right of= vnodex33, node distance=2.5cm](cnodey7){};
\node[right of=cnodey7, node distance=0.5cm, text width=3cm, align=center](cnodey7Label){\small $\by_{7}$};
\node[cnodex, right of= vnodex34, node distance=2.5cm](cnodey8){};
\node[right of=cnodey8, node distance=0.5cm, text width=3cm, align=center](cnodey8Label){\small $\by_{8}$};
\draw (vnodex11)--(cnodey1);
\draw (vnodex11)--(cnodey3);
\draw (vnodex22)--(cnodey3);
\draw (vnodex22)--(cnodey4);
\draw (vnodex33)--(cnodey3);
\draw (vnodex33)--(cnodey8);
\draw [gray!50](vnodex44)--(cnodey6);
\draw [gray!50](vnodex44)--(cnodey8);
\draw (vnodex12)--(cnodey1);
\draw (vnodex12)--(cnodey5);
\draw (vnodex13)--(cnodey3);
\draw (vnodex13)--(cnodey5);
\draw (vnodex14)--(cnodey1);
\draw (vnodex14)--(cnodey8);
\draw (vnodex23)--(cnodey3);
\draw (vnodex23)--(cnodey4);
\draw [gray!50](vnodex24)--(cnodey7);
\draw [gray!50](vnodex24)--(cnodey6);
\draw (vnodex34)--(cnodey5);
\draw (vnodex34)--(cnodey4);

\node[above of=vnodex11, node distance=0.9cm, text width=3cm, align=center, scale=0.8, xshift=-0.2cm, text=white](numLnodeLabel){}; 
\end{tikzpicture}
\end{adjustwidth}
\subcaption{\small $t=2$.}\label{subfig:pruned_graph_1A_t=2}
\end{subfigure}
\caption{\small (a): Pruned bipartite graph for stage A corresponding to $\bH \in \{0,1\}^{\nBins\times \tn}$. Here, $\bX = \lambda\bv\bv^T$ with  $v_i\neq 0$ for $i\in[4]$ and $v_i=0$ for $5\le i \le n$, and $\nBins=8$. (b)--(c):
The graph process that models the recovery of the nonzero entries of $\bX$. The faded nodes and edges are those that have been peeled off.}
  \label{fig:graph_process_1A}
\end{figure}

\subsubsection{Stage B of the algorithm} \label{subsubsec:stageB}
  Recalling that $\bX=\lambda \bv\bv^T$, we write $ \tbv: =\sqrt{\abs{\lambda}} \bv$ for the unsigned, unnormalized eigenvector. Stage B of the algorithm uses the rank-1 structure of $\bX$ to recover the nonzeros in  $\tbv$  from the nonzero entries of $\bX$ that were recovered in stage A. 
 When fully recovered, $\tbv$ is normalized to give $\bv$. Since the $j$-th diagonal entry of $\bX$ is $\lambda v_j^2$, the sign of $\lambda$ can be determined from any nonzero diagonal entry recovered in stage A.  We now describe stage B  for  $\lambda>0$,  and then explain the small adjustment needed for $\lambda <0$.

Since $\bX$ is rank-1, its  
  $\binom{k}{2}$ nonzero
   above-diagonal entries are pairwise products of the form  $\tv_i\tv_j$. The proof of Theorem  \ref{thm:main_result_symm} shows that  in stage A, with high probability, at least a fraction $k^{-\delta}$ of these pairwise products
 are recovered, for a constant $\deltaRangeInLine$.  (See Lemma \ref{lem:1A_rank_r}). 
  Stage B of the algorithm can be visualized using another bipartite graph shown in Fig. \ref{subfig:graph_1B_t=-1}. The  $k$ left nodes represent the unknown nonzero entries in $\tbv$, and the right nodes represent the  nonzero pairwise products in $\bX$  recovered in stage A. The right nodes are the product constraints that the left nodes satisfy. This is a pruned graph because any zero entries  in $\tbv$ or $\bX$ have been excluded from the graph.

At $t=0$,   the algorithm picks one of the  nonzero diagonal entries recovered in stage A, say $X_{jj}$, and declares  $\sqrt{X_{jj}}$ as the value of the left node $\tv_j$. The algorithm then removes (peels off) the contribution of $\tv_j$ from the right nodes that $\tv_j$ connects to by dividing these right nodes  by $\sqrt{X_{jj}}$.  Fig. \ref{subfig:graph_1B_t=0} shows this initial step with $\tv_4$ being a  left node whose squared value was recovered in stage A and  is  peeled off at $t=0$. Once the first left node is peeled off, all of its connecting right nodes reduce from degree-2 nodes to degree-1. For $t\ge 1$, the algorithm proceeds similarly to stage A, noting that in stage B, each right node represents the product of its two connecting left nodes rather than a linear combination. At each step, a degree-1 right node is picked uniformly at random from the available ones and its connecting left node recovered and peeled off to update the set of degree-1 right nodes.   The algorithm continues until there are no more degree-1 right nodes. All the unrecovered entries of $\tbv$ are set to zero.  In Fig. \ref{subfig:graph_1B_t=1} and \ref{subfig:graph_1B_t=2}, the left nodes $\tv_1$ and $\tv_5$ are recovered via  $\tv_1\tv_4$ and $\tv_4\tv_5$, respectively, and peeled off. Moreover, $\tv_2$ and $\tv_3$ can be recovered via $\tv_2\tv_5$ and $\tv_2\tv_3$ before degree-1 right nodes run out. \edit{The pseudocode for stage B  is provided in Algorithm \ref{alg:nl_stage_B} below, which calls the subroutine recover\_eigenvec defined in Algorithm \ref{alg:nl_stage_B_one_eigenvec}.}

\paragraph{The $\lambda <0$ case}  We flip the sign of each right node (pairwise product)  before running stage B of the  algorithm as above to recover $\tbv = \sqrt{\abs{\lambda}}\bv$. \edit{See lines 4--9 of Algorithm \ref{alg:nl_stage_B_one_eigenvec} for the pseudocode of this operation.}

 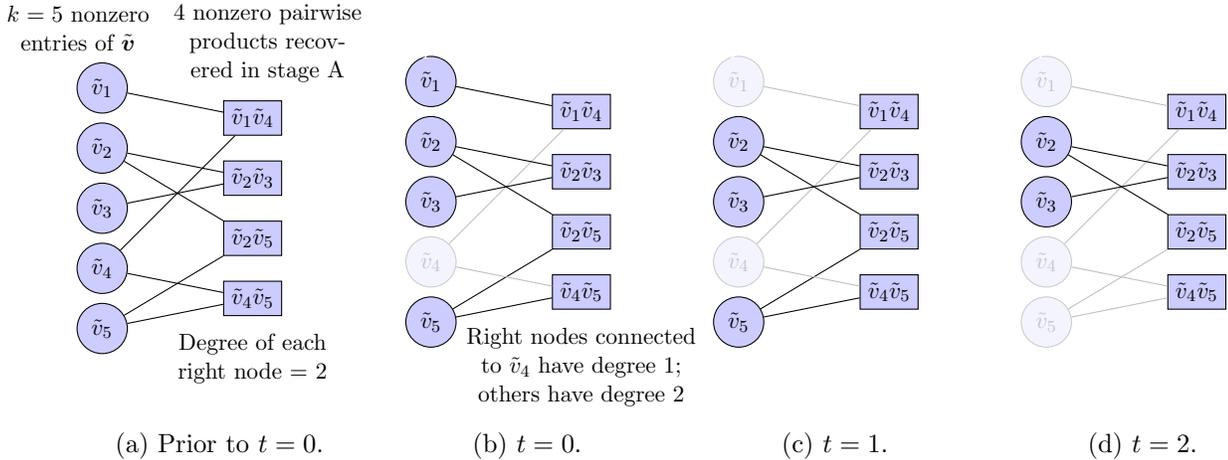
\begin{figure}[!t]
\captionsetup[subfigure]{justification=centering}
\centering
\linespread{1}
 \begin{subfigure}[b]{0.24\linewidth} 
 \centering 
 \begin{adjustwidth}{-1cm}{0cm}
\begin{tikzpicture}
\node[vnode](vnodev1){$\tv_1$};
\node[vnode, below of=vnodev1, node distance=1cm](vnodev2){$\tv_2$};
\node[vnode, below of=vnodev1, node distance=2cm](vnodev3){$\tv_3$};
\node[vnode, below of=vnodev1, node distance=3cm] (vnodev4){$\tv_4$};
\node[vnode, below of=vnodev1, node distance=4cm](vnodev5){$\tv_5$};

\node[cnode, right of= vnodev1, yshift = -0.5cm, node distance=2.5cm](cnodev1v4){$\tv_1\tv_4$};
\node[cnode, right of= vnodev3, node distance=2.5cm, yshift=0.5cm](cnodev2v3){$\tv_2\tv_3$};
\node[cnode, right of= vnodev3, node distance=2.5cm, yshift=-0.5cm](cnodev2v5){$\tv_2\tv_5$};
\node[cnode, right of= vnodev5, yshift = +0.5cm, node distance=2.5cm](cnodev4v5){$\tv_4\tv_5$};

\draw (vnodev1)--(cnodev1v4);
\draw (vnodev4)--(cnodev1v4);
\draw (vnodev2)--(cnodev2v3);
\draw (vnodev3)--(cnodev2v3);
\draw  (vnodev2)--(cnodev2v5);
\draw (vnodev5)--(cnodev2v5);
\draw  (vnodev4)--(cnodev4v5);
\draw (vnodev5)--(cnodev4v5);

\node[above of=vnodev1, node distance=0.8cm, text width=2.5cm, align=center, scale=0.8, xshift=-0.4cm](numLnodeLabel){$k=5$ nonzero entries of $\tbv$};
\node[above of=cnodev1v4, node distance=1cm, xshift =0.2cm, text width=4cm, align=center, scale=0.8](numRnodeLabel){$4$ nonzero pairwise products recovered in stage A};
\node[below of=cnodev4v5, node distance=0.8cm, text width=3cm, align=center, scale=0.8](rightDist1){Degree of each right node = 2};
\end{tikzpicture}
\end{adjustwidth}
\vspace{0.2cm}
\caption{Prior to $t=0$.}
\label{subfig:graph_1B_t=-1}
\end{subfigure}
\begin{subfigure}[b]{0.24\linewidth} 
\centering
 \begin{tikzpicture}
\node[vnode](vnodev1){$\tv_1$};
\node[vnode, below of=vnodev1, node distance=1cm](vnodev2){$\tv_2$};
\node[vnode, below of=vnodev1, node distance=2cm](vnodev3){$\tv_3$};
\node[vnodeFaded,  below of=vnodev1, node distance=3cm] (vnodev4){$\tv_4$};
\node[vnode, below of=vnodev1, node distance=4cm](vnodev5){$\tv_5$};
\node[cnode, right of= vnodev1, yshift = -0.5cm, node distance=2.5cm](cnodev1v4){$\tv_1\tv_4$};
\node[cnode, right of= vnodev3, node distance=2.5cm, yshift=0.5cm](cnodev2v3){$\tv_2\tv_3$};
\node[cnode, right of= vnodev3, node distance=2.5cm, yshift=-0.5cm](cnodev2v5){$\tv_2\tv_5$};
\node[cnode, right of= vnodev5, yshift = +0.5cm, node distance=2.5cm](cnodev4v5){$\tv_4\tv_5$};
\draw (vnodev1)--(cnodev1v4);
\draw [gray!50](vnodev4)--(cnodev1v4);
\draw (vnodev2)--(cnodev2v3);
\draw (vnodev3)--(cnodev2v3);
\draw  (vnodev2)--(cnodev2v5);
\draw (vnodev5)--(cnodev2v5);
\draw [gray!50] (vnodev4)--(cnodev4v5);
\draw  (vnodev5)--(cnodev4v5);
\node[above of=vnodev1, node distance=0.8cm, text width=1.5cm, align=center, scale=0.8, text=white](numLnodeLabel){$k=5$ supports of $\tbv$}; 
\node[below of=cnodev4v5, node distance=1cm, text width=4cm, align=center, scale=0.8](rightDist1){Right nodes connected to $\tv_4$ have degree 1; others have degree 2};
\end{tikzpicture}
\caption{$t=0$.}
\label{subfig:graph_1B_t=0}
\end{subfigure}
\begin{subfigure}[b]{0.24\linewidth} 
\centering
 \begin{tikzpicture}
\node[vnodeFaded](vnodev1){$\tv_1$};
\node[vnode, below of=vnodev1, node distance=1cm](vnodev2){$\tv_2$};
\node[vnode, below of=vnodev1, node distance=2cm](vnodev3){$\tv_3$};
\node[vnodeFaded,  below of=vnodev1, node distance=3cm] (vnodev4){$\tv_4$};
\node[vnode, below of=vnodev1, node distance=4cm](vnodev5){$\tv_5$};
\node[cnode, right of= vnodev1, yshift = -0.5cm, node distance=2.5cm](cnodev1v4){$\tv_1\tv_4$};
\node[cnode, right of= vnodev3, node distance=2.5cm, yshift=0.5cm](cnodev2v3){$\tv_2\tv_3$};
\node[cnode, right of= vnodev3, node distance=2.5cm, yshift=-0.5cm](cnodev2v5){$\tv_2\tv_5$};
\node[cnode, right of= vnodev5, yshift = +0.5cm, node distance=2.5cm](cnodev4v5){$\tv_4\tv_5$};
\draw [gray!50](vnodev1)--(cnodev1v4);
\draw [gray!50](vnodev4)--(cnodev1v4);
\draw (vnodev2)--(cnodev2v3);
\draw (vnodev3)--(cnodev2v3);
\draw  (vnodev2)--(cnodev2v5);
\draw (vnodev5)--(cnodev2v5);
\draw [gray!50] (vnodev4)--(cnodev4v5);
\draw  (vnodev5)--(cnodev4v5);
\node[above of=vnodev1, node distance=0.8cm, text width=1.5cm, align=center, scale=0.8, text=white](numLnodeLabel){$k=5$ supports of $\tbv$}; 
\node[below of=cnodev4v5, xshift = 0.2cm, node distance=1cm, text width=4cm, align=center, scale=0.8, text=white](rightDist1){Right nodes connected to $\tv_4$ have degree 1; others have degree 2};
\end{tikzpicture}
\caption{$t=1$.}
\label{subfig:graph_1B_t=1}
\end{subfigure}
\begin{subfigure}[b]{0.24\linewidth} 
\centering
 \begin{tikzpicture}
\node[vnodeFaded](vnodev1){$\tv_1$};
\node[vnode, below of=vnodev1, node distance=1cm](vnodev2){$\tv_2$};
\node[vnode, below of=vnodev1, node distance=2cm](vnodev3){$\tv_3$};
\node[vnodeFaded,  below of=vnodev1, node distance=3cm] (vnodev4){$\tv_4$};
\node[vnodeFaded, below of=vnodev1, node distance=4cm](vnodev5){$\tv_5$};

\node[cnode, right of= vnodev1, yshift = -0.5cm, node distance=2.5cm](cnodev1v4){$\tv_1\tv_4$};
\node[cnode, right of= vnodev3, node distance=2.5cm, yshift=0.5cm](cnodev2v3){$\tv_2\tv_3$};
\node[cnode, right of= vnodev3, node distance=2.5cm, yshift=-0.5cm](cnodev2v5){$\tv_2\tv_5$};
\node[cnode, right of= vnodev5, yshift = +0.5cm, node distance=2.5cm](cnodev4v5){$\tv_4\tv_5$};
\draw [gray!50](vnodev1)--(cnodev1v4);
\draw [gray!50](vnodev4)--(cnodev1v4);
\draw (vnodev2)--(cnodev2v3);
\draw (vnodev3)--(cnodev2v3);
\draw (vnodev2)--(cnodev2v5);
\draw [gray!50] (vnodev5)--(cnodev2v5);
\draw [gray!50] (vnodev4)--(cnodev4v5);
\draw [gray!50](vnodev5)--(cnodev4v5);
\node[above of=vnodev1, node distance=0.8cm, text width=1.5cm, align=center, scale=0.8, text=white](numLnodeLabel){$k=5$ supports of $\tbv$}; 
\node[below of=cnodev4v5, xshift = 0.2cm, node distance=1cm, text width=4cm, align=center, scale=0.8, text=white](rightDist1){Right nodes connected to $\tv_4$ have degree 1; others have degree 2};
\end{tikzpicture}
\caption{$t=2$.}
\label{subfig:graph_1B_t=2}
\end{subfigure}
\caption{\small (a): Pruned bipartite graph for stage B with $\bX=\tbv\tbv^T$, where $\tv_i\neq 0$ for $i\in [5]$ and $\tv_i=0$ for $6\le i\le n$. (b)--(d): The graph process that models the recovery of the nonzero entries in $\tbv$ from the nonzero matrix entries recovered in stage A. 
The faded nodes and edges are those that have been peeled off. At $t=0$, the left node $\tv_4$ is peeled off based on the  recovered nonzero diagonal entry $X_{44}$.}
\label{fig:actual_graph_process_1B}
\end{figure}

\setlength{\textfloatsep}{2pt}
\begin{algorithm}[!b]
\color{black}
\caption{\edit{Peeling decoder in stage B (for rank-$r$ symmetric matrices) }}\label{alg:nl_stage_B}
\textsc{Input:} partially recovered signal matrix $\hat{\bX}\in \reals^{n\times n}$, the rank $r$.\\
 \textsc{Output:} an estimate of the unnormalized eigenvectors $\tbv_1, \tbv_2, \dots, \tbv_r \in \reals^n$. 
\begin{algorithmic}[1]
\Require the true signal matrix $\bX$ is symmetric rank-$r$ with eigenvectors having disjoint supports.
\State \textbf{Initialize:} $\tbv_1, \tbv_2, \dots, \tbv_r \gets \bzero_n$
\State $\mc{S} \gets \{i\in [n]: \hX_{ij} \neq 0 \; \text{for any} \; j\in [n]\}$, $\hk \gets \abs{\mc{S}}$
\State  $\hbX^s \gets \hbX_{\mc{S},\mc{S}}$ 
\Comment{extract nonzero submatrix to process}
\For{$i=1$ \text{to} $r$}
\State $\tbv^s_i, \hbX^s \gets $recover\_eigenvec($\hbX^s$)
\Comment{recover one eigenvector and update $\hbX^s$}
\State $\tbv_{i\mc{S}} \gets \tbv^s_i$
\EndFor\\
\Return $\tbv_1, \tbv_2, \dots, \tbv_r$
\end{algorithmic}
\end{algorithm}

\begin{algorithm}[!b]
\color{black}
\caption{\edit{recover\_eigenvec: Recovery of one eigenvector  for a symmetric matrix}}\label{alg:nl_stage_B_one_eigenvec}
\textsc{Input:} partially recovered nonzero submatrix $\hat{\bX^s}\in \reals^{\hk\times \hk}$ (unrecovered entries stored as zeros).\\
 \textsc{Output:} an estimate of one unnormalized eigenvector $\tbv^s\in \reals^{\hk}$,\\
 \text{\qquad\quad\quad\;}$\hbX^s$ with its entries used to infer $\tbv^s$ reset to zeros.
\begin{algorithmic}[1]
\Require the true signal matrix $\bX$ is symmetric rank-$r$ with eigenvectors having disjoint supports.
\State \textbf{Initialize:}  $\tbv^s \gets \bzero_{\hk}$
\If{$\{ i \in [\hk]: \hX^s_{ii}\neq 0\}\neq \emptyset$}
\State
$\ell \gets $ an index chosen randomly from $\{ i \in [\hk]: \hX^s_{ii}\neq 0\}$
\If{$\hX^s_{\ell\ell} >0$}  
\State flip\_sign $\gets$ False
\Else
 \State flip\_sign $\gets$ True
 \State $\hbX^s\gets - \hbX^s$ 
\EndIf
 \State
 $\tbv^s \gets \tbv^s + \begin{bmatrix}\hat{X}^s_{1\ell}, \hat{X}^s_{2\ell}, \dots, \hat{X}^s_{\hk\ell}\end{bmatrix}^T / \sqrt{\hat{X}^s_{\ell\ell}}$
 \Comment{peel off $\ell$-th left node}
\State $\mc{I}\gets \{i\in [\hk]: \tv_i^s\neq 0\}$
\State $\hX_{ij}^s \gets 0 \quad \forall \quad i,j \in \mc{I}$ 
\State $\mc{L}\gets \mc{I}\setminus \ell$ \Comment{$\mc{L}$: set of left nodes connecting to singletons}
\While{$\mc{L} \neq \emptyset$ \textbf{and} $\abs{\mc{I}} <\hk$}
\State $\ell \gets $ an index chosen randomly from $\mc{L}$
\State $\tbv^s \gets \tbv^s + \begin{bmatrix}\hat{X}^s_{1\ell}, \hat{X}^s_{2\ell}, \dots, \hat{X}^s_{\hk\ell}\end{bmatrix}^T / \tv_\ell^s$
 \Comment{peel off $\ell$-th left node}
\State $\mc{L}\gets (\mc{L}\setminus \ell )\cup \{i\in [\hk]: \hat{X}_{i\ell}^s\neq 0\}$
\State $\mc{I}\gets \{i\in [\hk]: \tv_i^s\neq 0\} $
\State $\hX_{ij}^s \gets 0 \quad \forall \quad i,j \in \mc{I}$
\EndWhile
\If{flip\_sign} $\hbX^s\gets - \hbX^s$ 
\EndIf
\EndIf\\
\Return $\tbv^s$, $\hbX^s$
\end{algorithmic}
\end{algorithm}

\subsection{Recovery algorithm for rank-$r$ symmetric matrices} \label{subsec:recovery_alg_rankr}
\paragraph{Disjoint suppports} When the eigenvectors $\{\bv_i\}$
 are known to have disjoint supports, the same two-stage 
  algorithm can be used. The only difference is that the graphical representation for stage B (Fig. \ref{subfig:graph_1B_t=-1}) will now consist of $r$ disjoint bipartite graphs. For each of these graphs, stage B of the algorithm is initialized using a nonzero diagonal entry recovered in stage A.  \edit{See Algorithm  \ref{alg:nl_stage_B} below for details.}

\paragraph{General case} When the supports of the $r$ eigenvectors have overlaps, we cannot use stage B of the algorithm as it depends on the entries of $\bX$ being pairwise products of the eigenvector entries. We therefore take a large enough sketch  to identify all the nonzero entries in $\bX$ in stage A. This is equivalent to  compressed sensing recovery of the complete vectorized matrix using the scheme of \cite{li2019sublinear, bakshi2016SHOFA}. We then perform an eigendecomposition on the recovered submatrix of nonzero entries to obtain the nonzero entries of $ \{ \bv_1, \ldots, \bv_r \}$. 
This  submatrix has size at most $rk \times rk$, hence the complexity of its eigendecomposition is   $\bigo((rk)^3)$, which does not depend on the ambient dimension $n$.  Recovering the entire matrix $\bX$ in stage A  requires a sample complexity  $\nMeas =\bigo(rk^2)$, which is  larger by a factor of $\ln k$ 
than that needed in the case of disjoint supports. 

 \subsection{Recovery of non-symmetric matrices} \label{subsec:nonsymm}
 Consider the non-symmetric case where $\bX$ has the form 
$\bX = \sum_{i=1}^r \sigma_i \bu_i \bv_i^{T}$ where $\{\bu_i, \bv_i\}$ each has at most $k$ nonzero entries. The sketching operator is the same as in the symmetric case, except that here it acts on the entire matrix $\bX$. Consider first the case where the singular vectors $\{ \bu_i\}$  have disjoint supports as do $\{ \bv_i\}$.  Based on the sketch, stage A of the algorithm can be used to  recover a small fraction of the nonzero entries in $\bX$ like in the symmetric case. Using the recovered nonzero matrix entries, stage B of the algorithm  iteratively recovers the nonzeros in $\{ \bu_i, \bv_i\}$. While  stage B can still be viewed as  a peeling decoder, the associated bipartite graph  has a  different structure from the symmetric case. The initialization step is also different.

We take the rank-1 case as an example, where $\bX = \sigma \bu \bv^T$. 
 The peeling algorithm recovers $\tbu$ and $\tbv$, which are scaled versions of $\bu$ and $\bv$ such that $\tbu\tbv^T = \bX $; these vectors are then normalized to obtain $\sigma, \bu$ and $\bv$. Fig. \ref{subfig:graph_1B_non_sym_t=-1} illustrates the pruned stage B graph in this setting. The left nodes represent the nonzero entries in $\tbu$ and $\tbv$, and the right nodes represent the nonzero pairwise products recovered in stage A. Each right node connects to one nonzero in $\tbu$ and one nonzero in $\tbv$ and  equals the product of  these two nonzeros.

\setlength{\textfloatsep}{15pt}
\begin{figure}[!b]
\captionsetup[subfigure]{justification=centering}
\centering
\linespread{1}
 \begin{subfigure}[B]{0.24\linewidth} 
 \centering 
 \begin{adjustwidth}{-1cm}{0cm}
\begin{tikzpicture}
\node[vnode](vnodeu1){$\tu_{1}$};
\node[vnode, below of=vnodeu1, node distance=1cm](vnodeu2){$\tu_{2}$};
\node[vnode, below of=vnodeu1, node distance=2cm](vnodeu3){$\tu_3$};
\node[vnode, below of=vnodeu1, node distance=3cm](vnodeu4){$\tu_4$};

\node[vnode, below of =vnodeu4, node distance=1.5cm](vnodev1){$\tv_1$};
\node[vnode, below of=vnodev1, node distance=1cm](vnodev2){$\tv_2$};
\node[vnode, below of=vnodev1, node distance=2cm](vnodev3){$\tv_3$};

\node[cnode, right of= vnodeu1, yshift=-1cm, node distance=2.5cm](cnodeu1v1){$\tu_1\tv_1$};
\node[cnode, below of= cnodeu1v1, node distance=1cm](cnodeu4v1){$\tu_4 \tv_1$};
\node[cnode, below of= cnodeu1v1, node distance=2cm](cnodeu3v2){$\tu_3 \tv_2$};
\node[cnode, below of= cnodeu1v1, node distance=3cm](cnodeu1v3){$\tu_1\tv_3$};
\node[cnode, below of= cnodeu1v1, node distance=4cm](cnodeu2v2){$\tu_2\tv_2$};

\draw (vnodeu1)--(cnodeu1v1);
\draw (vnodev1)--(cnodeu1v1);
\draw (vnodeu1)--(cnodeu1v3);
\draw (vnodev3)--(cnodeu1v3);
\draw  (vnodeu3)--(cnodeu3v2);
\draw (vnodev2)--(cnodeu3v2);
\draw  (vnodeu4)--(cnodeu4v1);
\draw (vnodev1)--(cnodeu4v1);
\draw  (vnodeu2)--(cnodeu2v2);
\draw (vnodev2)--(cnodeu2v2);

\node[left of=vnodeu2, node distance=1.2cm, text width=1.8cm, align=center, scale=0.8, yshift=-0.5cm](numLnodeLabel){$4$ nonzero entries of $\tbu$}; 
\node[left of=vnodev2, node distance=1.2cm, text width=1.8cm, align=center, scale=0.8](numLnodeLabel){$3$ nonzero entries of $\tbv$}; 
\node[above of=cnodeu1v1, node distance=1cm, xshift =0cm, text width=3.5cm, align=center, scale=0.8](numRnodeLabel){$5$ nonzero pairwise products recovered in stage A};
\node[below of=cnodeu1v3, node distance=2cm, text width=3cm, align=center, scale=0.8](rightDist1){Degree of each right node = 2};
\end{tikzpicture}
\end{adjustwidth}
\caption{Prior to $t=0$.}
\label{subfig:graph_1B_non_sym_t=-1}
\end{subfigure}
\begin{subfigure}[B]{0.24\linewidth} 
\centering 
\begin{adjustwidth}{0.8cm}{0cm}
\begin{tikzpicture}
\node[vnodeFaded](vnodeu1){$\tu_1$};
\node[vnode, below of=vnodeu1, node distance=1cm](vnodeu2){$\tu_2$};
\node[vnode, below of=vnodeu1, node distance=2cm](vnodeu3){$\tu_3$};
\node[vnode, below of=vnodeu1, node distance=3cm](vnodeu4){$\tu_4$};

\node[vnode, below of =vnodeu4, node distance=1.5cm](vnodev1){$\tv_1$};
\node[vnode, below of=vnodev1, node distance=1cm](vnodev2){$\tv_2$};
\node[vnode, below of=vnodev1, node distance=2cm](vnodev3){$\tv_3$};

\node[cnode, right of= vnodeu1, yshift=-1cm, node distance=2.5cm](cnodeu1v1){$\tu_1 \tv_1$};
\node[cnode, below of= cnodeu1v1, node distance=1cm](cnodeu4v1){$\tu_4 \tv_1$};
\node[cnode, below of= cnodeu1v1, node distance=2cm](cnodeu3v2){$\tu_3 \tv_2$};
\node[cnode, below of= cnodeu1v1, node distance=3cm](cnodeu1v3){$\tu_1 \tv_3$};
\node[cnode, below of= cnodeu1v1, node distance=4cm](cnodeu2v2){$\tu_2\tv_2$};

\draw [gray!50] (vnodeu1)--(cnodeu1v1);
\draw (vnodev1)--(cnodeu1v1);
\draw [gray!50] (vnodeu1)--(cnodeu1v3);
\draw (vnodev3)--(cnodeu1v3);
\draw  (vnodeu3)--(cnodeu3v2);
\draw (vnodev2)--(cnodeu3v2);
\draw  (vnodeu4)--(cnodeu4v1);
\draw (vnodev1)--(cnodeu4v1);
\draw  (vnodeu2)--(cnodeu2v2);
\draw (vnodev2)--(cnodeu2v2);

\node[above of=cnodeu1v1, node distance=0.8cm, xshift =0.2cm, text width=4cm, align=center, scale=0.8, text=white](numRnodeLabel){};
\node[below of=cnodeu1v3, node distance=2cm, xshift =0.2cm, text width=3.9cm, align=center, scale=0.8](rightDist1){Right nodes connected to $\tu_1$ have degree 1; others have degree 2};
\end{tikzpicture}
\end{adjustwidth}
\vspace{-0.1cm}
\caption{$t=0$.}
\label{subfig:graph_1B_non_sym_t=0}
\end{subfigure}
\begin{subfigure}[B]{0.24\linewidth} 
\centering 
\begin{adjustwidth}{1cm}{0cm}
\begin{tikzpicture}
\node[vnodeFaded](vnodeu1){$\tu_1$};
\node[vnode, below of=vnodeu1, node distance=1cm](vnodeu2){$\tu_2$};
\node[vnode, below of=vnodeu1, node distance=2cm](vnodeu3){$\tu_3$};
\node[vnode, below of=vnodeu1, node distance=3cm](vnodeu4){$\tu_4$};

\node[vnodeFaded, below of =vnodeu4, node distance=1.5cm](vnodev1){$\tv_1$};
\node[vnode, below of=vnodev1, node distance=1cm](vnodev2){$\tv_2$};
\node[vnode, below of=vnodev1, node distance=2cm](vnodev3){$\tv_3$};

\node[cnode, right of= vnodeu1, yshift=-1cm, node distance=2.5cm](cnodeu1v1){$\tu_1 \tv_1$};
\node[cnode, below of= cnodeu1v1, node distance=1cm](cnodeu4v1){$\tu_4 \tv_1$};
\node[cnode, below of= cnodeu1v1, node distance=2cm](cnodeu3v2){$\tu_3 \tv_2$};
\node[cnode, below of= cnodeu1v1, node distance=3cm](cnodeu1v3){$\tu_1 \tv_3$};
\node[cnode, below of= cnodeu1v1, node distance=4cm](cnodeu2v2){$\tu_2 \tv_2$};

\draw [gray!50] (vnodeu1)--(cnodeu1v1);
\draw [gray!50] (vnodev1)--(cnodeu1v1);
\draw [gray!50] (vnodeu1)--(cnodeu1v3);
\draw (vnodev3)--(cnodeu1v3);
\draw (vnodeu3)--(cnodeu3v2);
\draw (vnodev2)--(cnodeu3v2);
\draw (vnodeu2)--(cnodeu2v2);
\draw (vnodev2)--(cnodeu2v2);
\draw (vnodeu4)--(cnodeu4v1);
\draw [gray!50] (vnodev1)--(cnodeu4v1);
\node[above of=cnodeu1v1, node distance=0.8cm, xshift =0.2cm, text width=4cm, align=center, scale=0.8, text=white](numRnodeLabel){};
\node[below of=cnodeu1v3, node distance=1.8cm, text width=3.5cm, align=center, scale=0.8, text=white](rightDist1){Right nodes connected to $\tu_1$ have degree 1; others have degree 2};
\end{tikzpicture}
\end{adjustwidth}
\vspace{-0.1cm}
\caption{$t=1$.}
\label{subfig:graph_1B_non_sym_t=1}
\end{subfigure}
\begin{subfigure}[B]{0.24\linewidth} 
\centering 
\begin{adjustwidth}{1cm}{0cm}
\begin{tikzpicture}
\node[vnodeFaded](vnodeu1){$\tu_1$};
\node[vnode, below of=vnodeu1, node distance=1cm](vnodeu2){$\tu_2$};
\node[vnode, below of=vnodeu1, node distance=2cm](vnodeu3){$\tu_3$};
\node[vnode, below of=vnodeu1, node distance=3cm](vnodeu4){$\tu_4$};

\node[vnodeFaded, below of =vnodeu4, node distance=1.5cm](vnodev1){$\tv_1$};
\node[vnode, below of=vnodev1, node distance=1cm](vnodev2){$\tv_2$};
\node[vnodeFaded, below of=vnodev1, node distance=2cm](vnodev3){$\tv_3$};

\node[cnode, right of= vnodeu1, yshift=-1cm, node distance=2.5cm](cnodeu1v1){$\tu_1 \tv_1$};
\node[cnode, below of= cnodeu1v1, node distance=1cm](cnodeu4v1){$\tu_4 \tv_1$};
\node[cnode, below of= cnodeu1v1, node distance=2cm](cnodeu3v2){$\tu_3 \tv_2$};
\node[cnode, below of= cnodeu1v1, node distance=3cm](cnodeu1v3){$\tu_1 \tv_3$};
\node[cnode, below of= cnodeu1v1, node distance=4cm](cnodeu2v2){$\tu_2 \tv_2$};

\draw [gray!50] (vnodeu1)--(cnodeu1v1);
\draw [gray!50] (vnodev1)--(cnodeu1v1);
\draw [gray!50] (vnodeu1)--(cnodeu1v3);
\draw [gray!50] (vnodev3)--(cnodeu1v3);
\draw (vnodeu3)--(cnodeu3v2);
\draw (vnodev2)--(cnodeu3v2);
\draw (vnodeu4)--(cnodeu4v1);
\draw [gray!50] (vnodev1)--(cnodeu4v1);
\draw (vnodeu2)--(cnodeu2v2);
\draw (vnodev2)--(cnodeu2v2);
\node[above of=cnodeu1v1, node distance=0.8cm, xshift =0.2cm, text width=4cm, align=center, scale=0.8, text=white](numRnodeLabel){};
\node[below of=cnodeu1v3, node distance=1.8cm, text width=3.5cm, align=center, scale=0.8, text=white](rightDist1){Right nodes connected to $\tu_1$ have degree 1; others have degree 2};
\end{tikzpicture}
\end{adjustwidth}
\vspace{-0.1cm}
\caption{$t=2$.}
\label{subfig:graph_1B_non_sym_t=2}
\end{subfigure}
\caption{\small 
(a): Pruned bipartite graph for stage B when $\bX $ is non-symmetric.  Here, $\bX= \sigma \bu \bv^T = \tilde{\bu} \tilde{\bv}^T$ with $\tu_j\neq 0$ for $j\in [4]$, $\tu_j=0$ for $5\le j\le n$, $\tv_j\neq 0$ for $j\in [3]$ and $\tv_j=0$ for $4\le j\le n$. 
(b)--(d): The graph process that models the recovery of $\tilde{\bu}$ and $\tilde{\bv}$. At $t=0$, $\tu_1$ is recovered as 1 and peeled off.  Following the recovery of $\tu_1$,  the left nodes $\tv_1, \tv_3$ (and  $\tu_4$) can be peeled off too. }
\label{fig:graph_process_1B_non_sym}
\end{figure}

Since a non-symmetric matrix has no entries in squared form,  the peeling process is  initiated  by arbitrarily assigning the value 1 to one of the left nodes who have at least one neighbouring right node. At $t=0$, this left node is  peeled off from the graph, and its connecting right node then have degree 1.   In  Fig. \ref{subfig:graph_1B_non_sym_t=0}, the algorithm recovers $\tu_1$ as 1 and peels it off at $t=0$. This turns the right nodes $\tu_1\tv_1$ and $\tu_1\tv_3$ from degree-2 into degree-1.
 
 For $t\ge 1$, like in the symmetric case, the algorithm peels one degree-1 right node at a time along with the left node connecting to the right node, until no degree-1 right nodes remain in the graph.   In Figs. \ref{subfig:graph_1B_non_sym_t=1} and \ref{subfig:graph_1B_non_sym_t=2}, at $t=1$ and $t=2$, $\tv_1$ and $\tv_3$ are recovered and peeled off sequentially. 
 
For rank $r>1$, when both sets of singular vectors $\{\bu_i\}$ and $\{\bv_i\}$ have disjoint supports, the bipartite graph for  stage B consists of $r$ disjoint subgraphs. The peeling algorithm in this case is equivalent to the rank-1 algorithm run in parallel on the $r$ subgraphs.

For  the general case with $r>1$ and the $\{\bu_i\}$ or $\{\bv_i\}$ having overlapping supports, like in the symmetric case, we use a large enough sketch  such that stage A  recovers all the nonzeros in $\bX$. A singular value decomposition on the recovered nonzero submatrix gives the nonzeros in $\{\bu_i, \bv_i\}$.


\section{Main results\label{sec:main_results}}
\begin{thm}[Noiseless symmetric case]
\label{thm:main_result_symm}%
Consider the matrix  $\bX = \sum_{i=1}^r \lambda_i \bv_i \bv_i^{T} $ \edit{where each eigenvector  $\bv_i$ has  $k$ nonzero entries.}
For sufficiently large $k$, the sketching scheme with recovery algorithm described in Sections \ref{subsec:sketching_scheme}--\ref{subsec:recovery_alg_rankr} 
has the following guarantees.
\begin{enumerate}[label=\arabic*)]
\item \label{thm:results_disjoint_supports} For the case $r=1$  or $r>1$ with the supports of $\{\bv_i\}$ disjoint, fix $\deltaRangeInLine$ and let  $\nBins= d r\tk/(\delta \ln k)$. Here, we recall that $\tk = \binom{k}{2}+k$, and 
$d \ge 2$ is  the number of ones in each column of $\bH$ (i.e., the degree of each left node in the pruned stage A bipartite graph).

Then, with probability at least $1 \, -  2r\exp(-\frac{1}{30} k^{1-\delta})$,
the two-stage algorithm recovers $\{\bv_i\}$ (up to a sign ambiguity) from the sketch of size $\nMeas=2\nBins$  with running time $\bigo( rk^2 / \ln k)$. 

\item \label{thm:results_overlapping_supports} When  $r>1$ and  the supports of $\{\bv_i\}$ may overlap, with probability at least $1-\bigo(k^{-2})$,  stage A of the  algorithm followed by  eigendecomposition of the recovered nonzero submatrix  recovers  $\{\bv_i\}$ 
 from a sketch of size $\nMeas = 3r\tk$ with running time $\bigo\left((rk)^3\right)$.
\end{enumerate}
\end{thm}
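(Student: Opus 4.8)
The plan is to handle the two cases separately: reduce Part \ref{thm:results_disjoint_supports} to the rank-1 analysis of the two stages, and read off Part \ref{thm:results_overlapping_supports} from the compressed-sensing guarantee of \cite{li2019sublinear,bakshi2016SHOFA}. For Part \ref{thm:results_disjoint_supports}, when the supports of $\{\bv_i\}$ are disjoint we have $X_{ab}\neq 0$ only if $a,b$ lie in the same $\supp(\bv_i)$, so the nonzero entries of $\bX$ split into $r$ groups of $\tk$ entries, the stage B graph decomposes into $r$ components (one per $\bv_i$, each initialised from a recovered diagonal entry of that $\bv_i$), and the stage A graph has $N:=r\tk$ left nodes of degree $d$ over $\nBins=dN/(\delta\ln k)$ right nodes whose per-bin statistics coincide with those of the rank-1 graph. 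So it suffices to show that stage A recovers at least a fraction $k^{-\delta}$ of the $\binom{k}{2}$ above-diagonal entries and at least one of the $k$ diagonal entries of each $\bv_i$, and that stage B then completes, each event failing with probability $\le\exp(-\tfrac1{30}k^{1-\delta})$; a union bound over the $2r$ events then gives the claim. For stage A I would show that even a single round of peeling recovers enough (this is Lemma \ref{lem:1A_rank_r}): a fixed nonzero entry $v$ of $\bX$ sits in $d$ bins, each of which contains, besides $v$, a number of other nonzero entries distributed approximately $\pois{\delta\ln k}$ (as $dN$ edges are spread over $dN/(\delta\ln k)$ bins), so each such bin is a singleton for $v$ with probability $\approx k^{-\delta}$ and $v$ is recovered in round~0 with probability $\gtrsim dk^{-\delta}$; hence each $\bv_i$ has $\gtrsim dk^{2-\delta}$ above-diagonal and $\gtrsim dk^{1-\delta}$ diagonal entries recovered in expectation, comfortably above the thresholds. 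The tail bound follows from the negative association of the bin-degree vector $(D_1,\dots,D_{\nBins})$, $D_j:=|\supp(\bx)\cap\mc N(j)|$ (an occupancy/permutation argument as in \cite{dubhashi1998balls}), which makes the recovery counts concentrate like sums of independent indicators; a Chernoff bound for negatively associated variables gives the $\exp(-\tfrac1{30}k^{1-\delta})$ bound after fixing constants.

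For stage B, the key observation is that it is a breadth-first exploration on the graph $\tilde G$ on vertex set $[k]$ (the nonzero coordinates of $\tbv$) with an edge $\{i,j\}$ whenever the product $X_{ij}=\tv_i\tv_j$ was recovered in stage A: peeling the degree-1 right node $\tv_i\tv_j$ with $\tv_i$ known recovers $\tv_j$, so the coordinates recovered are exactly the connected component of a recovered-diagonal vertex $j_0$ in $\tilde G$, and stage B succeeds iff $\tilde G$ is connected. The edges of $\tilde G$ are not independent (they are the output of the stage A process), so I would use the comparison announced in Section \ref{sec:intro}: the above-diagonal entries recovered in round~0 of stage A are a subset of those recovered by the full stage, and revealing the columns of $\bH$ in a suitable order couples $\tilde G$ to contain a graph $\tilde G'$ whose edges are present independently with probability $p'=\Omega(k^{-\delta})$. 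Since $p'\gg(\ln k)/k$ for $\deltaRangeInLine$, disconnection of $\tilde G'$, hence of $\tilde G$, is dominated by the existence of an isolated vertex, which has probability $\le k(1-p')^{k-1}\le\exp(-\tfrac1{30}k^{1-\delta})$ for large $k$. For the complexity: $\nMeas=2\nBins=\bigo(rk^2/\ln k)$; classifying the bins costs $\bigo(\nBins)$, the stage A peeling touches $\bigo(d)$ bins for each of $\bigo(rk^{2-\delta})$ recovered entries, and stage B over the $r$ components costs $\bigo(rk^{2-\delta})$, so the total is $\bigo(rk^2/\ln k)$ because $k^{2-\delta}=o(k^2/\ln k)$.

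For Part \ref{thm:results_overlapping_supports}, stage A is run with the larger sketch so that it recovers \emph{all} nonzero entries of $\bX$, which is precisely the peeling-based compressed-sensing recovery of the vectorized upper triangle $\bx$ in \cite{li2019sublinear,bakshi2016SHOFA}. Because $X_{ab}\neq 0$ only if $a,b$ lie in a common $\supp(\bv_i)$, $\bx$ has at most $K:=r\tk$ nonzeros, and the sketch of size $\nMeas=3r\tk$ corresponds to $\nBins=\tfrac32 K$ bins; the recovery guarantee of \cite{li2019sublinear,bakshi2016SHOFA} (with $d$ as chosen there) then gives full recovery with probability $1-\bigo(K^{-1})=1-\bigo(k^{-2})$. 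The recovered entries lie in a submatrix of size at most $rk\times rk$, and an eigendecomposition of this submatrix returns $\{\bv_i\}$ up to sign (and up to rotations within any repeated-eigenvalue subspace) in time $\bigo((rk)^3)$, which dominates the $\bigo(rk^2)$ cost of stage A.

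The reduction to rank-1, the running-time accounting, and the appeal to the prior compressed-sensing result are routine; the crux is controlling \emph{dependence}. In stage A the bin degrees are coupled through $\sum_j D_j=dN$, which is handled by negative association together with a Chernoff bound; in stage B the edges of $\tilde G$ are produced by the stage A peeling and are neither independent nor monotone in obvious ways, which is handled by replacing the true process by the easier comparison process and a stochastic-domination coupling. Arranging that both arguments deliver the \emph{same} exponent $\tfrac1{30}k^{1-\delta}$, and checking the ``sufficiently large $k$'' threshold throughout, is where the real technical work lies.
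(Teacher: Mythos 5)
Your stage A argument and Part~2 are essentially the paper's. For stage A you use negative association of the bin degrees together with a Chernoff bound, exactly as in Lemma~\ref{lem:1A_rank_r}; the paper works with exact Binomials rather than a Poisson approximation, but this is cosmetic. Part~2 is identical: invoke Theorem~4 of \cite{li2019sublinear} on the vectorized upper triangle (which is $\Theta(r\tk)$-sparse) and then diagonalize the $\bigo(rk)\times\bigo(rk)$ recovered block.

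Your stage B argument is genuinely different from the paper's. You phrase the success of stage~B as connectivity of the graph $\tilde G$ on $[k]$ whose edges are the recovered pairwise products, and then bound the probability of disconnection via an isolated-vertex argument in an Erd\H{o}s--R\'enyi-type model. The paper instead runs a process-based analysis (Lemmas~\ref{lem:initial_left_degree_bound_1B}--\ref{lem:deg_1_rnodes_upto_t0}): it replaces the original stage~B bipartite graph by an ``alternative'' one in which right nodes are drawn \emph{with replacement}, tracks the count $C_1(t)$ of degree-1 right nodes through the peeling, shows $C_1(t)>0$ up to $t_0=\frac{k-1}{20}$, and then shows every remaining left node has a degree-1 neighbor. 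Your route is conceptually cleaner for the symmetric case and in fact produces a much sharper exponent (roughly $k\,e^{-k^{1-\delta}}$ rather than $e^{-\frac{1}{30}k^{1-\delta}}$); the paper's route, though less sharp, is chosen because it transfers almost verbatim to the non-symmetric bipartite stage~B (Lemma~\ref{lem:non_sym_Stage_B}), where plain graph connectivity is the wrong notion and you would have to re-derive a bipartite analogue.

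Two steps in your sketch need to be tightened. First, the coupling ``revealing the columns of $\bH$ in a suitable order couples $\tilde G$ to contain a $\tilde G'$ with independent edges'' is not obviously correct as stated and is also unnecessary: the paper's own symmetry argument (proof of \eqref{eq:bound_on_all_Ai_rank_r}) shows that, conditioned on $\sfA_i$, the set of recovered above-diagonal entries in block $i$ is a \emph{uniformly random} $\sfA_i\binom{k}{2}$-subset of the $\binom{k}{2}$ pairs, i.e.\ $\tilde G\sim G(k,m)$ with $m=\sfA_i\binom{k}{2}$; you can then bound $\prob{v\text{ isolated}}=\binom{\binom{k-1}{2}}{m}/\binom{\binom{k}{2}}{m}\le e^{-\Theta(m/k)}=e^{-\Theta(k^{1-\delta})}$ directly, with no need to pass to $G(k,p')$. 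Second, ``disconnection is dominated by the existence of an isolated vertex'' is only the leading term; to make it a bound you need the standard sum over cut sizes $\sum_{s=1}^{k/2}\binom{k}{s}\prob{[s]\text{ has no edge to }[s]^c}$ (or a citation to the classical connectivity result for $G(k,m)$ in the dense regime $m\gg k\ln k$), together with a line reducing ``stage~B succeeds'' to ``the component of the chosen recovered diagonal index $j_0$ is all of $[k]$,'' which in turn follows from connectivity plus $N_{Di}\ge 1$. Once these are filled in, your argument is a correct and in some respects slicker proof of Lemma~\ref{lem:1B_rank_r}, and the final union-bound assembly of the theorem is identical to \eqref{eq:proof_Thm1_from_two_lemmas}.
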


\textbf{Remarks}:
\begin{enumerate}
    \item The eigenvectors $\{\bv_i\}$ can only be recovered up to a sign ambiguity since $\bv_i \bv_i^T = (-\bv_i)(-\bv_i)^T$. Moreover, when there is an eigenvalue with multiplicity   $m>1$,  the corresponding eigenvectors can only be recovered up to a rotation within the $m$-dimensional subspace they span. 

\item When the eigenvectors have different sparsities $\{k_1, \ldots, k_r \}$, Theorem \ref{thm:main_result_symm}  holds with the sample complexity depending on $\max\{ k_1, \ldots, k_r \}$ and the success probability  on $\min\{ k_1, \ldots, k_r \}$.

    \item \edit{The probability guarantees are with respect to the randomness in the sketching matrix (specifically, in  the locations of the ones in the parity check matrix $\bH$).
No randomness assumptions are made on  the eigenvectors $\{ \bv_i \}$, and the result holds for    any rank-$r$ matrix with $k$-sparse eigenvectors.}

    \item The bound on the success probability  in part \ref{thm:results_disjoint_supports}  is not tight. The bound also does not depend $d$ since this parameter appears in a term in the proof that is of a smaller order. In the simulations in Sections \ref{subsec:numerical_results} and \ref{subsec:numerical_results_noisy}, we use $d=2$. 
    
    \item \edit{The upper bound on the failure probability depends only on $k$.  In contrast, conventional sketching schemes using optimization based recovery, e.g. \cite{oymak2015simultaneously}, have failure probabilities decaying with $n$. This is because  the sketch size  in our scheme depends only on $k$, and not on the ambient dimension $n$, unlike the conventional schemes. In practice, the failure probability of our scheme is found to be extremely small even for small values of $\frac{k}{n}$  (where our failure probability is expected to decay slower). See the sharp phase transitions in performance measures in Figs. \ref{fig:prob_all_supports_recovered} and \ref{fig:noiseless_oymak}.}

    \item In part \ref{thm:results_disjoint_supports}, the first stage of the algorithm is similar to the peeling decoder for LDPC codes over an erasure channel \cite{luby2001efficient,richardson2008modern} and the decoder used for compressed sensing in \cite{bakshi2016SHOFA,li2019sublinear}. However, our sketching matrix $\bB$ (defined via the parity check matrix $\bH$) has row weights that scale with the sparsity level $k$. \edit{This  implies that each iteration of the peeling algorithm can introduce large changes in the degree distribution, which makes it challenging to bound the evolution of the peeling process.} Thus the existing peeling decoder analysis based on density evolution \edit{and Doob martingales} cannot be applied.  See comments following Lemma \ref{lem:1A_rank_r}.
    
\end{enumerate}

 \begin{thm}[Noiseless non-symmetric case]
\label{thm:nonsym_main_results}
Consider the matrix  $\bX = \sum_{i=1}^r \sigma_i \bu_i \bv_i^{T} $, \edit{where  each left singular vector $\bu_i$ has $k$ nonzero entries and each right singular vector $\bv_i$ has $\beta k$ nonzero entries}, for 
 some constant $\beta \in (0,1]$. 
For sufficiently large $k$, the sketching scheme with recovery algorithm described in Section \ref{subsec:nonsymm} has the following guarantees.
\begin{enumerate}[label=\arabic*)]
\item \label{thm:nonsym_results_disjoint_supports} 
 For the case $r=1$  or $r>1$ with each set of singular vectors $\{\bu_i\}$  and $\{\bv_i\}$  having disjoint supports,  fix $\delta \in (0, \frac{1}{2})$ and let $\nBins= d r \beta  k^2/(\delta \ln k)$. Then, with probability at least $1-2r \exp(-\frac{\beta}{8} k^{1-2\delta} )$, the two-stage algorithm recovers $\{\bu_i, \bv_i\}$  from the sketch of size $\nMeas=2\nBins$  with running time $\bigo( rk^2 / \ln k)$.
\item \label{thm:nonsym_results_overlapping_supports} When  $r>1$ and  the $\{\bu_i\}$  or $\{\bv_i\}$ have overlapping supports, with probability at least $1-\bigo(k^{-2})$,  stage A of the  algorithm followed by  singular value decomposition of the recovered nonzero submatrix  recovers $\{\bu_i, \bv_i\}$ 
 from a sketch of size $\nMeas = 3rk^2$ with running time $\bigo\left((rk)^3\right)$.
\end{enumerate}
\end{thm}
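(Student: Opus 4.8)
The plan is to reprise the two-stage analysis of Theorem~\ref{thm:main_result_symm}, tracking the two features that change in the non-symmetric setting: the unknowns feeding stage~A, and the geometry of the graph driving stage~B. For part~\ref{thm:nonsym_results_disjoint_supports}, consider first stage~A. Because the sketching operator now acts on all of $\bX=\sum_i\sigma_i\bu_i\bv_i^T$ and the $\{\bu_i\}$, $\{\bv_i\}$ each have disjoint supports, the pruned stage-A graph has left nodes indexed by the $r\beta k^2$ nonzero entries of $\bX$, each of degree $d$, and $\nBins=dr\beta k^2/(\delta\ln k)$ right nodes, so the mean bin occupancy is again $\delta\ln k$. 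The stage-A argument behind Theorem~\ref{thm:main_result_symm} --- negative association of the bin (right-node) degrees plus the concentration bound of \cite{dubhashi1998balls}, i.e.\ the analogue of Lemma~\ref{lem:1A_rank_r} --- then applies essentially verbatim and shows that, with probability $1-\exp(-\Theta(k^{1-\delta}))$, stage~A recovers at least a fraction of order $k^{-\delta}$ of the nonzero entries of $\bX$ (uniformly over the $r$ rank-one blocks); unlike the symmetric case we do not need a recovered diagonal entry, since stage~B is seeded arbitrarily (some left node is set to the value $1$).

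The new work is stage~B. Fix a block, $\bX_i=\sigma_i\bu_i\bv_i^T=\tbu_i\tbv_i^T$ with $|\supp(\tbu_i)|=k$ and $|\supp(\tbv_i)|=\beta k$; the products recovered in stage~A form a random bipartite graph $G_i=(U_i,V_i,E_i)$ on $|U_i|=k$ and $|V_i|=\beta k$ nodes, and the stage-B peeling clears this block iff the component of $G_i$ through the seeded node is all of $U_i\cup V_i$ --- that is, iff $G_i$ is connected. Following the route flagged in the introduction, I would model the interleaved revealing-of-$E_i$ and peeling as a random graph process, couple it to a simpler auxiliary process whose evolution is explicitly trackable, and reduce the failure event to the presence of an isolated left node; the binding case is a node $u\in U_i$ none of whose $\beta k$ incident products was recovered, and a first-moment bound over the $k$ such nodes --- together with a union over the $r$ blocks and the two sides --- delivers the claimed $1-2r\exp(-\tfrac{\beta}{8}k^{1-2\delta})$, the restriction $\delta\in(0,\tfrac12)$ being exactly what keeps the exponent $1-2\delta$ positive. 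The running time is dominated by the $\bigo(\nBins)=\bigo(rk^2/\ln k)$ peeling steps. The real obstacle, as in the symmetric case, is that $E_i$ is not a graph with independent edges: its law is induced by the random $\bH$ and by the order in which stage-A singletons are peeled off, so the dependencies must be absorbed by the comparison process rather than dispatched by an independence argument; the bipartite (rather than general-graph) structure and the value-$1$ seeding are the extra bookkeeping relative to Theorem~\ref{thm:main_result_symm}.

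For part~\ref{thm:nonsym_results_overlapping_supports} stage~B is unavailable, since with overlapping supports the entries of $\bX$ are no longer pairwise products of a single vector's entries; so I would instead take $\nBins=\tfrac32 rk^2$ and run stage~A alone, which is now exactly compressed-sensing recovery --- via the scheme of \cite{li2019sublinear,bakshi2016SHOFA} --- of the vectorized matrix, a vector with at most $r\beta k^2\le rk^2$ nonzeros. At this constant oversampling ratio ($\nMeas$ exceeds the number of nonzeros by a factor at least $3$), the standard peeling-decoder guarantee, for an appropriate constant left-degree $d$, recovers every nonzero entry with probability $1-\bigo(k^{-2})$. The recovered nonzeros lie in an at most $rk\times rk$ submatrix (its row and column supports sit inside $\cup_i\supp(\bu_i)$ and $\cup_i\supp(\bv_i)$), so a single SVD of that submatrix returns $\{\bu_i,\bv_i\}$ in time $\bigo((rk)^3)$, which together with the $\bigo(rk^2)$ peeling steps gives the stated running time. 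This part requires nothing new beyond the existing peeling analysis and the dimension bound on the support submatrix.
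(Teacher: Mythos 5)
Your treatment of part~\ref{thm:nonsym_results_overlapping_supports} is correct and matches the paper's own route, namely invoking the compressed-sensing guarantee of \cite{li2019sublinear} for the vectorized matrix (with $\bigo(rk^2)$ nonzeros and oversampling ratio $3$) followed by an SVD of the at-most-$rk\times rk$ recovered submatrix.

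Part~\ref{thm:nonsym_results_disjoint_supports} has a real gap. Your framing of stage~B as a connectivity question is correct (each right node has degree two, so the recovered products act as edges of a graph on the left nodes, and the peeling is exactly BFS from the seed), and the remark that the edge dependencies must be absorbed by a comparison process echoes the paper's with-replacement ``alternative graph process.'' But ``no isolated left node'' is strictly weaker than connectivity: a graph with no isolated vertices can still be disconnected. A first-moment bound over isolated nodes therefore does not close the argument; you would additionally need to rule out disconnected components of size at least two, which requires a separate union bound over cuts, and you never say how this is done.

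More concretely, the exponent you claim does not follow from the route you sketch. A $\tu$-node is isolated in $G_i$ iff none of its $\beta k$ incident products was recovered in stage~A; given $\sfA_i\approx k^{-\delta}$, this has probability roughly $(1-k^{-\delta})^{\beta k}\approx\exp(-\beta k^{1-\delta})$, and the $\tv$-side gives $\approx\exp(-k^{1-\delta})$. Both have exponent of order $k^{1-\delta}$, not $k^{1-2\delta}$, and would only require $\delta<1$. The paper's $\exp\bigl(-\tfrac{\beta}{8}k^{1-2\delta}\bigr)$ and the restriction $\delta\in\bigl(0,\tfrac12\bigr)$ come from a different bottleneck. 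The proof of Lemma~\ref{lem:non_sym_Stage_B} runs a two-round exploration: after peeling the seed $\tu$-node, Lemma~\ref{lem:bound_num_nonempty_bins_B_nonsym} shows that roughly $N\approx\tfrac{\beta}{2}k^{1-\delta}$ distinct $\tv$-nodes become recoverable; then, using Lemma~\ref{lem:num_singleton_edges_Stage_B_nonsym}, the number of degree-1 right nodes connecting each remaining $\tu$-node back to those $N$ recovered $\tv$-nodes is $\bin{\sfA_i\beta k^2}{N/(\beta k^2)}$, with mean $\sfA_i N\approx\tfrac{\beta}{2}k^{1-2\delta}$. The constraint $\delta<\tfrac12$ is precisely what makes this mean diverge, and the $k^{1-2\delta}$ exponent is the cost of requiring this second exploration round to hit every remaining $\tu$-node. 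Your stated origin of the exponent and of the $\delta<\tfrac12$ requirement is thus inconsistent with your own isolated-node calculation, and the actual mechanism that produces them --- the two-round BFS --- is absent from your plan.
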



Remarks similar to those for the symmetric case (below Theorem \ref{thm:main_result_symm})  hold for the  non-symmetric case as well. 

\subsection{Proof of Theorem \ref{thm:main_result_symm}}

\paragraph{Part \ref{thm:results_disjoint_supports}.} In the setting of part \ref{thm:results_disjoint_supports},  the eigenvectors $\{\bv_i \}$ have disjoint supports with each $\bv_i$ being $k$-sparse. In this case, the nonzero entries of $\bX=\sum_{i=1}^r \lambda_i\bv_i\bv_i^T$ form $r$ disjoint submatrices, each of size $k \times k$. The result of Theorem \ref{thm:main_result_symm}  for this setting is proved via  two lemmas, which give high probability bounds on the number of nonzero matrix entries recovered in stage A and the number of nonzero eigenvector entries recovered in stage B, respectively.

\begin{lem}\label{lem:1A_rank_r}
Consider the setting of part \ref{thm:results_disjoint_supports} of Theorem \ref{thm:main_result_symm}.  
Let $\sf{A}$ be the fraction (out of $r \tk$)  of nonzero  entries in the upper triangular part of $\bX$ recovered in  stage A.   Then, with $\deltaRangeInLine$ as defined in Theorem \ref{thm:main_result_symm},  there exists   $\alphaUB = k^{-\delta} - o(k^{-\delta}) $ such that for sufficiently large $k$,
\beq\label{eq:bound_on_A_rank_r}
  \prob{\sfA \ge \alphaUB} \ge 1-4  \exp\left(-\frac{dr}{\nConst\delta^2} \frac{k^{2-\delta}}{\ln^2 k }\right) \, . 
\eeq
Moreover, in the $k \times k$ nonzero submatrix corresponding to $\lambda_i \bv_i \bv_i^{T}$, let $\sfA_i$ be the fraction (out of $\binom{k}{2}$)  of  above-diagonal entries  recovered and let $N_{Di}$ be the number of diagonal entries recovered, for $i \in [r]$.   Then there exists   $\alphaUB_{i} = k^{-\delta} - o(k^{-\delta}) $ for $i\in[r]$ such that for sufficiently large $k$,
\beq\label{eq:bound_on_all_Ai_rank_r}
\prob{ N_{Di} \ge 1 \text{ and }  \sfA_i \ge \alphaUB_{i}, \; \forall\; i\in [r]} \ge
  1-2r\exp \left(-\frac{1}{2}  k^{1-\delta}\right).
\eeq
\end{lem}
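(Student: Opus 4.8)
\emph{Strategy and the initial round.}
It suffices to lower-bound the number of nonzero entries of $\bX$ that already lie in a genuine singleton bin at the first round $t=0$ of stage A, since the peeling process recovers at least these, and a bin of degree exactly one is always correctly identified and inverted by the singleton test \eqref{eq:noiseless_singleton_test}. Condition on the (uniformly random) support of $\bX$; as the columns of $\bH$ are drawn independently of it, the pruned stage-A graph is a bipartite graph on $r\tk$ left nodes (the nonzeros of $\bX$, split into $r$ blocks of size $\tk=\binom k2+k$) and $\nBins=dr\tk/(\delta\ln k)$ right nodes, where each left node independently takes a uniform random $d$-subset of the right nodes. Let $D_j$ be the degree of right node $j$ in this graph and $N_1:=\sum_{j=1}^{\nBins}\mathbbm{1}\{D_j=1\}$. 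Each singleton pins down one left node, and each left node (of degree $d$) is pinned by at most $d$ singletons, so the number recovered at $t=0$ is at least $N_1/d$ and hence $\sfA\ge N_1/(dr\tk)$. Since each left node hits a fixed right node with probability $d/\nBins$, $D_j$ is a sum of $r\tk$ independent $\bern{d/\nBins}$ variables with mean $\lambda:=dr\tk/\nBins=\delta\ln k$, and a Poisson/Chernoff comparison gives $\prob{D_j=0}=k^{-\delta}(1+o(1))$ and $\prob{D_j=1}=\delta(\ln k)k^{-\delta}(1+o(1))$ uniformly in $j$, so $\E[N_1]=dr\tk\,k^{-\delta}(1+o(1))$. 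Everything now reduces to a lower-tail bound for $N_1$ (and for block-restricted analogues), which I write as $N_1=S_1-S_2$ with $S_m:=\sum_j\mathbbm{1}\{D_j\ge m\}$, so $\E[S_1]-\E[S_2]=\E[N_1]$ while $\E[S_1],\E[S_2]=\Theta(\nBins)$.

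\emph{Concentration.}
The degrees $\{D_j\}_{j\in[\nBins]}$ are negatively associated: within a left node the indicators of the chosen $d$ right nodes form a uniform random $d$-subset and are thus NA, these families are independent across left nodes, and each $D_j$ is an increasing function of a disjoint block of this NA collection. Hence $\{\mathbbm{1}\{D_j\ge m\}\}_j$ is NA for each $m$, its moment generating function is dominated by the independent one, and a Bernstein-type inequality (as in \cite{dubhashi1998balls}) applies to the lower tail of $S_1$ and the upper tail of $S_2$. The decisive point is the variance: by negative association, $\mathrm{Var}(S_m)\le\sum_j\mathrm{Var}(\mathbbm{1}\{D_j\ge m\})\le\sum_j\prob{D_j\le m-1}=O(\nBins\,k^{-\delta}\ln k)=O(k^{2-\delta})$, which is far below the trivial bound $\nBins=\Theta(k^2/\ln k)$. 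Plugging this variance into Bernstein shows each of $S_1,S_2$ deviates from its mean by more than $\Theta(k^{2-\delta}/\ln k)=o(k^{2-\delta})$ only with probability $\exp(-\Theta(k^{2-\delta}/\ln^2 k))$. On the complement, $N_1=S_1-S_2\ge dr\tk\,k^{-\delta}-o(k^{2-\delta})=dr\tk\,k^{-\delta}(1-o(1))$, so $\sfA\ge k^{-\delta}(1-o(1))=:\alphaUB$, which is \eqref{eq:bound_on_A_rank_r} after absorbing constants (the factor $dr$ in the exponent stems from $\nBins\propto dr$).

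\emph{Per-block bounds.}
For the off-diagonal fractions in \eqref{eq:bound_on_all_Ai_rank_r}, repeat the counting and concentration argument with $N_1$ replaced by $N_1^{(i)}$, the number of $t=0$ singletons whose unique nonzero is an above-diagonal entry of block $i$: here $\E[N_1^{(i)}]=d\binom k2 k^{-\delta}(1+o(1))$ and the same NA/Bernstein estimate yields $\sfA_i\ge k^{-\delta}(1-o(1))$ for all $i$ outside an event of probability $r\exp(-\Theta(k^{2-\delta}/\ln^2 k))$. For the diagonal entries, expose the bin choices of all left nodes except the $k$ diagonal entries of block $i$; with high probability a $\Theta(k^{-\delta})$-fraction of the right nodes, i.e.\ $\Theta(\nBins k^{-\delta})$ of them (a number growing faster than $k$ since $\delta<1$), remain free of nonzeros, and conditionally each of the $k$ diagonal entries of block $i$ independently hits a free bin with probability at least $k^{-\delta}(1-o(1))$; because the free bins outnumber these $k$ entries by a growing factor, with high probability no two collide in a free bin, so each such hit produces a genuine singleton. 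Thus at least one diagonal entry of block $i$ is recovered except with probability at most $(1-k^{-\delta}(1-o(1)))^k\le\exp(-\tfrac12 k^{1-\delta})$ for large $k$, and a union bound over $i\in[r]$ and over the diagonal- and off-diagonal-failure events gives the factor $2r\exp(-\tfrac12 k^{1-\delta})$ in \eqref{eq:bound_on_all_Ai_rank_r}; the constant $\tfrac12$ is deliberately loose, consistent with the remark that this bound is not tight.

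\emph{Main obstacle.}
Because the target fraction $k^{-\delta}$ lies a constant factor below the expected recovered fraction $\approx dk^{-\delta}$, the expectation computations are routine and no analysis of the $t\ge1$ peeling rounds is needed; the real work is the concentration step, in which $N_1$ is a sum of $\Theta(k^2/\ln k)$ \emph{dependent} indicators with mean only $\Theta(k^{2-\delta})$, so Hoeffding's bound is far too weak. The two things that must be done carefully are (i) establishing negative association for the (slightly non-standard) model in which each ball occupies $d$ distinct bins, and (ii) exploiting the small variance $\Theta(k^{2-\delta})$ via a Bernstein-type inequality for NA variables --- exactly the regime (right-node degrees scaling like $\ln k$) in which the classical density-evolution analysis of peeling decoders does not apply.
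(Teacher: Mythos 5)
Your argument for \eqref{eq:bound_on_A_rank_r} is sound and follows essentially the same route as the paper: your decomposition $\mathbbm{1}\{D_j=1\} = \mathbbm{1}\{D_j\ge 1\} - \mathbbm{1}\{D_j\ge 2\}$ is a cosmetic variant of the paper's $\mathbbm{1}\{Z_j\le 1\} - \mathbbm{1}\{Z_j=0\}$, and negative association plus a Chernoff/Bernstein bound is exactly the mechanism used there. Your per-block treatment of the off-diagonal fractions $\sfA_i$ would also go through, though you should spell out that the block-restricted singleton indicator is not a monotone function of the global degree $D_j$; you need joint NA of the block-decomposed degree vector $(D_j^{(i,\mathrm{off})}, D_j^{(\mathrm{rest})})_{j\in[\nBins]}$, which does hold by the same closure lemmas, but requires an extra line.

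The genuine gap is in your argument for $N_{Di}$. You expose the bin choices of all left nodes except the $k$ diagonal entries of block $i$, identify the roughly $\nBins k^{-\delta}$ free bins, and then claim that ``with high probability no two collide in a free bin'' so that each hit produces a genuine singleton, concluding a failure probability of at most $(1-k^{-\delta}(1-o(1)))^k \le \exp(-\tfrac12 k^{1-\delta})$. But the no-collision event is not exponentially likely: each of the $\binom{k}{2}$ pairs of diagonal entries collides in a fixed bin with probability $(d/\nBins)^2$, so summing over the $O(\nBins k^{-\delta})$ free bins gives a per-pair collision probability $O(k^{-\delta}d^2/\nBins)$, and the total collision probability is $O\left(\binom{k}{2}\,k^{-\delta}d^2/\nBins\right) = O(k^{-\delta}\ln k)$ --- only \emph{polynomially} small. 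You cannot absorb this into the $\exp(-\tfrac12 k^{1-\delta})$ bound, and your argument only establishes $\prob{N_{Di}=0}=O(k^{-\delta}\ln k)$, which is far too weak for \eqref{eq:bound_on_all_Ai_rank_r}. The paper avoids the collision issue entirely by a different and more robust argument: it conditions on the \emph{final} recovered fraction $\sfA$, observes that by exchangeability of the left nodes (the columns of $\bH$ are i.i.d.) the recovered set is a uniformly random subset of size $\sfA r\tk$, so the indicators $\barV_l$ form a permutation distribution and are NA by Lemma \ref{lem:NA_closure}\ref{lem:perm_dist_are_NA}; then $N_{Di} = \sum_{l\in\mc{S}_{Di}}\barV_l$ is a sum of $k$ NA Bernoulli$(\sfA)$ variables, and the NA Chernoff lower tail directly yields $\exp(-\tfrac12 k^{1-\delta})$. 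To repair your version you would need either to adopt this global exchangeability view, or to prove that the events $\{\text{diagonal entry } l \text{ is alone in a free bin}\}_{l\in[k]}$ are themselves negatively associated so that the product bound $(1-k^{-\delta}(1-o(1)))^k$ is legitimate; the ``no collisions occur'' step as written does neither.
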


The proof of the lemma, given in Section \ref{subsec:proof_lem1A}, first establishes that in the bipartite graph at the start of stage A (Fig. \ref{subfig:pruned_graph_1A_t=0}), the degrees of the right nodes are each Binomial with mean $\delta \ln k$ and negatively associated. (Negative association is defined in Section \ref{subsec:prelim}). A Chernoff bound for negatively associated random variables is then used to obtain a high probability guarantee on the number of left nodes that are connected to singleton right nodes (and can hence be recovered).

 The next lemma shows that the conditional probability of recovering all the nonzero entries in each eigenvector is close to 1, given the high probability event in \eqref{eq:bound_on_all_Ai_rank_r}.
  \begin{lem} \label{lem:1B_rank_r}
 Consider the setting of part \ref{thm:results_disjoint_supports} of Theorem \ref{thm:main_result_symm} and let $ N_{Di}, \sfA_i,  \alpha_i^*$ be as defined in Lemma \ref{lem:1A_rank_r}.  Let $\sfB_i$ be the fraction (out of $k$)  of nonzero entries of $\bv_i$ that are recovered by the end of stage B.  Then, for  all $N_{Di}\geq 1, \sfA_i \ge \alphaUB_{i}$ and  sufficiently large $k$ we have:
\begin{equation}\label{eq:bound_on_Bi_rank_r}
\prob{\sfB_i < 1 \bigmid N_{Di} \, , \, \sfA_i } < \exp\left(-\frac{1}{30} k^{1-\delta}\right),  \quad  i \in [r].
\end{equation}
\end{lem}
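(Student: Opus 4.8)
\emph{Proof idea.} I would reduce the claim to a connectivity statement for a random graph and then invoke the classical ``well above threshold'' connectivity estimate, which is comfortably in our favour: the products recovered in stage A induce, on $\supp(\bv_i)$, a graph of average degree of order $k^{1-\delta}$, far above the connectivity threshold (of order $\ln k$).

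\emph{Reduction to connectivity.} Identify $\supp(\bv_i)$ with $[k]$ and form the graph $G_i$ on vertex set $[k]$ having an edge $\{a,b\}$ for every above-diagonal entry $X_{ab}$ of the $i$-th submatrix recovered in stage A (there are $\sfA_i\binom{k}{2}$ such edges). In stage B a right node is exactly an edge $\{a,b\}$ of $G_i$ and becomes peelable precisely when one of its endpoints is already recovered. The initialization assigns a value to the vertex $j$ attached to some recovered diagonal entry $X_{jj}$ — one exists since $N_{Di}\ge1$ — which only fixes the inherent sign ambiguity of $\bv_i$. Thereafter, since a connected graph always has an edge leaving a proper recovered subset, the peeling never stalls before the entire connected component of $j$ has been recovered. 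Hence $G_i$ connected implies $\sfB_i=1$, and it suffices to prove
\[ \prob{G_i\text{ disconnected}\bigmid N_{Di},\sfA_i} < \exp\!\left(-\tfrac{1}{30}k^{1-\delta}\right). \]

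\emph{Distribution of $G_i$.} The classification of the stage-A bins into zerotons, singletons and multitons is a function of the bin-membership hypergraph $\{\mc{N}(j)\}$ alone (the two-row DFT structure of $\bS$ prevents a multiton from being mistaken for a singleton for all but a measure-zero set of entry values), so the set of stage-A-recovered entries is a deterministic function of the stage-A bipartite graph, in which the $\tk$ left nodes each choose $d$ bins independently and uniformly and are therefore exchangeable. Consequently, conditioned on the number of recovered above-diagonal entries $\sfA_i\binom{k}{2}$ (and, independently of the diagonal count $N_{Di}$), the recovered above-diagonal set is uniform among subsets of the $\binom{k}{2}$ pairs of that size; that is, $G_i$ is a uniformly random graph $G(k,m)$ with $m=\sfA_i\binom{k}{2}$ edges. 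Since $\prob{G(k,m)\text{ disconnected}}$ is non-increasing in $m$ (couple $G(k,m)\subseteq G(k,m+1)$) and $\sfA_i\ge\alphaUB_i$, it suffices to bound it at $m=\lceil\alphaUB_i\binom{k}{2}\rceil$, for which the edge probability obeys $p:=m/\binom{k}{2}\ge\alphaUB_i=k^{-\delta}-o(k^{-\delta})$. (For $r>1$ with disjoint supports the $r$ submatrices decouple and the estimate applies to each separately.)

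\emph{Connectivity estimate and main obstacle.} Every disconnected graph has a vertex set $S$ with $1\le|S|\le k/2$ and no edge leaving $S$; a union bound with $\binom{k}{j}\le k^j$ and $(1-x)^m\le e^{-xm}$ gives
\[ \prob{G(k,m)\text{ disconnected}} \le \sum_{j=1}^{\lfloor k/2\rfloor}\binom{k}{j}\Big(1-\tfrac{j(k-j)}{\binom{k}{2}}\Big)^{m} \le \sum_{j\ge1}\big(k\,e^{-pk/2}\big)^{j}. \]
Since $pk/2\ge\tfrac12 k^{1-\delta}(1-o(1))$ and $\ln k=o(k^{1-\delta})$ (as $\delta<1$), we get $k\,e^{-pk/2}=\exp\!\big(\ln k-\tfrac12 k^{1-\delta}(1-o(1))\big)\to0$, so the geometric sum is at most $2k\,e^{-pk/2}\le\exp(-\tfrac{1}{30}k^{1-\delta})$ for all large $k$; with the reduction step this yields \eqref{eq:bound_on_Bi_rank_r}. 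The one delicate point is the exchangeability claim in the previous paragraph — that the recovered above-diagonal set is \emph{exactly} uniform given its size. This requires checking that the column-regularity and column-independence of $\bH$ make the stage-A output depend only on $\{\mc{N}(j)\}$, and disposing of the degenerate entry configurations in which a multiton could pass the singleton test (by a genericity argument, or by reusing the bin-classification mechanics already set up for Lemma~\ref{lem:1A_rank_r}); everything else is the textbook connectivity estimate for a random graph well above threshold.
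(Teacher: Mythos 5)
Your proposal is correct and takes a genuinely different, and in fact cleaner, route than the paper. The paper models stage B as a peeling process on an \emph{alternative} graph whose $\sfA_i\binom{k}{2}$ right nodes are drawn with replacement from the nonzero pairwise products (this over-counting can only hurt, giving an upper bound on the failure probability), and then tracks the process dynamically: it shows the initial left degrees and the per-iteration counts of degree-$1$ right nodes are Binomial (Lemmas~\ref{lem:initial_left_degree_bound_1B}, \ref{lem:num_singletons}), proves degree-$1$ right nodes persist through the first $t_0=\frac{k-1}{20}$ iterations (Lemma~\ref{lem:deg_1_rnodes_upto_t0}), and then shows every remaining left node still has a degree-$1$ neighbour after $t_0$ steps. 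You instead observe that the peeling succeeds if and only if the graph $G_i$ on $\supp(\bv_i)$ whose edges are the stage-A-recovered products is connected (since by confluence the final recovered set is the component of the initialized diagonal vertex), reduce to the static connectivity of $G(k,m)$, and apply the textbook union bound over cuts with $m/\binom{k}{2}\ge\alphaUB_i\sim k^{-\delta}$, i.e.\ average degree $\sim k^{1-\delta}$, far above the $\ln k$ threshold. This avoids the with-replacement relaxation entirely and yields a stronger bound (order $\exp(-\tfrac12 k^{1-\delta}(1-o(1)))$ rather than $\exp(-\tfrac1{30}k^{1-\delta})$), consistent with the paper's own remark that its constant is not tight.

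The one point you flag as delicate — that given $\sfA_i$ (and $N_{Di}$) the recovered above-diagonal set in the $i$-th block is \emph{exactly} uniform, i.e.\ $G_i\sim G(k,m)$ — is the same modelling premise the paper invokes verbatim (``the $\sfA_i\binom{k}{2}$ right nodes can be seen as drawn uniformly at random without replacement''), justified as you say by the i.i.d.\ (hence exchangeable) columns of $\bH$, the confluence of peeling (so the recovered set is a function of the bipartite graph alone, independent of the random peeling order), and the generic-entry assumption ruling out multitons masquerading as singletons. One could make the exchangeability argument airtight by noting that for any permutation $\pi$ of $\mc{S}_i$ fixing its complement, applying $\pi$ to the left labels preserves the law of the stage-A graph and the values of $N_{Di}$ and $\sfA_i$; conditioning on those statistics therefore leaves the recovered set in $\mc{S}_i$ uniform over subsets of the given size. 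So no gap — a valid alternative proof that streamlines the argument and sharpens the constant.
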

The proof of the lemma is given in Section \ref{subsec:proof_lem1B}. 
Recall that on each  subgraph $i\in [r]$ in stage B, the algorithm sequentially identifies and peels off left nodes connected to degree-1 right nodes. To successfully recover all the $k$ nonzeros in the corresponding eigenvector $\bv_i$, the residual graph needs to have at least one degree-1 right node at the end of each iteration $0 \le t \le (k-2)$.  Conditioned on $\sfA_i$ and the high-probability event in \eqref{eq:bound_on_all_Ai_rank_r}, we show that at the start of each iteration $t$, the number of degree-1 right nodes connected to each remaining left node is approximately Binomial  with mean $\sfA_i t$ (See Lemma  
\ref{lem:num_singletons}). This is then used to show that with high probability, there is at least one degree-1 right node in each iteration until all the $k$ nonzeros  are recovered.

\emph{Proof of part \ref{thm:results_disjoint_supports} of Theorem \ref{thm:main_result_symm}}:  From Lemmas \ref{lem:1A_rank_r} and \ref{lem:1B_rank_r}, for sufficiently large $k$ we have:
\begin{equation}
\begin{split}\label{eq:proof_Thm1_from_two_lemmas}
    & \prob{ \sfB_i = 1,   \ \forall i \in [r] } \\
    & \ge \, \E \left[ \,  \prob{\sfB_i = 1,   \ \forall i \in [r] \mid (N_{Di}, \, \sfA_i)_{i \in [r]} }  \mathbbm{1}\{ N_{Di}  \ge 1\text{ and } \sfA_i \ge \alpha_i^*, \,  \forall i \in [r] \}  \right] \\
    & \stackrel{(\rm{i})}{\ge} \left(1 - \, r\exp\Big(-\frac{1}{30} k^{1-\delta}\Big) \right) \prob{N_{Di}  \ge 1 \text{ and } \sfA_i \ge \alpha_i^*, \,  \forall i \in [r]} \\
    &  \stackrel{(\rm{ii})}{\ge} \left(1 - \, r\exp\Big(-\frac{1}{30} k^{1-\delta}\Big) \right)
    \left( 1 \,  - \, 2r\exp\Big(-\frac{1}{2} k^{1-\delta}\Big) \right) \\
    & >  1  - \, 2r\exp\Big(-\frac{1}{2} k^{1-\delta}\Big) \, - \, r\exp\Big(-\frac{1}{30} k^{1-\delta}\Big) > 
    1  -  2r\exp\Big(-\frac{1}{30} k^{1-\delta}\Big),
    \end{split}
\end{equation}
where the inequality (\rm{i}) is obtained from  \eqref{eq:bound_on_Bi_rank_r}  along with a union bound. The inequality (\rm{ii}) follows from \eqref{eq:bound_on_all_Ai_rank_r}. This completes the proof of part \ref{thm:results_disjoint_supports} of Theorem \ref{thm:main_result_symm}. 
\hfill \qed
\paragraph{Part \ref{thm:results_overlapping_supports}.} When $\{ \bv_i \}$ have overlapping supports,  recall from Section \ref{subsec:recovery_alg_rankr} that the algorithm recovers all the nonzero entries of $\bX$, and then performs an eigendecomposition on the recovered nonzero submatrix.
In this case, the first stage is equivalent to the compressed sensing recovery of the  vectorized matrix $\bX$  using the scheme of \cite{li2019sublinear, bakshi2016SHOFA}. Theorem 4 in \cite{li2019sublinear} shows that the compressed sensing scheme can recover a $K$-sparse vector with probability at least $1- \bigo(1/K)$ with a sample complexity of $3K$ and running time $\bigo(K)$. This result directly yields the high probability guarantee in part \ref{thm:results_overlapping_supports} by noting that the number of nonzeros in the vectorized upper-triangular part of $\bX$ is  bounded below by $(\binom{k}{2}+k)$ and above by $r ({k \choose 2} +  k)$. 
    
\subsection{Proof of Theorem \ref{thm:nonsym_main_results} }
\paragraph{Part \ref{thm:nonsym_results_disjoint_supports}}  
Part \ref{thm:nonsym_results_disjoint_supports} of the theorem is proved using the following two lemmas  which characterize the high probability performance of stage A and stage B, respectively. 
\begin{lem}\label{lem:non_sym_Stage_A}
Consider the setting of part \ref{thm:nonsym_results_disjoint_supports} of  Theorem \ref{thm:nonsym_main_results}. 
Let $\sfA_i$ denote the fraction of entries recovered in the $k\times \beta k$
nonzero submatrix corresponding to $ \sigma_i \bu_i\bv_i^T$, for $i\in [r]$. 
Then, with $\delta\in (0,\frac{1}{2})$ as defined in Theorem  \ref{thm:nonsym_main_results}, there exists $\alpha_i^* = k^{-\delta}- o(k^{-\delta})$ for $i\in[r]$ such that for sufficiently large $k$, 
\beq
\prob{ \sfA_i \ge \alpha_i^*\,, \forall i\in[r]} 
\ge 1- 2r \exp\left(- \frac{\beta}{4} k^{\frac{3}{2}-\delta}\right).
\label{eq:Ai_LB_nonsymm}
\eeq
\end{lem}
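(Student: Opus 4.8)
The plan is to mirror the proof of Lemma~\ref{lem:1A_rank_r}, with the $k\times k$ symmetric blocks replaced by $k\times\beta k$ rectangular ones and with the diagonal-entry requirement dropped (a non-symmetric block has no privileged diagonal, and stage~B is initialized by assigning the value $1$ to an arbitrary left node). When $\{\bu_i\}$ and $\{\bv_i\}$ have disjoint supports, the nonzero entries of $\bX=\sum_{i=1}^r\sigma_i\bu_i\bv_i^T$ form $r$ disjoint $k\times\beta k$ blocks, so the pruned stage-A bipartite graph has $N:=r\beta k^2$ left nodes (the nonzero entries) and $\nBins=dr\beta k^2/(\delta\ln k)=dN/(\delta\ln k)$ right nodes (the bins); since both the support locations and the columns of $\bH$ are uniformly random and independent, each left node is joined to $d$ bins drawn uniformly at random, independently across left nodes. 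Consequently, for each bin $j$ its degree $D_j$ (the number of nonzero left nodes incident to it) is $\bin{N}{d/\nBins}$ with mean $Nd/\nBins=\delta\ln k$. Moreover, for each left node the vector of $d$-out-of-$\nBins$ incidence indicators is negatively associated (a sampling-without-replacement $0$--$1$ vector), these vectors are independent across left nodes, and $D_j,D_{j'}$ for $j\neq j'$ are non-decreasing functions of disjoint blocks of incidence indicators; hence $(D_1,\dots,D_{\nBins})$ is negatively associated, exactly as in Lemma~\ref{lem:1A_rank_r}.

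Next I would lower-bound $\sfA_i$ by the fraction of block-$i$ entries already recovered at $t=0$, i.e.\ block-$i$ left nodes that are the unique nonzero left node in some bin. For a fixed block-$i$ left node, the probability that no other one of the $N-1$ nonzero entries (from any block) lands in a prescribed one of its $d$ bins is $(1-d/\nBins)^{N-1}=k^{-\delta}(1+o(1))$, so (conservatively keeping only that prescribed bin) the expected number of block-$i$ left nodes recovered at $t=0$ is at least $\beta k^{2-\delta}(1+o(1))$. I would then invoke the Chernoff bound for negatively associated variables, applied to the degree-sequence functionals that control this count (see below), with a slowly vanishing relative deviation to obtain, for each $i$,
\beq
\prob{\sfA_i < k^{-\delta}-o(k^{-\delta})}\;\le\;\exp\!\Big(-\tfrac{\beta}{4}\,k^{3/2-\delta}(1+o(1))\Big),
\eeq
the exponent $k^{3/2-\delta}$ reflecting a relative deviation of order $k^{-1/4}$ applied to a mean of order $k^{2-\delta}$; this is far more than is needed, as it is dominated by the stage-B failure probability $\exp(-\Theta(k^{1-2\delta}))$. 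Taking $\alpha_i^*$ to be this conservative threshold and applying a union bound over $i\in[r]$ yields~\eqref{eq:Ai_LB_nonsymm}.

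I expect the main obstacle to be exactly the one in Lemma~\ref{lem:1A_rank_r}: the event $\{D_j=1\}$ is not monotone in the incidence indicators, so one cannot apply the negative-association Chernoff bound directly to the raw count of singleton bins, nor to $\sum_\ell\mathbbm{1}[\ell\text{ alone in some bin}]$. The fix is to express the recovered count as a difference of sums of monotone functionals of the degree sequence --- e.g.\ $\#\{j:D_j\ge1\}-\#\{j:D_j\ge2\}$ for the number of singleton bins, together with a collision correction of mean $O(Nk^{-2\delta})$ accounting for a left node lying alone in more than one bin --- and to bound each monotone piece by a one-sided Chernoff bound for negatively associated variables. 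One must also be careful that ``alone in a bin'' is with respect to all $N=r\beta k^2$ nonzero entries, not just those of block $i$, which is why the exponent of the incidence probability involves $N$ while the normalization of $\sfA_i$ involves only $\beta k^2$; together with the constraint $\delta<\tfrac12$ (inherited from Theorem~\ref{thm:nonsym_main_results}), this is what keeps the relevant quantities growing, the binding one for the overall recovery guarantee being the stage-B exponent $k^{1-2\delta}$ rather than the stage-A exponent $k^{3/2-\delta}$ established here.
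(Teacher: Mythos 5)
Your proposal follows the paper's strategy exactly: the proof of Lemma~\ref{lem:non_sym_Stage_A} is a verbatim adaptation of the proof of Lemma~\ref{lem:1A_rank_r}, with the $k\times k$ symmetric blocks replaced by $k\times\beta k$ rectangular blocks and the diagonal requirement ($N_{Di}\ge1$) dropped, since stage B in the non-symmetric case initializes by arbitrary assignment rather than via a diagonal entry. The main ingredients you list---NA of the bin degree vector (the analogue of Lemma~\ref{lem:initial_right_deg_diag}), the decomposition of the non-monotone singleton indicator into a difference of monotone CDF-type indicators, the NA Chernoff bound of Lemma~\ref{lem:chern_hoeff_bound_NA}, and a per-block relative deviation of order $k^{-1/4}$ applied to a per-block mean of order $\beta k^{2-\delta}$ to produce the $k^{3/2-\delta}$ exponent---are precisely what the paper invokes (see \eqref{eq:bound_num_singletons_1A}, \eqref{eq:bound_num_lnodes_recovered_Stage_A}, and \eqref{eq:bound_on_num_above_diag_recovered_in_each_block}).

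One step in your write-up deserves more care, because as phrased it would not quite go through. Your decomposition $\#\{j:D_j\ge1\}-\#\{j:D_j\ge2\}$ (equivalently the paper's $\mathbbm{1}\{Z_j\le 1\}-\mathbbm{1}\{Z_j=0\}$ trick) controls the \emph{total} number of singleton bins aggregated over all $r$ blocks, not a per-block quantity. But you also write that you would ``lower-bound $\sfA_i$ by the fraction of block-$i$ entries already recovered at $t=0$'' and then apply NA Chernoff to ``the degree-sequence functionals that control this count.'' The per-left-node indicators $\mathbbm{1}\{\ell\text{ is alone in a prescribed bin}\}$ for block-$i$ left nodes $\ell$ are not monotone functions on \emph{disjoint} subsets of the incidence variables $Y_{l,j}$: two left nodes can share a bin, and each indicator depends on the edge placements of \emph{all} other left nodes. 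So Lemma~\ref{lem:NA_closure}\ref{lem:NA_closure_mono_functions} does not establish NA for these per-left-node events, and the bin-level decomposition does not repair this. The paper sidesteps the issue by a two-step argument: first establish concentration of the \emph{global} fraction $\sfA$ via the bin-level decomposition (\eqref{eq:bound_on_A_rank_r}), then observe that, conditional on $\sfA$, the identities of the recovered left nodes form an exchangeable $0$--$1$ vector $\{\barV_l\}$ (both the support locations and the columns of $\bH$ are uniformly random and permutation-symmetric), hence NA by Lemma~\ref{lem:NA_closure}\ref{lem:perm_dist_are_NA}, and apply a second NA Chernoff to the block-$i$ partial sum $\sum_{l\in\mc{S}_i}\barV_l$ with relative deviation $k^{-1/4}$. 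That second step is where the per-block concentration, the $k^{-1/4}$ deviation, and the $k^{3/2-\delta}$ exponent you quote actually come from. With that step made explicit, your proposal is correct and matches the paper.
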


\begin{lem}\label{lem:non_sym_Stage_B} Consider the setting of part \ref{thm:nonsym_results_disjoint_supports} of  Theorem \ref{thm:nonsym_main_results}, and let $\sfA_i, \alpha_i^* $ be as defined in Lemma \ref{lem:non_sym_Stage_A}. Let $\usfB_{i}$ and $\vsfB_{i} $ be the fraction of nonzeros in $\bu_i$ and $\bv_i$, respectively,  that are recovered in stage B. Then, for all $\sfA_i \ge \alpha_i^*$ and sufficiently large $k$, we have
\beq\label{eq:non_sym_Stage_B_main_result}
\prob{\usfB_i <1 \, \text{ or }   \, \vsfB_i<1 \mid  \sfA_i }  \le \exp\left(-\frac{\beta}{8} k^{1-2\delta}\right),\quad i\in[r].
\eeq
\end{lem}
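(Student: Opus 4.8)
The plan is to adapt the argument behind Lemma \ref{lem:1B_rank_r} (symmetric stage B) to the bipartite graph of Fig.\ \ref{subfig:graph_1B_non_sym_t=-1}. Fix $i\in[r]$ and condition throughout on $\sfA_i$ taking any value $\ge\alpha_i^*$. The stage-B graph for subgraph $i$ has two classes of left nodes, the $k$ nonzeros of $\tbu_i$ and the $\beta k$ nonzeros of $\tbv_i$ (so $(1+\beta)k$ left nodes in total), and has $\sfA_i\beta k^2$ right nodes, one for each pairwise product $\tu_a\tv_b$ recovered in stage A, joining the $a$-th left node of $\tbu_i$ to the $b$-th left node of $\tbv_i$. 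The peeling recovers every nonzero of $\bu_i$ and of $\bv_i$ provided the exploration started from the seeded left node never stalls, i.e.\ provided at least one degree-$1$ right node is present at the end of each of the first $(1+\beta)k-1$ iterations; a failure therefore implies that this bipartite graph is disconnected, so it is enough to upper bound the probability that the peeling stalls at some iteration.

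The first ingredient is a description of the local structure seen by the process. Write $a$ and $b$ for the numbers of already-peeled left nodes of $\tbu_i$-type and $\tbv_i$-type at a given point, with $a+b\le(1+\beta)k$. Using the non-symmetric counterpart of Lemma \ref{lem:num_singletons} --- equivalently, the negative association of the stage-A recovered set conditionally on $\sfA_i$ --- I would show that the number of degree-$1$ right nodes incident to any remaining $\tbv_i$-type left node stochastically dominates $\bin{a}{\alpha_i^*}$, and symmetrically that any remaining $\tbu_i$-type left node is incident to at least $\bin{b}{\alpha_i^*}$ degree-$1$ right nodes. Concretely, one couples the true stage-B peeling to a simpler ``alternative'' process in which each newly exposed product $\tu_a\tv_b$ is an independent $\bern{\alpha_i^*}$ trial (the alternative random graph process mentioned in Section \ref{subsec:prelim}), and argues that the true process is no more likely to stall than the alternative one.

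Given this, the probability of stalling at the point when $a$ and $b$ left nodes have been peeled is at most the probability that no recovered product joins an already-peeled $\tbu_i$-node to a remaining $\tbv_i$-node, which is $\le(1-\alpha_i^*)^{a(\beta k-b)}$; symmetrically it is $\le(1-\alpha_i^*)^{b(k-a)}$. A short case check shows that whenever a left node of either type remains, $\max\{a(\beta k-b),\,b(k-a)\}\ge\beta k/2$ (the seed guarantees $a\ge 1$ or $b\ge 1$, and the complementary count stays $\ge\beta k/2$ until one side is more than half exhausted, after which the other bound takes over). Hence the per-iteration stalling probability is at most $(1-\alpha_i^*)^{\beta k/2}\le\exp(-\tfrac12\alpha_i^*\beta k)$, and a union bound over the at most $(1+\beta)k$ iterations gives
\beq
\prob{\usfB_i<1 \ \text{or}\ \vsfB_i<1 \mid \sfA_i}
\;\le\; (1+\beta)k\,\exp\!\Big(-\tfrac12\,\alpha_i^*\,\beta k\Big).
\eeq
Substituting $\alpha_i^*=k^{-\delta}-o(k^{-\delta})$ with $\delta\in(0,\tfrac12)$, the right-hand side is at most $\exp(-\tfrac{\beta}{8}k^{1-2\delta})$ for all sufficiently large $k$, which is \eqref{eq:non_sym_Stage_B_main_result}.

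The main obstacle is establishing the stochastic-domination statement of the first ingredient: the set of entries recovered in stage A is not literally an independent mask, since which products survive the stage-A peeling is coupled both to the (uniformly random) support locations and to the randomness of $\bH$, so one must build an explicit coupling with the alternative process that respects the adaptive peeling order, and must handle the first few iterations, where the relevant Binomial means are $O(1)$ rather than growing with $k$. Once that is in place the remaining steps are routine Chernoff and union-bound estimates; note that the exponent $k^{1-2\delta}$ is deliberately loose (cf.\ the remarks after Theorem \ref{thm:main_result_symm}) --- the isolated-vertex bound above actually gives the stronger $\exp(-\Theta(\beta k^{1-\delta}))$.
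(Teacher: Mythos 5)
Your proposal takes a genuinely different route from the paper's proof. The paper replaces sampling-without-replacement by a \emph{with-replacement} alternative process and then argues in three phases: after peeling the first $\tu$-left node, it shows via Lemma \ref{lem:bound_num_nonempty_bins_B_nonsym} (a Doob martingale/McDiarmid bound) that $N\gtrsim\tfrac{\beta}{2}k^{1-\delta}$ distinct $\tv$-left nodes become recoverable; then it bounds the probability that some remaining $\tu$-left node is not connected to any of those $N$ $\tv$-nodes (equation \eqref{eq:prob_bound_second_last_round_B_nonsym}); the remaining $\tv$-left nodes follow automatically. The exponent $k^{1-2\delta}$ in the lemma comes precisely from the product $N\cdot\sfA_i\sim k^{1-\delta}\cdot k^{-\delta}$ in that second phase. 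Your approach instead bounds the probability that the peeling stalls by a cut/isolated-set bound and union-bounds over iterations, which is the kind of argument the paper uses in the symmetric Lemma \ref{lem:deg_1_rnodes_upto_t0} but not in the non-symmetric proof; as you note, the bottleneck cut has size at least $\beta k/2$ (indeed the full cut $a(\beta k-b)+b(k-a)\ge\beta k$), giving the stronger exponent $k^{1-\delta}$. So your approach, if completed, would actually yield a sharper bound than what the paper proves with its phase argument.

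Two points need shoring up, however. First, your ``per-iteration stalling probability $\le(1-\alpha_i^*)^{\beta k/2}$'' treats the peeled set at iteration $t$ as if it were a fixed set of size $t+1$, but it is a random set determined adaptively by the revealed edges. The standard fix is to union-bound over all candidate separating sets $S\ni\text{seed}$, i.e.\ $\prob{\text{fail}}\le\sum_{a,b}\binom{k-1}{a-1}\binom{\beta k}{b}(1-\alpha_i^*)^{a(\beta k-b)+b(k-a)}$; the combinatorial factors are $\exp\bigl(\bigo(k\log k)\bigr)$ for moderate $(a,b)$ and are dominated by the exponential decay, and the terms with $|S^c|=1$ dominate, still yielding $\bigo\bigl(k\exp(-\beta k^{1-\delta}(1-o(1)))\bigr)$. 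So the gap is fixable, but as written the union bound is not rigorous. Second, your alternative model (i.i.d.\ $\bern{\alpha_i^*}$ products) is not literally stochastically dominated by the uniform-without-replacement model: a Bernoulli sample can exceed size $\sfA_i\beta k^2$ with probability close to $1/2$, so a pointwise coupling onto a subset does not exist. The clean way around this is either (a) observe that failure is a monotone-decreasing event in the edge set and condition on the Bernoulli count being at most $\sfA_i\beta k^2$, paying only a factor $\le 2$, or (b) use the paper's with-replacement alternative, which does couple onto a subset of the original recovered set. You flag this concern explicitly, which is good; it does need to be resolved, not just noted, before the bound is valid.
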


The proofs of Lemmas \ref{lem:non_sym_Stage_A} and \ref{lem:non_sym_Stage_B} are similar to those of Lemmas \ref{lem:1A_rank_r} 
and \ref{lem:1B_rank_r}, respectively. We describe the main steps and highlight the key differences from the symmetric case in Sections \ref{sec:proof_Stage_A_main_lemma_nonsym} and  \ref{sec:proof_Stage_B_main_lemma_nonsym}.

Part \ref{thm:nonsym_results_disjoint_supports} of Theorem \ref{thm:nonsym_main_results}  follows from Lemmas \ref{lem:non_sym_Stage_A} and \ref{lem:non_sym_Stage_B}. Indeed, for sufficiently large $k$, we have:
\begin{equation}
    \begin{split}
        & \prob{\usfB_i = 1 \text{ and }\vsfB_i=1\,, \forall i\in[r]} \\
       & \ge \E\left[ \,   
\prob{\usfB_i =1 \text{ and } \vsfB_i = 1, \ \forall i\in[r] \mid \sfA_i,\  i\in [r] }
\mathbbm{1}\{ \sfA_i \ge \alphaUB_i, \, \forall \, i \in [r] \} \right] \\
& \stackrel{(\rm{i})}{\ge} \left( 1 - r \exp\Big(-\frac{\beta }{8}k^{1-2\delta} \Big)  \right) \prob{ \sfA_i \ge \alphaUB_i, \, \forall \, i \in [r] } \\
&\stackrel{(\rm{ii})}{\ge}  \left( 1 - r \exp\Big(-\frac{\beta}{8}k^{1-2\delta} \Big)  \right) \left( 1- 2r\exp\Big(-\frac{\beta}{4} k^{\frac{3}{2}-\delta} \Big) \right) \\
& \ge 1 \, -  \,  r \exp\Big(-\frac{\beta}{8} k^{1-2\delta}\Big)
\, - \, 2 r\exp\Big(- \frac{\beta}{4} k^{\frac{3}{2}-\delta} \Big) 
\ge 
1 \, -  \,  2 r \exp\Big(-\frac{\beta}{8}k^{1-2\delta}  \Big).
    \end{split}
    \label{eq:LB_prob_success_nonsym}
\end{equation}
Here the inequality (\rm{i}) is obtained using Lemma \ref{lem:non_sym_Stage_B}, and (\rm{ii}) using Lemma \ref{lem:non_sym_Stage_A}. 
\hfill \qed

\paragraph{Part \ref{thm:nonsym_results_overlapping_supports}.} This proof is similar to the symmetric case (part  \ref{thm:results_overlapping_supports} of Theorem \ref{thm:main_result_symm}). Indeed, applying \cite[Theorem 4]{li2019sublinear}  guarantees that with probability $1- \bigo(k^{-2})$, stage A of the algorithm recovers all the nonzero entries in the matrix.  The nonzeros in the singular vectors are then obtained via a singular value decomposition on the recovered nonzero submatrix. 
    

\subsection{Computational cost of the recovery algorithm} \label{sec:proof_sketch_computational_cost}
We discuss the running time in the symmetric case, with the non-symmetric case being analogous.
\paragraph{Eigenvectors with disjoint supports.} Consider the setting of part \ref{thm:results_disjoint_supports} of Theorem \ref{thm:main_result_symm}, where the supports of  $\{\bv_i\}$ are disjoint. Lemma \ref{lem:1A_rank_r} guarantees that for any $\delta \in (0,1)$ and sufficiently large $k$, the fraction $\sfA$ of nonzero entries of $\bX$ recovered in stage A is at least $k^{-\delta}(1 - o(1))$ with high probability. We will analyze the complexity of the two stages assuming that $\sfA = k^{-\delta}(1 - o(1))$. 
This is without loss of generality as  one can terminate stage A of the algorithm (prematurely) once a fraction $ k^{-\delta}(1 - o(1))$
of nonzero entries of $\bX$ have been recovered. From the proof of Lemma \ref{lem:1A_rank_r}, this also means that the fraction of entries recovered in the nonzero submatrix corresponding to $\lambda_i \bv_i\bv_i^T$  is $\sfA_i\sim k^{-\delta}$, for each $i\in [r].$  By Lemma \ref{lem:1B_rank_r}, this is sufficient for recovering each $\bv_i$ with high probability.

\emph{Stage A}: To begin, each of the $R$ sketch bins  requires $\bigo (1)$ numerical operations to be classified via a zeroton test and a singleton test (specified in \eqref{eq:noiseless_singleton_test}). This requires a total of $\bigo(R) = \bigo( {rk^2}/{\ln k})$ operations. In each peeling iteration,  the contribution of a nonzero matrix entry is subtracted from the $d$ bins that it is involved in and these bins are re-classified. Since $d$ is a constant, each peeling iteration requires $\bigo(1)$ operations. By the termination assumption above, the number of iterations in stage A is $\bigo(rk^2  \sfA) = \bigo(rk^{2-\delta})$, corresponding to  $\bigo(rk^{2-\delta})$ operations. Therefore, the total computational cost for stage A is $\bigo({rk^2}/{\ln k}) + \bigo(rk^{2-\delta}) = \bigo({rk^2}/{\ln k})$.

\emph{Stage B}: Recall that the graph at the start of stage B consists of $r$ disjoint subgraphs, each with $k$ left nodes representing the nonzero entries in the corresponding eigenvector and right nodes representing  nonzero pairwise products recovered in stage A.  
Lemma \ref{lem:initial_left_degree_bound_1B} shows that with high probability, the degree of each left node in the $i$-th subgraph is $\bigo(\sfA_i k)$.  Since we assumed that $\sfA_i \sim k^{-\delta}$, 
the degree of each left node is  $\bigo(k^{1-\delta})$ with high probability.
The algorithm peels off one left node at a time, and the  cost of each peeling iteration is proportional to the number of edges peeled off during the iteration. Thus, to peel off all the $k$ left nodes  from the $i$-th subgraph, the computational cost is $\bigo(k^{2-\delta})$. Since there are $r$ subgraphs, the total computational cost for stage B is  $\bigo(rk^{2-\delta})$.

Finally, adding the costs for the two stages gives a total computational cost of $\bigo(rk^2/ \ln k)$.

\paragraph{Eigenvectors with overlapping supports.} In the setting of part \ref{thm:results_overlapping_supports} of Theorem \ref{thm:main_result_symm}, the algorithm recovers all the nonzero matrix entries in stage A. In this case, the number of nonzeros and the number of bins $R$ are both $\bigo(r k^2)$. Therefore, the computational cost of stage A is $\bigo(r k^2)$. The nonzeros in the eigenvectors are then recovered by an eigendecompostion of the recovered nonzero  submatrix. Since this submatrix has size at most $rk \times rk$, the computational cost of the eigendecomposition is   $\bigo((rk)^3)$, which dominates the total cost of the algorithm.

\subsection{Numerical results} \label{subsec:numerical_results}

We investigate the empirical performance of the  scheme for both symmetric and non-symmetric matrices, with $k$-sparse signal vectors $\{\bu_i, \bv_i\}$ that have disjoint or overlapping supports.  In the simulations,  each signal vector $\bu_i$ or $\bv_i$ is obtained by first sampling its $k$ nonzero entries from the mixture of Gaussians $\frac{1}{2}\normal(-5,1) + \frac{1}{2}\normal(5,1)$ and then normalizing so that the resulting vector has unit norm. When $\{\bu_i\}$ and $\{\bv_i\}$ have overlapping supports, we use the Gram–Schmidt process to ensure both sets of signal vectors are orthogonal. The number of ones in each column of the parity check matrix $\bH$ is chosen to be $d=2$.

\edit{In Fig. \ref{fig:prob_all_supports_recovered}, }we declare exact recovery if: i)  the locations of the nonzeros in the signal vectors are correctly recovered and the values of the nonzeros  are recovered within an absolute deviation of $10^{-7}$ from the ground truth, and ii) the eigenvalues or singular values are also recovered within an absolute deviation of $10^{-7}$ from the ground truth.  In each subfigure of  Fig. \ref{fig:prob_all_supports_recovered}, we plot the fraction of trials in which exact recovery is achieved versus  the sketch size $\nMeas$, for different sparsity levels $k$.
 In Figs. \ref{subfig:full_recovery_rank3_sym_disjoint}--\ref{subfig:full_recovery_rank3_nonsym_disjoint},  
the dashed lines show the sketch sizes specified in part 1) of Theorems \ref{thm:main_result_symm} or \ref{thm:nonsym_main_results} for the values of $\delta$ indicated in the caption.   The figures illustrate that for a fixed $\delta$, the success probability at the sketch size corresponding to the dashed lines increases with $k$. In Figs. \ref{subfig:full_recovery_rank3_sym_overlap}--\ref{subfig:full_recovery_rank3_nonsym_overlap}, the dashed lines indicate the sketch size specified in part 2) of the theorems.  The empirical success probability at the dashed lines similarly  increases with $k$.

 Each subfigure in Fig. \ref{fig:runtime_ambient_dim_diff_k} plots  the running time of the recovery algorithm (Matlab implementation) versus the ambient dimension $n$, for three different   sparsity levels $k$ and sketch sizes $\nMeas$. The plots confirm that the running time of the algorithm does not depend on  $n$.  The running times in different subfigures are not comparable  because the experiments were executed at different times on a shared machine.
\begin{figure}[!t]
  	\begin{subfigure}[H]{0.24\linewidth}
    		\includegraphics[width=\linewidth]{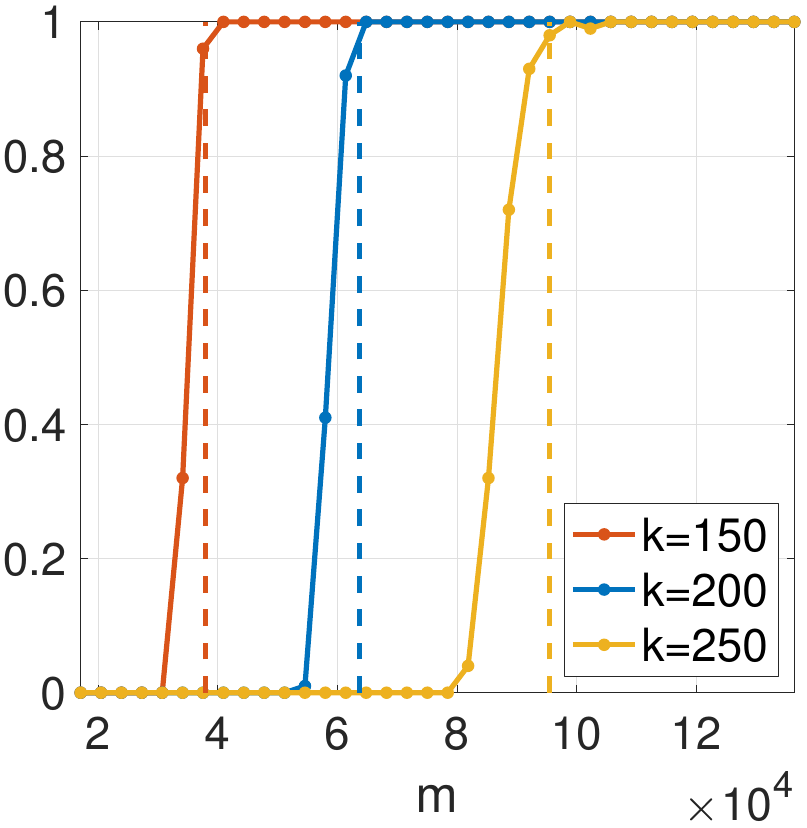}
    		\caption{\footnotesize Symmetric, $\{\bv_i\}$ with disjoint supports.}
    		\label{subfig:full_recovery_rank3_sym_disjoint}
    		\end{subfigure}
 \begin{subfigure}[H]{0.24\linewidth}
    		\includegraphics[width=\linewidth]{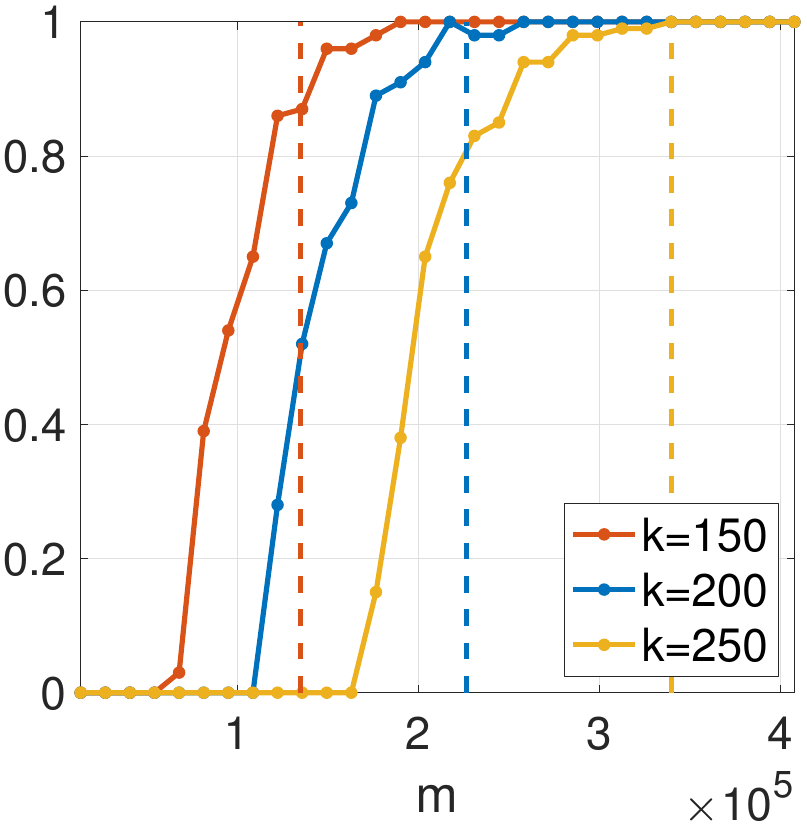}
    		\caption{\footnotesize Nonsymmetric, $\{\bu_i, \bv_i\}$ with  disjoint supports.}
    		\label{subfig:full_recovery_rank3_nonsym_disjoint}
    		\end{subfigure}  
    \begin{subfigure}[H]{0.24\linewidth}
    		\includegraphics[width=\linewidth]{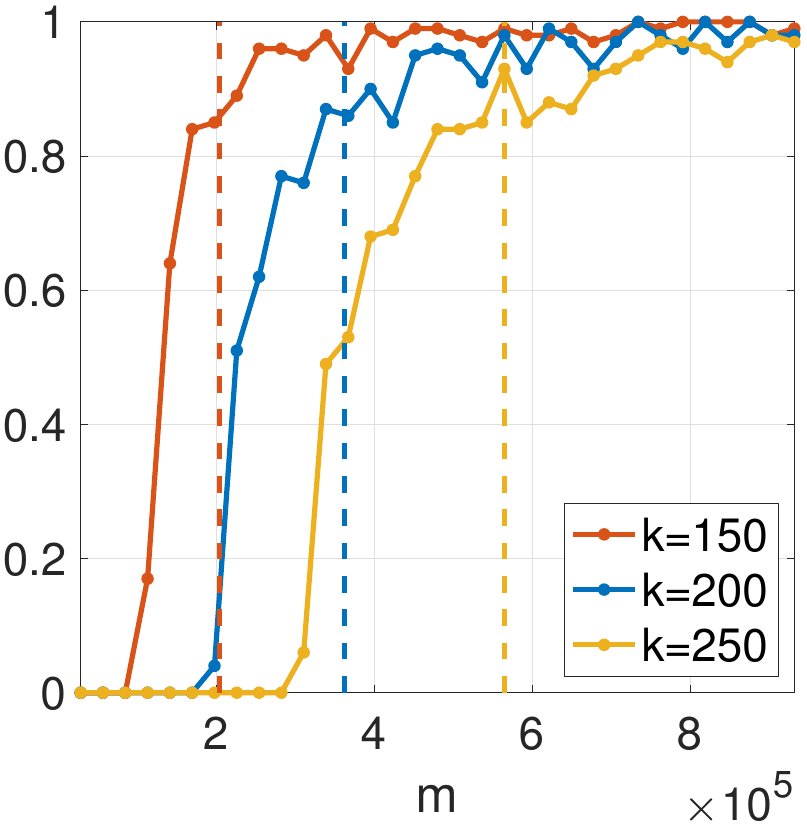}
    		\caption{\footnotesize Symmetric, $\{\bv_i\}$ with  overlapping supports.}
    		\label{subfig:full_recovery_rank3_sym_overlap}
    		\end{subfigure}
  	\begin{subfigure}[H]{0.24\linewidth}
   		\includegraphics[width=\linewidth]{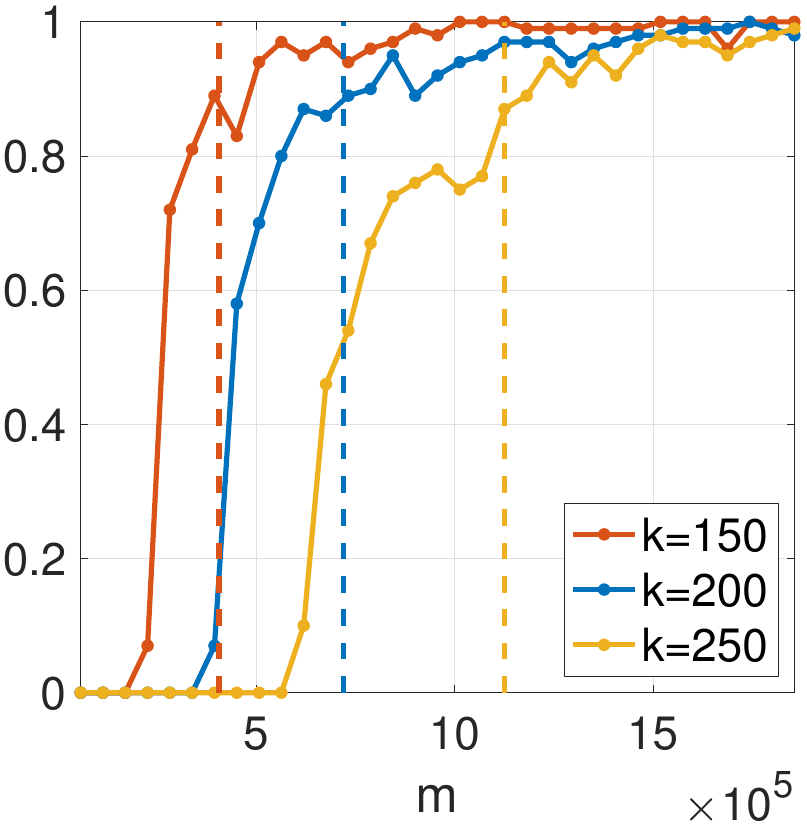}
    		\caption{\footnotesize Nonsymmetric, $\{\bu_i, \bv_i\}$ with overlapping supports.}
    		\label{subfig:full_recovery_rank3_nonsym_overlap}
    		\end{subfigure}    				
  	\caption{\small Probability of exact recovery ($y$-axis) versus sketch size $\nMeas$ ($x$-axis), for different sparsity levels $k$. The matrices used in all cases have rank $r=3$ and size $n \times n$ with $n=10^4$. The dashed lines indicate the sketch sizes stated in Theorems \ref{thm:main_result_symm} and \ref{thm:nonsym_main_results}; the dashed lines in  (a) and  (b) are  determined using $\delta =\frac{5}{7}$  and $\delta = \frac{2}{5}$, respectively.
Results are averaged over 100 trials.}
  	\label{fig:prob_all_supports_recovered}
	\vspace{-2pt} 	
\end{figure}
\begin{figure}[!t]
\centering
\begin{subfigure}[H]{0.244\linewidth}
\includegraphics[width=\linewidth]{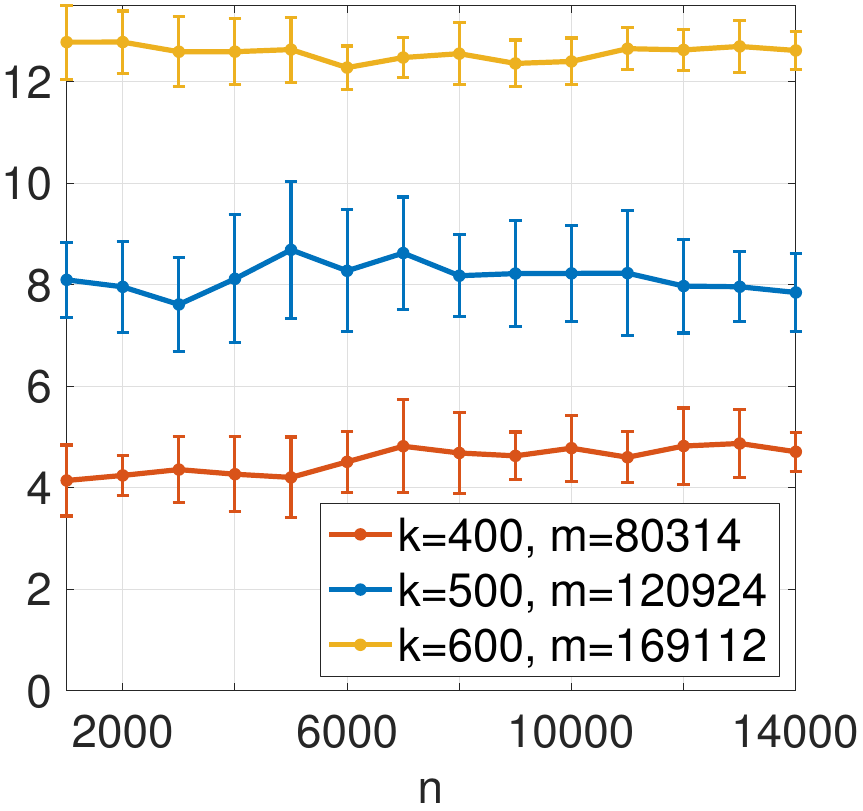}
   \caption{\\ \footnotesize Symmetric, $r=1$. }
 \label{subfig:runtime_rank1_sym}
 \end{subfigure}  
 \begin{subfigure}[H]{0.244\linewidth}
\includegraphics[width=\linewidth]{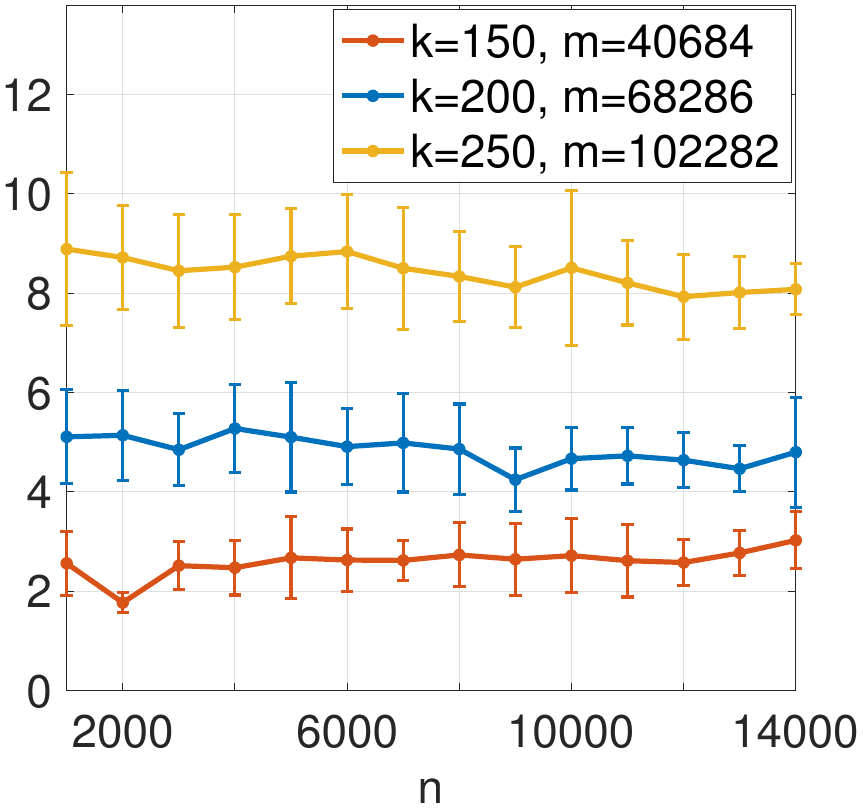}
   \caption{\footnotesize Symmetric, $r=3$. $\{\bv_i\}$ with  disjoint supports.}
 \label{subfig:runtime_rank3_sym_disjoint}
 \end{subfigure} 
 \begin{subfigure}[H]{0.244\linewidth}
 \includegraphics[width=\linewidth]{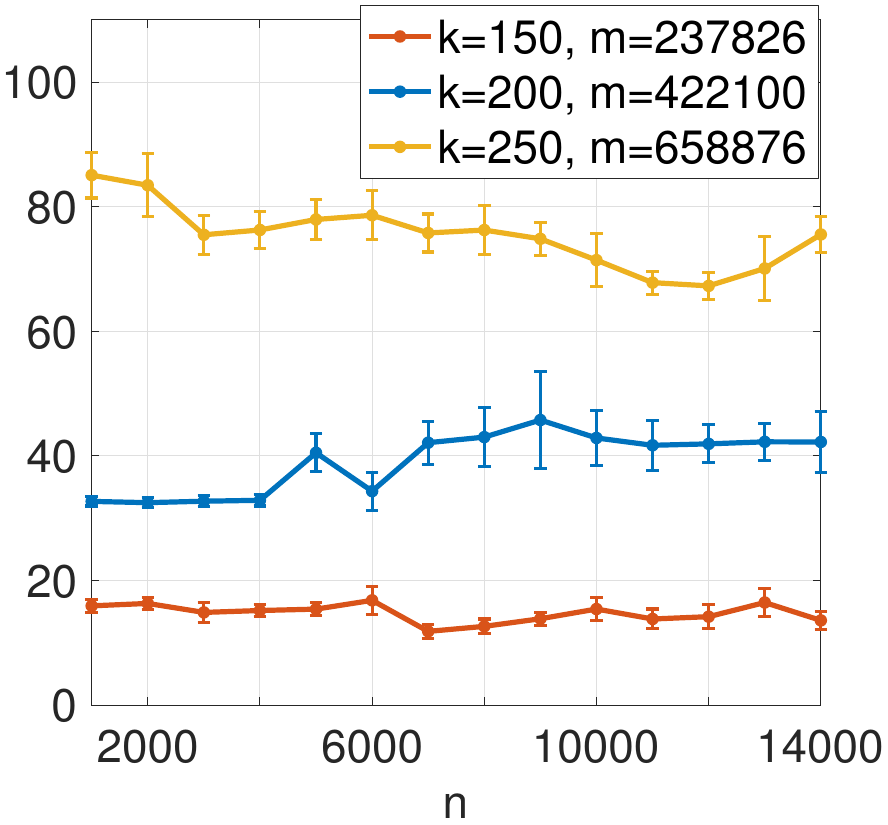}
   \caption{\footnotesize Symmetric, $r=3$. $\{\bv_i\}$ with  overlapping supports.}
 \label{subfig:runtime_rank3_sym_overlap}
 \end{subfigure} 
 \begin{subfigure}[H]{0.245\linewidth}
 \includegraphics[width=\linewidth]{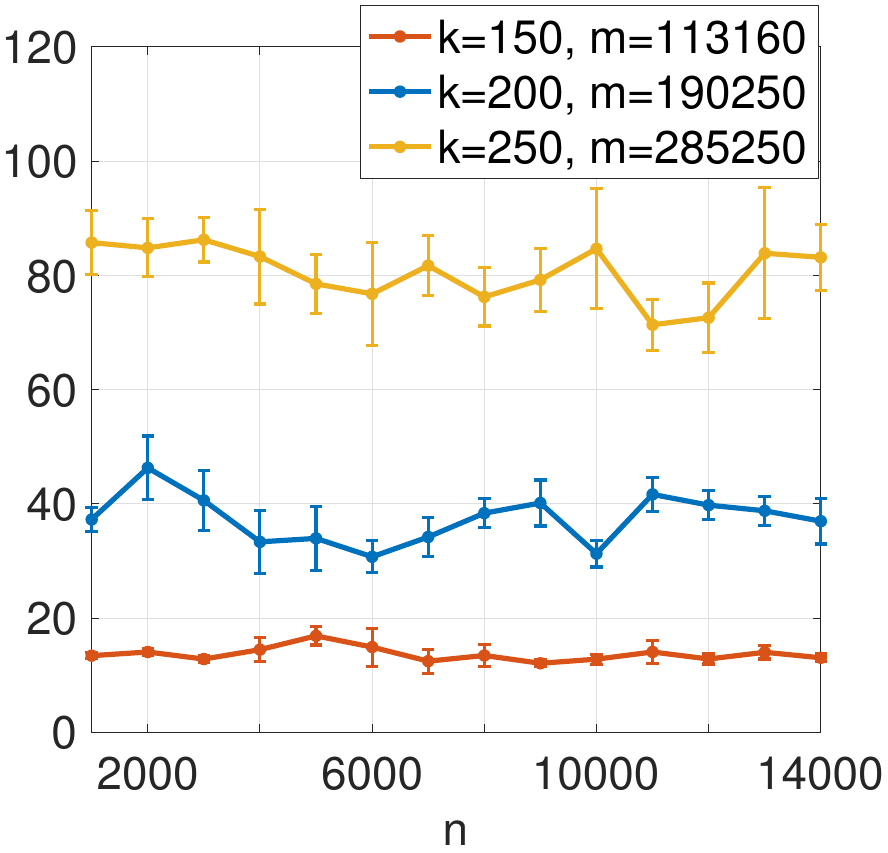}
   \caption{\footnotesize Nonsymmetric, $r=3$. $\{\bu_i, \bv_i\}$ w. disjoint supports}
 \label{subfig:runtime_rank3_nonsym_disjoint}
 \end{subfigure} 
\caption{\small Running time (in secs) of the recovery algorithm versus ambient dimension $n$, for  different  sparsity levels $k$ and sketch sizes $\nMeas$. Exact  recovery is achieved in every trial. Results are averaged over 50 trials and error bars indicate one standard deviation.}
\label{fig:runtime_ambient_dim_diff_k}	  
\vspace{-4pt} 	
\end{figure}


\begin{figure}[!b]
\begin{subfigure}[H]{\linewidth}
\centering
    \includegraphics[width=0.34\linewidth]{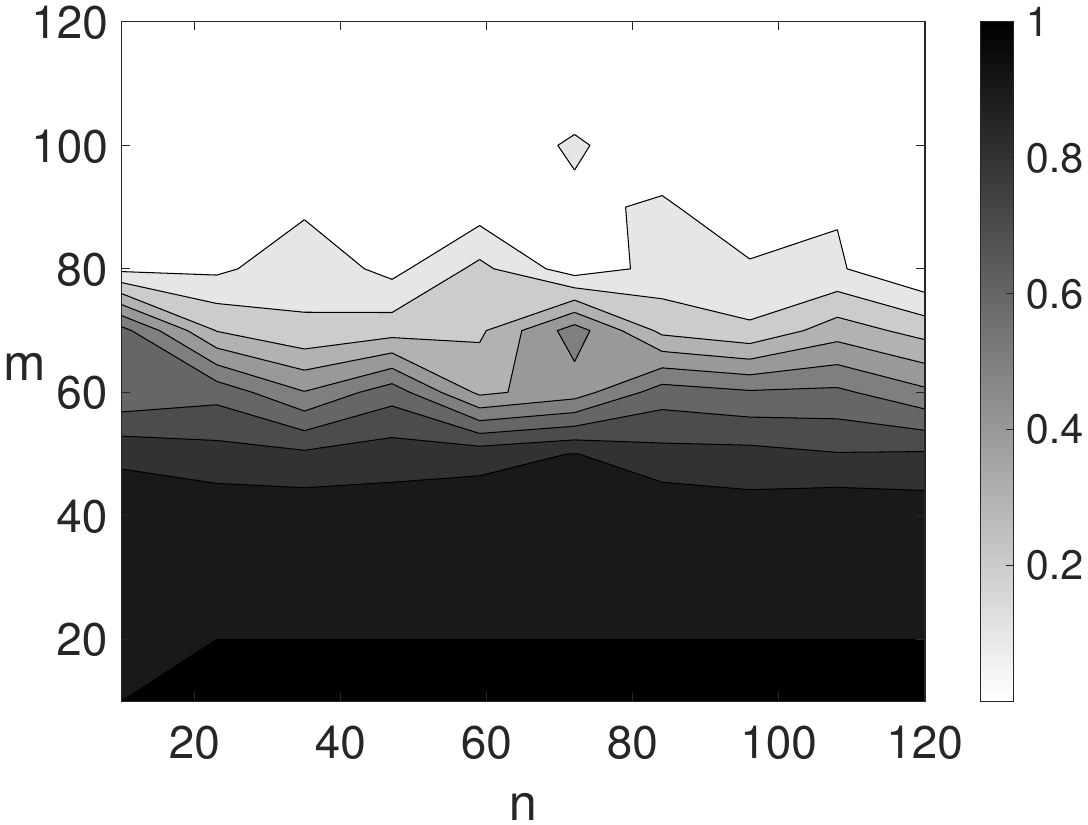}
    \hspace{1cm}
    \includegraphics[width=0.34\linewidth]{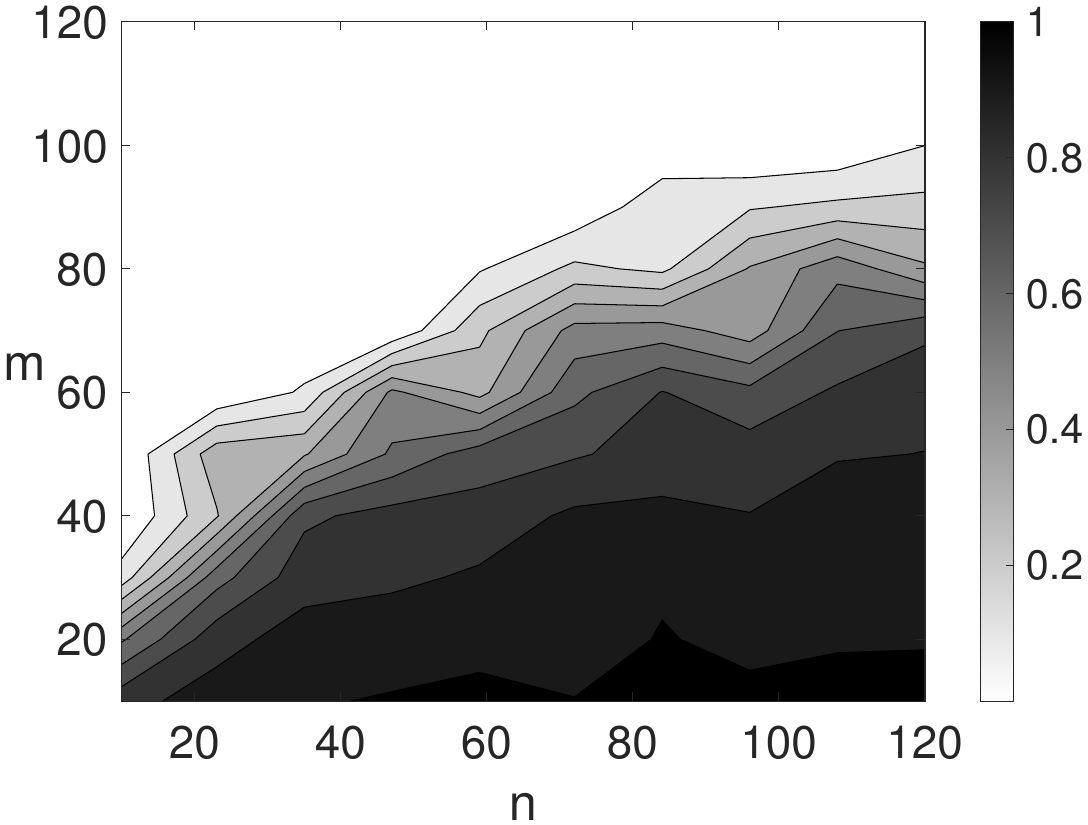}
    \vspace{-0.1cm}
    \caption{Disjoint supports in eigenvectors; $k=5$.}
    \label{subfig:nl_oymak_disjoint}
\end{subfigure}
 \begin{subfigure}[H]{\linewidth}
 \centering
    \includegraphics[width=0.34\linewidth]{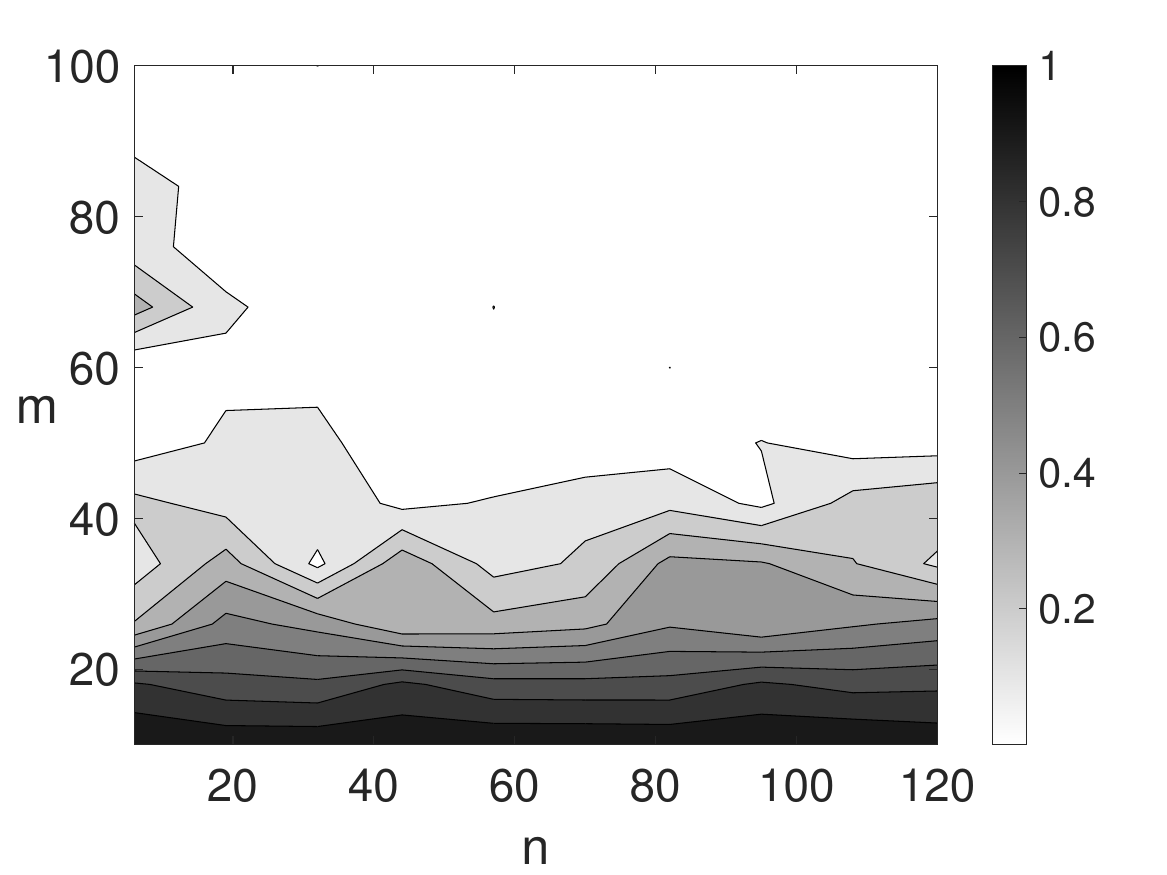}
    \hspace{1cm}
    \includegraphics[width=0.34\linewidth]{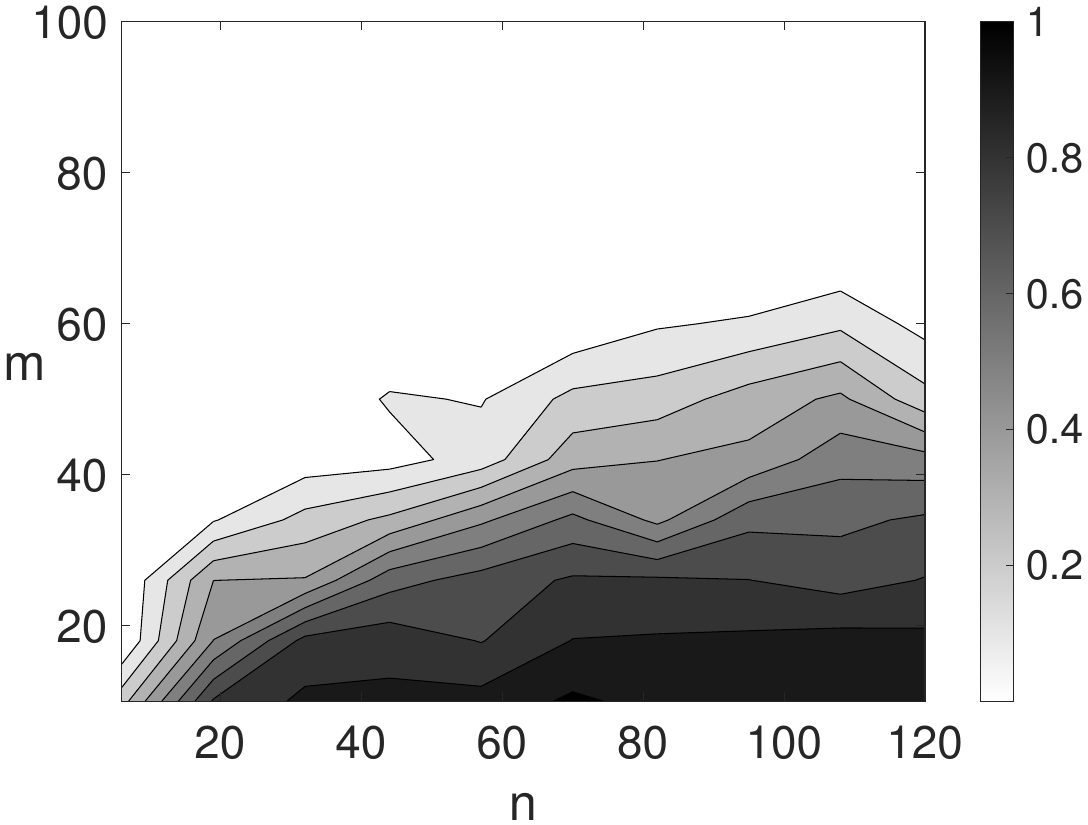}
    \vspace{-0.1cm}
    \caption{Overlapping supports in eigenvectors; $k=3$.}
   \label{subfig:nl_oymak_overlap}
\end{subfigure}  
\caption{\small \edit{Normalized error $\|\hat{\bX}-\bX\|_F/\|\bX\|_F$ of our scheme (left column), and the conventional scheme (right) using a  Gaussian sketching operator and recovery via the convex program in \eqref{eq:nl_oymak_obj}. Each heatmap shows the normalized error across different ambient dimension $n$ ($x$-axis) and  number of measurements $m$ ($y$-axis) for a fixed $k$.  
Results are averaged over 10 trials.}}
\label{fig:noiseless_oymak}
	\vspace{-2pt} 	
\end{figure}

\begin{figure}[!t]
\centering
\begin{subfigure}[H]{0.38\linewidth}
    \includegraphics[width=\linewidth]{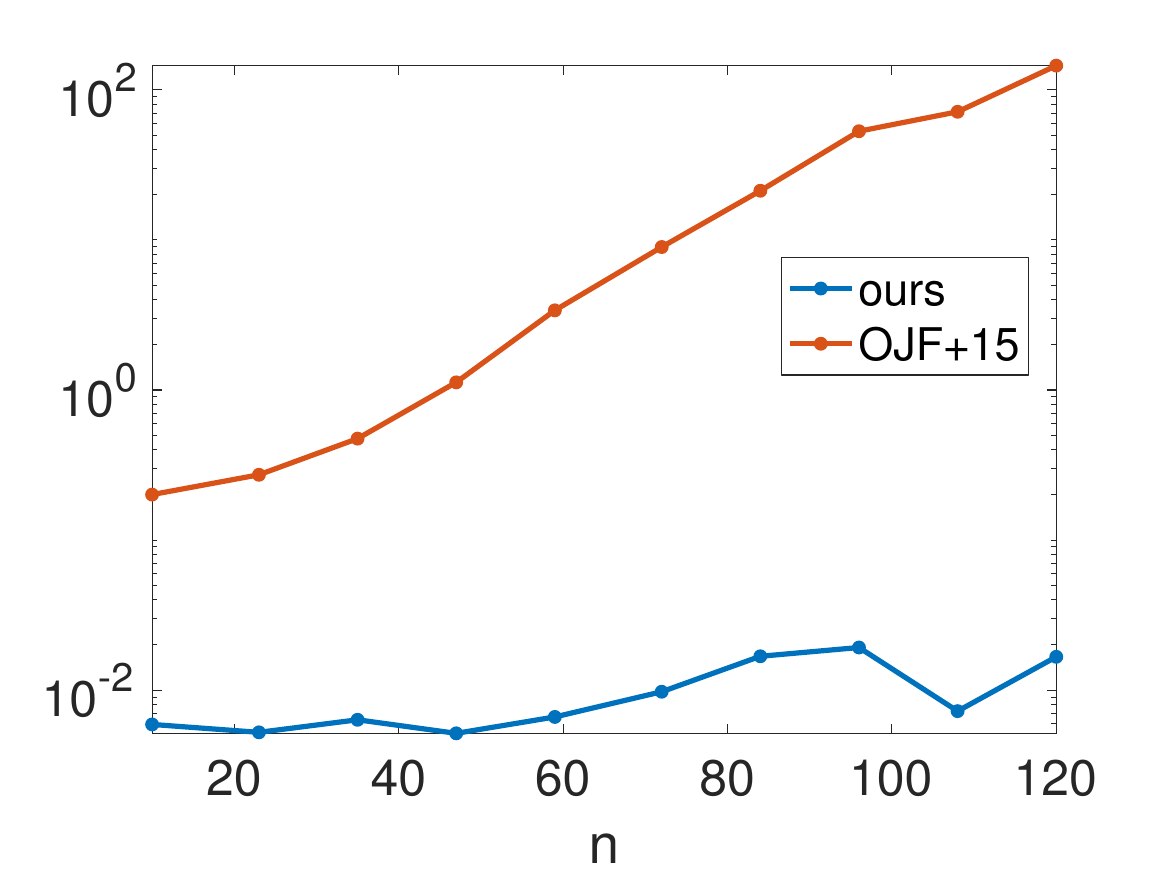}
    \caption{Same setting as in Fig. \ref{subfig:nl_oymak_disjoint}. Fix $k=5, m=100$ and vary $n$.}
\end{subfigure}
    \hspace{1cm}
\begin{subfigure}[H]{0.38\linewidth}
    \includegraphics[width=\linewidth]{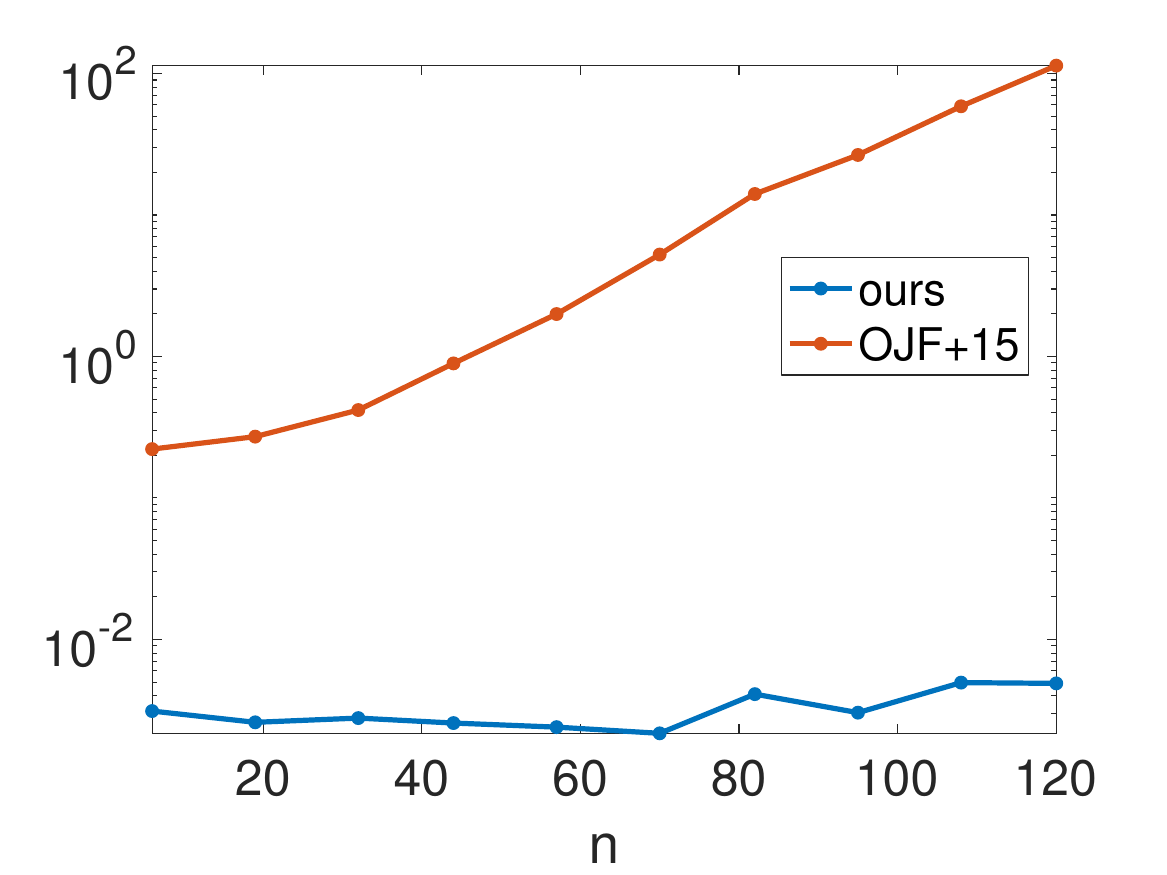}
    \caption{Same setting as in Fig. \ref{subfig:nl_oymak_overlap}. Fix $k=3, m=84$ and vary $n$.}
\end{subfigure}
\caption{\small \edit{Running time of our scheme and the conventional scheme using a  Gaussian sketching operator and recovery via the convex program in  \eqref{eq:nl_oymak_obj}. Same setting as in Fig. \ref{fig:noiseless_oymak}. Each plot shows the running time against $n$, for a fixed $k$ and $m.$
}}
%
\label{fig:nl_oymak_runtime}
	\vspace{-2pt} 	
\end{figure}

\edit{Fig. \ref{fig:noiseless_oymak} compares our scheme with a conventional scheme which  uses an i.i.d. Gaussian sketching operator and recovers the signal matrix $\bX$ (assumed to be positive semidefinite (PSD)) by solving  the convex program:
\begin{equation}
\hat{\bX} \in \argmin_{\bar{\bX}\succeq 0}\; \|\bar{\bX}\|_*+\lambda\|\bar{\bX}\|_1 \quad \text{subject to}
\quad \mc{A}(\bar{\bX}) = \mc{A}(\bX),
\label{eq:nl_oymak_obj}
\end{equation}
where $\lambda\geq 0$.  Here $\| \cdot \|_*$ and $\| \cdot \|_1$ denote the nuclear norm and $\ell_1$-norm, respectively.
Each eigenvector of  $\bX$ has $k$ nonzero entries chosen independently from the Gaussian mixture distribution $\frac{1}{2}\normal(-5,1) + \frac{1}{2}\normal(5,1)$. We use $\lambda=0.25$ in \eqref{eq:nl_oymak_obj}, following the choice of  in \cite[Fig. 7]{oymak2015simultaneously}.  
It was shown in  \cite[Thm. 3(c1)]{oymak2015simultaneously} that recovering $\bX$ via this convex program requires $m=\Omega(\min\{k^2, rn\})$ measurements, as opposed to $m=\bigo(rk^2)$ measurements required by our scheme. This indicates that our scheme requires smaller sample complexity in the very sparse regime (i.e., when $k=o(\sqrt{n})$), whereas  the conventional scheme is more sample efficient when the signal  vectors  are relatively dense. This is reflected in Figs. \ref{subfig:nl_oymak_disjoint} and \ref{subfig:nl_oymak_overlap}, where $k$ is fixed in each plot: the sketch size $m$ required for accurate recovery increases with $n$ for the conventional scheme, while it remains largely constant for our scheme.    Moreover,  our algorithm exhibits a sharp  transition in normalized error  as $m$ increases for any fixed $n$, whereas the conventional approach has a sharp transition only when $n$ is relatively small (the less sparse regime).
}

\edit{The optimization program \eqref{eq:nl_oymak_obj} is solved using CVX \cite{cvx,grant2008graph}  with the  commercial solver MOSEK (version 10.0.25) \cite{mosek}, which runs faster than other solvers such as SDPT3 and SeDuMi. Alternatively, we tried solving the dual problem using TFOCS \cite{tfocs}, which was developed in particular for sparse recovery applications, and discovered that this was slower than the CVX implementation. We also observe that the convex program  converges faster with the PSD constraint in \eqref{eq:nl_oymak_obj} than without, while our algorithm doesn't need this constraint.}

\edit{Even though we use the most powerful packages, the convex  program becomes prohibitively slow  when the matrix size grows beyond $100 \times 100$. For this reason, we restrict our experiments  to small matrices up to $120\times 120$. The running time of the experiments in Fig. \ref{fig:noiseless_oymak} is plotted in Fig. \ref{fig:nl_oymak_runtime} for a specific $m$. We  observe that our algorithm is $2$ to $4$ orders of magnitude  faster than the convex optimization algorithm. Moreover, the running time of the convex program grows with $n$, while our iterative algorithm doesn't. (The small increase in our running time in Fig. \ref{fig:nl_oymak_runtime}  with $n$ is an artefact since $k$ is small here). Hence our scheme scales well to much larger $n$ and $k$, as demonstrated by Fig. \ref{fig:runtime_ambient_dim_diff_k}.
}

\section{Recovery in the presence of noise}
\label{sec:noisy_description}

Consider the general model for the data matrix $\bX$ in \eqref{eq:noisy_observation_model}:
 \beq
 \bX = \bX_0 + \bW, \ \text{ where } \ \bX_0 = \sum_{i=1}^r \lambda_i \bv_i \bv_i^{T} \  (\text{symmetric}) \ \text{ or }  \ 
 \bX_0 = \sum_{i=1}^r \sigma_i \bu_i \bv_i^{T} \  (\text{non-symmetric}).
 \eeq
The vectors $\bv_i$ and $\bu_i$ are $k$-sparse, for $i \in [r]$. In the non-symmetric case, the elements of the noise matrix $\bW$ are assumed to be independent and identically distributed with zero mean and variance $\sigma^2$. \edit{We assume that $\sigma^2$ or at least a good upper bound on $\sigma^2$ is known.} In the symmetric case, the same assumption holds for the upper triangular entries, with $ W_{\ell, j} = W_{j, \ell}$ for $j < \ell \in [n]$.  

\subsection{Sketching scheme} \label{subsec:sketching_scheme_noisy}

We describe the scheme for symmetric matrices, with the non-symmetric case being analogous.
We write $\bx, \bx_0, \bw$ for the  vectorized upper-triangular parts of the symmetric matrices $\bX, \bX_0, \bW$, respectively. As before, we let $\tn =  {n \choose 2}+n$ and $\tk =  {k \choose 2}+k$. As in  the noiseless setting (Section \ref{subsec:sketching_scheme}), the sketching matrix $\bB $ is formed by combining two matrices: 
a sparse parity check matrix $\bH$, and a matrix $\bS$ that helps classify  each sketch  bin as a zeroton, singleton, or multiton. When there is noise, we cannot reliably identify  zerotons and singletons  by simply taking $\bS$ to be the first two rows of an $\tn$-point DFT matrix as in \eqref{eq:two_row_DFT}.  Therefore, following 
\cite{li2019sublinear}, we choose $\bS \in \reals^{P \times \tn}$  with
\begin{equation}
    P = \bigo\left(\log \left(\frac{n}{k}\right)\right)  \  \text{ and } \quad  S_{\kappa, \ell} \stackrel{\text{iid}}{\sim} \normal(0, 1), \  \kappa \in [P], \, \ell \in [\tn].
\end{equation}

The sparse parity check matrix $\bH \in \{ 0,1 \}^{\nBins \times \tn}$ is chosen in the same way as in the noiseless setting: column-regular with $d \ge 2$ ones per column in uniformly random locations. The sketching matrix $\bB \in \reals^{P \nBins \times \tn}$ is then formed as the column-wise Kronecker product of $\bH$ and $
\bS$, computed as illustrated in \eqref{eq:H_and_S_matrix_example}--\eqref{eq:B_matrix_example}. The sketch $\by = \bB \bx$ has size $m= P \nBins$.

We can view the sketch $\by$ as comprising $\nBins$ bins, each with $P$ entries. For $j\in [\nBins]$,  we let  $\by_j: = \begin{bmatrix} y_{(Pj-P +1)}, \dots, y_{(Pj)} \end{bmatrix}^T\in \reals^P$ be the $j$-th sketch bin. Similarly to \eqref{eq:y_j_general_expression}  and \eqref{eq:def_sparsified_x_vector}, writing $\bB_j \in \reals^{P \times \tn}$ for the $j$-th set of $P$ rows in $\bB$, 
 we have 
  \begin{align}\label{eq:expression_for_yj_noisy}
\by_j =\bB_j \bx  
 = \bS \bx_{0j}^*+ \bS\bw_j^*, \quad j \in [\nBins],
\end{align}
 where   the entries of $\bx_{0j}^* \in \reals^{\tn}$ and $\bw_j^* \in \reals^{\tn}$ are 
\begin{equation}\label{eq:def_sparsified_x_vector_w_vector_noisy}
(x^*_{0j})_\ell =\left\{
                \begin{array}{ll}
                  x_{0\ell} & \ell\in \mc{N}(j)\\
                  0 & \ell \notin \mc{N}(j)
                \end{array}
              \right., \quad 
(w^*_{j})_\ell=\left\{
                \begin{array}{ll}
                  w_\ell & \ell\in \mc{N}(j)\\
                  0 & \ell \notin \mc{N}(j)
                \end{array}
              \right. \quad \text{for } \ell\in \left[ \tn \right].
\end{equation}
As in \eqref{eq:def_sparsified_x_vector},  $\mc{N}(j)= \{\ell \in [\tn]: \, H_{j,\ell} =1\}$ contains the locations of the ones in the $j$-th row of $\bH$. The vectors $\bx^*_{0j}$ and $\bw^*_{j}$ store the signal  and  noise entries that contribute to the $j$-th bin. The $j$-th bin $\by_j$ is the sum of these entries  weighted by their respective columns in $\bS$.

\subsection{Recovery algorithm} \label{sec:recovery_algorithm_noisy}

We first consider symmetric rank-$r$ matrices where $\{ \bv_1, \ldots, \bv_r \}$ have disjoint supports. Fix  $\delta \in (0,1)$,  let  $\nBins = d r \tk/(\delta \ln k)$, and recall that the sketch size  is $m=P \nBins$.

\subsubsection{Accumulated noise variance in each bin}
The recovery of $\bx$ crucially depends on  the variance of the noise vector $\bS\bw_j^*$ in each sketch bin  $\by_j$ (see \eqref{eq:expression_for_yj_noisy}). Since $\bH$ has $d$ ones per column, each entry of $\bx$ contributes to $d$ bins, chosen uniformly at random from the $\nBins$ bins in the sketch. Thus we have  $\abs{\mc{N}(j)} \sim \bin{ \tn}{\frac{d}{\nBins}}$. Noting that  $\E\{\abs{\mc{N}(j)} \} = \frac{\tn d}{\nBins} = ({n \choose 2}+n)\frac{ \delta \ln k}{r\tk}$, we  apply the tail bound in Lemma \ref{lem:binom_tail_bound} to obtain
\begin{equation} \label{eq:Nj_tail_bound}
    \prob{ \abs{\mc{N}(j)} \le \frac{\delta}{4} \frac{n^2 \ln k}{rk^2}} \le \exp\left( - \frac{\delta}{7} \frac{n^2 \ln k}{r k^2}\right).
\end{equation}

Writing $\tbw_{j}^* :=\bS\bw^*_{j}$ for   the noise vector in \eqref{eq:expression_for_yj_noisy}, the entries in $\tbw_{j}^*$ are:
\beq\label{eq:accumulated_noise}
 (\tilde{w}^*_{j})_{\kappa}=\sum_{\ell\in \mc{N}(j)}{S}_{\kappa, \ell} \, w_{\ell}, \quad \text{for } \kappa \in[P].
\eeq
Recalling that $w_\ell$ is zero-mean with variance $\sigma^2$ and i.i.d. across $\ell$, conditioned on $\bS$ and $\mc{N}(j)$, the vector $\tbw_{j}^*$ is zero-mean  with covariance matrix 
$\bSigma \in\mathbb{R}^{P\times P}$, where
\begin{align}
\Sigma_{\kappa, \kappa} =\sigma^2\sum_{\ell\in  \mc{N}(j)} S_{\kappa,\ell}^2, \quad 
\Sigma_{\kappa,\iota} =\sigma^2\sum_{\ell\in  \mc{N}(j)} S_{\kappa,\ell}S_{\iota,\ell}
\quad \text{for }\kappa,\iota \in[P]\text{ and }\kappa\neq \iota. 
\end{align}

Since $\bS$ has independent standard Gaussian entries, $\E[S_{\kappa, \ell}^2] = 1$ and $\E[S_{\kappa, \ell} S_{\iota, \ell}] =0$. Moreover, \eqref{eq:Nj_tail_bound} implies that $\abs{\mc{N}(j)} \to \infty$ almost surely as $n \to \infty$. Therefore, by the weak law of large numbers, 
$\frac{\Sigma_{\kappa, \kappa}}{|\mc{N}(j)|} \to  \sigma^2$ and  
$\frac{\Sigma_{\kappa, \iota}}{|\mc{N}(j)|} \to 0$ as $n\to \infty$. It follows that the  distribution of the scaled accumulated noise vector 
$\frac{\tbw_{j}^*}{\sqrt{|\mc{N}(j)|}}$ converges to the   Gaussian $\normal(\bzero, \sigma^2\bI_P)$ as $n\rightarrow\infty$.  We will use this limiting distribution to devise zeroton and singleton tests in stage A of the algorithm.

\subsubsection{Stage A of the algorithm}

\paragraph{Zeroton test.} When $\by_j$ is a zeroton, i.e., when $\supp(\bx_0)\cap \mc{N}(j) =\emptyset$, from \eqref{eq:expression_for_yj_noisy} and \eqref{eq:def_sparsified_x_vector_w_vector_noisy} we have
\beq\label{eq:zeroton_expression_noisy}
\by_j=\bS\bw^*_{j}= \tbw_{j}^*, \quad \text{with } \frac{\tbw_{j}^*}{\sqrt{|\mc{N}(j)|}} \ \text{ approximately} \sim \normal\left(\bzero, \sigma^2\mathbf{I}_P\right).
\eeq
Based on this, a given bin $\by_j$ is declared a zeroton if
\begin{equation}\label{eq:robust_zeroton_test}
\frac{1}{P\abs{\mc{N}(j)}}\left\|\by_j\right\|^2\leq \gamma_0 \sigma^2,
\end{equation}
for a suitably chosen constant $\gamma_0$. 
Empirically, $99.7\%$ of the samples of a Gaussian distribution lie within $3$ standard deviations of the mean, so we set  $\gamma_0 \ge 3$. In our experiments, $\gamma_0=5$ gives reasonable performance.

\paragraph{Singleton test.} 
When $\by_j$ is a singleton with $\supp(\bx_0)\cap \mc{N}(j) =\ell$, it takes the form 
\beq\label{eq:singleton_expression_noisy}
 \by_j= x_{0\ell}\bs_\ell +\tbw^*_{j},
 \eeq
 where $\bs_\ell$ is the $\ell$-th column of $\bS.$ 
Given any bin $\by_j$, assuming it is a singleton of the form \eqref{eq:singleton_expression_noisy},   the algorithm first  estimates 
 the index $\ell$ and the value $x_{0\ell}$ as follows:
\begin{equation}\label{eq:ML_projection_coeff}
\hat{\ell}=\argmin_{\ell \in \mc{N}(j)} \left\|\by_j- \hx_{0\ell} \bs_\ell\right\|^2, \ \text{ where } \ \hx_{0\ell}=\frac{\bs_\ell^T\by_j}{\left\|\bs_\ell\right\|^2}.
\end{equation}
The bin $j$ is declared a singleton if
\begin{equation}\label{eq:robust_singleton_test}
\frac{1}{P\abs{\mc{N}(j)}}\left\| \by_j- \hat{x}_{0\hat{\ell}}\, \bs_{\hat{\ell}} \right\|^2\le \gamma_1 \sigma^2,
\end{equation}
for some constant $\gamma_1 \le \gamma_0$. 
The estimated index-value pair $(\hat{\ell}, \hx_{0\hat{\ell}})$ in \eqref{eq:ML_projection_coeff} is the maximum-likelihood estimate of $(\ell, x_{0\ell})$ corresponding to the singleton bin in \eqref{eq:singleton_expression_noisy}, given the  limiting  distribution of the  noise 
$\frac{\tbw_{j}^*}{\sqrt{|\mc{N}(j)|}}   $ is $\normal(\bzero, \sigma^2\mathbf{I}_P)$. We note that  $(\hat{\ell}, \hx_{0\hat{\ell}})$ will be an accurate estimate of the true index-value pair $(\ell, x_{0\ell})$ only when the accumulated noise level $\sqrt{\abs{\mc{N}(j)}} \sigma$ (i.e., the standard deviation of $\tbw^*_j$) is much smaller than the signal magnitude $\abs{x_{0\ell}}$.

 The algorithm declares $\by_j$ a multiton if it fails both zeroton and singleton tests.

 At $t=0$, the algorithm  identifies the initial set of zerotons and singletons among  the $\nBins$ bins using the tests above. Let $\mc{S}(t)$ denote the set of singletons at time $t$. Then at each $t \ge 1$,   the algorithm picks a singleton uniformly at random from $\mc{S}(t-1)$, and recovers the nonzero entry of $\bx_0$ underlying the singleton using the  index-value pair $(\hat{\ell}, \hat{x}_{0\hat{\ell}})$ in  \eqref{eq:ML_projection_coeff}.
  The algorithm then peels off the contribution of the $\hat{\ell}$-th entry of $\bx_0$ from each of the $d$ bins it is involved in, 
 re-categorizes these bins, and updates the set of singletons to include any new singletons created by the peeling of the $\hat{\ell}$-th entry. The updated set of singletons is $\mc{S}(t)$.  
 Stage A continues until singletons run out, i.e., when $\mc{S}(t) =  \emptyset$. \edit{This algorithm is similar to Algorithm \ref{alg:nl_stage_A} with additional input $\bS$, $\gamma_0$, $\gamma_1$ and $\sigma^2$, and different zeroton and singleton tests. We omit the pseudocode for brevity.} If the supports of $\{ \bv_1, \ldots, \bv_r \}$ are known to be disjoint, we proceed to stage B, described in the next subsection.  
 
 When the supports of the $r$ eigenvectors have overlaps, we take a large enough sketch (with $R=\bigo(r k^2)$) to identify all the nonzero entries in $\bX_0$ in stage A, and then perform an eigendecomposition on the recovered submatrix of nonzero entries to obtain the nonzero entries of $ \{ \bv_1, \ldots, \bv_r \}$.

 \emph{Picking the threshold parameters  $\gamma_0, \gamma_1$}: Decreasing the value of $\gamma_0$ makes the zeroton test in \eqref{eq:robust_zeroton_test}  more stringent. As $\gamma_0$ becomes smaller, zerotons are more likely to be  mistaken as singletons with  small signal values. This kind of error results in some zero entries of $\bX_0$ being mistakenly recovered as small nonzeros. Since our recovery algorithm crucially relies on the sparsity of $\bX_0$ to maintain its low complexity, empirically we find that such errors greatly increase the algorithm's running time.  Moreover, in the case where $\{\bv_i\}$ have disjoint supports, such errors 
can lead to stability issues in stage B when  peeling steps involve taking ratios between small signal values. For these reasons, we set $\gamma_0 = 5$ in our numerical simulations. We set a more stringent threshold  for the singleton test by choosing  $\gamma_1\le \gamma_0$ to prevent multitons from being mistaken as singletons.

\subsubsection{Stage B of the algorithm}

 In the stage B graph (Fig. \ref{subfig:graph_1B_t=-1} or Fig. \ref{subfig:graph_1B_non_sym_t=-1}),  the right nodes 
are the nonzero matrix entries (i.e.,  pairwise products) recovered in stage A.  The values of these recovered entries are no longer exact  in the noisy setting.  Hence, 
a left node connected to multiple degree-1 right nodes will receive different suggested values from these right nodes. Therefore, instead of peeling off edges like in Section \ref{subsubsec:stageB}, we use a message passing algorithm on the stage B graph.  In each iteration, all the left nodes with known values send these values to their neighboring right nodes. Recall that each right node in the stage B graph has two edges, and corresponds to the  product of the two left nodes it is connected to.  The message a right node sends along each edge is the ratio of its pairwise product  with the incoming message along the other edge.   Each left node then updates its value by taking the average of all messages it receives. This averaging operation makes the algorithm more robust to noise.

\subsection{Sample complexity and computational cost}\label{subsec:sample_time_complexity_noisy}

\paragraph{Disjoint supports}
We have  $R=\bigo(rk^2/\ln k)$ bins and  $P=\bigo(\log (n/k))$ components in each bin, therefore the sample complexity is  $RP=\bigo( rk^2 \log (n/k)/\ln k)$, which is  higher than the noiseless case by a factor of $P=\bigo(\log (n/k))$. 

The computational cost of the recovery algorithm is
$\bigo(n^2 \log(n/k))$,  significantly higher than the $\bigo( rk^2/ \ln k)$ in the noiseless setting. The increase in complexity is due to the singleton test in \eqref{eq:ML_projection_coeff}--\eqref{eq:robust_singleton_test}, which requires finding a minimum  over the  $\abs{\mc{N}(j)}$  left nodes that connect to bin $\by_j$, for each $j \in [R]$. From \eqref{eq:Nj_tail_bound} and a similar upper tail bound, we know that with high probability $\abs{\mc{N}(j)}$ is of the order $n^2 \ln k/(rk^2)$. Therefore, the total computational cost of the singleton test for the $R$ bins is $ R\abs{\mc{N}(j)} P=\bigo( n^2\log (n/k))$, which dominates the cost of stage A. When the noise level is small (so that in stage A, zero matrix entries are not mistakenly recovered as nonzeros), stage B has the same  $\bigo( rk^{2-\delta})$ complexity as in the noiseless case.  Thus the computational cost of the algorithm is dominated by stage A.

\vspace{-0.2cm}
 \paragraph{General case} The sample complexity is
 $\bigo( rk^2\log(n/k) )$, again  higher than the noiseless setting by a factor of $P=\bigo (\log(n/k))$. The computational cost of  recovering all the nonzero entries of the signal matrix is $\bigo(n^2\log(n/k))$ and that of the eigendecomposition of the recovered nonzero submatrix is $\bigo((rk)^3)$. Thus, the total  cost of the algorithm is $\bigo(\max\{n^2\log(n/k), (rk)^3\})$.
 
 In the noisy setting, the nonzero matrix entries  recovered in stage A will not be exact, which makes it hard to analyze the subsequent step (either stage B or eigendecomposition) and obtain a rigorous performance guarantee. This can be addressed by making suitable assumptions on the signal alphabet. The analysis is conceptually similar to the noiseless setting but with additional technical challenges, so it is omitted. In the following subsection, we demonstrate the performance of the scheme in the noisy case via numerical simulations.

 \subsection{Numerical results} 
 \label{subsec:numerical_results_noisy}
 
\begin{figure}[!t]
\begin{minipage}{0.31\textwidth}
    \centering
    \begin{subfigure}{\textwidth}
        \includegraphics[width=\textwidth]{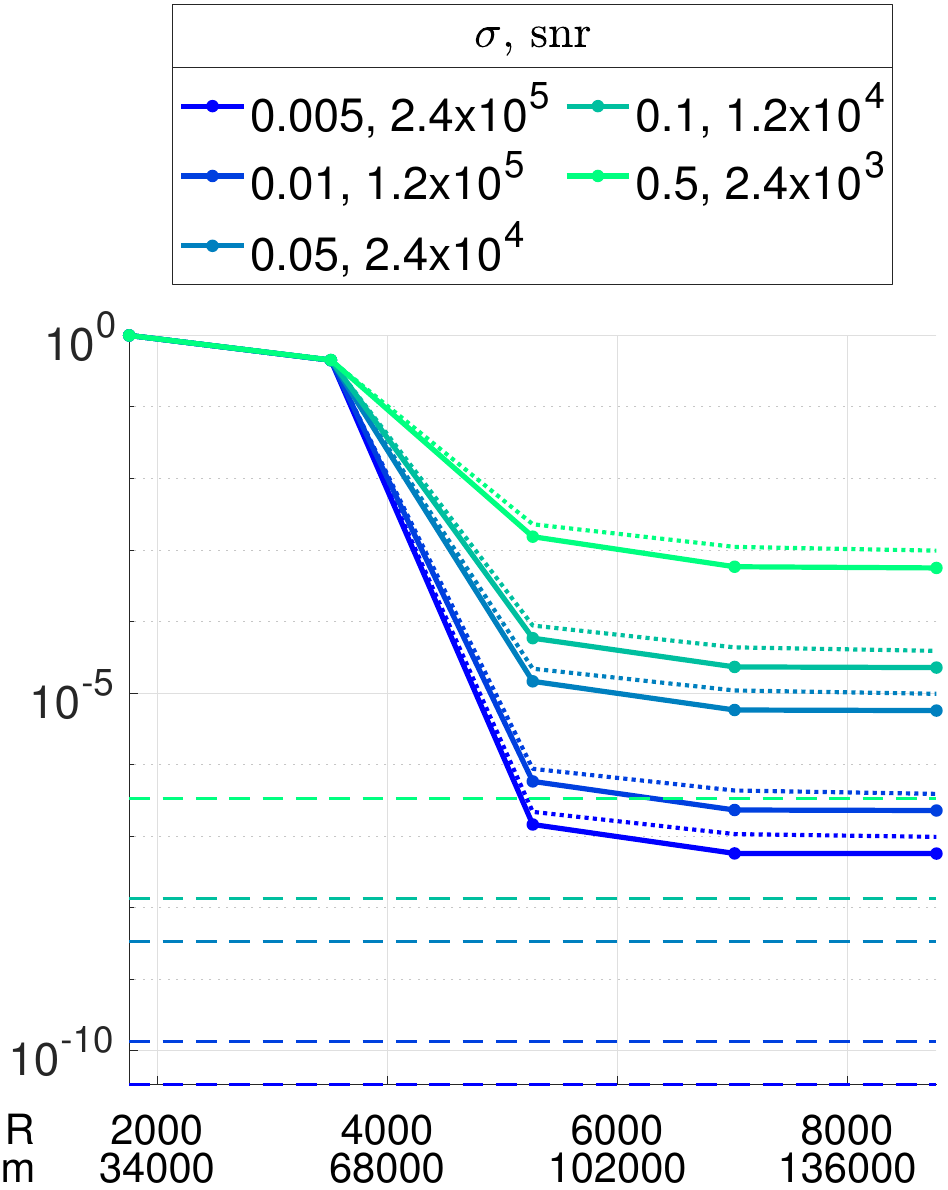}
    \caption*{(a1)}
    \end{subfigure}
    \begin{subfigure}{\textwidth}
        \includegraphics[width=\textwidth]{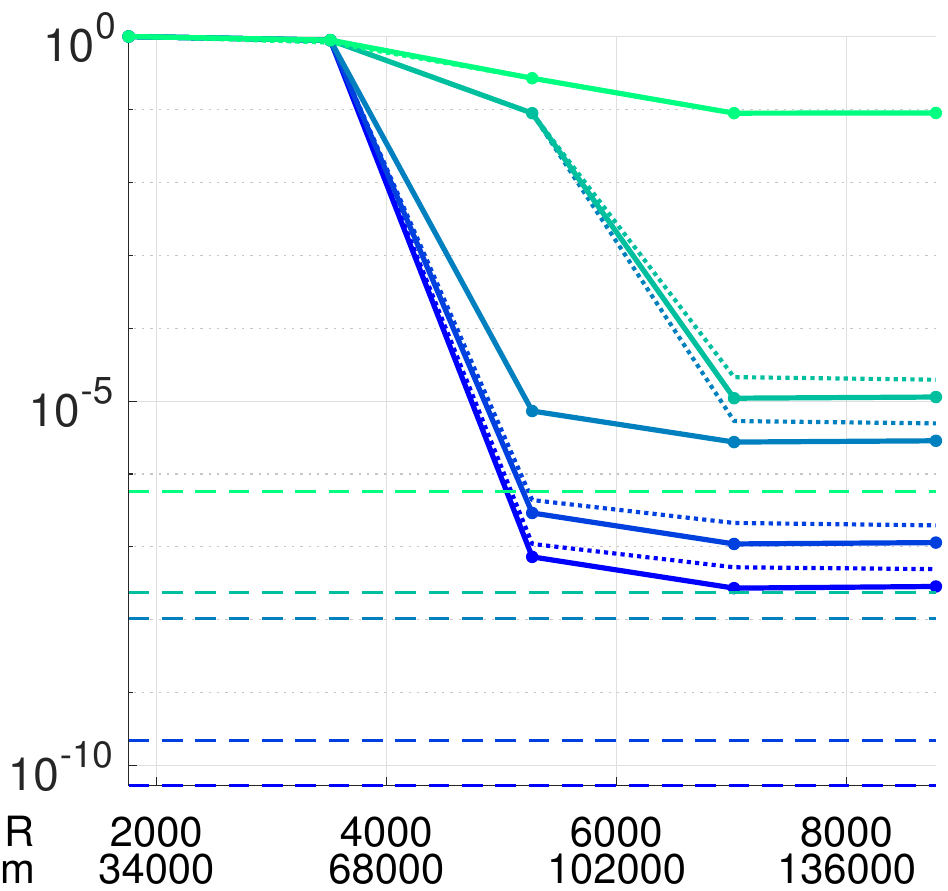}
    \caption*{(a2)}
    \end{subfigure}
    \vspace{-0.5cm}
\caption*{(a) \small $\{\bv_i\}$ with  disjoint supports. Vary $m$ via $R$;   \edit{$P=\ceil{4\log(n/k)}$}.}
    \label{subfig:noisy_varyR_disjoint}
\end{minipage}\hspace{1em}
\begin{minipage}{0.32\textwidth}
    \centering
    \begin{subfigure}{\textwidth}
        \includegraphics[width=\textwidth]{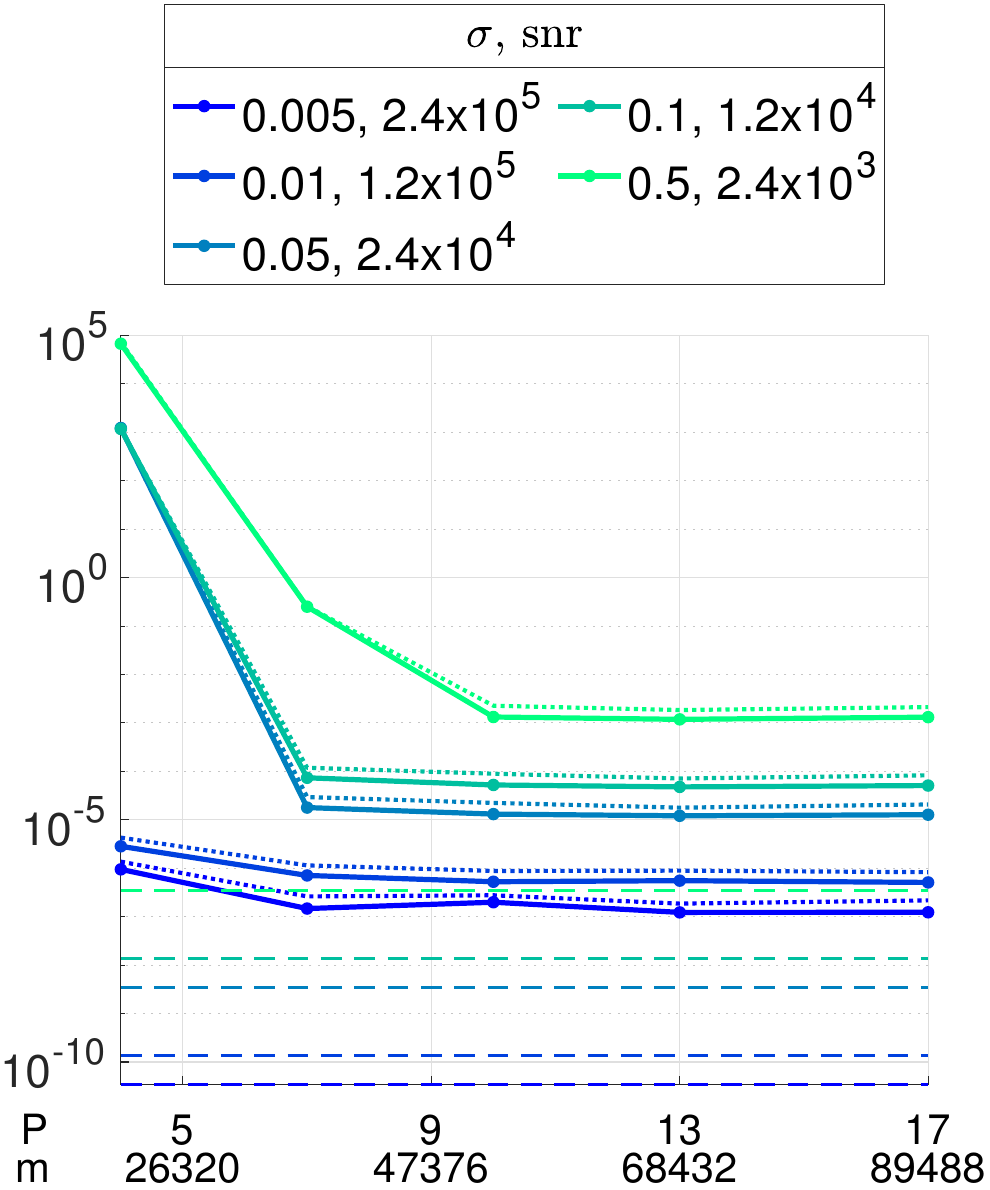}
        \caption*{(b1)}
    \end{subfigure}
    \begin{subfigure}{\textwidth}
        \includegraphics[width=\textwidth]{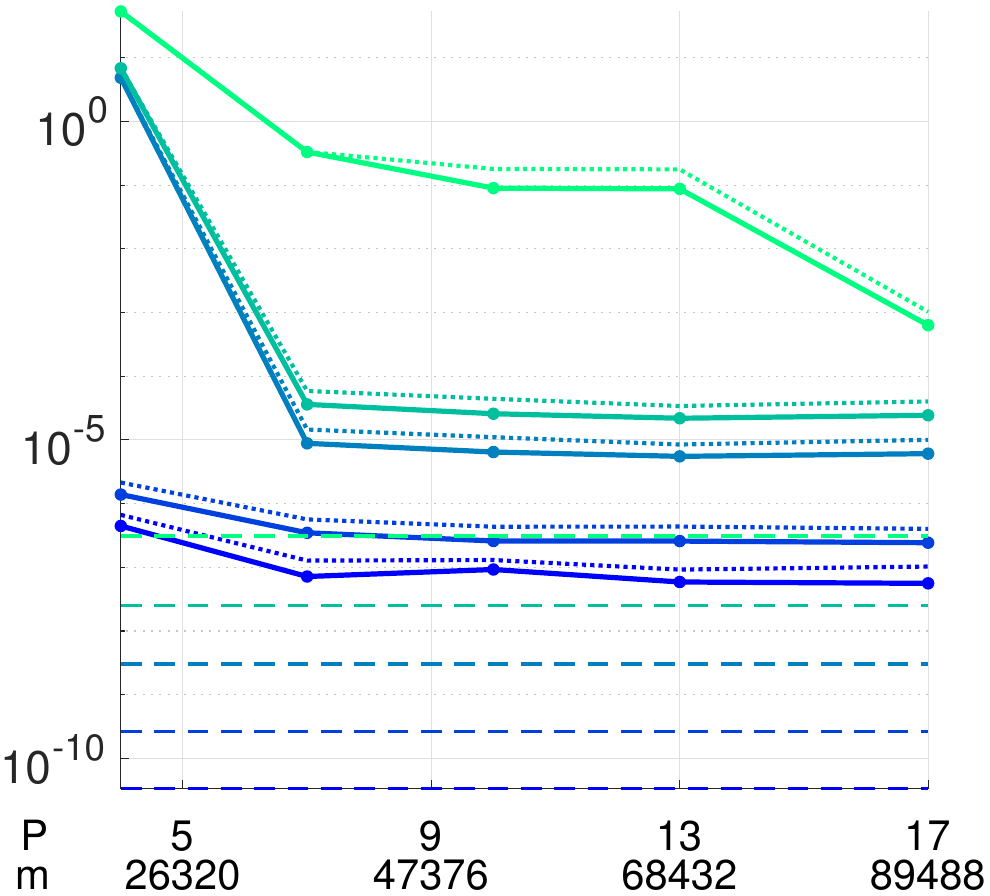}
        \caption*{(b2)}
    \end{subfigure}
    \vspace{-0.5cm}
    \caption*{(b) \small$\{\bv_i\}$ with disjoint supports. Vary $m$ via $P$; \edit{$R = 3r\tk/ (\ln k)$}.}\label{subfig:noisy_varyP_disjoint}
\end{minipage}\hspace{1em}
\begin{minipage}{0.33\textwidth}
    \centering
    \begin{subfigure}{\textwidth}
        \includegraphics[width=\textwidth]{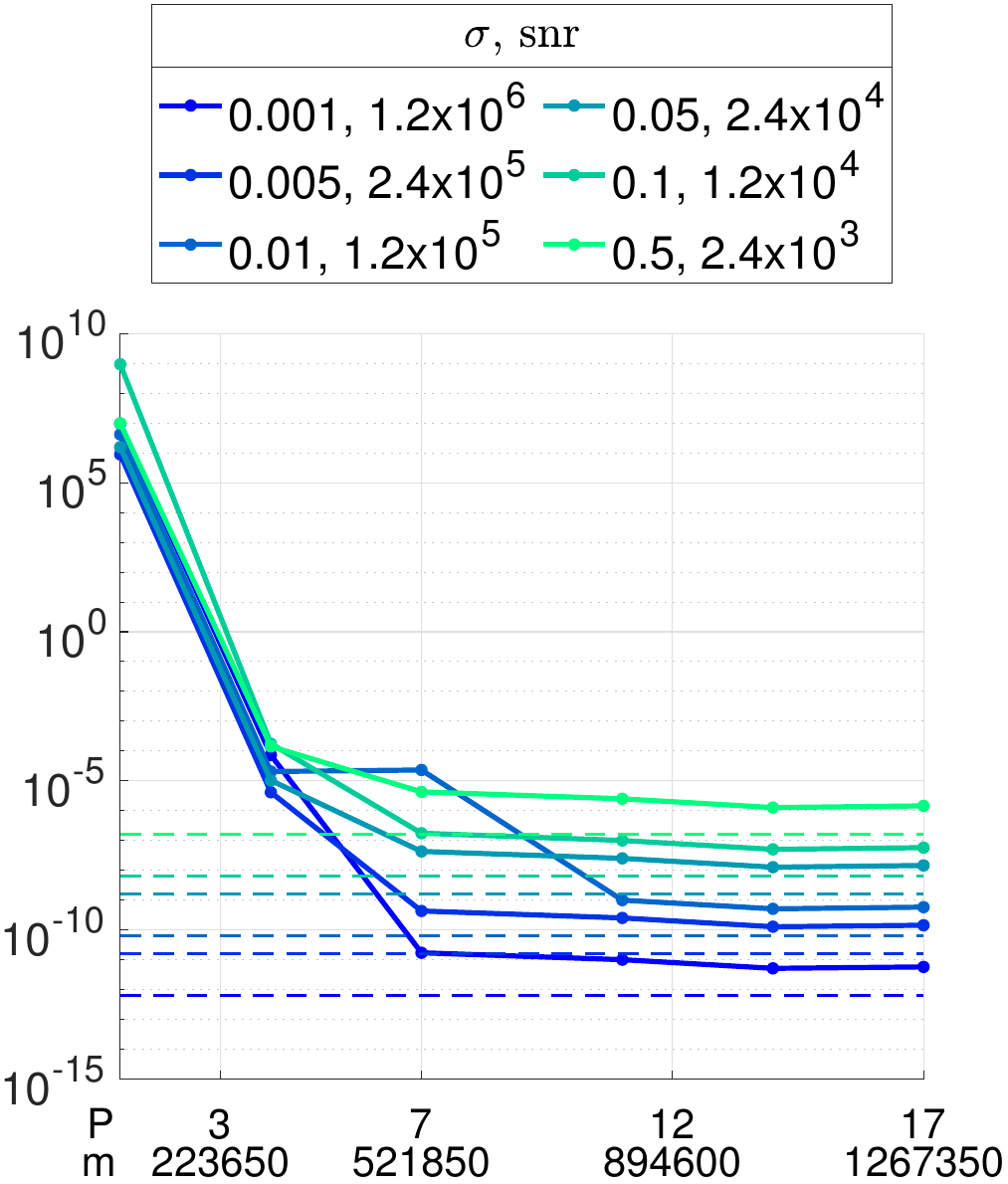}
        \caption*{(c1)}
    \end{subfigure}
    \begin{subfigure}{\textwidth}
        \includegraphics[width=\textwidth]{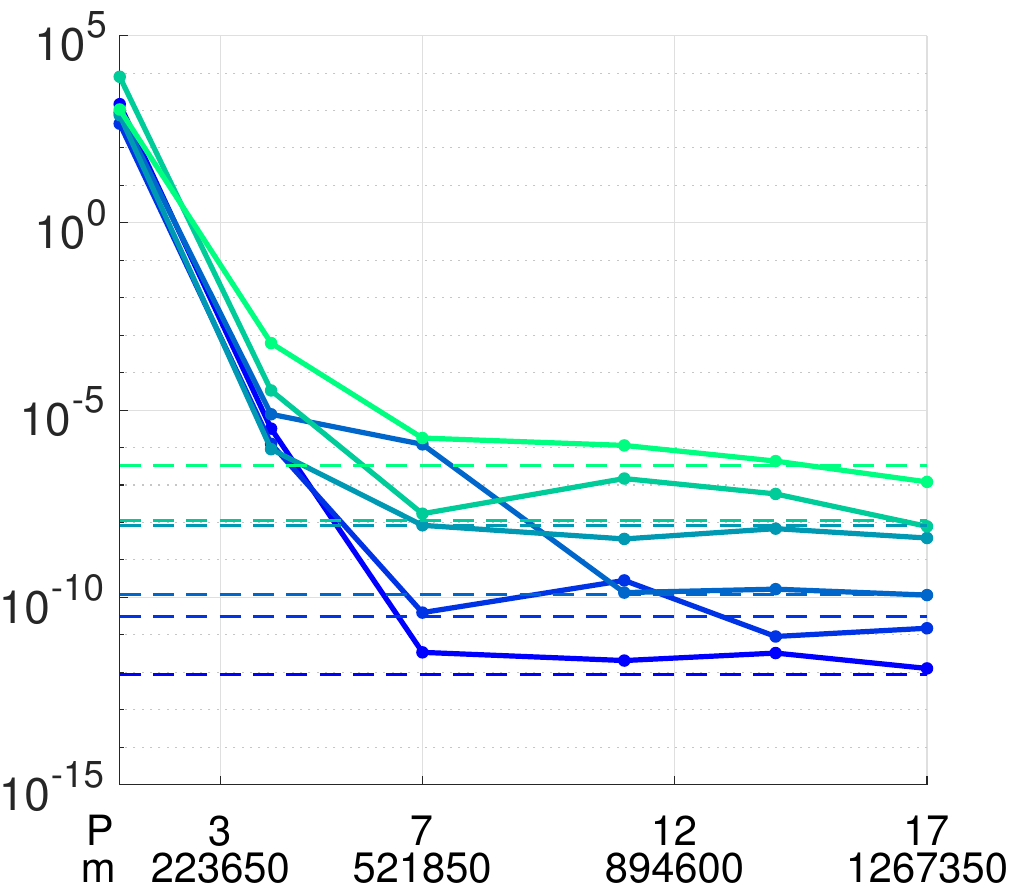}
        \caption*{(c2)}
    \end{subfigure}
    \vspace{-0.5cm}
    \caption*{(c) \small$\{\bv_i\}$ with overlapping supports. Vary $m$ via $P$; $R=10r\tk$.}
    \label{subfig:noisy_varyP_overlap}
\end{minipage}
\vspace{-0.1cm}
\caption{\small \edit{Normalized squared error of the matrix  $\|\hbX_0-\bX_0\|^2_F/{\|\bX_0\|^2_F}$ (top row) and average normalized square error of the eigenvectors $\frac{1}{r}\sum_{i=1}^r \|\hat{\bv}_i-\bv_i\|^2_2/{\|\bv_i\|^2_2}$  (bottom row) plotted against the sketch size $\nMeas$, for different SNRs $\lambda/(\sqrt{n} \sigma)$.   
Solid lines: proposed algorithm with message passing   in stage B.
Dotted lines: proposed algorithm with peeling decoding  in stage B. 
Dashed lines: direct eigendecomposition of the noisy matrix and retain the contribution of the top $r$ eigenvectors. 
In all cases,  \edit{$n=4000$, $k=70$} and $r=3$.   The parameters in the zeroton and singleton tests are $\gamma_0=5$ and \edit{$\gamma_1=2.5$ in (a)--(b) and $\gamma_1=3.5$} in (c).  Results are averaged over 15 trials.}}\label{fig:noisy}
\end{figure}
We apply the robust sketching scheme and recovery algorithm to symmetric matrices $\bX = \bX_0 +\bW$, with $\bX_0 = \sum_{i=1}^r  \bv_i\bv_i^T$ and $\bW$ a random  symmetric \edit{noise} matrix. 
 \edit{Here, we have dropped the  $ \, \tilde{} \, $ notation  for simplicity to let $\bv_1, \bv_2, \dots, \bv_r$ denote the unnormalized eigenvectors. Throughout this section, the column weight of the parity check matrix $\bH$  is chosen to be $d=2$.} 

Each  $\bv_i$ is 
 $k$-sparse with its nonzero entries drawn uniformly at random  from a discrete alphabet $\{\pm 10, \pm 20, \dots, \pm 50\}$. When the supports of $\{\bv_i\}$ overlap, we adjust in each $\bv_i$ a small number of nonzero entries  to ensure orthogonality among the $\{\bv_i\}$. 
 The resulting eigenvalues of $\bX_0$ are 
 $\lambda_i = \|\bv_i\|^2$ for $i\in [r]$. We measure the signal strength by the expected average eigenvalue $\lambda: = \E\left[\frac{1}{r} \sum_{i=1}^r \lambda_i\right]$. The entries in the noise matrix $\bW$ are drawn i.i.d. up to symmetry from $\normal(0, \sigma^2)$.  Let $\lambda_{W,1}\ge \lambda_{W,2}\ge \dots\ge \lambda_{W,n}$ denote the eigenvalues of $\bW$.  From Wigner's semicircle law \cite{anderson2009introduction}, we have that $\max_{i\in [r]} \frac{\lambda_{W,i}}{\sqrt{n}}$ concentrates around $2 \sigma$ and that 
$
\lim_{n\to \infty}\E\left[\frac{1}{n^2} \sum_{i=1}^n \lambda_{W,i}^2\right] =  \sigma^2.
$
Thus, to ensure the right scaling in our signal to noise ratio (SNR), we define our SNR to be $\lambda / (\sqrt{n} \sigma)$.

\edit{Let $\hat{\bv}_1, \hat{\bv}_2, \dots, \hat{\bv}_r$ denote the  unnormalized eigenvectors recovered by the algorithm, and $\hbX_0$ the estimated signal matrix computed using  $\{\hat{\bv}_i\}$. Fig. \ref{fig:noisy} shows the normalized squared error (NSE) of the recovered matrix  (defined as $\|\hbX_0-\bX_0\|^2_F/{\|\bX_0\|^2_F}$)  and the average NSE of the recovered eigenvectors (defined as $\frac{1}{r}\sum_{i=1}^r \|\hat{\bv}_i-\bv_i\|^2_2/{\|\bv_i\|^2_2}$)  against the sketch size $m$, for different SNR values.  The three columns in Fig. \ref{fig:noisy} correspond to three different settings, as indicated by the captions. In each column, the top subfigure plots the matrix NSE and the bottom subfigure plots the eigenvector NSE. }

\edit{In each subfigure, the solid lines correspond to the proposed algorithm with message passing decoding in stage B. The dotted lines trailing along the solid lines correspond to the  algorithm with peeling decoding in stage B. As we can see, message passing  consistently outperforms peeling decoding, offering slightly better noise robustness. Moreover, the horizontal dashed lines indicate the NSE of the baseline method without sketching -- this directly computes the eigendecomposition of the  matrix $\bX$ and estimates the signals using  the top $r$ eigenvectors. (The sample complexity of the baseline method is the number of distinct entries in  $\bX$, i.e., $ \tilde{n} ={n \choose 2} + n$.) Our NSEs approach the baseline NSEs as $m$ increases, especially in the overlapping support case (Fig. \hyperref[subfig:noisy_varyP_overlap]{8c}). For accurate recovery, that is, in the settings corresponding to the first data points after the transition in the plots, the required sketch size $m$ is at least an order of magnitude smaller than $\tilde{n}$; the running time of the recovery algorithm is also noticeably smaller than the baseline method.
}




Recall that the sketch size is $m=RP$, where $R$ is the number of bins and $P$ is the number of components in each bin. 
In Fig. \hyperref[subfig:noisy_varyR_disjoint]{8a}, the sketch size is varied via $R$, with $P$ held constant. As $R$ increases,  more  nonzero matrix entries are recovered in the first stage of the algorithm. This increases the number of right nodes in the graph for stage B, allowing more nonzero eigenvector entries to be recovered which leads to   smaller NSE. In Figs. \hyperref[subfig:noisy_varyP_disjoint]{8b} and \hyperref[subfig:noisy_varyP_overlap]{8c}, the sketch size is varied via $P$, with the number of bins $R$  held constant. The value of $P$ is varied as $\ceil{c\log (n/k)}$ for 
\edit{$c\in [0.1, 4.1]$}. The accuracy of the recovery improves as $P$ increases due to more reliable zeroton and singleton detection. 
In all subfigures, we observe that the NSEs
decrease with the SNR as expected.

\subsubsection{\edit{Comparison with alternative sketching schemes}}
\label{sec:noisy_comparison}
\edit{
We compare our scheme to two other existing methods. The first one uses  an i.i.d. Gaussian sketching operator $\mc{A}$ and a convex program  similar to \eqref{eq:nl_oymak_obj} for recovery,  with an inequality constraint to account for the noise. In particular, to recover a PSD matrix $\bX_0$ from the sketch $ \mc{A}(\bX)$ where $\bX= \bX_0 + \bW$, we solve the following optimization problem:
\beq
\hat{\bX}_0 \in \argmin_{\bar{\bX}\succeq 0} \|\bar{\bX}\|_* + \lambda\|\bar{\bX}\|_1\quad \text{subject to}\quad \|   \mc{A}(\bar{\bX}) - \mc{A}(\bX) \|_2 \le \varepsilon_{n, m}\;\;,\label{eq:ny_oymak_obj}
\eeq
where $\varepsilon_{n, m}$ is an upper bound on the $\ell_2$-norm of the noise vector $\bz := \mc{A}(\bW)$. Recalling that $W_{ij}\stackrel{\text{iid}}{\sim} \normal(0, \sigma^2)$ for $i\le j$, we can deduce that $\E[\|\bz\|_2^2]= mn^2\sigma^2.$ Thus we set $\varepsilon_{n,m}^2 =  1.2mn^2\sigma^2$ in our experiments. Like in the noiseless case in Section \ref{subsec:numerical_results} (Fig. \ref{fig:noiseless_oymak}), we use  $\lambda=0.25$, and solve \eqref{eq:ny_oymak_obj} using CVX which calls the solver MOSEK.}

\begin{figure}[!b]
  	\begin{subfigure}[H]{\linewidth}
  	\centering
    \includegraphics[width=0.38\linewidth]{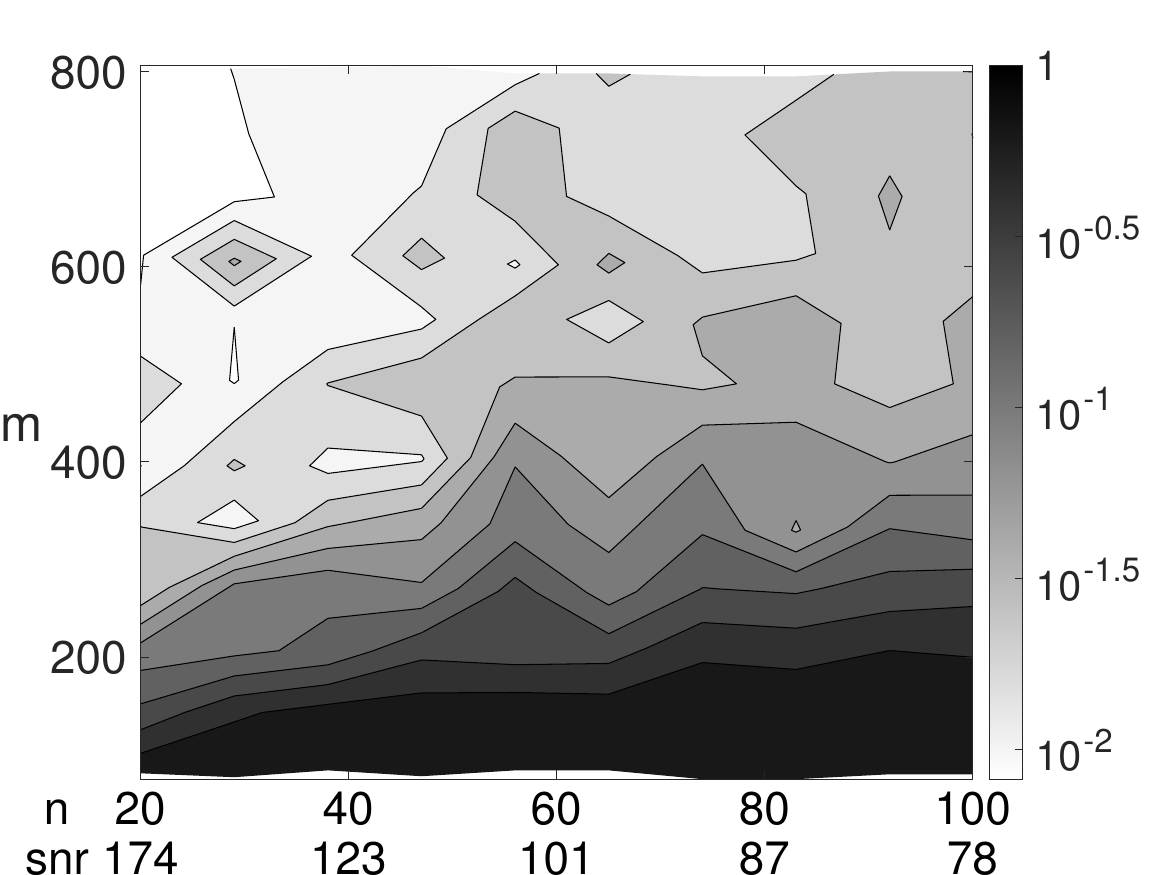}\hspace{0.5cm}
    \includegraphics[width=0.38\linewidth]{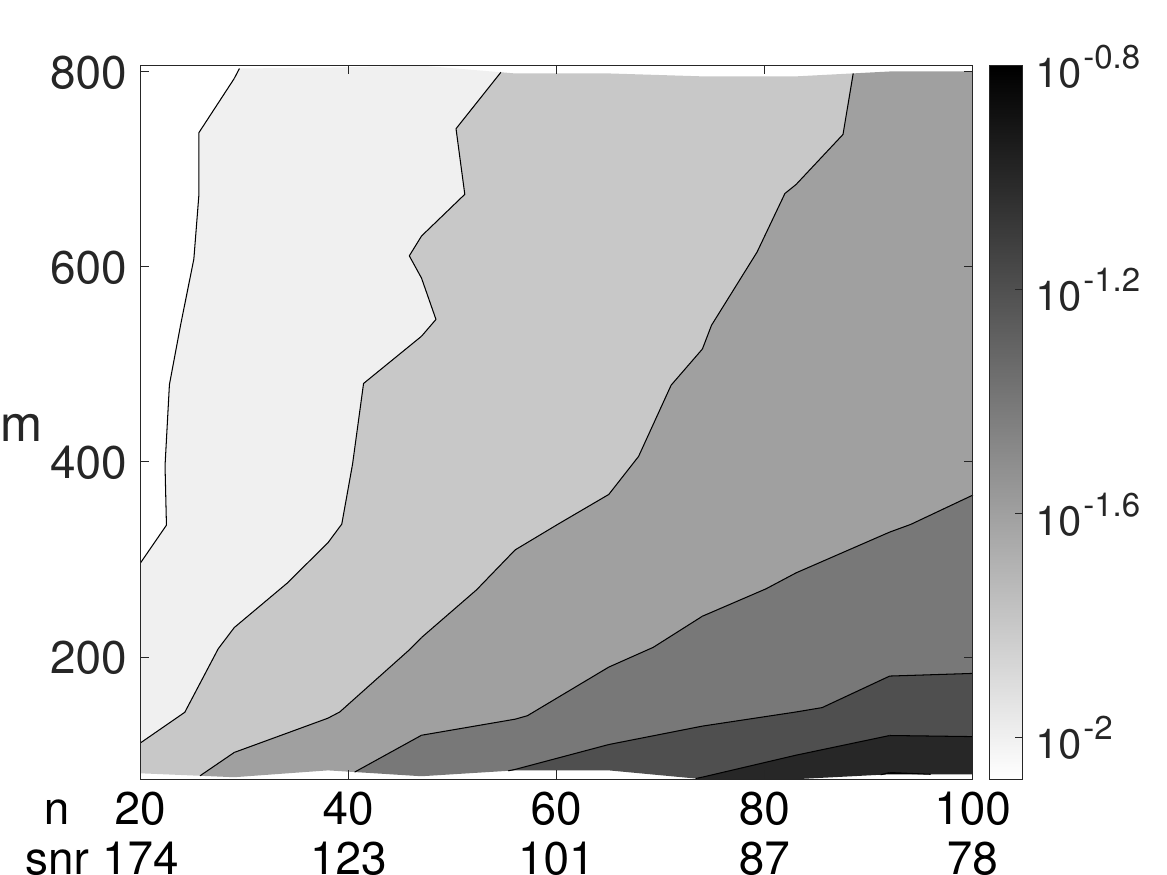}
    \vspace{-0.1cm}
    \caption{$\{\bv_i\}$ with disjoint supports.} \label{subfig:ny_oymak_disjoint_gaussian}
\end{subfigure}
\begin{subfigure}[H]{\linewidth}
\centering
    \includegraphics[width=0.38\linewidth]{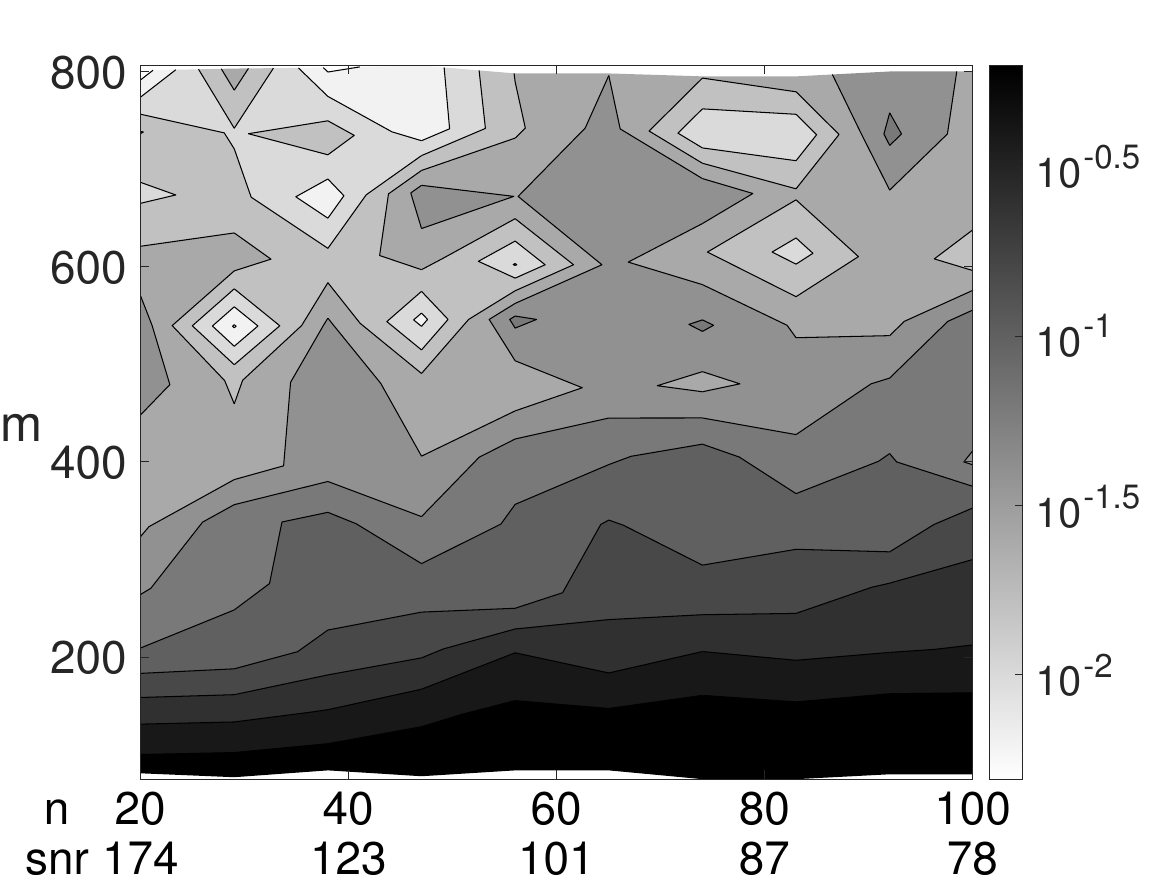}\hspace{0.5cm}
    \includegraphics[width=0.38\linewidth]{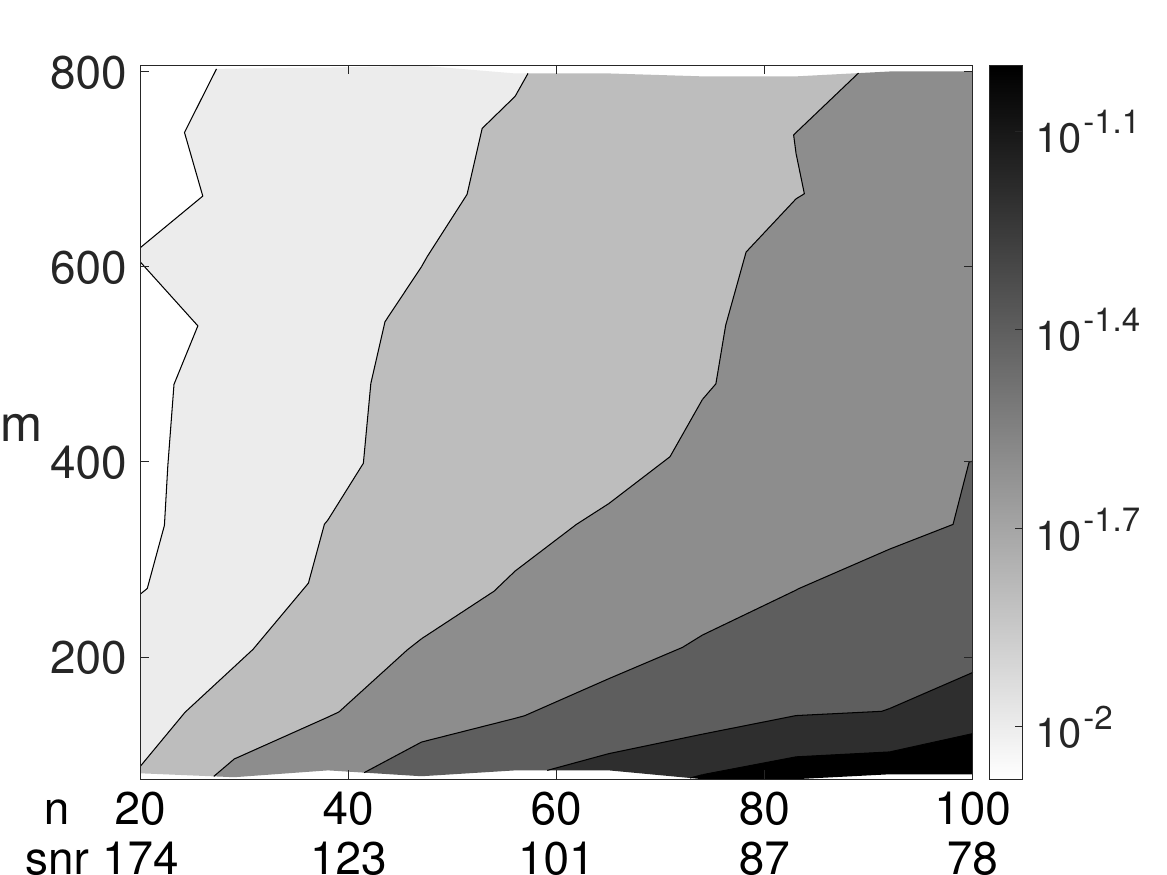}
     \vspace{-0.1cm}
	\caption{$\{\bv_i\}$ with overlapping supports.}
   \label{subfig:ny_oymak_overlap_gaussian}
  \end{subfigure} 
 %
 %
\vspace{-0.1cm}
  	\caption{\small \edit{Normalized error $\|\hat{\bX}_0-\bX_0\|_F/\|\bX_0\|_F$   on $\log_{10}$ scale. Left column: our scheme.  Right column: conventional scheme  using a  Gaussian sketching operator and recovery via the convex program in  \eqref{eq:ny_oymak_obj}. 
Each heatmap shows the normalized error across different  values of ambient dimension $n$ ($x$-axis) and  number of measurements $m$ ($y$-axis).  The matrices $\bX_0$ are  rank-$2$ PSD matrices, with fixed sparsity level $k=3$. The noise standard deviation is $\sigma=0.1$ and SNR is $\lambda/(\sqrt{n}\sigma)$. $P=\ceil{4.5\log(n/k)}$, $\gamma_0=5$ and $\gamma_1=2.5.$ Results are averaged over 30  trials.}}\label{fig:ny_oymak}
\end{figure}

\edit{In Fig. \ref{fig:ny_oymak}, we use  rank-$2$ sparse PSD matrices $\bX_0 = \sum_{i=1}^2 \bv_i\bv_i^T$. The   nonzero entries of each $\bv_i$ are independently drawn from a mixture of Gaussians  $\frac{1}{2}\normal(-5,1)+\frac{1}{2}\normal(5,1)$. 
The noise standard deviation is  $\sigma=0.1$ and the SNR is calculated as $\lambda/(\sqrt{n} \sigma)$ as before. For a fixed sparsity level $k=3$, the heatmaps indicate the normalized error across different values of ambient dimension $n$ and sketch size $m$. We vary $m$ via $R$, fixing $P$ as $\ceil{4.5\log(n/k)}$ for each  $n$ value.}
\edit{Our experiments are restricted to small matrices, as the CVX solver for the convex program \eqref{eq:ny_oymak_obj}  is infeasibly slow for matrices bigger than $100\times 100$.}

\begin{figure}[!t]
\centering
\begin{subfigure}[H]{0.38\linewidth}
    \includegraphics[width=\linewidth]{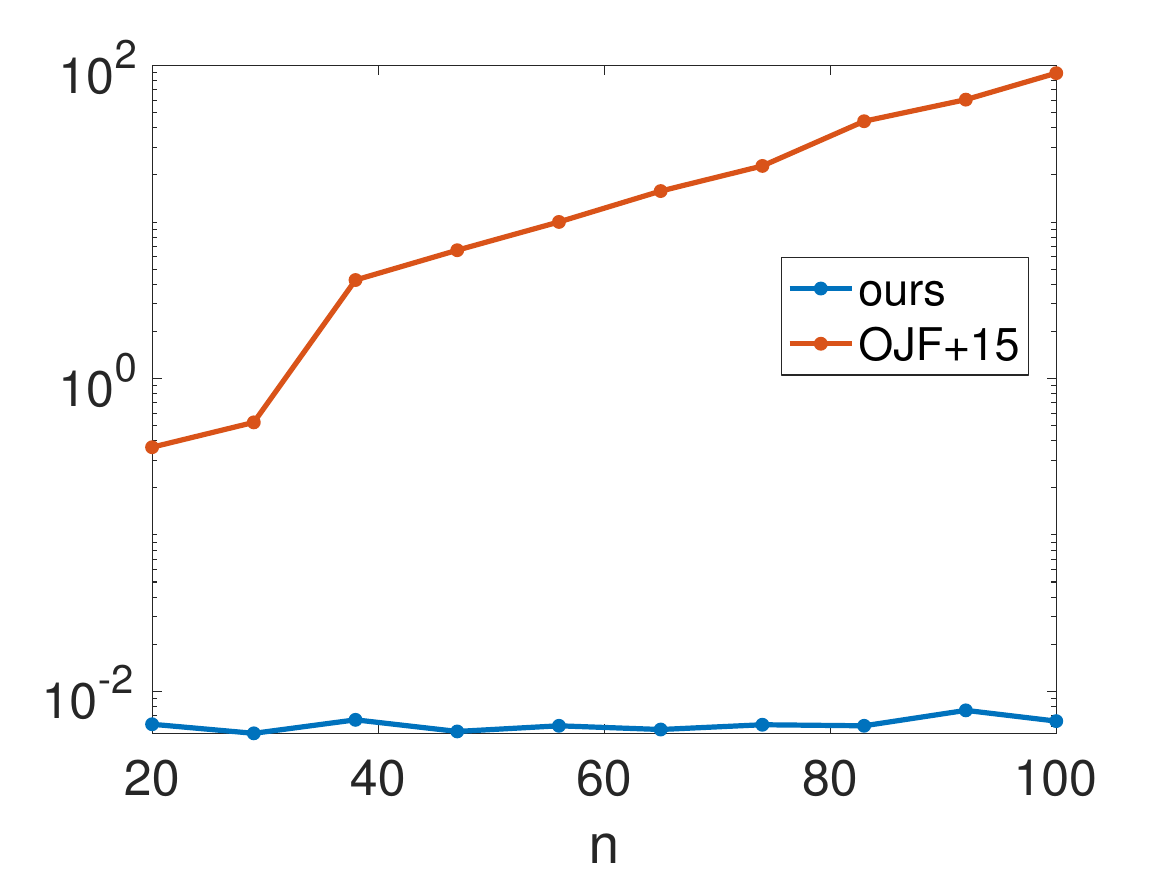}
    \caption{Same setting as in Fig. \ref{subfig:ny_oymak_disjoint_gaussian}. Fix $R=42$, vary $n$ and  set $P=\ceil{4.5\log(n/k)}$. }
\end{subfigure}
    \hspace{0.5cm}
\begin{subfigure}[H]{0.38\linewidth}
    \includegraphics[width=\linewidth]{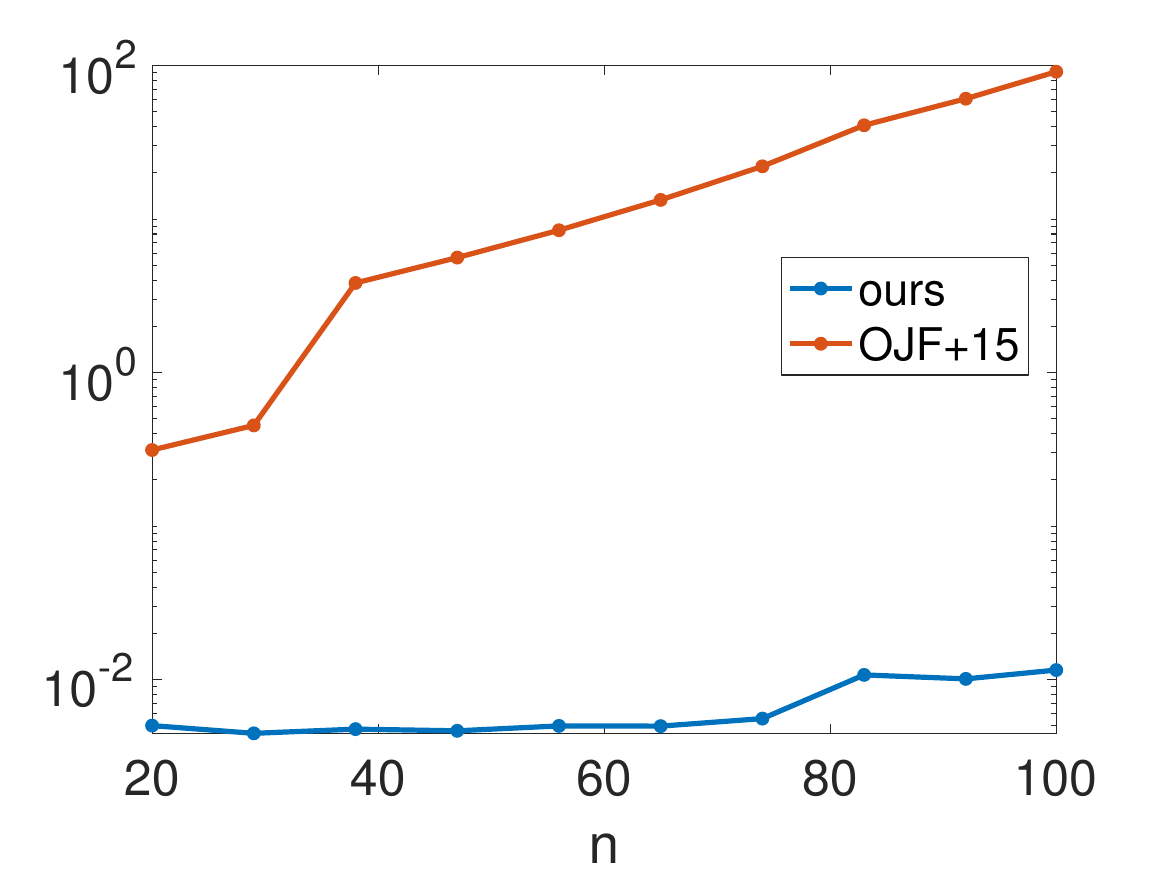}
    \caption{Same setting as in Fig. \ref{subfig:ny_oymak_overlap_gaussian}. Fix $R=42$, vary $n$ and set $P=\ceil{4.5\log(n/k)}$.}
\end{subfigure}
\caption{\small \edit{Running time of our scheme and the conventional scheme using Gaussian sketching operator and recovery via the convex program \eqref{eq:ny_oymak_obj}. Same setting as in Fig. \ref{fig:ny_oymak}. Each plot shows the running time against $n$, for fixed $k=3$ and $R=42.$ }} \label{fig:ny_oymak_runtime}
\end{figure}



\edit{The left column of Fig. \ref{fig:ny_oymak} corresponds to our scheme and the right  to  the optimization approach. We observe a similar trade-off as in the noiseless case (Fig. \ref{fig:noiseless_oymak}):  the conventional scheme is more sample efficient in the relatively dense regime, but is outperformed by our scheme in the sparse regime given a sufficient number of measurements. Specifically, the normalized error in the top right region of the heatmaps is smaller under our scheme. 
It is worth noting that the contours in the heatmaps  are significantly flatter under our scheme, indicating better scalability. The shallow  gradient of the contours under our scheme is due to the $\log (n/k)$ factor in our sample complexity $m=\bigo(rk^2\log (n/k))$. } 

\edit{
Moreover, similar to the noiseless case,  Fig. \ref{fig:ny_oymak_runtime} shows that the running time of our algorithm grows much more slowly than the convex optimization method, and is several orders of magnitude faster across different $n$ values.}


\edit{The second scheme that we compare with is the nested  scheme proposed in \cite{bahmani2015sketching}, \cite[Sec. 3.3]{bahmani2016nearopt} for sketching PSD matrices. The sketching operator takes the form 
$\mc{A}: \bX \to \left[\langle \ba_i\ba_i^T, \bX\rangle\right]_{i=1}^m$ where $\ba_i = \bPsi^T \be_i\in \reals^{n}$, with $\bPsi \in \reals^{L\times n} $ populated with entries  $\stackrel{\text{iid}}{\sim}\normal(0, 1/L)$, and $\be_1, \dots, \be_m\in \reals^L\stackrel{\text{iid}}{\sim} \normal(\bzero, \bI_{L\times L})$. 
That is, each measurement takes the form
 \begin{align}
 y_i & = \langle \ba_i\ba_i^T, \bX\rangle
 = \ba_i^T \bX \ba_i 
 = \be_i^T \bPsi \bX \bPsi^T \be_i
 = \langle \be_i \be_i^T, \bPsi \bX\bPsi^T \rangle
 =: \mc{E}(\bPsi \bX\bPsi^T), \quad i\in [m]. \label{eq:bah_y_i}
 \end{align}
Note that $\mc{A}$ has a nested structure, consisting of a linear operator $\mc{E}$  and a matrix $\bPsi$. With appropriately chosen $L$ and $m$, $\mc{E}$
 forms a restricted isometry for low-rank matrices, and $\bPsi$ forms a restricted isometry for sparse matrices.  This structure enables the following two-stage  algorithm to estimate $\bX_0$  \cite{bahmani2016nearopt, bahmani2015sketching}:
 \begin{align}
\text{Low-rank estimation stage:}& \quad    
\hat{\bB} \in \argmin_{\bB\succeq 0} \trace(\bB) \quad
\text{subject to}\quad \sum_{i=1}^m (\be_i^T\bB\be_i -y_i)^2 \le \varepsilon_{n,m}^2. \label{eq:bah_low_rank}\\
\text{Sparse estimation stage:}& \quad 
\hat{\bX}_0 \in \argmin_{\bar{\bX}} \|\bar{\bX}\|_1 \quad
\text{subject to} \quad 
\|\bPsi \bar{\bX}\bPsi^T - \hat{\bB}\|_F 
\le \frac{C\varepsilon_{n,m}}{\sqrt{m}} .
\label{eq:bah_sparse}
 \end{align}
}

\begin{figure}[!b]
\centering
\begin{subfigure}[H]{0.38\linewidth}
    \centering
    \includegraphics[width=\linewidth]{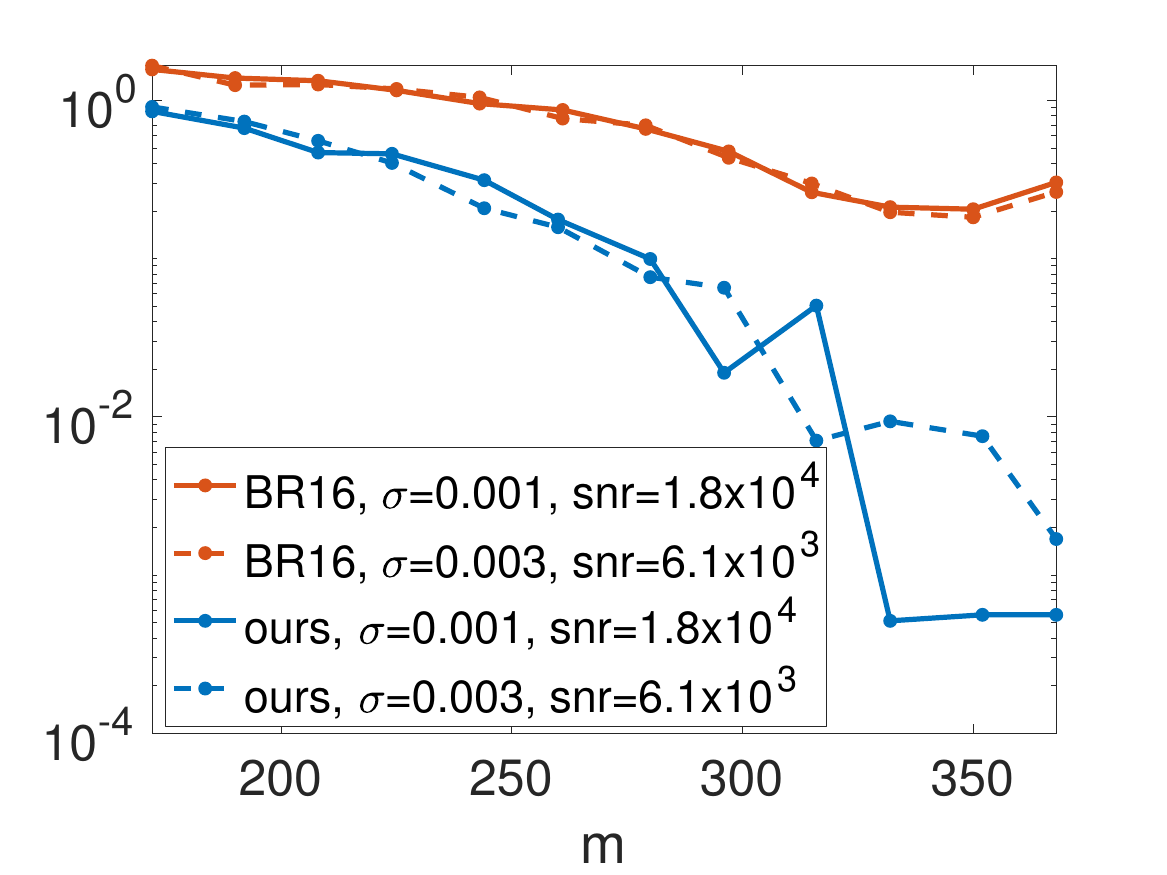}
    \vspace{-0.5cm}
    \caption{Gaussian alphabet; $k=7$. $P=\ceil{1.5\log(n/k)}$, $L=\ceil{3k(1+\log(n/k))}$.}
    \label{subfig:ny_bah_K7_gaussian}
\end{subfigure}
\hspace{0.5cm}
\begin{subfigure}[H]{0.38\linewidth}
    \centering
    \includegraphics[width=\linewidth]{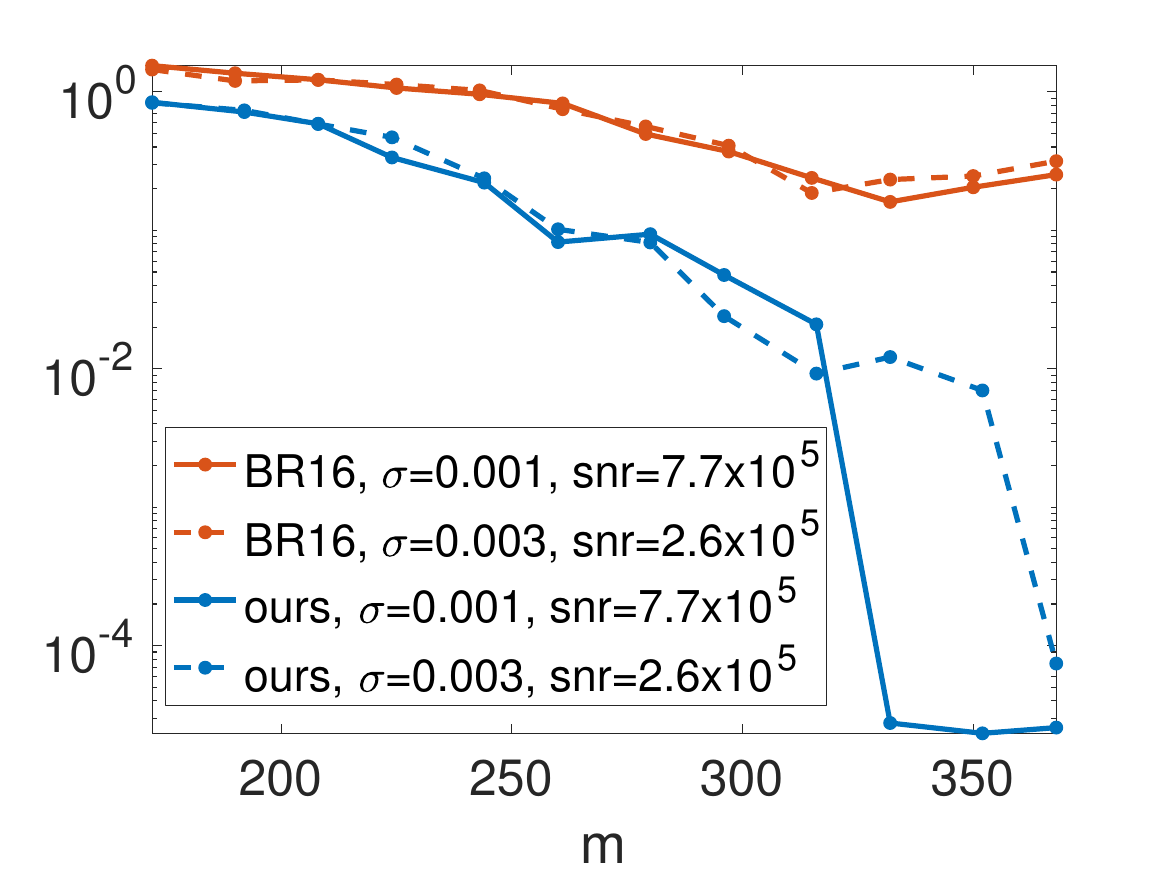}
    \vspace{-0.5cm}
    \caption{Discrete alphabet; $k=7$. $P=\ceil{1.5\log(n/k)}$, $L=\ceil{3k(1+\log(n/k))}$.}
    \label{subfig:ny_bah_K7_rhoB}
\end{subfigure}
\begin{subfigure}[H]{0.38\linewidth}
    \centering
    \includegraphics[width=\linewidth]{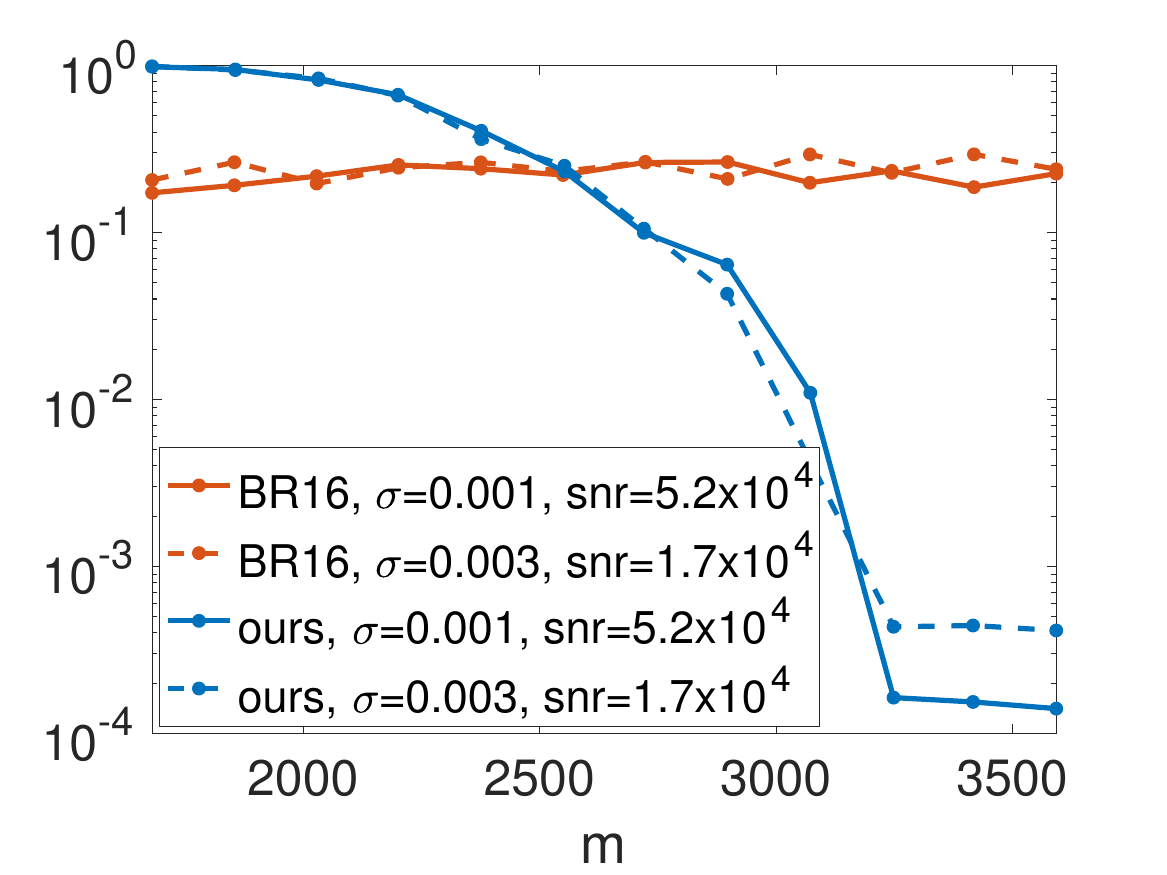}
    \vspace{-0.5cm}
    \caption{Gaussian alphabet; $k=20$. $P=\ceil{4.5\log(n/k)}$, $L=\ceil{1.8k(1+\log(n/k))}$.}
    \label{subfig:ny_bah_K20_gaussian}
\end{subfigure}
\hspace{0.5cm}
\begin{subfigure}[H]{0.38\linewidth}
    \centering
    \includegraphics[width=\linewidth]{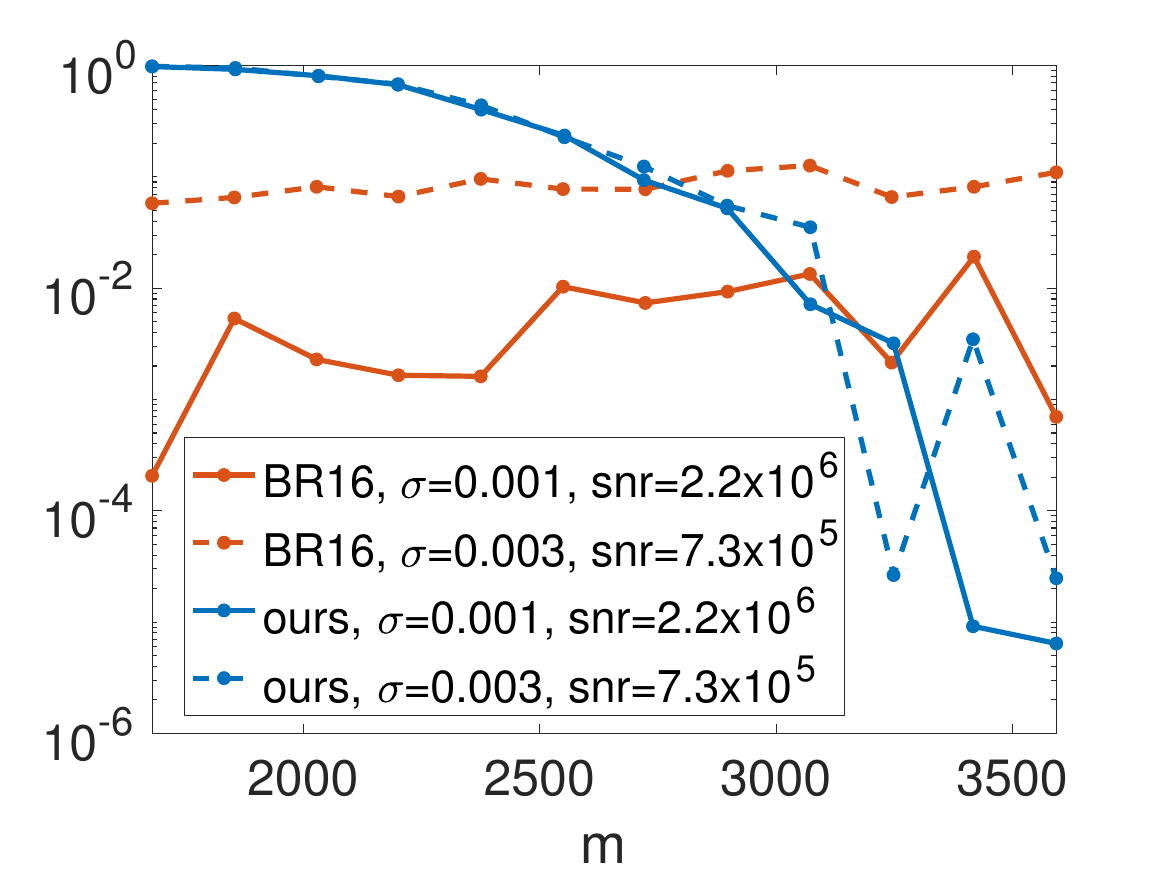}
    \vspace{-0.5cm}
    \caption{Discrete alphabet; $k=20$. $P=\ceil{4.5\log(n/k)}$, $L=\ceil{1.8k(1+\log(n/k))}$.}
    \label{subfig:ny_bah_K20_rhoB}
\end{subfigure}  
\caption{\small \edit{Normalized error $\|\hat{\bX}_0-\bX_0\|_F/\|\bX_0\|_F$ of our scheme (red lines) and the nested sketching scheme proposed in \cite{bahmani2015sketching, bahmani2016nearopt} (blue lines) plotted against $m$. In each setting,   $\bX_0$ is a $100\times 100$ rank-$2$ PSD matrix, whose eigenvectors have disjoint supports.  The solid and dashed lines correspond to $\sigma=0.001$ and $\sigma=0.003$. SNR is $\lambda/(\sqrt{n}\sigma)$. $\gamma_0=5, \gamma_1=2.5.$
Results are averaged over $60$ trials. } }
\label{fig:ny_bah}
	\vspace{-2pt} 	
\end{figure}

\begin{figure}[!t]
\centering
\begin{subfigure}[H]{0.38\linewidth}
    \centering
    \includegraphics[width=\linewidth]{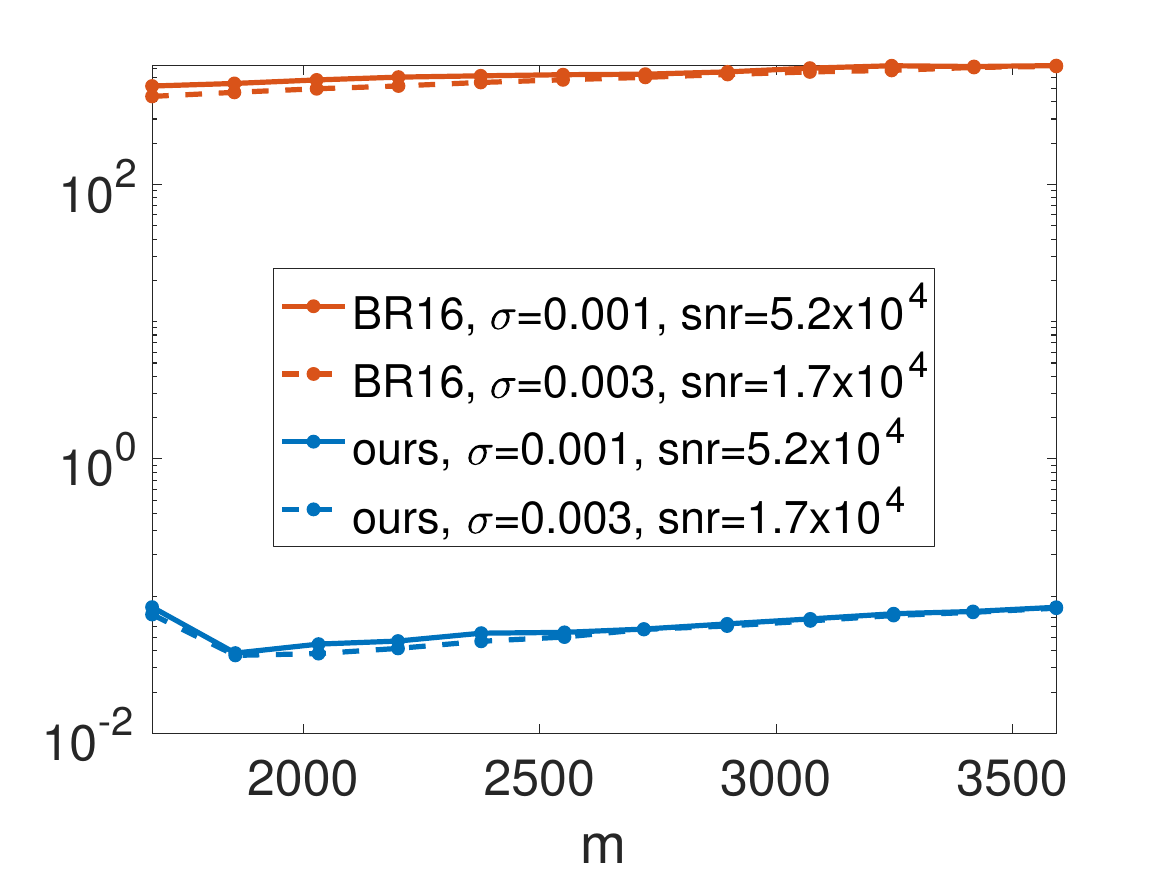}
    \vspace{-0.5cm}
    \caption{Gaussian alphabet; $k=20$.}
\end{subfigure}
    \hspace{0.5cm}
\begin{subfigure}[H]{0.38\linewidth}
    \includegraphics[width=\linewidth]{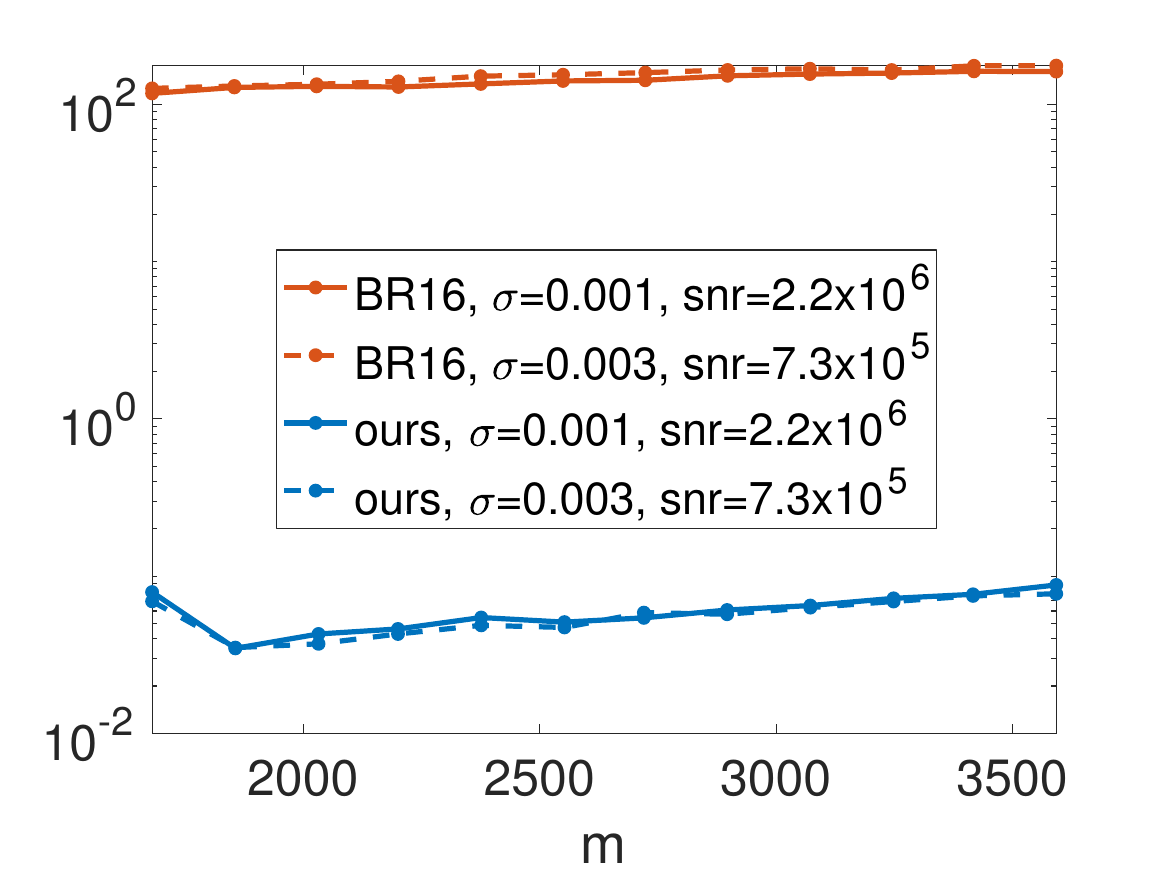}
\vspace{-0.5cm}
    \caption{Discrete alphabet; $k=20$.}
\end{subfigure}
\caption{\small \edit{Running time of our scheme (red lines)  and the nested sketching scheme (blue lines), recorded in the experiments for  Figs. \ref{subfig:ny_bah_K20_gaussian}--\ref{subfig:ny_bah_K20_rhoB}. 
Results are  averaged over 60 trials. }}
\label{fig:ny_bah_runtime}
	\vspace{-2pt} 	
\end{figure}

\edit{Here $C$ is a suitable constant, and $\varepsilon_{n,m}$ is an upper bound on the $\ell_2$-norm of the noise vector $\bz$, whose entries take the form $z_i =\be_i^T \bPsi \bW \bPsi^T \be_i$ for $ i\in [m]$. Omitting the details,  we use $\varepsilon_{n,m}=1.5 \sqrt{\E[\|\bz\|_2^2]} = 1.5\sigma \sqrt{m n(n+2)(1+2/L)}$ in our experiments. }

 \edit{Fig. \ref{fig:ny_bah} compares the performance of our scheme with the nested sketching scheme for two different choices for the nonzero entries of the eigenvectors. These entries are drawn independently from either: (i) the Gaussian mixture distribution $\frac{1}{2}\normal(-5,1)+\frac{1}{2}\normal(5,1)$ or (ii) the uniform distribution over the discrete alphabet $\{\pm 10, \pm 20, \dots, \pm 50\}$.  The matrix $\bX_0$ is  $100\times 100$ rank-$2$ PSD, with eigenvectors having disjoint supports.  We observe that our scheme consistently outperforms the nested scheme of \cite{bahmani2016nearopt, bahmani2015sketching} in the  sparse setting (Figs. \ref{subfig:ny_bah_K7_gaussian}--\ref{subfig:ny_bah_K7_rhoB}).  In the denser setting (Figs. \ref{subfig:ny_bah_K20_gaussian}--\ref{subfig:ny_bah_K20_rhoB}), our scheme needs more measurements to go through phase transition in normalized error. Nevertheless, it achieves an error that's $2$ to $4$ orders of magnitude smaller than the latter after the transition. We note that the normalized error of the nested scheme is on the order of $10^{-1}$, which agrees with the simulation results in \cite[Sec. III]{bahmani2015sketching}. }

 \edit{The nested sketching scheme was implemented according to the description in  \cite[Sec. 4]{bahmani2016nearopt}. We used TFOCS \cite{tfocs} for the low-rank estimation stage \eqref{eq:bah_low_rank} and a variant of the Alternating Direction Method of Multipliers adapted from \cite{yang2011alternating,yall1} for the sparse estimation stage \eqref{eq:bah_sparse}. This implementation is significantly faster than that via CVX, but as shown in  Fig. \ref{fig:ny_bah_runtime}, is still on average $3$ to $4$ orders of magnitude slower than our algorithm.} 


\section{Proofs of main lemmas} \label{sec:main_lem_proofs}

In this section, we give the proofs of Lemmas \ref{lem:1A_rank_r}--\ref{lem:non_sym_Stage_B}.

\subsection{Preliminaries} \label{subsec:prelim}
We begin with the following tail bound for binomial random variables, which follows from  Sanov's theorem \cite[Theorem 12.4.1]{cover2006elements}.
  \begin{lem}[Tail bound for Binomial]
\label{lem:binom_tail_bound} For $X\sim \bin{n}{p}$,
\begin{align}
\prob{X \leq x} & \leq \exp\left(- nD_e\left(\frac{x}{n}\middle\|p\right)\right)\ \text{ for } 0\leq x\leq np,  \\
\prob{X \geq x} &\leq 
 \exp\left(- nD_e\left(\frac{x}{n}\middle\|p\right)\right)\  \text{ for } np\leq x\leq n,
\end{align}
 where $D_e(q\|p) := q \ln \frac{q}{p} + (1-q) \ln \frac{1-q}{1-p}$.
 \end{lem}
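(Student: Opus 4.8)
The plan is to prove both inequalities by a change-of-measure (equivalently, Chernoff) argument, which is precisely the quantitative core of Sanov's theorem specialized to a binary alphabet and a one-sided event; doing it directly avoids the polynomial prefactor $(n+1)^{2}$ that a black-box appeal to \cite[Theorem 12.4.1]{cover2006elements} would leave behind. Write $X=\sum_{i=1}^n Z_i$ with $Z_i\stackrel{\text{iid}}{\sim}\bern{p}$, set $q:=x/n$, and treat the upper tail first, so $np\le x\le n$, i.e.\ $p\le q\le 1$. Starting from the exact expression and inserting the reference weights $q^j(1-q)^{n-j}$,
\begin{equation}
\prob{X\ge x}=\sum_{j\ge x}\binom{n}{j}p^j(1-p)^{n-j}
=\sum_{j\ge x}\binom{n}{j}q^j(1-q)^{n-j}\left(\frac{p}{q}\right)^{j}\left(\frac{1-p}{1-q}\right)^{n-j},
\end{equation}
where $j$ runs over integers, I would study the factor $g(j):=(p/q)^{j}((1-p)/(1-q))^{n-j}$. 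Writing it as $g(j)=\big(\tfrac{1-p}{1-q}\big)^{n}\exp\!\big(j\ln\tfrac{p(1-q)}{q(1-p)}\big)$ and noting that $q\ge p$ forces $\tfrac{p(1-q)}{q(1-p)}\le 1$, the function $g$ is non-increasing in the real variable $j$; hence $g(j)\le g(nq)=(p/q)^{nq}((1-p)/(1-q))^{n-nq}=\exp(-nD_e(q\|p))$ for every integer $j\ge x=nq$ (this comparison against the possibly non-integer boundary value $nq$ is the one place monotonicity matters).

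Pulling this bound out of the sum leaves $\sum_{j\ge x}\binom{n}{j}q^j(1-q)^{n-j}=\prob{Y\ge x}\le 1$ for $Y\sim\bin{n}{q}$, which yields $\prob{X\ge x}\le\exp(-nD_e(q\|p))$. The lower tail is symmetric: for $0\le x\le np$, i.e.\ $0\le q\le p$, the same factorization holds, $g$ is now non-\emph{decreasing} (since $\tfrac{p(1-q)}{q(1-p)}\ge 1$), so $g(j)\le g(nq)=\exp(-nD_e(q\|p))$ for every integer $j\le x$, and summing over $j\le x$ against the $\bin{n}{q}$ weights again contributes a factor at most $1$, giving $\prob{X\le x}\le\exp(-nD_e(q\|p))$. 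A short closing remark would dispatch the degenerate cases: at $x=np$ both bounds read $\le 1$ because $D_e(p\|p)=0$; at $x=n$ (resp.\ $x=0$) equality holds with the conventions $0\ln0=0$ and $q\ln(q/0)=\infty$ for $q>0$, matching $\prob{X=n}=p^n$ (resp.\ $\prob{X=0}=(1-p)^n$).

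There is no substantive obstacle here — this is a textbook relative-entropy Chernoff bound — so the only care needed is in (i) getting the monotonicity direction of $g$ right in each of the two regimes, which is exactly what licenses comparing every admissible integer $j$ to the boundary value $nq$, and (ii) handling the $0\ln0$ / $\ln(1/0)$ conventions at the endpoints $q\in\{0,1\}$. If one prefers to invoke Sanov's theorem verbatim, the inequalities still follow, but I would retain the change-of-measure step above in order to strip the method-of-types prefactor and obtain the clean bound stated in the lemma.
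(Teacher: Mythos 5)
Your proof is correct, and the reasoning is clean. The paper itself offers no proof at all for this lemma beyond a citation to Sanov's theorem in Cover and Thomas, so there is no step-by-step argument to compare against. What you do is essentially the one-sided, binary-alphabet specialization of that result, carried out by direct exponential tilting (change of measure to $\bin{n}{q}$ and bounding the likelihood ratio $g(j)$ by its value at the boundary $j=nq$ via monotonicity). Your observation about the prefactor is the substantive point: the method-of-types form of Sanov's theorem in Cover and Thomas carries a polynomial factor $(n+1)^{|\mathcal{X}|}$, here $(n+1)^{2}$, which is \emph{not} present in the lemma as stated, so the paper's ``follows from Sanov's theorem'' is a slight overstatement if read as a black-box application. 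Your tilting argument is exactly what removes that prefactor, and the monotonicity check on $g$ in each regime (non-increasing for $q\ge p$, non-decreasing for $q\le p$, so that every admissible integer $j$ compares against $nq$) together with the closing remark on the $0\ln 0$ conventions at $q\in\{0,1\}$ closes the argument correctly.
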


We will use the notion of \emph{negative association} (NA) \cite{joagDev1983negative, dubhashi1998balls} to handle the dependence between the degrees of the bipartite graphs we analyze. Roughly speaking,  a collection of random variables $X_1, X_2, \ldots, X_n$  are negatively associated if whenever a subset of them is high, then a disjoint subset of them must be low.
\begin{definition}[Negative Association (NA)] \label{def:NA}
The random variables $X_1, X_2, \dots  , X_n$  are said to be negatively associated if for any two disjoint index sets $\mc{I, J} \subseteq [n]$ and two functions $f: \reals^{ \abs{\mc{I}} }\to \reals$, 
$g: \reals^{ \abs{\mc{J}} }\to \reals$ both monotonically increasing or monotonically decreasing, 
\beq
\E\left[ f(X_i, i\in \mc{I}) g(X_j, j\in \mc{J})\right]
\le \E\left[f(X_i, i\in \mc{I})\right] \E\left[ g(X_j, j\in \mc{J})\right].
\eeq
\end{definition} 

\begin{lem}[Useful properties of NA \cite{joagDev1983negative, dubhashi1998balls}]\text{ } \\
\vspace{-0.5cm}
\begin{enumerate}[label={(\roman*)}]
\item \label{lem:NA_closure_union_of_indep_sets} The union of independent sets of NA random variables is NA.
\item \label{lem:NA_closure_mono_functions} Concordant monotone  functions (i.e., all monotonically increasing or all monotonically decreasing functions) defined on disjoint subsets of a set of NA random variables are NA.
\item \label{lem:perm_dist_are_NA} Let $\bx = \left[ x_1, x_2, \dots, x_n\right]$ be a real-valued vector and let $\bX = \left[ X_1, X_2,\dots, X_n\right]$ be a random vector which takes as values all the $n!$ permutations of $\bx$ with equal probabilities. Then $X_1, X_2, \dots, X_n$ are NA.
\end{enumerate}
\label{lem:NA_closure}
\end{lem}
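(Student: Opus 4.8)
The statement collects three standard closure properties of negatively associated (NA) random variables, and the plan is simply to reproduce, for completeness, the short arguments of \cite{joagDev1983negative,dubhashi1998balls}, treating the three parts in turn; part \ref{lem:perm_dist_are_NA} is the only one requiring real work.

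For part \ref{lem:NA_closure_union_of_indep_sets} I would reduce to the case of two NA families $(X_i)_{i\in\mc U}$ and $(X_i)_{i\in\mc V}$ that are independent of each other, and then iterate. Fix disjoint index sets $\mc I,\mc J$ and concordant monotone $f,g$, and split each according to the two families, $\mc I=\mc I_{\mc U}\cup\mc I_{\mc V}$, $\mc J=\mc J_{\mc U}\cup\mc J_{\mc V}$. Conditioning on $(X_i)_{i\in\mc V}$ and using independence, $f$ and $g$ become concordant monotone functions of the disjoint sets $(X_i)_{i\in\mc I_{\mc U}}$ and $(X_i)_{i\in\mc J_{\mc U}}$, so NA of the $\mc U$-family gives $\E[fg\mid(X_i)_{i\in\mc V}]\le\E[f\mid(X_i)_{i\in\mc V}]\,\E[g\mid(X_i)_{i\in\mc V}]$. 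The two conditional expectations are themselves concordant monotone functions of the disjoint sets $(X_i)_{i\in\mc I_{\mc V}}$ and $(X_i)_{i\in\mc J_{\mc V}}$ — monotonicity persists because one integrates a monotone integrand while holding those coordinates fixed — so NA of the $\mc V$-family together with the tower property yields $\E[fg]\le\E[f]\E[g]$.

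Part \ref{lem:NA_closure_mono_functions} is then immediate from the definition: if $h_1,\dots,h_m$ are concordant monotone with $h_\ell$ depending only on the coordinates in a block $S_\ell$, the $S_\ell$ pairwise disjoint, then for disjoint $\mc I,\mc J\subseteq[m]$ and concordant monotone $f,g$ the composites $f\big((h_\ell)_{\ell\in\mc I}\big)$ and $g\big((h_\ell)_{\ell\in\mc J}\big)$ are concordant monotone functions of the disjoint coordinate sets $\bigcup_{\ell\in\mc I}S_\ell$ and $\bigcup_{\ell\in\mc J}S_\ell$ (a composition of concordant monotone maps is again monotone of the same type), so the NA inequality for the original variables applies verbatim.

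For part \ref{lem:perm_dist_are_NA} I would induct on $n$. The base case $n=2$ is a one-line computation: with $(X_1,X_2)$ uniform on $\{(x_1,x_2),(x_2,x_1)\}$ one gets $\E[f(X_1)g(X_2)]-\E[f(X_1)]\E[g(X_2)]=-\tfrac14\big(f(x_1)-f(x_2)\big)\big(g(x_1)-g(x_2)\big)\le0$ for concordant monotone $f,g$. For the inductive step I would exploit the defining structure of a uniform permutation: conditioned on the unordered multiset of values occupying a chosen block of positions, the arrangement inside the block and the arrangement outside it are independent, and each is itself a uniform permutation of strictly fewer symbols, hence NA by the inductive hypothesis. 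The hard part — and the main obstacle in the whole lemma — is to show that averaging these conditional inequalities over the random split of values into the two blocks does not reverse the sign; this is handled via the monotone dependence (in the stochastic-dominance order) of the within-block and outside-block conditional expectations on the split, combined with the $n=2$ estimate applied to swapping two symbols between the blocks, and is precisely the bookkeeping carried out in \cite[Theorem~2.11]{joagDev1983negative}. An alternative I would keep in reserve is to invoke the sufficient condition that a ``conditionally decreasing in sequence'' vector is NA, which the permutation distribution satisfies because revealing larger entries early leaves a stochastically smaller pool for the remaining coordinates.
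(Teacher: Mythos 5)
The paper does not prove this lemma at all — it is stated and attributed to \cite{joagDev1983negative, dubhashi1998balls} and used as a black box — so there is no in-paper argument against which to compare your route. Judged on its own terms, your treatment of parts \ref{lem:NA_closure_union_of_indep_sets} and \ref{lem:NA_closure_mono_functions} is complete and correct: the conditioning-plus-tower argument for (i) and the observation that composing concordant monotone maps on disjoint blocks stays concordant monotone for (ii) are exactly the standard one-paragraph proofs.

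Part \ref{lem:perm_dist_are_NA}, however, has a genuine gap that you half-acknowledge but do not close. The $n=2$ base case is fine. The trouble is the inductive step: after conditioning on which multiset of values lands in the chosen block (say $\mc I$), you correctly get conditional independence of the inside and outside arrangements, so $\E[fg\mid\text{split}]=\E[f\mid\text{split}]\,\E[g\mid\text{split}]$. But you then need $\E\big[\E[f\mid\text{split}]\,\E[g\mid\text{split}]\big]\le\E[f]\,\E[g]$, i.e.\ negative correlation of two functionals of the \emph{random split itself}, which is a sampling-without-replacement random variable — that is, another instance of the very permutation/NA phenomenon you are trying to prove, and not at a strictly smaller instance size in any obvious sense. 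Your sketch gestures at controlling it via the $n=2$ swap estimate, but establishing the required monotonicity of $\E[f\mid\text{split}]$ in the split (in a stochastic order) and then closing the loop is precisely the nontrivial content of \cite[Thm.~2.11]{joagDev1983negative}; deferring it leaves the proof incomplete. The ``alternative kept in reserve'' should also be treated with caution: the condition you call ``conditionally decreasing in sequence'' (conditional distributions stochastically decreasing in the conditioning variables) is \emph{not} known to imply NA in general — it sits in a hierarchy of negative-dependence notions (NOD, NRD, etc.) that do not all imply NA — so invoking it would require citing a theorem that specifically bridges that gap for the permutation distribution, at which point you are again relying on the reference rather than supplying a self-contained argument.
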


\begin{lem}[Chernoff bound for NA Bernoulli variables \cite{dubhashi1998balls}]\label{lem:chern_hoeff_bound_NA} Let $X_1,  X_2, \dots, X_n$ be NA random variables with $X_i\in \{0,1 \}$ for $i \in [n]$. Then, $Y=\sum_{i=1}^n X_i$ satisfies:
\begin{align}
&\prob{Y > (1+\epsilon) \E[Y]} 
\le \exp\left( - \frac{\epsilon^2 \, \E[Y]}{2 +\epsilon}\right)\quad \text{ for any }\epsilon \ge 0,\label{eq:chernoff_upper_tail_bound_NA}\\
&\prob{Y < (1-\epsilon) \E[Y]} 
\le \exp\left( - \frac{\epsilon^2 \, \E[Y]}{2 - \epsilon}\right)\quad \text{ for any }\epsilon \in [0,1]. \label{eq:chernoff_lower_tail_bound_NA}
\end{align}

\end{lem}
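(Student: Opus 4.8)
The plan is to reduce the statement to the classical Chernoff--Hoeffding bound for a sum of \emph{independent} $\{0,1\}$-valued random variables, invoking negative association at exactly one point: to control the moment generating function (MGF) of $Y$.

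The only step where negative association enters is the bound
\[
\E\big[e^{tY}\big]\;=\;\E\Big[\prod_{i=1}^n e^{tX_i}\Big]\;\le\;\prod_{i=1}^n\E\big[e^{tX_i}\big],\qquad t\in\reals.
\]
For $t>0$, each map $x\mapsto e^{tx}$ is nonnegative and monotonically increasing, and hence so is any partial product $\prod_{i\in\mc J}e^{tX_i}$ viewed as a function of $(X_i)_{i\in\mc J}$. Applying Definition \ref{def:NA} with $\mc I=\{1\}$ and $\mc J=\{2,\dots,n\}$ strips off one factor; since every subfamily of an NA family is again NA, iterating over the remaining coordinates gives the displayed inequality (the case $t<0$ is identical, with ``increasing'' replaced by ``decreasing''; equivalently, one may first note via Lemma \ref{lem:NA_closure} that $\{e^{tX_i}\}_{i=1}^n$ is itself NA). Since each $X_i\in\{0,1\}$ with $p_i:=\E[X_i]$, we have $\E[e^{tX_i}]=1+p_i(e^t-1)\le\exp\!\big(p_i(e^t-1)\big)$ from $1+x\le e^x$, so, writing $\mu:=\E[Y]=\sum_i p_i$,
\[
\E\big[e^{tY}\big]\;\le\;\exp\!\big((e^t-1)\mu\big)\qquad\text{for all }t\in\reals,
\]
which is precisely the MGF bound one has for a sum of independent Bernoulli variables of total mean $\mu$.

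From here one runs the textbook Chernoff argument. For the upper tail, Markov's inequality with $t>0$ gives $\prob{Y>(1+\epsilon)\mu}\le\exp\!\big((e^t-1)\mu-t(1+\epsilon)\mu\big)$, and the optimal choice $t=\ln(1+\epsilon)$ yields $\prob{Y>(1+\epsilon)\mu}\le\big(e^{\epsilon}/(1+\epsilon)^{1+\epsilon}\big)^{\mu}$. Likewise, $t=\ln(1-\epsilon)<0$ gives $\prob{Y<(1-\epsilon)\mu}\le\big(e^{-\epsilon}/(1-\epsilon)^{1-\epsilon}\big)^{\mu}$ (the cases $\epsilon=0,1$ being trivial). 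The two stated exponential forms then follow from elementary one-variable inequalities: $(1+\epsilon)\ln(1+\epsilon)-\epsilon\ge\epsilon^2/(2+\epsilon)$ on $[0,\infty)$, and the matching lower bound on $\epsilon+(1-\epsilon)\ln(1-\epsilon)$ on $[0,1]$; each is checked by noting that both sides and their first derivatives vanish at $\epsilon=0$ and then comparing higher-order terms (a power-series comparison).

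I do not expect a genuine obstacle: the sole nonroutine ingredient is the MGF product bound, and it is an immediate consequence of the NA definition together with its closure under concordant monotone maps, with everything downstream being the standard Chernoff--Hoeffding derivation. The one point that needs care is that the NA inequality applies to functions on \emph{disjoint} index sets, so the product bound must be built up one coordinate at a time rather than in a single application — which becomes transparent once one records that every subfamily of an NA family is itself NA.
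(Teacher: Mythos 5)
Your upper-tail argument is sound: the MGF product bound $\E[e^{tY}]\le\prod_i\E[e^{tX_i}]$ via iterated application of the NA definition (using that every subfamily of an NA family is NA) is exactly the standard route, and the elementary inequality $(1+\epsilon)\ln(1+\epsilon)-\epsilon\ge\epsilon^2/(2+\epsilon)$ does hold on $[0,\infty)$.

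The lower tail has a genuine gap: the claimed inequality $\epsilon+(1-\epsilon)\ln(1-\epsilon)\ge\epsilon^2/(2-\epsilon)$ on $[0,1]$ is \emph{false}. At $\epsilon=0.5$ the left side is $0.5+0.5\ln 0.5\approx 0.1534$ while the right side is $0.25/1.5\approx 0.1667$. The power-series comparison you invoke in fact shows the failure rather than the claim: both sides start $\epsilon^2/2$, but $\epsilon+(1-\epsilon)\ln(1-\epsilon)=\tfrac{\epsilon^2}{2}+\tfrac{\epsilon^3}{6}+\cdots$ whereas $\tfrac{\epsilon^2}{2-\epsilon}=\tfrac{\epsilon^2}{2}+\tfrac{\epsilon^3}{4}+\cdots$, and $1/6<1/4$, so the target exceeds the Chernoff exponent already at third order (and throughout $(0,1)$). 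Since $\epsilon+(1-\epsilon)\ln(1-\epsilon)$ is the \emph{optimal} exponent achievable by the MGF method, your argument cannot reach $\exp(-\mu\epsilon^2/(2-\epsilon))$; what it correctly yields is the standard $\exp(-\mu\epsilon^2/2)$. Moreover the statement you are asked to prove is in fact not correct as written: taking $Y\sim\text{Bin}(n,\mu/n)$ with $n\to\infty$ (a sum of independent, hence trivially NA, indicators), one has $\prob{Y\le(1-\epsilon)\mu}\sim c\,\mu^{-1/2}\,\exp\big(-\mu[\epsilon+(1-\epsilon)\ln(1-\epsilon)]\big)$, which for $\epsilon=1/2$ and $\mu$ large exceeds $\exp(-\mu\epsilon^2/(2-\epsilon))$ because the linear-in-$\mu$ exponent deficit ($\approx 0.0133\mu$) overwhelms the $\tfrac12\ln\mu$ polynomial savings. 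The form in Dubhashi--Ranjan (and Dubhashi--Panconesi) has denominator $2$, not $2-\epsilon$; the paper's $2-\epsilon$ appears to be a transcription slip (harmless downstream, since the paper only ever uses a crude constant such as $\epsilon^2\mu/3$). So: keep your MGF derivation, but state the lower tail with $\exp(-\mu\epsilon^2/2)$, which does follow from $\epsilon+(1-\epsilon)\ln(1-\epsilon)\ge\epsilon^2/2$ (all coefficients in its power series are nonnegative).
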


\subsection{Proof of Lemma \ref{lem:1A_rank_r}} \label{subsec:proof_lem1A}

The pruned graph at the start of stage A  has $r \tk = r(\binom{k}{2} + k)$ left nodes, each with $d$ edges. Fig. \ref{subfig:pruned_graph_1A_t=0} is an example for $r=1$. The number of right nodes is $\nBins  = dr \tk/ (\delta \ln k)$.  Let $Z_j$ denote the degree of the $j$-th right node for $j\in [\nBins]$.  Since the total number of edges in the graph is $d r \tk$,  we have  $\sum_{j=1}^\nBins Z_j = dr \tk$.  

\begin{lem}[Initial right degrees are Binomial and NA]\label{lem:initial_right_deg_diag}
For $j \in [\nBins]$, we have $Z_j \sim \bin{r\tk}{\frac{d}{\nBins} }$. Furthermore, $Z_1, Z_2, \ldots, Z_{\nBins}$ are NA. 
\end{lem}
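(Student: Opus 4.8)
The plan is to analyze the bipartite graph at the start of stage A by tracking, for each left node, which right nodes it attaches to. Each of the $r\tk$ left nodes independently chooses a set of $d$ distinct right nodes uniformly at random from the $\nBins$ right nodes (this is exactly how $\bH$ is constructed: each column has $d$ ones in uniformly random locations). Fix a right node $j$. For a single left node, the indicator that it connects to right node $j$ is Bernoulli with probability $\binom{\nBins-1}{d-1}/\binom{\nBins}{d} = d/\nBins$. Since the $r\tk$ left nodes make their choices independently, $Z_j$ — the number of left nodes connecting to right node $j$ — is a sum of $r\tk$ i.i.d.\ Bernoulli$(d/\nBins)$ variables, hence $Z_j \sim \bin{r\tk}{d/\nBins}$. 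This gives the first claim.

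For negative association, the key is to build up the joint law of $(Z_1,\dots,Z_\nBins)$ from simple NA pieces using the closure properties in Lemma \ref{lem:NA_closure}. Consider a single left node $\ell$ and let $B_{\ell,1},\dots,B_{\ell,\nBins} \in \{0,1\}$ indicate which right nodes it attaches to; exactly $d$ of these are $1$. I would first argue that $(B_{\ell,1},\dots,B_{\ell,\nBins})$ are NA: this vector is a uniformly random permutation of the fixed $0/1$ vector with $d$ ones and $\nBins-d$ zeros, so NA follows directly from property \ref{lem:perm_dist_are_NA} (permutation distributions are NA). Next, the families $\{B_{\ell,j}\}_{j\in[\nBins]}$ for different left nodes $\ell$ are mutually independent, so by property \ref{lem:NA_closure_union_of_indep_sets} (union of independent NA sets is NA), the whole collection $\{B_{\ell,j} : \ell \in [r\tk],\, j \in [\nBins]\}$ is NA. Finally, $Z_j = \sum_{\ell=1}^{r\tk} B_{\ell,j}$, and the maps $(B_{\ell,j})_{\ell} \mapsto Z_j$ for distinct $j$ are monotonically increasing functions defined on disjoint subsets of this NA collection (disjoint because each depends only on the column $j$), so property \ref{lem:NA_closure_mono_functions} (concordant monotone functions on disjoint subsets preserve NA) yields that $Z_1,\dots,Z_\nBins$ are NA.

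The main thing to be careful about — rather than a genuine obstacle — is verifying that the index sets on which the functions $Z_j$ depend are genuinely disjoint, so that property \ref{lem:NA_closure_mono_functions} applies: $Z_j$ is a function of $\{B_{\ell,j} : \ell\}$ only, and these sets are disjoint across $j$ since they pick out distinct coordinates $j$. One should also note that within a single left node the $d$ edges are to \emph{distinct} right nodes (no multi-edges), which is why the per-node indicator vector is a permutation of a fixed binary string rather than something with replacement; this is precisely the hypothesis of \ref{lem:perm_dist_are_NA}. No concentration is needed here — that comes later when Lemma \ref{lem:chern_hoeff_bound_NA} is applied to these NA Bernoulli sums — so the proof is essentially a bookkeeping argument chaining the three closure properties of Lemma \ref{lem:NA_closure} together.
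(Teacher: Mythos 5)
Your proposal is correct and follows essentially the same route as the paper's proof: define per-left-node, per-right-node Bernoulli indicators, obtain the marginal Binomial law of $Z_j$ from independence across left nodes, then chain Lemma~\ref{lem:NA_closure}\ref{lem:perm_dist_are_NA} (permutation distribution of each left node's indicator vector), \ref{lem:NA_closure_union_of_indep_sets} (union of independent NA families), and \ref{lem:NA_closure_mono_functions} (monotone functions on disjoint index sets). The only cosmetic difference is that you spell out the combinatorial calculation $\binom{\nBins-1}{d-1}/\binom{\nBins}{d}=d/\nBins$, which the paper takes as immediate.
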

\begin{proof}
Recall that  the bipartite graph at the start of stage A for a rank-$r$ matrix $\bX$ has $r\tk$ left nodes and $\nBins$ right nodes. 
For $l\in [r\tk]$ and $j\in [\nBins]$, let
\begin{equation}\label{eq:def_Y_l,r_for_diagonal_graph_1A}
Y_{l, j}:=
\begin{cases}
    1 & \text{if the $l$-th left node connects to the $j$-th right node,}\\
    0 & \text{otherwise}.
  \end{cases}
\end{equation}
Observe that 
$Y_{l, j} \sim \bern{\frac{d}{\nBins}}$ and that $Y_{l_1, j_1} \perp Y_{l_2, j_2}$ for $l_1\neq l_2$. Therefore,
\begin{equation}\label{eq:Zr_as_sum_of_Yl,r}
Z_j= \sum_{l=1}^{r\tk} Y_{l, j}  \sim \bin{r\tk}{\frac{d}{\nBins}} \quad \text{for }j\in [\nBins].
\end{equation}

By the construction of the bipartite graph, the vector $\bY_l:= \left[Y_{l,1}, Y_{l,2}, \dots, Y_{l,\nBins}\right]$ contains $d$ ones, distributed uniformly at random among its $\nBins$ entries; the remaining $(\nBins -d)$ entries are zeros.  That is, the joint distribution of 
the entries of $\bY_l$  is a permutation distribution.  
Hence, $\left[Y_{l,1}, Y_{l,2}, \dots, Y_{l,\nBins}\right] $ is NA by Lemma \ref{lem:NA_closure}\ref{lem:perm_dist_are_NA}.  
Since the vectors  $\bY_l$ are mutually independent for $ l\in [r\tk ]$, the concatenated vector $[ Y_{l,j} \ \forall l \in [r\tk], \ \forall j \in [\nBins]]$
is NA by Lemma \ref{lem:NA_closure}\ref{lem:NA_closure_union_of_indep_sets}.
 Furthermore, from \eqref{eq:Zr_as_sum_of_Yl,r}, we note that  $Z_1, \dots, Z_{\nBins}$ are increasing functions defined on disjoint subsets of the  $Y_{l,j}$'s.  Thus, by  Lemma \ref{lem:NA_closure}\ref{lem:NA_closure_mono_functions},  $Z_1, \dots, Z_{\nBins}$ are NA.
\end{proof}

The number of left nodes (i.e. nonzero matrix entries) recovered in stage A is at least as large as  the number of singletons at the start of stage A, which equals $\sum_{j=1}^{\nBins} \mathbbm{1} {\left\lbrace Z_j =1\right\rbrace }$. We now use Lemma \ref{lem:initial_right_deg_diag} to obtain a high probability bound on this sum.
\begin{lem}[Bound on number of singletons]
\label{lem:bound_on_num_singletons_in_initial_graph_1A}
Let $Z\sim \bin{r\tk}{\frac{d}{\nBins} }$. Then, for sufficiently large $k$,   the number of  singletons $\sum_{j=1}^\nBins \mathbbm{1} {\left\lbrace Z_j=1\right\rbrace}$ satisfies: 
\beq\label{eq:bound_num_singletons_1A}
 \prob{ \abs{\frac{1}{\nBins}\sum_{j=1}^{\nBins} \mathbbm{1} {\left\lbrace Z_j =1\right\rbrace } - \prob{Z=1}  }
\geq  k^{-\delta} } 
< 4\exp\left(- \frac{dr}{\nConst \delta^2} \frac{k^{2-\delta }}{\ln^2 k}\right).
\eeq
\end{lem}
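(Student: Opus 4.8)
The plan is to reduce the count $\sum_{j}\mathbbm{1}\{Z_j=1\}$ to two sums of negatively associated (NA) Bernoulli variables and then apply the Chernoff bound of Lemma~\ref{lem:chern_hoeff_bound_NA}. The obstacle to using that lemma directly is that $\mathbbm{1}\{Z_j=1\}$ is \emph{not} a monotone function of $Z_j$, so even though $Z_1,\dots,Z_{\nBins}$ are NA (Lemma~\ref{lem:initial_right_deg_diag}), the indicators $\mathbbm{1}\{Z_j=1\}$ need not be. I would circumvent this by writing $\mathbbm{1}\{Z_j=1\}=\mathbbm{1}\{Z_j\le1\}-\mathbbm{1}\{Z_j=0\}$. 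Each of the families $\big(\mathbbm{1}\{Z_j\le1\}\big)_{j\in[\nBins]}$ and $\big(\mathbbm{1}\{Z_j=0\}\big)_{j\in[\nBins]}$ consists of concordant (decreasing) functions of the NA variables $Z_j$ defined on disjoint singleton subsets, hence is NA by Lemma~\ref{lem:NA_closure}\ref{lem:NA_closure_mono_functions}. Setting $U:=\sum_j\mathbbm{1}\{Z_j\le1\}$ and $V:=\sum_j\mathbbm{1}\{Z_j=0\}$, we have $\sum_j\mathbbm{1}\{Z_j=1\}=U-V$ with $\E[U]=\nBins\prob{Z\le1}$, $\E[V]=\nBins\prob{Z=0}$, and $\E[U]-\E[V]=\nBins\prob{Z=1}$, so by the triangle inequality and a union bound it suffices to bound $\prob{|U-\E U|\ge\tfrac12\nBins k^{-\delta}}$ and $\prob{|V-\E V|\ge\tfrac12\nBins k^{-\delta}}$ separately, each two-sided, giving the factor $4$.

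Next I would estimate the relevant binomial probabilities. Since $Z\sim\bin{r\tk}{d/\nBins}$ with $r\tk\cdot d/\nBins=\delta\ln k$, expanding $\ln(1-d/\nBins)$ gives $\prob{Z=0}=(1-d/\nBins)^{r\tk}=k^{-\delta}(1+o(1))$ and $\prob{Z=1}=r\tk\tfrac{d}{\nBins}(1-d/\nBins)^{r\tk-1}=\delta k^{-\delta}\ln k\,(1+o(1))$, whence $\prob{Z\le1}=\delta k^{-\delta}\ln k\,(1+o(1))$; the $o(1)$ corrections are $O((\ln k)^2/k^2)$. The point of this particular decomposition is that both $\E V$ and $\E U$ are small. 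For $V$, the target deviation $\tfrac12\nBins k^{-\delta}$ is a constant fraction (about $\tfrac12$) of $\E V$, so Lemma~\ref{lem:chern_hoeff_bound_NA} with $\epsilon=\Theta(1)$ yields $\prob{|V-\E V|\ge\tfrac12\nBins k^{-\delta}}\le 2\exp\big(-\Theta(\nBins k^{-\delta})\big)=2\exp\big(-\Theta(drk^{2-\delta}/(\delta\ln k))\big)$, which for large $k$ is far smaller than the claimed bound. For $U$, the target deviation is a fraction $\epsilon=\tfrac{\nBins k^{-\delta}/2}{\E U}=\Theta\big(\tfrac{1}{\delta\ln k}\big)$ of $\E U$, so Lemma~\ref{lem:chern_hoeff_bound_NA} yields $\prob{|U-\E U|\ge\tfrac12\nBins k^{-\delta}}\le 2\exp\big(-\Theta(\epsilon^2\E U)\big)=2\exp\big(-\Theta(\nBins k^{-\delta}/(\delta\ln k))\big)$; using $\nBins=dr\tk/(\delta\ln k)$ and $\tk\ge k^2/2$ this is $2\exp\big(-\Theta(drk^{2-\delta}/(\delta^2\ln^2 k))\big)$, the dominant term.

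Finally, I would collect the four one-sided Chernoff estimates and carry the $(1+o(1))$ factors through explicitly: for $U$ one has $\epsilon\le 1/(\delta\ln k)\to0$, so the denominators $2\pm\epsilon$ lie in $[1,3]$ for large $k$, and $\tfrac12\delta k^{-\delta}\ln k\le\prob{Z\le1}\le 3\delta k^{-\delta}\ln k$ for large $k$; substituting into the bound and using $\nBins k^{-\delta}\ge \tfrac{dr}{2\delta}\,k^{2-\delta}/\ln k$ shows each of the four terms is at most $\exp\big(-\tfrac{dr}{24\delta^2}\,k^{2-\delta}/\ln^2 k\big)$ once $k$ is large enough, where the constant $24$ simply absorbs the lower-order slack. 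Summing the four terms gives \eqref{eq:bound_num_singletons_1A}. The only genuinely delicate point is the choice made in the first paragraph: one must split $\mathbbm{1}\{Z_j=1\}$ as $\mathbbm{1}\{Z_j\le1\}-\mathbbm{1}\{Z_j=0\}$ rather than as the equally natural difference of increasing indicators $\mathbbm{1}\{Z_j\ge1\}-\mathbbm{1}\{Z_j\ge2\}$, since in the latter both summands have mean close to $\nBins$, the relative deviation fed into the NA Chernoff bound is only $\Theta(k^{-\delta})$, and the resulting estimate $\exp(-\Theta(k^{2-2\delta}/\ln k))$ is too weak; everything else is routine binomial/Chernoff bookkeeping.
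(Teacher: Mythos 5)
Your proof is essentially the paper's own argument. The paper also rewrites the centered singleton count as $\bigl(\tfrac{1}{\nBins}\sum_j \mathbbm{1}\{Z_j\le1\} - \prob{Z\le1}\bigr) - \bigl(\tfrac{1}{\nBins}\sum_j \mathbbm{1}\{Z_j=0\} - \prob{Z=0}\bigr)$, invokes the same NA closure property to make both indicator families NA, applies Lemma~\ref{lem:chern_hoeff_bound_NA} to each tail with the same split of the deviation $k^{-\delta}/2$ per term, and identifies the $\mathbbm{1}\{Z_j\le1\}$ term as the dominant one using $\prob{Z\le1}=(\delta\ln k)k^{-\delta}(1+o(1))$; the bookkeeping with $\nBins=dr\tk/(\delta\ln k)$ produces the same constant $24$. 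Your closing remark explaining why the decreasing-indicator decomposition (rather than $\mathbbm{1}\{Z_j\ge1\}-\mathbbm{1}\{Z_j\ge2\}$) is the right one is a correct observation that the paper leaves implicit, but it does not change the method.
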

\begin{proof}
We first note
\begin{equation*}
   \frac{1}{\nBins}\sum_{j=1}^{\nBins} \mathbbm{1}{\left\lbrace Z_j =1\right\rbrace } -  \prob{Z=1} 
    =   \frac{1}{\nBins}\sum_{j=1}^{\nBins} \mathbbm{1}{\left\lbrace Z_j \le 1\right\rbrace } -  \prob{Z \le 1}  \,  + \,    \prob{Z = 0} - \frac{1}{\nBins}\sum_{j=1}^{\nBins} \mathbbm{1}{\left\lbrace Z_j =0 \right\rbrace } \, .
\end{equation*}
Applying the triangle inequality and a union bound, 
we have  
\begin{align}
& \prob{ \abs{\frac{1}{\nBins}\sum_{j=1}^{\nBins} \mathbbm{1}{\left\lbrace Z_j =1\right\rbrace } -  \prob{Z=1}  }
\geq k^{-\delta}} \nonumber\\
& \le
 \prob{ \abs{ \frac{1}{\nBins}\sum_{j=1}^{\nBins} \mathbbm{1}{\left\lbrace Z_j\leq 1 \right\rbrace}  -\prob{Z \leq 1 } }\geq  \frac{k^{-\delta}}{2} }  + 
 \prob{\abs{ \frac{1}{\nBins}\sum_{j=1}^{\nBins} \mathbbm{1}{\left\lbrace Z_j=0 \right\rbrace} -\prob{Z=0} } \geq    \frac{k^{-\delta}}{2} }. \label{eq:bound_pmf_deviation_term_by_the_sum_of_two_cmf_deviation_terms_diag_1A} 
\end{align}
For brevity, denote the two terms in \eqref{eq:bound_pmf_deviation_term_by_the_sum_of_two_cmf_deviation_terms_diag_1A} by $T_1$ and $T_2$, respectively.  Since  $Z_1, \dots, Z_{\nBins}$ are NA (by Lemma \ref{lem:initial_right_deg_diag}) and   $\mathbbm{1}{\left\lbrace Z_j\leq 1 \right\rbrace}$ is monotonic in $Z_j$, by Lemma \ref{lem:NA_closure}\ref{lem:NA_closure_mono_functions}, the random variables $\mathbbm{1}{\left\lbrace Z_j\leq 1 \right\rbrace}$ for $j\in [\nBins]$ are NA.  
Thus, applying the Chernoff bound  in Lemma \ref{lem:chern_hoeff_bound_NA} and recalling that $\nBins= d r \tk/ (\delta \ln k)$, we obtain that for sufficiently large $k$  
\edit{
\begin{align}
 T_1 \le 2\exp\left( - \frac{\nBins}{2(2k^{\delta})^2 \prob{Z\le 1} (1+o(1))}\right) < 2\exp\left( - \frac{d r}{\nConst  \delta^2} \frac{k^{2-\delta}}{\ln^2k }\right),
\end{align}
}
where we have used the fact  that $\prob{Z=0}= k^{-\delta}(1- o(1))$ and $\prob{Z=1}= (\delta \ln k) k^{-\delta}(1- o(1))$. Similarly, using that $\mathbbm{1}{\left\lbrace Z_j =0 \right\rbrace}$ is monotonic in $Z_j$, we deduce that 
\edit{
\begin{align}
 T_2 \le 2\exp\left( - \frac{\nBins}{2(2k^{\delta})^2 \prob{Z = 0} (1+o(1))}\right) < 2\exp\left( - \frac{d r}{\nConst \delta} \frac{k^{2-\delta}}{\ln k }\right).
 \end{align}
 }
Combining the upper bounds on $T_1$ and $T_2$ yields \eqref{eq:bound_num_singletons_1A}. 
\end{proof}

We now prove \eqref{eq:bound_on_A_rank_r} from \eqref{eq:bound_num_singletons_1A}. 
Using $\nBins  = dr\tk/(\delta \ln k)$  and  $\prob{Z=1} =  (\delta  \ln k) k^{-\delta} (1-o(1))$, 
we deduce from \eqref{eq:bound_num_singletons_1A} that there exists a non-negative sequence $c_{k, \delta} = o(1)$ such that for sufficiently large $k$, 
\beq \label{eq:bound_num_singletons_Stage_A}
\prob{ \sum_{j=1}^\nBins \mathbbm{1}{\left\lbrace Z_j=1 \right\rbrace}
\le dr\tk \cdot k^{-\delta} (1 - c_{k, \delta})} 
< 4\exp\left(- \frac{dr}{\nConst \delta^2} \frac{k^{2-\delta }}{\ln^2 k}\right).
\eeq
Since each left node connects to at most $d$ singletons, the number of left nodes recoverable from just the singletons in the initial graph is at least $\frac{1}{d} \sum_{j=1}^\nBins \mathbbm{1}{\lbrace Z_j=1 \rbrace} $.  This number is  smaller than the total number  of  left nodes  recoverable in  stage A because  additional singletons may be created during the peeling process. Thus, with $\sfA $ denoting the total fraction of  left nodes (out of $r \tk$) recovered  in stage A, using \eqref{eq:bound_num_singletons_Stage_A} we have:
\begin{align}\label{eq:bound_num_lnodes_recovered_Stage_A}
 \prob{ \sfA r \tk \le
r \tk  k^{-\delta} (1 - c_{k, \delta})}
& \le 
\prob{ \frac{1}{d} \sum_{j=1}^\nBins \mathbbm{1}{\left\lbrace Z_j=1 \right\rbrace}  \le
r \tk k^{-\delta} (1 - c_{k, \delta}) } \nonumber\\
& <
4\exp\left(- \frac{dr}{\nConst \delta^2} \frac{k^{2-\delta }}{\ln^2 k}\right).
\end{align}
Rearranging \eqref{eq:bound_num_lnodes_recovered_Stage_A} yields \eqref{eq:bound_on_A_rank_r}.

We now prove \eqref{eq:bound_on_all_Ai_rank_r}. For $l\in [r\tk ]$, let 
\beq
\barV_l : =
\begin{cases}
    1 & \text{if the $l$-th left node is recovered by the end of stage A},\\
    0 & \text{otherwise}.
  \end{cases} 
\eeq
Since the recovered left nodes are distributed uniformly at random among all left nodes, each $\barV_l\sim \bern{\sfA}$. 
Moreover, conditioned on $\sum_{l=1}^{r \tk}\barV_l = \sfA r\tk$,  the $\{\barV_l\}$ are NA by Lemma \ref{lem:NA_closure}\ref{lem:perm_dist_are_NA}.
 Let $\mc{S}_{Di}$ and $\mc{S}_i$ be the set of indices of the left nodes that represent the diagonal and above-diagonal entries in the $i$-th nonzero submatrix (corresponding to $\lambda_i \bv_i\bv_i^T$), respectively, with $\abs{\mc{S}_{Di}} = k$ and $\abs{\mc{S}_i} = \binom{k}{2}$. 
 
 The number of diagonal entries (out of $k$) recovered in the $i$-th nonzero submatrix is  $N_{Di} = \sum_{l\in \mc{S}_{Di}} \barV_l$, which can be bounded as follows using the Chernoff bound in Lemma \ref{lem:chern_hoeff_bound_NA}. Recalling that 
  $\alphaUB=k^{-\delta} - o(k^{-\delta})$, we have for  any fixed $\epsilon \in (0,\frac{1}{5})$ and sufficiently large $k$, 
\begin{align}
\prob{N_{Di} < 1  \bigmid \sfA=\alpha^*}  
& \le \prob{N_{Di} < k\alpha^* \epsilon  \bigmid \sfA=\alpha^*}  \nonumber\\
& \le \prob{ \sum_{l\in \mc{S}_{Di}} \barV_l <   k\E[\barV_l \mid \sfA] \epsilon \bigmid \sfA=\alpha^*}  \nonumber\\
& \le \exp\left( - \frac{ (1-\epsilon)^2 k\alpha^* }{1+ \epsilon}\right)  \le \exp\left(- \frac{1}{2} k^{1-\delta}\right). \label{eq:bound_on_num_diag_recovered_in_each_block}
\end{align}

The fraction of above-diagonal entries (out of $\binom{k}{2}$)  recovered in the $i$-th nonzero submatrix is $\sfA_i = \frac{1}{\binom{k}{2} }\sum_{l\in \mc{S}_i}\barV_l$.   
Let $\alphaUB_i = \alphaUB(1-k^{-\frac{1}{4}})$.  Then using the  Chernoff bound in Lemma \ref{lem:chern_hoeff_bound_NA},  we have for sufficiently large $k$,
\begin{align} 
\prob{\sfA_i < \alphaUB_{i}  \bigmid \sfA =  \alphaUB} 
& = \prob{  \frac{1}{\binom{k}{2} }\sum_{l\in \mc{S}_i}\barV_l < \alphaUB \left(1- k^{-\frac{1}{4}} \right)\bigmid \sfA =  \alphaUB}\nonumber\\
& = \prob{  \frac{1}{\binom{k}{2} }\sum_{l\in \mc{S}_i}\barV_l <  \E[\barV_l \mid \sfA]\left(1- k^{-\frac{1}{4}} \right)   \bigmid \sfA =  \alphaUB}\nonumber\\
& \le \exp\left( - \frac{\left( k^{- \frac{1}{4}}\right)^2\binom{k}{2} \alpha^* }{2}\right)
 \le \exp\left(- \frac{1}{8} k^{\frac{3}{2} - \delta}\right). \label{eq:bound_on_num_above_diag_recovered_in_each_block}
\end{align}

Finally, we obtain \eqref{eq:bound_on_all_Ai_rank_r} using \eqref{eq:bound_on_num_diag_recovered_in_each_block} and \eqref{eq:bound_on_num_above_diag_recovered_in_each_block} as follows. For sufficiently large $k$, we have
\begin{align}\label{eq:bound_on_Ai_and_NDi}
&\prob{N_{Di} \ge 1 \ \text{and}\ \sfA_i \ge \alphaUB_{i}, \ \forall \ i\in [r]} \nonumber\\
& \ge \prob{\sfA\ge \alpha^*} \prob{ N_{Di} \ge 1\text{ and } \sfA_i \ge \alphaUB_{i},\ \forall \ i\in [r]\mid\sfA = \alpha^* } \nonumber\\
& \stackrel{(\rm{i})}{\ge} \prob{\sfA\ge \alpha^*}
\left[1-\sum_{i=1}^r \prob{N_{Di} < 1  \bigmid \sfA=\alpha^*} 
- \sum_{i=1}^r \prob{\sfA_i< \alpha_i^* \bigmid \sfA=\alpha^*} \right] \nonumber\\
&\stackrel{(\rm{ii})}{\ge} \left[ 1-
4\exp\left(- \frac{dr}{\nConst \delta^2} \frac{k^{2-\delta }}{\ln^2 k}\right) \right]
\left[ 1-r\exp \left(-\frac{1}{2}  k^{1-\delta}\right) - r\exp \left(-\frac{1}{8}  k^{\frac{3}{2}-\delta}\right)\right] \nonumber\\
& \ge 1-2r\exp \left(-\frac{1}{2}  k^{1-\delta}\right) ,
\end{align}  
 where  (\rm{i}) is obtained using a union bound, and (\rm{ii})  by applying \eqref{eq:bound_on_A_rank_r}, \eqref{eq:bound_on_num_diag_recovered_in_each_block} and \eqref{eq:bound_on_num_above_diag_recovered_in_each_block}.
\qed

\subsection{Proof of Lemma \ref{lem:1B_rank_r}} \label{subsec:proof_lem1B}

Since the eigenvectors $\{\bv_1, \ldots, \bv_r\}$ have non-overlapping supports, the nonzero entries of $\bX = \sum_{i=1}^r \lambda_i \bv_i \bv_i^{T}$ form $r$ disjoint submatrices, each of size $k \times k$. Therefore, the graph for stage B  consists of $r$ disjoint bipartite  subgraphs, one corresponding to each submatrix. Specifically, the $i$-th  subgraph has  $k$ left nodes representing the unknown nonzeros in the $i$-th eigenvector and  $\sfA_i\binom{k}{2}$ right nodes representing the nonzero pairwise products recovered in the $i$-th submatrix in stage A. (See Fig. \ref{subfig:graph_1B_t=-1} for an example subgraph.) The peeling algorithm described in Section \ref{subsubsec:stageB}  is applied to each of these subgraphs.

\begin{figure}
\captionsetup[subfigure]{justification=centering}
\centering
\linespread{1}
 \begin{subfigure}[b]{0.24\linewidth} 
 \centering 
 \begin{adjustwidth}{-0.5cm}{0cm}
\begin{tikzpicture}
\node[vnode](vnodev1){$\tv_1$};
\node[vnode, below of=vnodev1, node distance=1cm](vnodev2){$\tv_2$};
\node[vnode, below of=vnodev1, node distance=2cm](vnodev3){$\tv_3$};
\node[vnode, below of=vnodev1, node distance=3cm] (vnodev4){$\tv_4$};
\node[vnode, below of=vnodev1, node distance=4cm](vnodev5){$\tv_5$};

\node[cnode, right of= vnodev1, yshift = -0.5cm, node distance=2.5cm](cnodev1v4){$\tv_1\tv_4$};
\node[cnode, right of= vnodev3, node distance=2.5cm, yshift=0.5cm](cnodev2v3){$\tv_2\tv_3$};
\node[cnode, right of= vnodev3, node distance=2.5cm, yshift=-0.5cm](cnodev2v3-1){$\tv_2\tv_3$};
\node[cnode, right of= vnodev5, yshift = +0.5cm, node distance=2.5cm](cnodev4v5){$\tv_4\tv_5$};

\draw (vnodev1)--(cnodev1v4);
\draw (vnodev4)--(cnodev1v4);
\draw (vnodev2)--(cnodev2v3);
\draw (vnodev3)--(cnodev2v3);
\draw  (vnodev2)--(cnodev2v3-1);
\draw (vnodev3)--(cnodev2v3-1);
\draw  (vnodev4)--(cnodev4v5);
\draw (vnodev5)--(cnodev4v5);

\node[above of=vnodev1, node distance=0.8cm, text width=2.5cm, align=center, scale=0.8](numLnodeLabel){$k=5$ nonzero entries of $\tbv_i$};
\node[above of=cnodev1v4, node distance=0.8cm, text width=2cm, align=center, scale=0.8](numRnodeLabel){$\sfA_i \binom{k}{2}=4$ right nodes};
\node[below of=cnodev4v5, node distance=0.8cm, text width=3cm, align=center, scale=0.8](rightDist1){Degree of each right node = 2};
\end{tikzpicture}
\end{adjustwidth}
\vspace{0.2cm}
\caption{Prior to $t=0$.}
\label{subfig:alternative_graph_1B_t=-1}
\end{subfigure}
\begin{subfigure}[b]{0.24\linewidth} 
\centering
 \begin{tikzpicture}
\node[vnode](vnodev1){$\tv_1$};
\node[vnode, below of=vnodev1, node distance=1cm](vnodev2){$\tv_2$};
\node[vnode, below of=vnodev1, node distance=2cm](vnodev3){$\tv_3$};
\node[vnodeFaded,  below of=vnodev1, node distance=3cm] (vnodev4){$\tv_4$};
\node[vnode, below of=vnodev1, node distance=4cm](vnodev5){$\tv_5$};
\node[cnode, right of= vnodev1, yshift = -0.5cm, node distance=2.5cm](cnodev1v4){$\tv_1\tv_4$};
\node[cnode, right of= vnodev3, node distance=2.5cm, yshift=0.5cm](cnodev2v3){$\tv_2\tv_3$};
\node[cnode, right of= vnodev3, node distance=2.5cm, yshift=-0.5cm](cnodev2v3-1){$\tv_2\tv_3$};
\node[cnode, right of= vnodev5, yshift = +0.5cm, node distance=2.5cm](cnodev4v5){$\tv_4\tv_5$};
\draw (vnodev1)--(cnodev1v4);
\draw [gray!50](vnodev4)--(cnodev1v4);
\draw (vnodev2)--(cnodev2v3);
\draw (vnodev3)--(cnodev2v3);
\draw  (vnodev2)--(cnodev2v3-1);
\draw (vnodev3)--(cnodev2v3-1);
\draw  (vnodev5)--(cnodev4v5);
\draw [gray!50](vnodev4)--(cnodev4v5);
\node[above of=vnodev1, node distance=0.8cm, text width=1.5cm, align=center, scale=0.8, text=white](numLnodeLabel){$k=5$ supports of $\tbv$}; 
\node[below of=cnodev4v5, xshift = 0.2cm, node distance=1cm, text width=4cm, align=center, scale=0.8](rightDist1){Right nodes connected to $\tv_4$ have degree 1; others have degree 2};
\end{tikzpicture}
\caption{$t=0$.}
\label{subfig:alternative_graph_1B_t=0}
\end{subfigure}
\begin{subfigure}[b]{0.24\linewidth} 
\centering
 \begin{tikzpicture}
\node[vnodeFaded](vnodev1){$\tv_1$};
\node[vnode, below of=vnodev1, node distance=1cm](vnodev2){$\tv_2$};
\node[vnode, below of=vnodev1, node distance=2cm](vnodev3){$\tv_3$};
\node[vnodeFaded,  below of=vnodev1, node distance=3cm] (vnodev4){$\tv_4$};
\node[vnode, below of=vnodev1, node distance=4cm](vnodev5){$\tv_5$};
\node[cnode, right of= vnodev1, yshift = -0.5cm, node distance=2.5cm](cnodev1v4){$\tv_1\tv_4$};
\node[cnode, right of= vnodev3, node distance=2.5cm, yshift=0.5cm](cnodev2v3){$\tv_2\tv_3$};
\node[cnode, right of= vnodev3, node distance=2.5cm, yshift=-0.5cm](cnodev2v3-1){$\tv_2\tv_3$};
\node[cnode, right of= vnodev5, yshift = +0.5cm, node distance=2.5cm](cnodev4v5){$\tv_4\tv_5$};
\draw [gray!50](vnodev1)--(cnodev1v4);
\draw [gray!50](vnodev4)--(cnodev1v4);
\draw (vnodev2)--(cnodev2v3);
\draw (vnodev3)--(cnodev2v3);
\draw  (vnodev2)--(cnodev2v3-1);
\draw (vnodev3)--(cnodev2v3-1);
\draw [gray!50] (vnodev4)--(cnodev4v5);
\draw  (vnodev5)--(cnodev4v5);
\node[above of=vnodev1, node distance=0.8cm, text width=1.5cm, align=center, scale=0.8, text=white](numLnodeLabel){$k=5$ supports of $\tbv$}; 
\node[below of=cnodev4v5, xshift = 0.2cm, node distance=1cm, text width=4cm, align=center, scale=0.8, text=white](rightDist1){Right nodes connected to $\tv_4$ have degree 1; others have degree 2};
\end{tikzpicture}
\caption{$t=1$.}
\label{subfig:alternative_graph_1B_t=1}
\end{subfigure}
\begin{subfigure}[b]{0.24\linewidth} 
\centering
 \begin{tikzpicture}
\node[vnodeFaded](vnodev1){$\tv_1$};
\node[vnode, below of=vnodev1, node distance=1cm](vnodev2){$\tv_2$};
\node[vnode, below of=vnodev1, node distance=2cm](vnodev3){$\tv_3$};
\node[vnodeFaded,  below of=vnodev1, node distance=3cm] (vnodev4){$\tv_4$};
\node[vnodeFaded, below of=vnodev1, node distance=4cm](vnodev5){$\tv_5$};

\node[cnode, right of= vnodev1, yshift = -0.5cm, node distance=2.5cm](cnodev1v4){$\tv_1\tv_4$};
\node[cnode, right of= vnodev3, node distance=2.5cm, yshift=0.5cm](cnodev2v3){$\tv_2\tv_3$};
\node[cnode, right of= vnodev3, node distance=2.5cm, yshift=-0.5cm](cnodev2v3-1){$\tv_2\tv_3$};
\node[cnode, right of= vnodev5, yshift = +0.5cm, node distance=2.5cm](cnodev4v5){$\tv_4\tv_5$};
\draw [gray!50](vnodev1)--(cnodev1v4);
\draw [gray!50](vnodev4)--(cnodev1v4);
\draw (vnodev2)--(cnodev2v3);
\draw (vnodev3)--(cnodev2v3);
\draw (vnodev2)--(cnodev2v3-1);
\draw (vnodev3)--(cnodev2v3-1);
\draw [gray!50] (vnodev4)--(cnodev4v5);
\draw [gray!50](vnodev5)--(cnodev4v5);
\node[above of=vnodev1, node distance=0.8cm, text width=1.5cm, align=center, scale=0.8, text=white](numLnodeLabel){$k=5$ supports of $\tbv$}; 
\node[below of=cnodev4v5, xshift = 0.2cm, node distance=1cm, text width=4cm, align=center, scale=0.8, text=white](rightDist1){Right nodes connected to $\tv_4$ have degree 1; others have degree 2};
\end{tikzpicture}
\caption{$t=2$.}
\label{subfig:alternative_graph_1B_t=2}
\end{subfigure}
\caption{\small Alternative  graph process showing the recovery of $\tbv_i$ for $\bX=\sum_{i=1}^r \tbv_i\tbv_i^T$ 
given  $\sfA_i$ and ${\NDiag}_i\geq 1$ 
(as the counterpart of Fig. \ref{fig:actual_graph_process_1B}). 
The double occurrence of $\tv_2\tv_3$ as right nodes highlights that the right nodes are sampled \emph{with replacement} from the $\binom{k}{2}$ nonzero pairwise products in the nonzero submatrix corresponding to  $\tbv_i\tbv_i^T$. 
The faded nodes and edges are those that have been peeled off.}
\label{fig:alternative_graph_process_1B}
\end{figure}
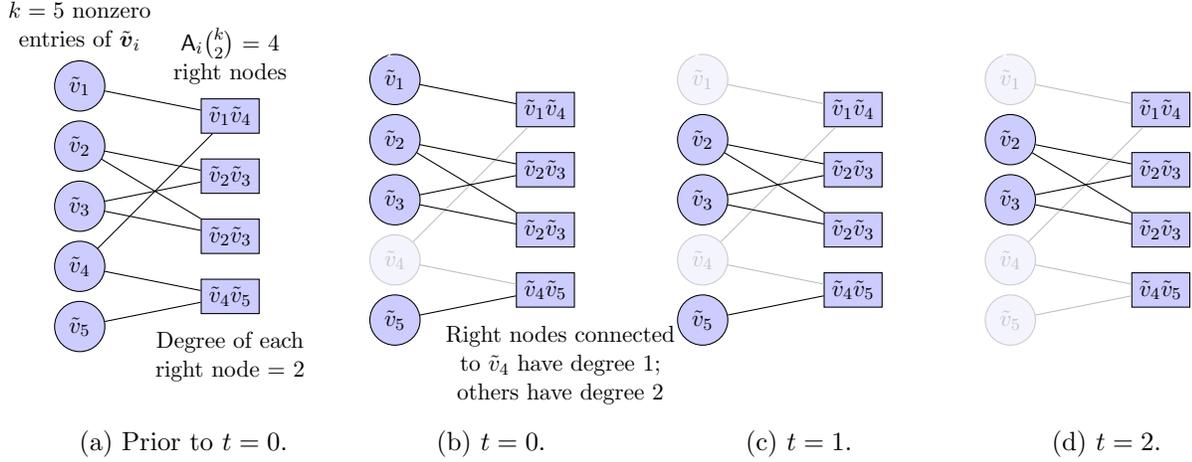

Consider the $i$-th subgraph, given $\sfA_i$ and the event that ${\NDiag}_i \ge 1$.   The  $\sfA_i\binom{k}{2}$ right nodes  can be seen as drawn uniformly at random without replacement from the $\binom{k}{2}$ nonzero pairwise products in the $i$-th submatrix. This creates  dependence between the degrees of the left nodes, which makes the evolution of the random graph process in Fig. \ref{fig:actual_graph_process_1B} difficult to characterize. We therefore consider an alternative graph process, in which the $\sfA_i\binom{k}{2}$ right nodes at the start of stage B are drawn uniformly at random \emph{with replacement} from the $\binom{k}{2}$ nonzero pairwise products in the $i$-th submatrix; see Fig. \ref{subfig:alternative_graph_1B_t=-1} for an illustration.    Stage B of the algorithm proceeds in the same way on the alternative graph as the original. At $t=0$, a left node is recovered based on one of the nonzero diagonal entries recovered in stage A (Fig. \ref{subfig:alternative_graph_1B_t=0}). Then, for each $t \ge 1$,  a degree-1 right node is picked uniformly at random from the available ones and its connecting left node recovered and peeled off to update the set of degree-1 right nodes. This continues till there are no more degree-1 right nodes (Figs. \ref{subfig:alternative_graph_1B_t=1} and \ref{subfig:alternative_graph_1B_t=2}). 

Since the right nodes in the alternative graph are sampled with replacement, the number of distinct right nodes in the alternative graph at the start of stage B can be no larger than $\sfA_i\binom{k}{2}$. Any repeated  right nodes do not play a role in the recovery algorithm. Let  $\Balti$ denote the fraction of left nodes recovered when there remain no degree-1 right nodes in the alternative graph process. Since the number of distinct right nodes in the initial alternative graph is no larger than that in the original one, we have:
\begin{equation} 
    \prob{ \Balti  < 1\bigmid N_{Di}, \, \sfA_i } \ge 
    \prob{ \sfB_i  < 1\bigmid N_{Di}, \, \sfA_i }, \quad  i \in [r],
    \qquad \forall \, N_{Di} \ge 1,  \  \sfA_i \ge \alphaUB_i.
    \label{eq:alt_orig_bnd}
\end{equation}
We will prove \eqref{eq:bound_on_Bi_rank_r} by showing that 
\begin{equation}\label{eq:bound_on_Balti_rank_r}
\prob{\Balti < 1 \bigmid N_{Di}, \,  \sfA_i } < \exp\left(-\frac{1}{30} k^{1-\delta}\right),  \quad  i \in [r], \qquad 
\forall \,  N_{Di} \ge 1,  \  \sfA_i \ge \alphaUB_i.
\end{equation}

For brevity, we shall refer to `prior to $t=0$' as `at $t=-1$'.   Given $\sfA_i$,  let $\tZ_l$ be the degree of the $l$-th left node in the alternative graph at $t=-1$ (i.e. Fig. \ref{subfig:alternative_graph_1B_t=-1}), for $l\in [k]$. Since each right node has degree-2 at $t=-1$, the total number of edges at $t=-1$ is  $\sum_{l=1}^{k} \tZ_l= 2\sfA_i\binom{k}{2}$.

\begin{lem}[Initial left degrees are Binomial]
\label{lem:initial_left_degree_bound_1B}
For $l\in [k]$, 
\begin{equation}\label{eq:1B_left_deg_binomial}
\tZ_l\sim \bin{\sfA_i \binom{k}{2} }{\frac{2}{k}}.
\end{equation}
Let
  \begin{equation}
\mu :=\sfA_i\binom{k}{2} \frac{2}{k}=\sfA_i(k-1).
\label{eq:def_mu_Stage_B}
  \end{equation}
Then, for $\sfA_i \ge \alpha^*_i$, we have for sufficiently large $k$,
\begin{align}
\prob{ \tZ_l \le \frac{\mu}{2} }
 < \exp\left( -\frac{1}{10} k^{1-\delta} \right)
  ,  \qquad 
 \prob{  \tZ_l \ge \frac{3\mu}{2}} < \exp\left( -\frac{1}{10} k^{1-\delta} \right).
\label{eq:tail_bounds_left_degrees_Stage_B} 
\end{align}
\end{lem}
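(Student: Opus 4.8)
The plan is to prove the three assertions in turn. First, for the degree distribution: in the $i$-th nonzero submatrix the $\binom{k}{2}$ nonzero above-diagonal entries are exactly the products $\tv_a\tv_b$ over unordered pairs $\{a,b\}\subseteq[k]$, and a fixed left node $\tv_l$ appears in precisely $k-1$ of them, so a product drawn uniformly at random from the $\binom{k}{2}$ involves $\tv_l$ with probability $\tfrac{k-1}{\binom{k}{2}}=\tfrac2k$. By construction of the alternative graph process (Fig.~\ref{subfig:alternative_graph_1B_t=-1}), its $\sfA_i\binom{k}{2}$ right nodes are drawn \emph{independently} and uniformly from these $\binom{k}{2}$ products, so $\tZ_l$ is the number of successes in $\sfA_i\binom{k}{2}$ i.i.d.\ $\bern{2/k}$ trials; this gives $\tZ_l\sim\bin{\sfA_i\binom{k}{2}}{2/k}$ together with $\mu=\E[\tZ_l]=\sfA_i\binom{k}{2}\cdot\tfrac2k=\sfA_i(k-1)$, which are \eqref{eq:1B_left_deg_binomial} and \eqref{eq:def_mu_Stage_B}. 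Also, by Lemma~\ref{lem:1A_rank_r}, $\alpha_i^*=k^{-\delta}-o(k^{-\delta})=k^{-\delta}(1-o(1))$, so on the event $\sfA_i\ge\alpha_i^*$ the mean obeys $\mu\ge\alpha_i^*(k-1)=k^{1-\delta}(1-o(1))$, and in particular $\mu\ge c\,k^{1-\delta}$ for any fixed $c<1$ once $k$ is large.

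For the two tail bounds, I would apply the Binomial tail bound of Lemma~\ref{lem:binom_tail_bound} to $\tZ_l\sim\bin{M}{2/k}$, where $M:=\sfA_i\binom{k}{2}=\tfrac{\mu k}{2}$ (since the $M$ trials are i.i.d.\ Bernoulli, the unsimplified multiplicative Chernoff bound is equivalent). For the lower tail, take the threshold $\mu/2$; then the quantile ratio is $\tfrac{\mu}{2M}=\tfrac1k$ while $p=\tfrac2k$, and using the elementary bound $\ln\tfrac{k-1}{k-2}\ge\tfrac1{k-1}$ one checks $M\,D_e\!\left(\tfrac1k\,\middle\|\,\tfrac2k\right)\ge\tfrac12(1-\ln2)\,\mu$. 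Since $\tfrac12(1-\ln2)\approx0.15>\tfrac1{10}$ and $\mu\ge k^{1-\delta}(1-o(1))$, this gives $\prob{\tZ_l\le\mu/2}\le\exp(-\tfrac1{10}k^{1-\delta})$ for $k$ large. For the upper tail, take the threshold $3\mu/2$; the quantile ratio is $\tfrac{3\mu}{2M}=\tfrac3k$, and using $\ln\tfrac{k-3}{k-2}\ge-\tfrac1{k-3}$ one gets $M\,D_e\!\left(\tfrac3k\,\middle\|\,\tfrac2k\right)\ge\big(\tfrac32\ln\tfrac32-\tfrac12\big)\mu$. Since $\tfrac32\ln\tfrac32-\tfrac12\approx0.108>\tfrac1{10}$, this gives $\prob{\tZ_l\ge3\mu/2}\le\exp(-\tfrac1{10}k^{1-\delta})$ for $k$ large, which establishes \eqref{eq:tail_bounds_left_degrees_Stage_B}.

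The only delicate point is the constant in the upper-tail estimate. The simplified Chernoff bound \eqref{eq:chernoff_upper_tail_bound_NA} with $\epsilon=\tfrac12$ gives only $\exp(-\mu/10)$, and because $\mu$ is guaranteed merely to be $k^{1-\delta}(1-o(1))$ rather than at least $k^{1-\delta}$, that bound falls marginally short of the target $\exp(-\tfrac1{10}k^{1-\delta})$. Retaining the exact Binomial rate $D_e$ (equivalently, not discarding lower-order terms in the multiplicative Chernoff exponent) closes this gap, since the relevant rate is $\tfrac32\ln\tfrac32-\tfrac12\approx0.108$ per unit of $\mu$, strictly above $\tfrac1{10}$. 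The lower-tail direction is routine, as already the simplified bound \eqref{eq:chernoff_lower_tail_bound_NA} with $\epsilon=\tfrac12$ gives $\exp(-\mu/6)$ with $\tfrac16>\tfrac1{10}$.
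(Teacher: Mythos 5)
Your proof is correct and follows the same route the paper indicates: the with-replacement construction makes the $\sfA_i\binom{k}{2}$ right-node draws i.i.d., each involving $\tv_l$ with probability $2/k$, which gives the Binomial law, and the tail bounds come from the relative-entropy bound of Lemma~\ref{lem:binom_tail_bound}. Your side remark is also a useful observation: because $\mu=\sfA_i(k-1)$ with $\sfA_i\ge\alphaUB_i=k^{-\delta}(1-o(1))$, the mean is only $k^{1-\delta}(1-o(1))$ and can fall slightly below $k^{1-\delta}$, so the simplified Chernoff form \eqref{eq:chernoff_upper_tail_bound_NA} at $\epsilon=\tfrac12$ (rate $\tfrac{1}{10}$ per unit $\mu$) has no slack on the upper tail; the exact $D_e$ rate $\tfrac{3}{2}\ln\tfrac{3}{2}-\tfrac12\approx 0.108>\tfrac{1}{10}$ provides exactly the cushion needed to absorb the $(1-o(1))$ factor, which is why the paper directs you to Lemma~\ref{lem:binom_tail_bound} rather than Lemma~\ref{lem:chern_hoeff_bound_NA} here.
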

\begin{proof}
The result \eqref{eq:1B_left_deg_binomial} follows from a standard counting argument, noting that the right nodes are picked uniformly at random with replacement  and that each right node has degree 2 at $t=-1$. Result \eqref{eq:tail_bounds_left_degrees_Stage_B} is obtained using the Binomial tail bound in Lemma \ref{lem:binom_tail_bound} and simplifying the resulting expression.
\end{proof}

Recall that one left node is recovered in each iteration $t \ge 0$ of the peeling algorithm, until there are no more degree-1 right nodes left.
Let $\pi(t)$  denote the index of the left node recovered in iteration $t$. Thus, $\pi: \left\lbrace 0, 1, \dots, k-1 \right\rbrace  \rightarrow \left\lbrace 1,2, \dots, k \right\rbrace$. If the decoding process has not terminated after iteration $(\nlnodes-1)$, for $1 \le \nlnodes \le (k-1)$, then $\nlnodes$ left nodes out of $k$ have been peeled off. For the algorithm to proceed, we need at least one of the remaining $(k - \nlnodes)$ left nodes to be connected to at least one degree-1 right node. The following lemma obtains the distribution of the number of degree-1 right nodes connected to each remaining left node, after each iteration of the peeling  algorithm.

\begin{lem}[Number of degree-1 right nodes connected to each unpeeled left node]
\label{lem:num_singletons}
Consider the alternative graph process on subgraph $i \in [r]$, given $\sfA_i$ and ${\NDiag}_i $.  Suppose that  the peeling algorithm on this  subgraph has not terminated after  iteration $(\nlnodes-1)$, for $1\leq \nlnodes \leq (k-1)$.  Let $S_l^{(\nlnodes)}$ be the number of degree-1 right nodes connected to the $l$-th remaining left node at the start of iteration $\nlnodes$, for $l\in [k]\setminus \left\lbrace\pi(0), \ldots, \pi(\nlnodes-1) \right\rbrace$. Then
\begin{equation}\label{eq:dist_Sl}
S_l^{(\nlnodes)} \sim \bin{\sfA_i \binom{k}{2}}{\frac{\nlnodes}{\binom{k}{2}}}.
\end{equation}
Moreover, for $\sfA_i \ge  \alphaUB_i$ and sufficiently large $k$, the following holds for $1\le \nlnodes \le \frac{k-1}{20}$:
\beq\label{eq:Slt_LB}
\prob{ S_{l}^{(\nlnodes)} \ge \frac{\mu}{5} \mid S_{l}^{(\nlnodes)} \geq 1}
  <  \exp\left(- \frac{1}{16}k^{1-\delta} \right),
\eeq
where $\mu = \sfA_i (k-1)$ as defined in \eqref{eq:def_mu_Stage_B}.
\end{lem}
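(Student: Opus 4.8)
The plan is to prove the two claims separately. For the distributional claim \eqref{eq:dist_Sl}, I would argue by a symmetry/exchangeability argument on the alternative graph process. Recall that at $t=-1$ the $\sfA_i\binom{k}{2}$ right nodes are drawn i.i.d.\ uniformly with replacement from the $\binom{k}{2}$ possible pairs $\{\tv_a\tv_b : a<b\}$. After $\nlnodes$ left nodes $\pi(0),\dots,\pi(\nlnodes-1)$ have been peeled off, a right node is currently degree-1 and attached to a remaining left node $l$ precisely when the pair it represents has exactly one endpoint equal to $l$ and the other endpoint among the $\nlnodes$ peeled-off left nodes. For a fixed remaining left node $l$, there are exactly $\nlnodes$ such pairs out of $\binom{k}{2}$. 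Crucially, each of the $\sfA_i\binom{k}{2}$ right nodes was drawn \emph{independently}, so conditioned on the identity of the peeled set $\{\pi(0),\dots,\pi(\nlnodes-1)\}$ and on $\sfA_i$, the event that a given right node is one of these $\nlnodes$ ``good for $l$'' pairs has probability $\nlnodes/\binom{k}{2}$, independently across right nodes; hence $S_l^{(\nlnodes)}\sim\bin{\sfA_i\binom{k}{2}}{\nlnodes/\binom{k}{2}}$. The point to be careful about is that the peeled set $\{\pi(0),\dots,\pi(\nlnodes-1)\}$ is itself random and correlated with the graph; but because sampling is with replacement and each right node's pair is independent of the others, conditioning on which $\nlnodes$ left nodes have been peeled off does not bias the count attached to any particular surviving $l$ — this is exactly why the ``with replacement'' alternative process was introduced in the first place. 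I would state this as the main step and give the counting identity $\E[S_l^{(\nlnodes)}\mid \sfA_i] = \sfA_i\binom{k}{2}\cdot\nlnodes/\binom{k}{2} = \sfA_i\nlnodes$.

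For the tail bound \eqref{eq:Slt_LB}, the strategy is a conditional-probability estimate: write
\beq
\prob{S_l^{(\nlnodes)} \ge \tfrac{\mu}{5} \mid S_l^{(\nlnodes)}\ge 1} = \frac{\prob{S_l^{(\nlnodes)}\ge \mu/5}}{\prob{S_l^{(\nlnodes)}\ge 1}},
\eeq
and bound numerator and denominator separately. For the denominator, since $S_l^{(\nlnodes)}\sim\bin{\sfA_i\binom{k}{2}}{\nlnodes/\binom{k}{2}}$ with mean $\sfA_i\nlnodes \ge \alpha_i^*\cdot 1 = k^{-\delta}(1-o(1))$ when $\nlnodes\ge 1$, we have $\prob{S_l^{(\nlnodes)}\ge 1} = 1 - (1-\nlnodes/\binom{k}{2})^{\sfA_i\binom{k}{2}} \ge 1 - e^{-\sfA_i\nlnodes} \ge c\,\sfA_i\nlnodes$ for some constant $c>0$ (using $1-e^{-x}\ge x/2$ for small $x$, valid since $\sfA_i\nlnodes = o(1)$ in the regime $\nlnodes \le (k-1)/20$ and $\sfA_i\sim k^{-\delta}$, $\delta\in(0,\tfrac12)$ — wait, $\sfA_i\nlnodes$ can be as large as $\sim k^{-\delta}\cdot k/20 \sim k^{1-\delta}/20$, which is \emph{large}; in that subregime $\prob{S_l^{(\nlnodes)}\ge 1}$ is essentially $1$, so the bound $\ge \min\{c\sfA_i\nlnodes, 1/2\}$ suffices). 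For the numerator, note $\mu/5 = \sfA_i(k-1)/5$, and the mean of $S_l^{(\nlnodes)}$ is $\sfA_i\nlnodes \le \sfA_i(k-1)/20 = \mu/20$, so $\mu/5 = 4\cdot(\mu/20) \ge 4\,\E[S_l^{(\nlnodes)}]$; a Chernoff upper-tail bound (Lemma \ref{lem:binom_tail_bound}, or the standard multiplicative Chernoff bound for binomials) then gives $\prob{S_l^{(\nlnodes)}\ge 4\E[S_l^{(\nlnodes)}]}\le \exp(-c'\E[S_l^{(\nlnodes)}]) = \exp(-c'\sfA_i\nlnodes)$. Dividing, the ratio is at most $\exp(-c'\sfA_i\nlnodes)/\min\{c\sfA_i\nlnodes,1/2\}$, and one checks this is $\le \exp(-\tfrac{1}{16}k^{1-\delta})$ for sufficiently large $k$ — the worst case is $\nlnodes$ near $(k-1)/20$, where $\sfA_i\nlnodes \sim \tfrac{1}{20}k^{1-\delta}$ and the $\exp$ term dominates the polynomial denominator; for small $\nlnodes$ the ratio is bounded because both numerator and denominator are $\Theta(1)$-comparable only when $\sfA_i\nlnodes = \Theta(1)$, and the constant $\tfrac{1}{16}<\tfrac{1}{20}\cdot(\text{Chernoff constant})$ leaves room to absorb the denominator. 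I would record the precise constant bookkeeping but not belabor it.

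The main obstacle I anticipate is the conditioning subtlety in \eqref{eq:dist_Sl}: one must argue carefully that conditioning on the \emph{event} ``the peeling process has not terminated after iteration $\nlnodes-1$'' together with the identity of the peeled left nodes does not distort the binomial law of $S_l^{(\nlnodes)}$ for a surviving node $l$. The clean way is to observe that the alternative process can be realized by first drawing all $\sfA_i\binom{k}{2}$ right-node pairs i.i.d., then running the deterministic peeling schedule; the set of peeled nodes after $\nlnodes$ steps is a function of the pair multiset, but for a \emph{fixed} target set $T$ of peeled nodes with $|T|=\nlnodes$ and fixed surviving $l\notin T$, the number of drawn pairs of the form $\{l, t\}$ with $t\in T$ is binomial with the stated parameters and is conditionally independent, given $T$, of the rest of the configuration in the relevant way — so a law-of-total-probability over $T$ preserves the binomial law. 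A secondary, more mechanical obstacle is verifying that the constant $\tfrac{1}{16}$ in the exponent is actually achievable uniformly over $1\le\nlnodes\le(k-1)/20$ after accounting for the $1/\prob{S_l^{(\nlnodes)}\ge1}$ factor; this is the kind of elementary but slightly fiddly estimate that the phrase ``simplifying the resulting expression'' in the companion Lemma \ref{lem:initial_left_degree_bound_1B} proof is meant to cover, and I would handle it the same way.
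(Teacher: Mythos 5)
Your approach to the distributional claim \eqref{eq:dist_Sl} matches the paper's (the paper also defines edge indicators $Y_{l,j}$, uses that each right node's endpoints are i.i.d.\ uniform pairs, and that $\sum_{\kappa<\nlnodes} Y_{l,j}Y_{\pi(\kappa),j}$ is Bernoulli with parameter $\nlnodes/\binom{k}{2}$, independently across $j$). You flag the conditioning subtlety about the peeled set being correlated with the graph; the paper glosses over this too, so you are being at least as careful as the paper on that point.

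The gap is in your tail bound \eqref{eq:Slt_LB}. You bound the numerator by first relaxing the threshold $\mu/5$ to $4\,\E[S_l^{(\nlnodes)}]$ and then applying a multiplicative Chernoff bound at a $4\times$-mean deviation, giving $\prob{S_l^{(\nlnodes)}\ge \mu/5}\le \exp(-c'\,\sfA_i\nlnodes)$. This bound degrades with $\nlnodes$, and it is much too weak for small $\nlnodes$: for $\nlnodes=1$, the exponent is $c'\sfA_i\approx c'k^{-\delta}=o(1)$, so the numerator bound is $\approx 1-c'k^{-\delta}$, while your denominator bound is $\approx c\,k^{-\delta}$, giving a ratio of order $k^{\delta}$ — which is not even $\le 1$, let alone $\le\exp(-k^{1-\delta}/16)$. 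More generally, whenever $\sfA_i\nlnodes=\Theta(1)$ your numerator and denominator bounds are both $\Theta(1)$, so the ratio is $\Theta(1)$; and for $\sfA_i\nlnodes=o(1)$ the ratio diverges. Your attempt to dismiss this regime as ``bounded because both terms are $\Theta(1)$-comparable'' does not salvage it. The source of the loss is the relaxation from the fixed threshold $\mu/5$ (which is $\Theta(k^{1-\delta})$ regardless of $\nlnodes$) to a threshold scaling with the mean $\sfA_i\nlnodes$ (which can be as small as $k^{-\delta}$).

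The fix, which is what the paper does, is to bound $\prob{S_l^{(\nlnodes)}\ge\mu/5}$ directly at the fixed level $\mu/5$ via the relative-entropy tail bound (Lemma \ref{lem:binom_tail_bound}):
\begin{equation*}
\prob{S_l^{(\nlnodes)}\ge\tfrac{\mu}{5}}
\le\exp\left(-\sfA_i\tbinom{k}{2}\,D_e\!\left(\tfrac{\mu/5}{\sfA_i\binom{k}{2}}\,\middle\|\,\tfrac{\nlnodes}{\binom{k}{2}}\right)\right).
\end{equation*}
Here the first argument of $D_e$ equals $\frac{2}{5k}$ and is fixed, while the second argument $\frac{\nlnodes}{\binom{k}{2}}$ is at most $\frac{(k-1)/20}{\binom{k}{2}}=\frac{1}{10k}$, which is strictly less than $\frac{2}{5k}$. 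Since $D_e(q\|p)$ is decreasing in $p$ on $p<q$, the worst case over the range $1\le\nlnodes\le(k-1)/20$ is $\nlnodes=(k-1)/20$, and substituting it yields the uniform bound $\exp(-k^{1-\delta}/8)$. Equivalently, the Poisson-style Chernoff bound gives an exponent $\ge\frac{\mu}{5}(\ln 4-1)=\Theta(\mu)=\Theta(k^{1-\delta})$ whenever the mean is $\le\mu/20$, with no dependence on $\nlnodes$. Dividing by the denominator lower bound $\alpha_i^*/2=\Theta(k^{-\delta})$ (taking $\nlnodes=1$, the worst case for the denominator, as you do) loses only a polynomial factor, which the paper absorbs by relaxing the exponent from $1/8$ to $1/16$. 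Your plan's decomposition is the right shape; the numerator bound needs to be sharpened as above before the constant bookkeeping can close.
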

\begin{proof}
For $l\in [k]$ and $j \in [\sfA_i \binom{k}{2}]$, let
\begin{equation}
Y_{l,j}:=
\begin{cases}
    1 & \text{if the $l$-th left node  connects to the $j$-th right node  at $t=-1$},\\
    0 & \text{otherwise}.
  \end{cases}
\end{equation}
At $t=-1$, since each right node has degree $2$  (see Fig. \ref{subfig:alternative_graph_1B_t=-1}), the vector $ \left[Y_{1,j}, Y_{2,j}, \dots, Y_{k,j}\right]$ contains exactly two ones distributed uniformly at random, and the remaining entries are zero. It follows that $\sum_{l=1}^k Y_{l,j} = 2$, $Y_{l,j} \sim \bern{\frac{2}{k}}$ and $Y_{l_1,j} \cdot Y_{l_2,j} \sim \text{Bern}(1/ \binom{k}{2})$ for $l_1\neq l_2$. Moreover, by construction of the alternative graph,  the edges of each right node are independent from those of the others, so $Y_{l_1, j_1} \perp Y_{l_2, j_2}$ for $j_1\neq j_2$.

 In iteration $t=0$, the first left node $v_{\pi(0)}$ is peeled off. 
Consider one of the $(k-1)$  remaining left nodes after iteration $t=0$, say $v_l$.
The number of degree-1 right nodes connected to $v_l$  is equal to the number of right nodes  which were connected to $v_l$ and $v_{\pi(0)}$ at $t=-1$. That is, at the start of iteration $t=1$, we have 
 \begin{align}
S_{l}^{(1)} =  \sum_{j=1}^{\sfA_i \binom{k}{2}} Y_{l,j} \cdot Y_{\pi(0),j}   
\stackrel{\rm{(i)}}{\sim} \bin{\sfA_i \binom{k}{2}}{\frac{1}{\binom{k}{2}}},  \qquad l\in [k] \setminus \pi(0),
 \end{align}
 where (\rm{i}) holds because $Y_{l,j} \cdot Y_{\pi(0),j} \stackrel{\text{iid}}{\sim} \text{Bern}(1/\binom{k}{2} )$ for $l\neq \pi(0)$ and 
 $ j\in [\sfA_i\binom{k}{2}]$.

 More generally,  when $\nlnodes$ left nodes have been peeled off by the end of  iteration $(\nlnodes-1)$, consider one of the remaining left nodes,  say $v_l$. Then the number of degree-1 right nodes connected to $v_l$   is the number of right nodes which were originally connected to $v_l$ and some recovered $v_\iota$, i.e., $v_\iota$ for $\iota \in \left\lbrace\pi(0), \ldots, \pi(\nlnodes-1) \right\rbrace$. Hence, we deduce that
\begin{equation}
S_l^{(\nlnodes)}  =\sum_{j=1}^{\sfA_i \binom{k}{2}}  \sum_{\kappa=0}^{\nlnodes-1} Y_{l,j} \cdot Y_{\pi(\kappa),j}
 \stackrel{(\rm{i})}{\sim} \bin{\sfA_i\binom{k}{2}}{\frac{\nlnodes}{\binom{k}{2}}}, 
 \qquad l\in [k]\setminus \lbrace\pi(0), \ldots, \pi(\nlnodes-1) \rbrace,
\end{equation}
where (\rm{i}) is obtained by noting that for  $l\notin \left\lbrace\pi(0), \ldots, \pi(\nlnodes -1) \right\rbrace$ and each $j$, the sum $ \sum_{\kappa=0}^{\nlnodes -1}Y_{l,j} Y_{\pi(\kappa),j} \sim \text{Bern}(\nlnodes /\binom{k}{2})$, and that these  sums are mutually independent across $j \in [\sfA_i \binom{k}{2}]$.
 This gives the result in \eqref{eq:dist_Sl}. 
 
 We next prove \eqref{eq:Slt_LB} for $\sfA_i \ge \alphaUB_i$ using the distribution in \eqref{eq:dist_Sl}. For $1\leq \nlnodes\leq \frac{k-1}{20}$  and sufficiently large $k$,  applying the Binomial tail bound in Lemma \ref{lem:binom_tail_bound}, we obtain
\begin{align}\label{eq:P(S>nmax | S>=1)}
\prob{ S_{l}^{(\nlnodes)} \ge \frac{\mu}{5} \mid S_{l}^{(\nlnodes)} \geq 1}
& \le  \frac{ \exp\left(-\sfA_i\binom{k}{2} D_e\Big(\frac{\mu/5}{\sfA_i\binom{k}{2}}\middle\| \frac{\nlnodes}{\binom{k}{2}}\Big) \right) }{ 1-\left(1- \nlnodes/\binom{k}{2} \right)^{\sfA_i\binom{k}{2}}} \nonumber\\
& \le \frac{ \exp\left(-\sfA_i\binom{k}{2} D_e\Big(\frac{\mu/5}{\sfA_i\binom{k}{2}}\middle\| \frac{(k-1)/20}{\binom{k}{2}}\Big) \right) }{ 1-\left(1- 1/\binom{k}{2} \right)^{\sfA_i\binom{k}{2} }} \nonumber\\
& \stackrel{(\rm{i})}{<} \frac{ \exp\left(-k^{1-\delta}/8 \right) }{ \alphaUB_i/2 }    <\exp\left(-\frac{1}{16}k^{1-\delta}\right).
\end{align}
For the inequality  (\rm{i}), the numerator is obtained by simplifying the relative entropy term  and noting that $\sfA_i \ge \alphaUB_i$; the denominator is obtained using
$(1- 1/\binom{k}{2})^{\sfA_i\binom{k}{2} }\le (1- 1/\binom{k}{2})^{\alphaUB_i\binom{k}{2} } \leq e^{-\alphaUB_i}
<1- \alphaUB_i/ 2$ (where the last step holds since $\alphaUB_i \in (0,1)$). 
\end{proof}
\begin{lem}[Degree-1 right nodes don't run out in  initial iterations]\label{lem:deg_1_rnodes_upto_t0}
Consider the alternative graph process on subgraph $i \in [r]$, given $\sfA_i \ge \alphaUB_i$ and ${\NDiag}_i \ge 1$. Let $C_1(t) \in \nonNegInt$ denote the  number of degree-1 right nodes in the residual subgraph after iteration $t$ of the peeling algorithm.
Let $t_0: = \frac{k-1}{20}$, and define the event
\begin{equation}\label{eq:Edef}
\mathcal{E} := \left\lbrace C_1(t)>0, \ \text{ for } \ 0\le t\le  t_0 \right\rbrace.
\end{equation}
Then, for sufficiently large $k$,
\begin{align}
&\prob{\mathcal{E} } >  1-\exp\left(-\frac{1}{20} k^{1-\delta}\right).\label{eq:hp_bnd_E}
\end{align}
\end{lem}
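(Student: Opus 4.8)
The plan is to track the number of degree-1 right nodes $C_1(t)$ through a one-step recursion driven by the quantities controlled in Lemmas \ref{lem:initial_left_degree_bound_1B} and \ref{lem:num_singletons}. At $t=0$ the left node $v_{\pi(0)}$, recovered from a nonzero diagonal entry (available since $N_{Di}\ge1$), is peeled; because every right node has degree $2$ at $t=-1$, each of the $\tZ_{\pi(0)}$ right nodes incident to $v_{\pi(0)}$ becomes degree-1 and no others change, so $C_1(0)=\tZ_{\pi(0)}$. For $t\ge1$, when $v_{\pi(t)}$ is peeled, the right nodes incident to it split into the $S_{\pi(t)}^{(t)}$ that were degree-1 (their other endpoint was already peeled, so these become degree-0) and the remaining $\tZ_{\pi(t)}-S_{\pi(t)}^{(t)}$ that were degree-2 (these become degree-1); right nodes not incident to $v_{\pi(t)}$ are unaffected. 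This gives
\[
C_1(t)=C_1(t-1)+\tZ_{\pi(t)}-2S_{\pi(t)}^{(t)},\qquad 1\le t\le t_0,
\]
where $S_l^{(t)}$ is as in Lemma \ref{lem:num_singletons} and $\tZ_l$ as in Lemma \ref{lem:initial_left_degree_bound_1B}, and we abbreviate $\mu=\sfA_i(k-1)$.

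Next I would introduce two good events: $\mathcal{A}:=\{\tZ_l>\mu/2 \text{ for all } l\in[k]\}$, and $\mathcal{B}$, the event that at every iteration $t\in\{1,\dots,t_0\}$ reached by the algorithm, the peeled node $v_{\pi(t)}$ satisfies $S_{\pi(t)}^{(t)}<\mu/5$. On $\mathcal{A}\cap\mathcal{B}$, a short induction shows that for each $t=0,1,\dots,t_0$ the algorithm reaches iteration $t$ with $C_1(t)>0$: the base case is $C_1(0)=\tZ_{\pi(0)}>\mu/2>0$ for $k$ large; and if $C_1(t-1)>0$ then iteration $t$ is performed, so the recursion applies and $C_1(t)>C_1(t-1)+\mu/2-2\mu/5=C_1(t-1)+\mu/10>0$. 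Hence $\mathcal{A}\cap\mathcal{B}\subseteq\mathcal{E}$, and it remains to bound $\prob{\mathcal{A}^c}$ and $\prob{\mathcal{B}^c}$.

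For $\mathcal{A}$, a union bound over $l\in[k]$ and the first tail estimate in \eqref{eq:tail_bounds_left_degrees_Stage_B} give $\prob{\mathcal{A}^c}\le k\exp(-k^{1-\delta}/10)$. For $\mathcal{B}$, I would union bound over $t\in\{1,\dots,t_0\}$; on the event that iteration $t$ is reached, $v_{\pi(t)}$ is incident to a degree-1 right node and so $S_{\pi(t)}^{(t)}\ge1$, and the event $\{S_{\pi(t)}^{(t)}\ge\mu/5\}$ is contained in the union, over the remaining left nodes $l$, of the events $\{S_l^{(t)}\ge\mu/5\}$. Applying Lemma \ref{lem:num_singletons} — the distribution \eqref{eq:dist_Sl} together with the conditional tail bound \eqref{eq:Slt_LB}, which is valid precisely for $1\le t\le t_0$ — to each of these, and using $\prob{S_l^{(t)}\ge\mu/5}\le\prob{S_l^{(t)}\ge\mu/5\mid S_l^{(t)}\ge1}$, we obtain $\prob{\mathcal{B}^c}\le t_0\,k\exp(-k^{1-\delta}/16)\le k^2\exp(-k^{1-\delta}/16)$. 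Combining, $\prob{\mathcal{E}^c}\le k\exp(-k^{1-\delta}/10)+k^2\exp(-k^{1-\delta}/16)$; since $\delta\in(0,1)$ makes $k^{1-\delta}$ dominate $\ln k$, each term is at most $\tfrac12\exp(-k^{1-\delta}/20)$ for sufficiently large $k$, which yields \eqref{eq:hp_bnd_E}.

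The step I expect to be the main obstacle is justifying the $C_1(t)$ recursion rigorously — in particular, confirming that among the $\tZ_{\pi(t)}$ right nodes incident to the node peeled at iteration $t$, exactly $S_{\pi(t)}^{(t)}$ are degree-1 (hence become degree-0) and the rest are degree-2 (hence become degree-1), with none being degree-0 already, which uses that an incident right node has its other endpoint unpeeled precisely when it is degree-2. A secondary subtlety is the adaptivity of the peeling order $\pi$: since the node peeled at iteration $t$ could be biased toward larger values of $S_l^{(t)}$, the bound on $\prob{\mathcal{B}^c}$ must be taken as a union over all remaining left nodes rather than over the single node actually peeled, so that the distributional statement of Lemma \ref{lem:num_singletons} can be applied term by term.
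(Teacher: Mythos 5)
Your proof is correct and follows essentially the same route as the paper: derive $C_1(0)=\tZ_{\pi(0)}$ and the one-step recursion $C_1(t)=C_1(t-1)+\tZ_{\pi(t)}-2S_{\pi(t)}^{(t)}$, show that the thresholds $\tZ>\mu/2$ and $S<\mu/5$ force $C_1$ to grow by at least $\mu/10$ per step, then bound the failure events via the tail estimates of Lemmas~\ref{lem:initial_left_degree_bound_1B} and~\ref{lem:num_singletons}. The one place you diverge is welcome: the paper's union bound is written over the adaptively chosen indices $\pi(0),\dots,\pi(t_0)$ and then invokes tail bounds that are stated for a fixed index $l$, which glosses over the fact that the peeling order is data-dependent (the node chosen at step $t$ is biased toward having more incident degree-1 right nodes). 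You instead union-bound over \emph{all} $k$ left nodes for the $\tZ$ event and over all remaining left nodes at each iteration for the $S$ event, so that the per-index tail bounds apply verbatim. This costs you an extra factor of $k$ in each term, but as you note it is harmless because $k^{1-\delta}$ dominates $\ln k$ for $\delta\in(0,1)$, and the final bound $1-\exp(-k^{1-\delta}/20)$ still follows. Your explicit inductive check that $\mathcal{A}\cap\mathcal{B}\subseteq\mathcal{E}$ (so that the recursion is only applied at iterations the algorithm actually reaches) is also cleaner than the paper's conditioning step, which as written conditions on $\{S_{\pi(t)}^{(t)}\ge1,\ \forall t\le t_0\}$ — an event that is essentially $\mathcal{E}$ itself. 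In short: same decomposition and same quantitative ingredients, but you have tightened two places where the paper is informal.
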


\begin{proof}
   Recall that the initial left degrees (at $t=-1$), denoted by $\{ \tZ_l \}_{l \in [k]}$, each has a Binomial distribution given in \eqref{eq:1B_left_deg_binomial}. 
Recall also that $\pi(t)$ is the index of the left node recovered in iteration $t$. At $t=0$, the first left node $v_{\pi(0)}$ is peeled off, creating $\tZ_{\pi(0)}$ degree-1 right nodes. Thus, we have
\begin{equation} \label{eq:C1(0)}
C_1(0) = \tZ_{\pi(0)}.
\end{equation}
More generally,  after iteration $(\nlnodes-1)$, when $\nlnodes$ left nodes have been peeled off, consider one of the $(k-\nlnodes )$ remaining left nodes, say $v_l$, for some $l\in[k] \setminus \left\lbrace \pi(0), \dots, \pi( \nlnodes -1)\right\rbrace$.  At the start of iteration $\nlnodes$, among the  $\tZ_l$ edges of $v_l$, the number of  edges connected to degree-1 right nodes is $S_{l}^{(t)}$ whose distribution is given in \eqref{eq:dist_Sl}. 
When one of the $(k-\nlnodes)$ remaining left nodes, denoted by $v_{\pi(t)}$, is peeled off in  iteration  $t$, the number of degree-1 right nodes in the residual graph can be expressed as:
\begin{align}
C_1(\nlnodes) &= C_1(\nlnodes-1)+\#\text{ degree-$2$ right nodes reduced to degree-$1$ due to the removal of }v_{\pi(\nlnodes)} \nonumber\\
&\quad - \#\text{ degree-$1$ right nodes reduced to degree-$0$ due to the removal of }v_{\pi(\nlnodes)}\nonumber\\
 &=   C_1(\nlnodes-1)+\left(\tZ_{\pi(\nlnodes)} - S_{\pi(\nlnodes)}^{(\nlnodes)}\right) - S_{\pi(\nlnodes)}^{(\nlnodes)}\nonumber\\
 &=  C_1(\nlnodes-1)+\left(\tZ_{\pi(\nlnodes)} - 2S_{\pi(\nlnodes)}^{(\nlnodes)}\right) .\label{eq:C1n_C1n-1}
\end{align}
The marginal distributions of $\tZ_{\pi(\nlnodes)}$ and $S_{\pi(\nlnodes)}^{(\nlnodes)}$ are given by 
 Lemmas \ref{lem:initial_left_degree_bound_1B} and \ref{lem:num_singletons},
which also guarantee that with high probability, $\tZ_{\pi(\nlnodes)} > \frac{\mu}{2}$ and $S_{\pi(\nlnodes)}^{(\nlnodes)} < \frac{\mu}{5}$ for $ \nlnodes \le t_0$.

By the same reasoning as above, we can write $C_1(\nlnodes-1)$ in terms of $C_1(\nlnodes-2)$ and so on to obtain
\begin{align}
C_1(\nlnodes) &=  C_1(\nlnodes-1)+\left(\tZ_{\pi(\nlnodes)} - 2S_{\pi(\nlnodes)}^{(\nlnodes)}\right) \nonumber\\
 & = C_1(0) + \sum_{\kappa=1}^{\nlnodes} \left(\tZ_{\pi(\kappa)} - 2S_{\pi(\kappa)}^{(\kappa)}\right)\nonumber\\
 & = \tZ_{\pi(0)} + \sum_{\kappa=1}^{\nlnodes} \left(\tZ_{\pi(\kappa)} - 2S_{\pi(\kappa)}^{(\kappa)}\right), 
 \label{eq:C1t_Z}
\end{align}
where the last step follows from \eqref{eq:C1(0)}. We next show that  $C_1(\nlnodes)>0$ with high probability for $0\le t \le t_0$ by bounding each term on the RHS of \eqref{eq:C1t_Z}. From \eqref{eq:C1t_Z}, we note that 
\begin{align}
\left\lbrace \tZ_{\pi(t)}> \frac{\mu}{2} \; \text{and} \; S_{\pi(t)}^{(t)} < \frac{\mu}{5},  \ \forall \ t\le t_0\right\rbrace 
& \Rightarrow 
\left\lbrace C_1(t) > \frac{\mu}{2} + t\left( \frac{\mu}{2} -
\frac{2\mu}{5}\right), \ \forall \ t \in[0, t_0] \right\rbrace\nonumber\\
& \Rightarrow  
\left\lbrace C_1(t) >0, \ \forall \ t \in[0, t_0] \right\rbrace.
\label{eq:ZS_bounds}
\end{align}
It follows that 
\begin{align}
\prob{\mathcal{E} } &= \prob{C_1(t)>0, \ \forall \ t \in[0, t_0]} \nonumber\\
& \geq \prob{ \tZ_{\pi(t)} > \frac{\mu}{2}  \; \text{and}\; S_{\pi(t)}^{(t)} < \frac{\mu}{5},  \ \forall \ t \le t_0 \bigmid S_{\pi(t)}^{(t)} \geq 1, \ \forall \ t\in [1, t_0] }, 
\end{align}
where the condition $S_{\pi(t)}^{(t)} \geq 1,\ \forall \ t\in [1, t_0]$ captures the fact that the left node recovered in each iteration $t$ was connected to at least one degree-1 right node at $t$. Applying a union bound gives
\begin{equation}\label{eq:PE_bnd1}
\prob{\mc{E}} \geq 
1-\sum_{t=0}^{t_0}  \prob{\tZ_{\pi(t)} \le  \frac{\mu}{2} }- \sum_{t=1}^{t_0} \prob{S_{\pi(t)}^{(t)} \ge  \frac{\mu}{5} \mid S_{\pi(t)}^{(t)}\geq 1}.
\end{equation}
Each summand in the second term can be bounded using \eqref{eq:tail_bounds_left_degrees_Stage_B} and each summand in the third term can be bounded using \eqref{eq:Slt_LB} for $t \le t_0=\frac{k-1}{20}$. We therefore obtain 
\begin{align}
\prob{\mc{E}}
 & > 1- (t_0+1) \exp\left(-\frac{1}{10} k^{1-\delta}\right) 
 -  t_0 \exp\left(-\frac{1}{16} k^{1-\delta} \right) \nonumber\\
& > 1-\left( \frac{k-1}{20}+1\right)  \exp\left(-\frac{1}{10} k^{1-\delta}\right) 
-\frac{k-1}{20} \exp\left(-\frac{1}{16} k^{1-\delta} \right) \nonumber\\
& > 1-\exp\left(-\frac{1}{20} k^{1-\delta}\right),
\end{align}
where the last inequality holds for sufficiently large $k$. 
\end{proof}

We now prove Lemma \ref{lem:1B_rank_r} by proving \eqref{eq:bound_on_Balti_rank_r} based on Lemmas \ref{lem:num_singletons} and \ref{lem:deg_1_rnodes_upto_t0}. In particular, we show that after $t_0=\frac{k-1}{20}$ iterations,  every remaining left node in the residual subgraph is connected to at least one degree-1 right node with high probability and so can be recovered by the peeling algorithm.
\begin{proof}[Proof of Lemma \ref{lem:1B_rank_r}]
Consider the alternative graph process on subgraph $i \in [r]$.  Conditioned on the event  
\[ 
\mathcal{E} = \left\lbrace C_1(t)>0, \ \forall \ t\in [0, t_0]\right\rbrace, 
\]
 $(t_0+1)$ left nodes (out of $k$)  have been recovered and peeled off by  iteration $t_0$. Consider one of the remaining left nodes after iteration $t_0$, i.e., $v_l$  with $l\in[k]\setminus \left\lbrace \pi(0), \dots, \pi(t_0)\right\rbrace$. We  show that with high probability, $v_l$ is  connected to at least one degree-1 right node  and is  therefore recoverable in the forthcoming steps.  From Lemma \ref{lem:num_singletons},  the number of degree-1 right nodes connected to  $v_l$ after iteration $t_0$, denoted by $S_l^{(t_0+1)}$, is distributed as 
$S_l^{(t_0+1)} \sim \bin{\sfA_i \binom{k}{2} }{(t_0+1)/\binom{k}{2}}$. 
Thus, using $t_0 = \frac{k-1}{20}$ and $\sfA_i \ge \alpha_i^*$, we have the bound
\begin{align}\label{eq:p0_single}
\prob{ S_l^{(t_0+1)} =0 \mid \mathcal{E} } =\left(1- \frac{t_0+1}{\binom{k}{2}}\right)^{ \sfA_i\binom{k}{2} }
\le\exp\left(- (t_0+1) \alphaUB_i\right)
< \exp\left( - \frac{1}{25} k^{1-\delta}\right).
\end{align}
Applying a union bound  gives
\begin{align}\label{eq:prob_ge1_single}
\prob{ S_l^{(t_0+1)} \geq 1, \ \forall \ l\in [k]\setminus \left\lbrace \pi(0), \ldots, \pi(t_0) \right\rbrace \mid \mathcal{E} } 
&> 
1- \left(k-(t_0+1)\right)\exp\left( - \frac{1}{25} k^{1-\delta}\right) \nonumber\\
&> 1- k \exp\left( - \frac{1}{25} k^{1-\delta}\right) .
\end{align}

We recall that $\Balti$ is the fraction of left nodes recovered when  degree-1 right nodes run out in the alternative graph process on subgraph 
$i\in [r]$. Then, using \eqref{eq:hp_bnd_E} and \eqref{eq:prob_ge1_single}, for all  $N_{Di} \ge 1$  and $\sfA_i \ge \alphaUB_i$, we have for sufficiently large $k$:
\begin{align}
 \prob{ \Balti =1 \bigmid N_{Di}, \,  \sfA_i} 
& \geq \prob{\mathcal{E}} \prob{ S_l^{(t_0+1)} \geq 1, \ \forall \ l\in [k]\setminus \left\lbrace \pi(0), \ldots, 
\pi(t_0) \right\rbrace \mid \mathcal{E} } \nonumber\\
& > \left[1-\exp\left(-\frac{1}{20} k^{1-\delta}\right)\right]
\left[1- k\exp\left( - \frac{1}{25} k^{1-\delta}\right)\right] \nonumber\\
& > 1- \exp\left(-\frac{1}{30} k^{1-\delta}\right) .
\end{align}
This proves \eqref{eq:bound_on_Balti_rank_r}, and the result of Lemma \ref{lem:1B_rank_r} follows via  \eqref{eq:alt_orig_bnd}.
\end{proof}

\subsection{Proof of Lemma \ref{lem:non_sym_Stage_A} }\label{sec:proof_Stage_A_main_lemma_nonsym}

At the start of stage A, the pruned bipartite graph for a non-symmetric matrix $\bX$ as defined in Lemma \ref{lem:non_sym_Stage_A} has $r \beta k^2$ left nodes and $\nBins  = d r \beta k^2/(\delta \ln k)$ right nodes. The proof of  \eqref{eq:Ai_LB_nonsymm} is  along the same lines as that of  \eqref{eq:bound_on_all_Ai_rank_r}. Specifically, we first obtain a concentration inequality similar to  \eqref{eq:bound_num_singletons_1A} on the number of singleton right nodes in the initial graph. The subsequent steps are very similar to \eqref{eq:bound_num_singletons_Stage_A}--\eqref{eq:bound_num_lnodes_recovered_Stage_A} and \eqref{eq:bound_on_num_above_diag_recovered_in_each_block}--\eqref{eq:bound_on_Ai_and_NDi} and are omitted for brevity. 

\subsection{Proof of Lemma \ref{lem:non_sym_Stage_B} }\label{sec:proof_Stage_B_main_lemma_nonsym}

Recall that the vectors in each of the sets $\{\bu_1, \ldots, \bu_r\}$ and $\{\bv_1, \ldots, \bv_r\}$ have $k$ and $\beta k$ disjoint supports, respectively, for some constant $\beta\in (0,1]$. Thus, the nonzero entries of $\bX = \sum_{i=1}^r \sigma_i \bu_i \bv_i^{T} = \sum_{i=1}^r \tbu_i \tbv_i^{T} $ form $r$ disjoint submatrices, each of size $k \times \beta k$. 
The initial graph for stage B  consists of $r$ disjoint bipartite  subgraphs. The peeling algorithm in stage B  is applied to each of these subgraphs. 
In the $i$-th  subgraph, the  $(k + \beta k)$ left nodes represent the unknown nonzeros in $\tbu_i$ and $\tbv_i$. The $\sfA_i \beta k^2$ right nodes represent the nonzero pairwise products recovered in the $i$-th nonzero submatrix in stage A. 

As in the proof of the symmetric case, we will consider an alternative initial graph  in which for each subgraph $i$, the $\sfA_i \beta k^2$ right nodes  are drawn uniformly at random \emph{with replacement} from the set of $\beta k^2$ nonzero pairwise products in the $i$-th nonzero submatrix.  See Fig. \ref{subfig:alt_graph_1B_non_sym_t=-1} for an illustration. The number of distinct right nodes  for the alternative initial graph can be no larger than that for the original one.  Moreover, as illustrated in Fig.  \ref{fig:alt_graph_1B_non_sym}, following the recovery of the first nonzero entry of $\tbu_i$, the  process on the alternative graph peels off all the recoverable entries of $\tbv_i$ before peeling any nonzero entries of $\tbu_i$ that have become recoverable along the way. The order of peeling does not affect the total number of nonzeros recoverable in $\tbu_i$ and $\tbv_i$ before degree-1 right nodes run out.

Writing $\usfB_{\text{alt},i}$ and $\vsfB_{\text{alt},i}$ for the fractions of nonzeros  recoverable in $\tbu_i$ and $\tbv_i$ in the alternative process, we therefore have 
\beq\label{eq:failure_prob_original_vs_alt_model}
\prob{ \{\usfB_{\text{alt},i}<1\} \,  \cup \, \{ \vsfB_{\text{alt},i}<1 \} \mid \sfA_i }
\ge \prob{ \{ \usfB_i<1 \} \, \cup \, \{ \vsfB_i<1 \} \mid \sfA_i  }.
\eeq
To prove \eqref{eq:non_sym_Stage_B_main_result}, we will show that for $\sfA_i \ge \alpha_i^*$ and sufficiently large $k$:
\beq\label{eq:failure_prob_block_i_Stage_B_nonsym}
\prob{ \{ \usfB_{\text{alt},i} <1 \} \, \cup \, \{ \vsfB_{\text{alt},i} <1 \} \mid\sfA_i }  \le \exp\left(-\frac{\beta}{8} k^{1-2\delta}\right),\quad i\in[r].
\eeq

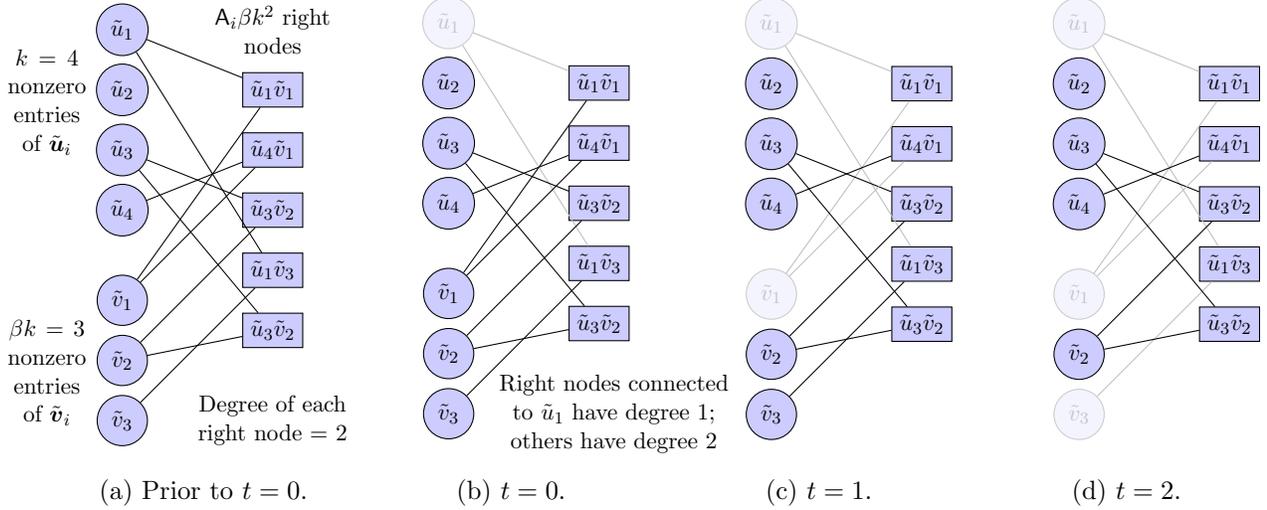
\begin{figure}
\captionsetup[subfigure]{justification=centering}
\centering
\linespread{1}
 \begin{subfigure}[B]{0.24\linewidth} 
 \centering 
 \begin{adjustwidth}{-1cm}{0cm}
\begin{tikzpicture}
\node[vnode](vnodeu1){$\tu_{1}$};
\node[vnode, below of=vnodeu1, node distance=1cm](vnodeu2){$\tu_{2}$};
\node[vnode, below of=vnodeu1, node distance=2cm](vnodeu3){$\tu_3$};
\node[vnode, below of=vnodeu1, node distance=3cm](vnodeu4){$\tu_4$};

\node[vnode, below of =vnodeu4, node distance=1.5cm](vnodev1){$\tv_1$};
\node[vnode, below of=vnodev1, node distance=1cm](vnodev2){$\tv_2$};
\node[vnode, below of=vnodev1, node distance=2cm](vnodev3){$\tv_3$};

\node[cnode, right of= vnodeu1, yshift=-1cm, node distance=2.5cm](cnodeu1v1){$\tu_1\tv_1$};
\node[cnode, below of= cnodeu1v1, node distance=1cm](cnodeu4v1){$\tu_4 \tv_1$};
\node[cnode, below of= cnodeu1v1, node distance=2cm](cnodeu3v2){$\tu_3 \tv_2$};
\node[cnode, below of= cnodeu1v1, node distance=3cm](cnodeu1v3){$\tu_1\tv_3$};
\node[cnode, below of= cnodeu1v1, node distance=4cm](cnodeu3v2-1){$\tu_3\tv_2$};

\draw (vnodeu1)--(cnodeu1v1);
\draw (vnodev1)--(cnodeu1v1);
\draw (vnodeu1)--(cnodeu1v3);
\draw (vnodev3)--(cnodeu1v3);
\draw  (vnodeu3)--(cnodeu3v2);
\draw (vnodev2)--(cnodeu3v2);
\draw  (vnodeu4)--(cnodeu4v1);
\draw (vnodev1)--(cnodeu4v1);
\draw  (vnodeu3)--(cnodeu3v2-1);
\draw (vnodev2)--(cnodeu3v2-1);

\node[left of=vnodeu2, node distance=1cm, text width=2cm, align=center, scale=0.8, yshift=-0.2cm](numLnodeLabel){$k =4$ nonzero entries of $\tbu_i$};
\node[left of=vnodev2, node distance=1cm, text width=2cm, align=center, scale=0.8, yshift=-0.2cm](numLnodeLabel){$\beta k=3$ nonzero entries of $\tbv_i$};
\node[above of=cnodeu1v1, node distance=0.8cm, xshift =0cm, text width=3cm, align=center, scale=0.8](numRnodeLabel){$\sfA_i \beta k^2$ right nodes}; 
\node[below of=cnodeu1v3, node distance=2cm, text width=3cm, align=center, scale=0.8](rightDist1){Degree of each right node = 2};
\end{tikzpicture}
\end{adjustwidth}
\caption{Prior to $t=0$.}
\label{subfig:alt_graph_1B_non_sym_t=-1}
\end{subfigure}
\begin{subfigure}[B]{0.24\linewidth} 
\centering 
\begin{adjustwidth}{0.8cm}{0cm}
\begin{tikzpicture}
\node[vnodeFaded](vnodeu1){$\tu_1$};
\node[vnode, below of=vnodeu1, node distance=1cm](vnodeu2){$\tu_2$};
\node[vnode, below of=vnodeu1, node distance=2cm](vnodeu3){$\tu_3$};
\node[vnode, below of=vnodeu1, node distance=3cm](vnodeu4){$\tu_4$};

\node[vnode, below of =vnodeu4, node distance=1.5cm](vnodev1){$\tv_1$};
\node[vnode, below of=vnodev1, node distance=1cm](vnodev2){$\tv_2$};
\node[vnode, below of=vnodev1, node distance=2cm](vnodev3){$\tv_3$};

\node[cnode, right of= vnodeu1, yshift=-1cm, node distance=2.5cm](cnodeu1v1){$\tu_1 \tv_1$};
\node[cnode, below of= cnodeu1v1, node distance=1cm](cnodeu4v1){$\tu_4 \tv_1$};
\node[cnode, below of= cnodeu1v1, node distance=2cm](cnodeu3v2){$\tu_3 \tv_2$};
\node[cnode, below of= cnodeu1v1, node distance=3cm](cnodeu1v3){$\tu_1 \tv_3$};
\node[cnode, below of= cnodeu1v1, node distance=4cm](cnodeu3v2-1){$\tu_3 \tv_2$};

\draw [gray!50] (vnodeu1)--(cnodeu1v1);
\draw (vnodev1)--(cnodeu1v1);
\draw [gray!50] (vnodeu1)--(cnodeu1v3);
\draw (vnodev3)--(cnodeu1v3);
\draw  (vnodeu3)--(cnodeu3v2);
\draw (vnodev2)--(cnodeu3v2);
\draw  (vnodeu4)--(cnodeu4v1);
\draw (vnodev1)--(cnodeu4v1);
\draw  (vnodeu3)--(cnodeu3v2-1);
\draw (vnodev2)--(cnodeu3v2-1);

\node[above of=cnodeu1v1, node distance=0.8cm, xshift =0.2cm, text width=4cm, align=center, scale=0.8, text=white](numRnodeLabel){};
\node[below of=cnodeu1v3, node distance=2cm, xshift =0.2cm, text width=3.9cm, align=center, scale=0.8](rightDist1){Right nodes connected to $\tu_1$ have degree 1; others have degree 2};
\end{tikzpicture}
\end{adjustwidth}
\vspace{-0.1cm}
\caption{$t=0$.}
\label{subfig:alt_graph_1B_non_sym_t=0}
\end{subfigure}
\begin{subfigure}[B]{0.24\linewidth} 
\centering 
\begin{adjustwidth}{1cm}{0cm}
\begin{tikzpicture}
\node[vnodeFaded](vnodeu1){$\tu_1$};
\node[vnode, below of=vnodeu1, node distance=1cm](vnodeu2){$\tu_2$};
\node[vnode, below of=vnodeu1, node distance=2cm](vnodeu3){$\tu_3$};
\node[vnode, below of=vnodeu1, node distance=3cm](vnodeu4){$\tu_4$};

\node[vnodeFaded, below of =vnodeu4, node distance=1.5cm](vnodev1){$\tv_1$};
\node[vnode, below of=vnodev1, node distance=1cm](vnodev2){$\tv_2$};
\node[vnode, below of=vnodev1, node distance=2cm](vnodev3){$\tv_3$};

\node[cnode, right of= vnodeu1, yshift=-1cm, node distance=2.5cm](cnodeu1v1){$\tu_1 \tv_1$};
\node[cnode, below of= cnodeu1v1, node distance=1cm](cnodeu4v1){$\tu_4 \tv_1$};
\node[cnode, below of= cnodeu1v1, node distance=2cm](cnodeu3v2){$\tu_3 \tv_2$};
\node[cnode, below of= cnodeu1v1, node distance=3cm](cnodeu1v3){$\tu_1 \tv_3$};
\node[cnode, below of= cnodeu1v1, node distance=4cm](cnodeu3v2-1){$\tu_3 \tv_2$};

\draw [gray!50] (vnodeu1)--(cnodeu1v1);
\draw [gray!50] (vnodev1)--(cnodeu1v1);
\draw [gray!50] (vnodeu1)--(cnodeu1v3);
\draw (vnodev3)--(cnodeu1v3);
\draw (vnodeu3)--(cnodeu3v2);
\draw (vnodev2)--(cnodeu3v2);
\draw (vnodeu4)--(cnodeu4v1);
\draw [gray!50](vnodev1)--(cnodeu4v1);
\draw (vnodeu3)--(cnodeu3v2-1);
\draw (vnodev2)--(cnodeu3v2-1);
\node[above of=cnodeu1v1, node distance=0.8cm, xshift =0.2cm, text width=4cm, align=center, scale=0.8, text=white](numRnodeLabel){};
\node[below of=cnodeu1v3, node distance=1.8cm, text width=3.5cm, align=center, scale=0.8, text=white](rightDist1){Right nodes connected to $\tu_1$ have degree 1; others have degree 2};
\end{tikzpicture}
\end{adjustwidth}
\vspace{-0.1cm}
\caption{$t=1$.}
\label{subfig:alt_graph_1B_non_sym_t=1}
\end{subfigure}
\begin{subfigure}[B]{0.24\linewidth} 
\centering 
\begin{adjustwidth}{1cm}{0cm}
\begin{tikzpicture}
\node[vnodeFaded](vnodeu1){$\tu_1$};
\node[vnode, below of=vnodeu1, node distance=1cm](vnodeu2){$\tu_2$};
\node[vnode, below of=vnodeu1, node distance=2cm](vnodeu3){$\tu_3$};
\node[vnode, below of=vnodeu1, node distance=3cm](vnodeu4){$\tu_4$};

\node[vnodeFaded, below of =vnodeu4, node distance=1.5cm](vnodev1){$\tv_1$};
\node[vnode, below of=vnodev1, node distance=1cm](vnodev2){$\tv_2$};
\node[vnodeFaded, below of=vnodev1, node distance=2cm](vnodev3){$\tv_3$};

\node[cnode, right of= vnodeu1, yshift=-1cm, node distance=2.5cm](cnodeu1v1){$\tu_1 \tv_1$};
\node[cnode, below of= cnodeu1v1, node distance=1cm](cnodeu4v1){$\tu_4 \tv_1$};
\node[cnode, below of= cnodeu1v1, node distance=2cm](cnodeu3v2){$\tu_3 \tv_2$};
\node[cnode, below of= cnodeu1v1, node distance=3cm](cnodeu1v3){$\tu_1 \tv_3$};
\node[cnode, below of= cnodeu1v1, node distance=4cm](cnodeu3v2-1){$\tu_3 \tv_2$};

\draw [gray!50] (vnodeu1)--(cnodeu1v1);
\draw [gray!50] (vnodev1)--(cnodeu1v1);
\draw [gray!50] (vnodeu1)--(cnodeu1v3);
\draw [gray!50] (vnodev3)--(cnodeu1v3);
\draw (vnodeu3)--(cnodeu3v2);
\draw (vnodev2)--(cnodeu3v2);
\draw (vnodeu4)--(cnodeu4v1);
\draw [gray!50](vnodev1)--(cnodeu4v1);
\draw (vnodeu3)--(cnodeu3v2-1);
\draw (vnodev2)--(cnodeu3v2-1);
\node[above of=cnodeu1v1, node distance=0.8cm, xshift =0.2cm, text width=4cm, align=center, scale=0.8, text=white](numRnodeLabel){};
\node[below of=cnodeu1v3, node distance=1.8cm, text width=3.5cm, align=center, scale=0.8, text=white](rightDist1){Right nodes connected to $\tu_1$ have degree 1; others have degree 2};
\end{tikzpicture}
\end{adjustwidth}
\vspace{-0.1cm}
\caption{$t=2$.}
\label{subfig:alt_graph_1B_non_sym_t=2}
\end{subfigure}
\caption{\small Alternative graph process showing the recovery of $\tbu_i$ and $\tbv_i$  given $\sfA_i $ (as the counterpart of Fig. \ref{fig:graph_process_1B_non_sym}). The double occurrence of $\tu_3 \tv_2$ as right nodes highlights that 
the right nodes are sampled  \emph{with replacement} from 
the $\beta k^2$ nonzero pairwise products in the nonzero submatrix corresponding to $\tbu_i \tbv_i^T$. 
(b): At $t=0$, $\tu_1$ is recovered as 1 and  peeled off. (c)--(d): Following the recovery of $\tu_1$, we peel off all the recoverable $\tv_j$'s (i.e., $\tv_1, \tv_3$)  before peeling any  $\tu_l$'s that have become recoverable along the way (i.e., $\tu_4$).}
\label{fig:alt_graph_1B_non_sym}
\end{figure}

The following lemma specifies the degree distribution of each left node at the beginning of stage B. The proof is along the same lines as that of Lemma \ref{lem:initial_left_degree_bound_1B} and is omitted.
\begin{lem}[Initial left degrees are Binomial]
\label{lem:initial_left_degrees_Stage_B_nonsym}
Let $\utZ_l$ be the degree of the  left node representing the $l$-th nonzero of $\tbu_i$, and $\vtZ_j$ be the degree of the left node representing the $j$-th nonzero of $\tbv_i$,  for  $l\in [k]$ and  $j\in [\beta k]$. Then,  
\begin{align}\label{eq:left_degrees_are_binomial_Stage_B_nonsym}
\utZ_l \sim \bin{\sfA_i \beta k^2}{\frac{1}{k}}, \quad 
\vtZ_j \sim \bin{\sfA_i \beta k^2}{\frac{1}{\beta k}}.
\end{align}
Then, for  $\sfA_i\ge \alphaUB_i$ and sufficiently large $k$:
\begin{align}\label{eq:bounds_on_left_degrees_Stage_B_nonsym}
 &\prob{ \utZ_l \notin \left[\frac{\sfA_i \beta k}{2}, \, \frac{3\sfA_i \beta k}{2}\right] \bigmid \sfA_i}
< 2\exp\left( -\frac{ \beta}{10} k^{1-\delta} \right) ,\quad \\
& \prob{ \vtZ_j \notin \left[\frac{\sfA_i k}{2}, \,  \frac{3 \sfA_i k}{2}\right] \bigmid \sfA_i}
< 2\exp\left( -\frac{1 }{10} k^{1-\delta} \right).
\end{align}
Define the event 
\beq
\mc{T}_0: = \left\lbrace \utZ_l \in \left[\frac{\sfA_i \beta k}{2}, \, \frac{3\sfA_i \beta k}{2}\right],  \ \forall\, l\in[k] \right\rbrace
\cap \left\lbrace\vtZ_j \in \left[\frac{\sfA_i k}{2}, \,  \frac{3 \sfA_i k}{2}\right], \ \forall \, j\in[\beta k]\right\rbrace.
\label{eq:T0_def}
\eeq
 Then for  $\sfA_i\ge \alphaUB_i$ and sufficiently large $k$:
\beq\label{eq:bound_left_degrees_B_nonsym}
\prob{\mc{T}_0 \mid \sfA_i} >1- 2k \exp\left( -\frac{ \beta }{10} k^{1-\delta} \right) 
- 2\beta k \exp\left( -\frac{1 }{10} k^{1-\delta} \right) .
\eeq
\end{lem}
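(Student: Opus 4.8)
The plan is to follow the proof of Lemma \ref{lem:initial_left_degree_bound_1B}, adapting the counting to the bipartite structure of the non-symmetric stage B graph (Fig. \ref{subfig:alt_graph_1B_non_sym_t=-1}). Recall that in the alternative graph process the $\sfA_i\beta k^2$ right nodes of subgraph $i$ are drawn uniformly at random \emph{with replacement} from the $\beta k^2$ nonzero pairwise products $\{\tu_l\tv_j: l\in[k],\, j\in[\beta k]\}$, and that each right node connects to exactly one left node on the $\tbu_i$-side and one on the $\tbv_i$-side. First I would observe that drawing a product uniformly at random from this $k\times\beta k$ grid amounts to picking its $\tbu$-coordinate uniformly from $[k]$ and its $\tbv$-coordinate uniformly from $[\beta k]$, independently, and that distinct right nodes are sampled independently. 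Hence for a fixed left node $\tu_l$ each right node attaches to it independently with probability $1/k$, so $\utZ_l\sim\bin{\sfA_i\beta k^2}{1/k}$; similarly each right node attaches to a fixed $\tv_j$ independently with probability $1/(\beta k)$, so $\vtZ_j\sim\bin{\sfA_i\beta k^2}{1/(\beta k)}$, which is \eqref{eq:left_degrees_are_binomial_Stage_B_nonsym}. Note $\E[\utZ_l]=\sfA_i\beta k$ and $\E[\vtZ_j]=\sfA_i k$, so the two intervals in the statement are precisely $[\tfrac12\E\utZ_l,\tfrac32\E\utZ_l]$ and $[\tfrac12\E\vtZ_j,\tfrac32\E\vtZ_j]$.

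Next I would apply the Binomial tail bound of Lemma \ref{lem:binom_tail_bound} to each of the four deviation events. For $\utZ_l$, with $n=\sfA_i\beta k^2$ and $p=1/k$, the lower and upper tails are bounded by $\exp(-n D_e(\tfrac{1}{2k}\,\|\,\tfrac1k))$ and $\exp(-n D_e(\tfrac{3}{2k}\,\|\,\tfrac1k))$ respectively. A short expansion of the relative entropy for large $k$ gives $D_e(\tfrac{1}{2k}\,\|\,\tfrac1k)=\tfrac{1-\ln 2}{2k}(1+o(1))$ and $D_e(\tfrac{3}{2k}\,\|\,\tfrac1k)=\tfrac{3\ln(3/2)-1}{2k}(1+o(1))$, both positive constants over $k$. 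Multiplying by $n=\sfA_i\beta k^2$ and using $\sfA_i\ge\alphaUB_i=k^{-\delta}(1-o(1))$, the two exponents are each at least $\tfrac{\beta}{10}k^{1-\delta}$ for sufficiently large $k$, so a union bound over the two tails yields the first displayed inequality. The computation for $\vtZ_j$ is identical except that $p=1/(\beta k)$, so $nD_e=\sfA_i\beta k^2\cdot\Theta(\tfrac{1}{\beta k})=\Theta(\sfA_i k)$: the factor $\beta$ cancels, giving each tail $\le\exp(-\tfrac{1}{10}k^{1-\delta})$ and hence the second displayed bound without the $\beta$.

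Finally I would combine these via a union bound over the $k$ left nodes of $\tbu_i$ and the $\beta k$ left nodes of $\tbv_i$, obtaining
\[
\prob{\mc{T}_0\mid\sfA_i}\ \ge\ 1-\sum_{l=1}^{k}\prob{\utZ_l\notin[\tfrac{\sfA_i\beta k}{2},\tfrac{3\sfA_i\beta k}{2}]\mid\sfA_i}-\sum_{j=1}^{\beta k}\prob{\vtZ_j\notin[\tfrac{\sfA_i k}{2},\tfrac{3\sfA_i k}{2}]\mid\sfA_i},
\]
which is exactly \eqref{eq:bound_left_degrees_B_nonsym}. The only step that needs any care is the relative-entropy estimate: one must keep enough terms in the expansion of $D_e$ about the mean to certify the explicit constant $\tfrac{1}{10}$ (and verify the lower-order logarithmic terms do not erode it, since $3\ln(3/2)-1$ is only slightly above $\tfrac15$). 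This is the same mild technical point as in Lemma \ref{lem:initial_left_degree_bound_1B}, and everything else is bookkeeping, so I do not anticipate a genuine obstacle.
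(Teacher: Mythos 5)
Your proof is correct and takes essentially the same route the paper intends: the paper omits the details, deferring to Lemma \ref{lem:initial_left_degree_bound_1B} ("a standard counting argument" for the Binomial law, then the tail bound of Lemma \ref{lem:binom_tail_bound}), and your argument is precisely that, adapted to the $k\times\beta k$ product grid. The factorization of the uniform draw into independent row/column coordinates, the resulting $\bin{\sfA_i\beta k^2}{1/k}$ and $\bin{\sfA_i\beta k^2}{1/(\beta k)}$ laws, the relative-entropy estimates $D_e\approx\tfrac{1-\ln 2}{2k}$ and $D_e\approx\tfrac{3\ln(3/2)-1}{2k}$ (with $\tfrac{3\ln(3/2)-1}{2}\approx 0.108>\tfrac{1}{10}$), and the final union bound over the $k+\beta k$ left nodes all check out, including your caveat that the constant $\tfrac{1}{10}$ leaves only a small margin that must survive the $(1+o(1))$ corrections.
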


The recovery of each nonzero entry of $\tbu_i$ or $\tbv_i$ is counted as one iteration of the  algorithm. The recovery of the first nonzero entry of $\tbu_i$ or $\tbv_i$ 
corresponds to iteration $t=0$. The next lemma specifies the distribution of degree-1 right nodes connected to each remaining left node after each iteration of the algorithm. The proof is similar to that of Lemma \ref{lem:num_singletons} and is omitted for brevity.

\begin{lem}[Number of degree-1 right nodes connected to each unpeeled left node]
\label{lem:num_singleton_edges_Stage_B_nonsym}
Suppose that the peeling algorithm on the $i$-th subgraph  has not terminated after iteration $(t-1)$, for $1\le t\le( k+\beta k-1)$. 
 Let $\Ut{t-1}$ and $\Vt{t-1}$ be the set of indices of the nonzeros recovered in $\tbu_i$ and $\tbv_i$ by the end of iteration $(t-1)$. 
At the start of iteration $t$, let $\uSt{t}_l$ be the number of degree-1 right nodes connected to the remaining left node $\tu_l$ and let $\vSt{t}_j$ be that connected to the remaining left node $\tv_j$, where   $l\in [k]\setminus \Ut{t-1}$ and  $j\in [\beta k]\setminus \Vt{t-1}$. Then,
\begin{equation}\label{eq:dist_num_singleton_edges_Stage_B_nonsym}
\uSt{t}_l \sim \bin{\sfA_i \beta k^2}{\frac{\abs{\Vt{t-1}}}{\beta k^2 }}, \quad 
\vSt{t}_j \sim \bin{\sfA_i \beta k^2}{\frac{\abs{\Ut{t-1}}}{\beta k^2 }}.
\end{equation}
\end{lem}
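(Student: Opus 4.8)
\emph{Proof strategy.} The plan is to follow the proof of Lemma \ref{lem:num_singletons} almost verbatim, the one genuinely new feature being that the stage-B graph here is ``doubly bipartite'': every right node joins exactly one left node on the $\tbu_i$-side to exactly one left node on the $\tbv_i$-side. Concretely, for $j'\in[\sfA_i\beta k^2]$ write the pair underlying the $j'$-th right node as $(\tu_{a_{j'}},\tv_{b_{j'}})$, and put $Y^u_{l,j'}:=\mathbbm{1}\{a_{j'}=l\}$ and $Y^v_{b,j'}:=\mathbbm{1}\{b_{j'}=b\}$. Since the alternative graph draws its right nodes uniformly at random \emph{with replacement} from the $\beta k^2$ nonzero products $\{\tu_a\tv_b:a\in[k],\,b\in[\beta k]\}$, the coordinates $a_{j'}$ and $b_{j'}$ are independent and uniform on $[k]$ and $[\beta k]$ respectively, and the pairs $(a_{j'},b_{j'})$ are i.i.d.\ across $j'$. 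Hence $Y^u_{l,j'}\sim\bern{1/k}$, $Y^v_{b,j'}\sim\bern{1/(\beta k)}$, $Y^u_{l,j'}$ and $Y^v_{b,j'}$ are independent for each fixed $j'$, and the blocks indexed by distinct $j'$ are independent.

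Next I would translate the degree counts into these indicators. A right node incident to the still-unpeeled left node $\tu_l$ has degree $1$ at the start of iteration $t$ precisely when its $\tbv_i$-endpoint has already been recovered, i.e.\ lies in $\Vt{t-1}$ (and symmetrically for $\tv_j$ with $\Ut{t-1}$); thus, with the sums over $j'\in[\sfA_i\beta k^2]$,
\[
\uSt{t}_l = \sum_{j'} Y^u_{l,j'}\Big(\sum_{b\in\Vt{t-1}}Y^v_{b,j'}\Big), \qquad \vSt{t}_j = \sum_{j'} Y^v_{j,j'}\Big(\sum_{a\in\Ut{t-1}}Y^u_{a,j'}\Big).
\]
For a fixed value of the recovered set $\Vt{t-1}$ and $l\notin\Ut{t-1}$, each summand $Y^u_{l,j'}\,\mathbbm{1}\{b_{j'}\in\Vt{t-1}\}$ is a product of two independent events of probabilities $1/k$ and $\abs{\Vt{t-1}}/(\beta k)$, hence is $\bern{\abs{\Vt{t-1}}/(\beta k^2)}$, and these are independent over $j'$; summing the $\sfA_i\beta k^2$ of them gives $\uSt{t}_l\sim\bin{\sfA_i\beta k^2}{\abs{\Vt{t-1}}/(\beta k^2)}$. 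The argument for $\vSt{t}_j$ is identical with the two sides exchanged, yielding $\vSt{t}_j\sim\bin{\sfA_i\beta k^2}{\abs{\Ut{t-1}}/(\beta k^2)}$, which is \eqref{eq:dist_num_singleton_edges_Stage_B_nonsym}. The ``not terminated after iteration $t-1$'' hypothesis enters only to guarantee that $\Ut{t-1},\Vt{t-1}$ are well defined, with $\abs{\Ut{t-1}}+\abs{\Vt{t-1}}=t$.

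The step I expect to be the main obstacle — and that is handled identically, and equally tersely, in the proof of Lemma \ref{lem:num_singletons} — is the passage from ``fixed $\Vt{t-1}$'' to the actual random set produced by the peeling process: $\Ut{t-1},\Vt{t-1}$ and the recovery order are themselves functions of the graph, so one must check that the edges incident to the not-yet-peeled left nodes remain ``fresh'' once we condition on the identities of the already-peeled nodes. Here I would use that in the alternative graph the right nodes are sampled independently of the left node chosen to initialise the peeling at $t=0$, together with a principle-of-deferred-decisions bookkeeping that exposes, at each iteration, only the edges touching the node being peeled; conditioned on that exposed information the remaining edges retain their i.i.d.\ uniform law, which is exactly what the above Bernoulli computation requires. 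Because this bookkeeping parallels the symmetric case line for line — the only change being that peeling a $\tbv_i$-node creates degree-$1$ right nodes pointing into the $\tbu_i$-side and vice versa — I would, as the paper does for Lemma \ref{lem:num_singletons}, carry the details there and here simply note that the same argument applies, hence omitting the proof.
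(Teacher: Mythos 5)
Your proposal is correct and follows essentially the same route the paper itself intends, since the paper omits the proof and points to Lemma~\ref{lem:num_singletons} as the template. You adapt the indicator decomposition $Y^u_{l,j'},Y^v_{b,j'}$ faithfully to the bipartite structure on the left side, correctly observe that uniformity on $[k]\times[\beta k]$ factorizes into two independent uniform coordinates, and handle the conditioning-on-the-peeling-history subtlety at exactly the level of rigor the paper employs in the symmetric case.
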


 In the following, we refer to the left nodes representing the nonzero entries of $\tbu_i$ as $\tu$-left nodes, and those representing  nonzero entries of $\tbv_i$ as $\tv$-left nodes. 
We assume without loss of generality that  the peeling algorithm is initialized by setting  a $\tu$-left node to 1. As  illustrated in Fig.  \ref{subfig:alt_graph_1B_non_sym_t=0}, following the recovery of the first $\tu$-left node, all the right nodes connected to this node reduce to degree-1. 
 The next lemma gives a high probability lower bound on the number of degree-1 right nodes created. 


\begin{lem}[Recovery of the first left node]
\label{lem:bound_num_nonempty_bins_B_nonsym}  
Denote by $N$ the number of distinct $\tv$-left nodes connected to the degree-1 right nodes created by the recovery of the first $\tu$-left node.  
Then, there exists $b_{k,\delta} = o(k^{1-\delta})$ such that for  $\sfA_i\ge \alphaUB_i$ and sufficiently large $k$, we have
\beq\label{eq:bound_num_nonempty_bins_B_nonsym}
\prob{N \ge\frac{\beta}{2} k^{1-\delta} - b_{k, \delta} \bigmid \mc{T}_0, \sfA_i}
\ge 1-2\exp\left(- \frac{\beta }{2} k^{1-\frac{3}{2}\delta} \right) .
\eeq
Here $\mc{T}_0$ is the event defined in \eqref{eq:T0_def}.
\end{lem}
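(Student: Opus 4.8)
The plan is to reduce \eqref{eq:bound_num_nonempty_bins_B_nonsym} to a classical balls-into-bins (occupancy) estimate and then invoke the negative-association tools of Section~\ref{subsec:prelim}. Work throughout conditioned on $\mc{T}_0$ and $\sfA_i \ge \alphaUB_i$, and assume without loss of generality (by the exchangeable role of the $\tu$-left nodes) that $\tu_1$ is the first left node recovered. The degree-$1$ right nodes produced by peeling $\tu_1$ are precisely the $\utZ_1$ right nodes incident to $\tu_1$, and on $\mc{T}_0$ we have $\utZ_1 \ge \tfrac{\sfA_i \beta k}{2}$ by Lemma~\ref{lem:initial_left_degrees_Stage_B_nonsym}. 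In the with-replacement model, conditioned only on $\utZ_1 = m$ the $\tv$-endpoints of these $m$ right nodes are i.i.d.\ uniform over the $\beta k$ $\tv$-left nodes; hence $N$ is exactly the number of occupied bins when $m$ balls are thrown independently and uniformly into $\beta k$ bins, and since $m \ge \tfrac{\sfA_i\beta k}{2} = \tfrac{\beta}{2}k^{1-\delta}(1 - o(1))$ we are in the regime $m/(\beta k) = O(k^{-\delta}) \to 0$.

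The first quantitative step is the mean: $\E[N \mid \utZ_1 = m] = \beta k\bigl(1 - (1 - 1/(\beta k))^m\bigr)$, which is increasing in $m$, and expanding $(1-x)^m$ at $x = 1/(\beta k)$ gives $\E[N\mid\utZ_1=m] = m - \tfrac{m^2}{2\beta k} + O\!\bigl(m^3/(\beta k)^2\bigr) = m - O(k^{1-2\delta})$; thus on $\mc{T}_0$, $\E[N\mid\utZ_1=m] \ge \tfrac{\beta}{2}k^{1-\delta} - o(k^{1-\delta})$. The second step is concentration: writing $N = \sum_{j=1}^{\beta k}\mathbbm{1}\{\text{bin }j\text{ occupied}\}$, the occupancy numbers of a balls-into-bins experiment are negatively associated \cite{dubhashi1998balls}, so by Lemma~\ref{lem:NA_closure}\ref{lem:NA_closure_mono_functions} the occupancy indicators are NA and the lower-tail Chernoff bound \eqref{eq:chernoff_lower_tail_bound_NA} applies. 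Taking $\epsilon = \Theta(k^{-\delta/4})$ makes the deviation $\epsilon\,\E[N] = O(k^{1-\frac{5}{4}\delta})$ negligible (to be folded into $b_{k,\delta}$) while producing an exponent $\tfrac{\epsilon^2\,\E[N\mid\utZ_1=m]}{2}$ of order $k^{1-\frac{3}{2}\delta}$; calibrating the constants so this exceeds $\tfrac{\beta}{2}k^{1-\frac{3}{2}\delta}$ yields, uniformly in $m \ge \tfrac{\sfA_i\beta k}{2}$, a bound $\prob{N < \tfrac{\beta}{2}k^{1-\delta} - b_{k,\delta} \mid \utZ_1 = m} \le \exp\bigl(-\tfrac{\beta}{2}k^{1-\frac{3}{2}\delta}\bigr)$ with $b_{k,\delta} = o(k^{1-\delta})$.

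It then remains to remove the conditioning and assemble the estimate. Averaging over $\utZ_1 \ge \tfrac{\sfA_i\beta k}{2}$ and using \eqref{eq:bounds_on_left_degrees_Stage_B_nonsym} to control $\prob{\utZ_1 < \tfrac{\sfA_i\beta k}{2} \mid \sfA_i} \le 2\exp(-\tfrac{\beta}{10}k^{1-\delta})$ (which is $o(\exp(-\tfrac{\beta}{2}k^{1-\frac{3}{2}\delta}))$ since $\delta < \tfrac{1}{2}$), one obtains an \emph{unconditional} bound $\prob{N < \tfrac{\beta}{2}k^{1-\delta} - b_{k,\delta} \mid \sfA_i} \le \exp(-\tfrac{\beta}{2}k^{1-\frac{3}{2}\delta})(1+o(1))$; finally $\prob{\,\cdot\mid\mc{T}_0\,} \le \prob{\,\cdot\,}/\prob{\mc{T}_0}$ together with $\prob{\mc{T}_0\mid\sfA_i} \ge 1 - o(1)$ from \eqref{eq:bound_left_degrees_B_nonsym} upgrades this to $2\exp(-\tfrac{\beta}{2}k^{1-\frac{3}{2}\delta})$ for large $k$, which is \eqref{eq:bound_num_nonempty_bins_B_nonsym}.

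The step I expect to be the main obstacle is exactly this conditioning. Conditioning on the full event $\mc{T}_0$ also constrains the $\tv$-side degrees $\vtZ_j$, which formally breaks the clean i.i.d.-uniform description of the $\tv$-endpoints of $\tu_1$'s edges. The ratio bound $\prob{\,\cdot\mid\mc{T}_0\,}\le\prob{\,\cdot\,}/\prob{\mc{T}_0}$ used above sidesteps this at the cost of only a $1+o(1)$ factor; a slightly sharper alternative is to condition on the entire degree sequence, use the configuration-model description to realize $N$ as the number of distinct values in a size-$\utZ_1$ \emph{without-replacement} sample from the nearly balanced population of $\tv$-half-edges (all multiplicities $\vtZ_j \in [\tfrac{\sfA_i k}{2},\tfrac{3\sfA_i k}{2}]$ on $\mc{T}_0$), and then invoke the permutation-distribution NA property (Lemma~\ref{lem:NA_closure}\ref{lem:perm_dist_are_NA}). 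The remaining pieces — the expansion of $\E[N]$ and the choice of $\epsilon$ — are routine and get absorbed into $b_{k,\delta} = o(k^{1-\delta})$.
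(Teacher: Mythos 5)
Your argument is correct, and it reaches the stated bound by a genuinely different concentration route. You and the paper both model $N$ as the number of nonempty bins when the $\utZ$ degree-1 right nodes created at $t=0$ are viewed as balls thrown independently and uniformly into the $\beta k$ $\tv$-bins, and both compute the same conditional mean $\E[N \mid \utZ, \sfA_i] = \beta k\bigl(1 - (1 - 1/(\beta k))^{\utZ}\bigr)$. Where you invoke negative association of the occupancy indicators together with the Chernoff bound of Lemma~\ref{lem:chern_hoeff_bound_NA}, the paper instead builds a Doob martingale $N_s = \E[N \mid V_1,\dots,V_s, \utZ, \sfA_i]$ over the $\tv$-endpoints $V_1,\dots,V_{\utZ}$ of $\tu_1$'s edges (conditionally i.i.d.\ uniform in the alternative graph), observes that $N$ is $1$-Lipschitz in $(V_1,\dots,V_{\utZ})$, and applies McDiarmid's inequality with deviation $k^{-\delta/4}\utZ$ to obtain \eqref{eq:concen_bound_on_num_nonempty_bins}. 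Both produce an exponent of order $k^{1-\frac{3}{2}\delta}$ once $\utZ \asymp k^{1-\delta}$, so they are essentially interchangeable; your version has the small advantage of staying within the NA machinery already set up in Section~\ref{subsec:prelim} rather than importing McDiarmid.

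You also correctly flag a genuine subtlety that the paper's proof glosses over: \eqref{eq:tail_prob_N0} is derived conditionally on $(\utZ, \sfA_i)$, using that $V_1,\dots,V_{\utZ}$ are i.i.d.\ uniform given $\utZ$, yet the conclusion \eqref{eq:bound_num_nonempty_bins_B_nonsym} is stated conditionally on $(\mc{T}_0, \sfA_i)$, and $\mc{T}_0$ also constrains the $\tv$-side degrees $\{\vtZ_j\}$ and hence perturbs the conditional law of $(V_1,\dots,V_{\utZ})$. Your ratio bound $\prob{\cdot \mid \mc{T}_0} \le \prob{\cdot}/\prob{\mc{T}_0}$, combined with $\prob{\mc{T}_0 \mid \sfA_i} = 1 - o(1)$ from \eqref{eq:bound_left_degrees_B_nonsym}, is the simplest repair and costs only a $1+o(1)$ factor absorbed by the slack in the exponent; the configuration-model alternative you sketch would make it airtight at the price of replacing the i.i.d.\ description by a without-replacement one.
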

\begin{proof}
Let $\utZ$ be the degree of the  first $\tu$-left node recovered. Then  the recovery of this $\tu$-left node  creates $\utZ$ degree-1 right nodes.  Consider a $\tv$-left node, say $\tv_j$, and let $\vtS_j$  be the number of degree-1 right nodes connected to $\tv_j$ after the first $\tu$-left node is peeled off. Then, conditioned  on $\utZ$, we have
\beq \label{eq:num_balls_in_one_nonempty_bin}
\vtS_j := \sum_{s=1}^{\utZ}\mathbbm{1}{\{ \text{The $s$-th degree-1 right node is connected to } \tv_j \}} \sim\bin{\utZ }{\frac{1}{\beta k}}, \quad \text{for }j\in[\beta k],
\eeq
and  we can express $N$ (defined in the lemma statement) as
\beq\label{eq:def_N_num_nonempty_bins}
N := \sum_{j=1}^{\beta k} \mathbbm{1}{\{\vtS_j\ge 1\}}.
\eeq

 Let $V_s$ be the index of the $\tv$-left node that the $s$-th degree-1 right node is connected to, for $s \in[ \utZ]$. Let $N_0 =\E[N \mid \utZ, \sfA_i]$ and 
\beq
N_s = \E[  N \mid V_1, \dots, V_s, \,  \utZ, \sfA_i ], \qquad   s \in[\utZ].
\eeq 
Conditioned on $\utZ$, the sequence $\{ N_s \}$ is a Doob Martingale, with $N_{\utZ} =N$. Writing $N= f(V_1, \dots, V_{\utZ})$, we note that $f$ is 1-Lipschitz since changing the connection of a single right node cannot change $N$ by more than one. Moreover, by the construction of the alternative graph, $V_1, \ldots, V_{\utZ}$ are independent given  ($\utZ, \sfA_i$). Therefore, by McDiarmid's inequality 
\cite[Sec. 13.5]{mitzenmacher2017probabilityAndComputing}, we have 
\begin{equation}
\label{eq:concen_bound_on_num_nonempty_bins}
\prob{\abs{N-\E[N \mid \utZ, \sfA_i]} \ge k^{-\frac{\delta}{4}} \utZ \, \bigmid \,  \utZ, \, \sfA_i} 
\le 2\exp\left(- 2k^{-\frac{\delta}{2}} \utZ\right).
\end{equation}
Hence
\begin{equation}
    \prob{ N \, \ge \, \E[N \mid \utZ, \sfA_i] - k^{-\frac{\delta}{4}}\utZ
    \, \bigmid \,  \utZ, \, \sfA_i} \ge 1- 2\exp\left(- 2k^{-\frac{\delta}{2}} \utZ\right).
    \label{eq:tail_prob_N0}
\end{equation}

From \eqref{eq:num_balls_in_one_nonempty_bin} and \eqref{eq:def_N_num_nonempty_bins}, we obtain that
\beq\label{eq:expected_num_nonempty_bins}
\E[N\mid \utZ, \sfA_i] = \beta k \,  \prob{ \vtS_j\ge 1 }= \beta k \left[1 -\left(1-\frac{1}{\beta k}\right)^{\utZ} \right] \ge 
\beta k \left( 1 - \exp\Big(-\frac{\utZ}{\beta k} \Big) \right).
\eeq 
Furthermore, given $\sfA_i \ge \alpha_i^* = k^{-\delta}  - o(k^{-\delta})$ and conditioning on $\mc{T}_0$, we have
\begin{equation}
\utZ \ge \frac{\beta}{2} k^{1-\delta} - o(k^{1-\delta}), \qquad \E[N\mid \utZ, \sfA_i] - k^{-\frac{\delta}{4}} \utZ \ge \frac{\beta}{2} k^{1-\delta} - o(k^{1-\delta}).
\label{eq:cond_exp_N}
\end{equation}
Using \eqref{eq:cond_exp_N} in \eqref{eq:tail_prob_N0} gives the result in \eqref{eq:bound_num_nonempty_bins_B_nonsym}.
\end{proof}

\begin{proof}[Proof of Lemma \ref{lem:non_sym_Stage_B}]
With $N$ as defined in Lemma 
\ref{lem:bound_num_nonempty_bins_B_nonsym}, at the end of iteration $N$ of the peeling algorithm, one $\tu$-left node and $N$ $\tv$-left nodes have been recovered. 
Adopting the notation in Lemma \ref{lem:num_singleton_edges_Stage_B_nonsym}, this means $\abs{ \Ut{N}} =1$ and $ \abs{ \Vt{N}} =N$. Lemma \ref{lem:num_singleton_edges_Stage_B_nonsym}  implies that at the start of iteration $(N+1)$, the number of degree-1 right nodes connected to each remaining $\tu$-left node  is
\beq\label{eq:num_singleton_edges_round3_B_nonsym}
\uSt{N+1}_l
\sim \bin{\sfA_i \beta k^2}{\frac{ N}{\beta k^2 }}, \quad\text{for }l\in [k]\setminus \Ut{N}.
\eeq
Recall the definition of $\mc{T}_0$ from \eqref{eq:T0_def} and let \begin{equation}
    \mc{T}_1:= \left\lbrace N \ge\frac{\beta}{2} k^{1-\delta} - b_{k,\delta}\right\rbrace, \qquad \mc{T}_2 :=  \left\{ \uSt{N+1}_l \ge 1, \ \forall \, l\in [k]\setminus \Ut{N} \right\}.
    \label{eq:T1_T2_def}
\end{equation}
Here $b_{k,\delta} = o(k^{1-\delta})$ is the same as in Lemma \ref{lem:bound_num_nonempty_bins_B_nonsym}. Note that $\mc{T}_2$ is the event that at the start of iteration $(N+1)$, each of the $(k-1)$ remaining $\tu$-left nodes is connected to at least one degree-1 right node and can therefore be recovered.  Analogously to \eqref{eq:p0_single} and \eqref{eq:prob_ge1_single}, we can show that for $\sfA_i\ge \alphaUB_i$ and sufficiently large $k$, 
\beq\label{eq:prob_bound_second_last_round_B_nonsym}
\prob{ \mc{T}_2 \mid \mc{T}_0, \,  \mc{T}_1,  \,  \sfA_i } >1- (k-1)\exp\left( -\frac{\beta}{4} k^{1-2\delta}\right).
\eeq
Since the recovery of each left node corresponds
to one iteration, conditioned on the event $\mc{T}_2$, we have $\abs{\Ut{N+k-1}} =k$ and $\abs{\Vt{N+k-1}} =N$. 

At the start of iteration $(N+k)$, all the  left nodes that remain are $\tv$-left nodes, so all the right nodes that remain are degree-1.  Moreover, for $\sfA_i \ge \alphaUB_i$ and sufficiently large $k$,  conditioning on event $\mc{T}_0$ ensures that each $\tv$-left node is connected to at least one right node. Therefore, all the remaining $\tv$-left nodes can be recovered. If the events $\mc{T}_0,  \mc{T}_1, \mc{T}_2$ all hold, then the algorithm successfully recovers all the nonzeros of  $\tbu_i$ and $\tbv_i$. 
Indeed, we have for $\sfA_i\ge \alphaUB_i$ and sufficiently large $k$:
\begin{align}\label{eq:success_prob_ul_is_first_recovered}
&\prob{ \{ \usfB_{\text{alt},i} =1 \} \, \cap\, \{ \vsfB_{\text{alt},i}  =1 \} \, \mid \, \sfA_i }  \nonumber\\
& \ge \prob{\mc{T}_0, \mc{T}_1, \mc{T}_2 \mid \sfA_i } \nonumber\\
& = \prob{\mc{T}_0 \mid \sfA_i } \prob{\mc{T}_1 \mid \mc{T}_0, \sfA_i }
\prob{\mc{T}_2 \mid \mc{T}_0, \mc{T}_1, \sfA_i} \nonumber\\
& \stackrel{\text{(\romannumeral 1)}}{>} 
\left[1- 2k \exp\left( -\frac{\beta}{10} k^{1-\delta} \right) 
- 2\beta k \exp\left( -\frac{1}{10} k^{1-\delta} \right) \right] 
\left[1-2\exp\left(- \frac{\beta}{2} k^{1-\frac{3}{2}\delta} \right) \right]  \nonumber  \\
 & \hspace{10pt} \cdot 
\left[1- (k-1)\exp\left( -\frac{\beta }{4} k^{1-2\delta}\right)\right]
 \nonumber\\
& >1- \exp\left(-\frac{\beta}{8} k^{1-2\delta}\right),
\end{align}
where the inequality $\rm{(i)}$ is obtained using the bounds in \eqref{eq:bound_left_degrees_B_nonsym}, \eqref{eq:bound_num_nonempty_bins_B_nonsym} and  \eqref{eq:prob_bound_second_last_round_B_nonsym}. This proves \eqref{eq:failure_prob_block_i_Stage_B_nonsym}, and via \eqref{eq:failure_prob_original_vs_alt_model}, completes the proof of Lemma \ref{lem:non_sym_Stage_B}.

\end{proof}


\section{Conclusion}
The main contribution of this paper is a sketching scheme for sparse, low-rank matrices and an algorithm for recovering the singular vectors of such a matrix with a sample complexity and running time that both depend only on the sparsity level $k$ and not on the ambient dimension $n$ of the matrix. 

A key open question in the noiseless setting is how to extend the two-stage recovery algorithm to  matrices where the singular vectors have overlapping supports. A starting point in this direction would be to consider matrices whose singular vectors have a small fraction of nonzero entries in overlapping  locations. This would ensure that a large fraction of the nonzero matrix entries recovered in stage A are still simple pairwise products. 
\edit{Moreover, we would like to improve the guarantees in part 2) of Theorems \ref{thm:main_result_symm} and \ref{thm:nonsym_main_results}, via a proof similar to that of part 1), using properties of negatively associated random variables. This would also give tighter non-asymptotic bounds for the compressed sensing scheme in \cite{li2019sublinear}. }

There are several open questions in the noisy setting where the matrix is only approximately sparse and low-rank. An important one is to improve the running time  of the recovery algorithm, which is currently $\bigo(\max\{n^2\log(n/k), (rk)^3\})$.  
This will require the sketching operator to be defined via a new bin detection matrix $\bS$ which enables  zeroton and singleton bins to be identified more efficiently. In our construction, for simplicity, we chose $\bS$ to be a random Gaussian matrix. For compressed sensing, \cite{li2019sublinear} proposed an $\bS$  based on LDPC codes that allows for faster classification of the bins via a message passing algorithm. It would be interesting to explore  similar ideas for  matrix sketching. Another future direction is to derive performance guarantees for the noisy setting, similar to the nonasymptotic bounds derived in the noiseless case. It is possible to obtain guarantees for the first stage of the algorithm, similarly to  \cite{li2019sublinear}, but the challenge lies in quantifying how the noisy recovery in the first stage affects the recovery accuracy in the second stage.

\section*{Acknowledgements}
The authors thank Prof. Kannan Ramchandran, \edit{Dr. Dong Yin and Dr. Orhan Ocal} for several helpful discussions about the compressed sensing scheme in \cite{li2019sublinear}, \edit{Dr. Samet Oymak for discussions regarding \cite{oymak2015simultaneously}, and Prof. Justin Romberg and Dr. Sohail Bahmani for sharing their source code for \cite{bahmani2016nearopt} and discussions regarding implementation details. The authors also thank Dr. Mark L. Stone and Dr. Stephen Becker for answering questions about CVX and TFOCS.}


\bibliographystyle{IEEEtran}
{\small \bibliography{sparse_isit22_bib} }

\begin{thebibliography}{10}
\providecommand{\url}[1]{#1}
\csname url@samestyle\endcsname
\providecommand{\newblock}{\relax}
\providecommand{\bibinfo}[2]{#2}
\providecommand{\BIBentrySTDinterwordspacing}{\spaceskip=0pt\relax}
\providecommand{\BIBentryALTinterwordstretchfactor}{4}
\providecommand{\BIBentryALTinterwordspacing}{\spaceskip=\fontdimen2\font plus
\BIBentryALTinterwordstretchfactor\fontdimen3\font minus
  \fontdimen4\font\relax}
\providecommand{\BIBforeignlanguage}[2]{{%
\expandafter\ifx\csname l@#1\endcsname\relax
\typeout{** WARNING: IEEEtran.bst: No hyphenation pattern has been}%
\typeout{** loaded for the language `#1'. Using the pattern for}%
\typeout{** the default language instead.}%
\else
\language=\csname l@#1\endcsname
\fi
#2}}
\providecommand{\BIBdecl}{\relax}
\BIBdecl

\bibitem{li2019sublinear}
X.~Li, D.~Yin, S.~Pawar, R.~Pedarsani, and K.~Ramchandran, ``Sub-linear time
  support recovery for compressed sensing using sparse-graph codes,''
  \emph{IEEE Transactions on Information Theory}, vol.~65, no.~10, pp.
  6580--6619, 2019.

\bibitem{bakshi2016SHOFA}
M.~{Bakshi}, S.~{Jaggi}, S.~{Cai}, and M.~{Chen}, ``Sho-fa: Robust compressive
  sensing with order-optimal complexity, measurements, and bits,'' \emph{IEEE
  Transactions on Information Theory}, vol.~62, no.~12, pp. 7419--7444, 2016.

\bibitem{woodruff2014sketching}
D.~P. Woodruff, ``Sketching as a tool for numerical linear algebra,''
  \emph{Foundations and Trends in Theoretical Computer Science}, vol.~10, no.
  1-2, pp. 1--157, 2014.

\bibitem{tropp2017practical}
J.~A. Tropp, A.~Yurtsever, M.~Udell, and V.~Cevher, ``Practical sketching
  algorithms for low-rank matrix approximation,'' \emph{SIAM Journal on Matrix
  Analysis and Applications}, vol.~38, no.~4, pp. 1454--1485, 2017.

\bibitem{ma2014learning}
X.~Ma, L.~Xiao, and W.~H. Wong, ``Learning regulatory programs by threshold svd
  regression,'' \emph{Proceedings of the National Academy of Sciences}, vol.
  111, no.~44, pp. 15\,675--15\,680, 2014.

\bibitem{ma2014adaptive}
Z.~Ma, Z.~Ma, and T.~Sun, ``Adaptive estimation in two-way sparse reduced-rank
  regression,'' \emph{Statistica Sinica}, vol.~30, no.~4, pp. 2179--2201, 2020.

\bibitem{zou2006sparse}
H.~Zou, T.~Hastie, and R.~Tibshirani, ``Sparse principal component analysis,''
  \emph{Journal of Computational and Graphical Statistics}, vol.~15, no.~2, pp.
  265--286, 2006.

\bibitem{hastie2015statistical}
T.~Hastie, R.~Tibshirani, and M.~Wainwright, \emph{Statistical Learning with
  Sparsity: the {L}asso and Generalizations}.\hskip 1em plus 0.5em minus
  0.4em\relax Chapman and Hall/CRC, 2015.

\bibitem{lee2018near}
K.~Lee, Y.~Wu, and Y.~Bresler, ``Near-optimal compressed sensing of a class of
  sparse low-rank matrices via sparse power factorization,'' \emph{IEEE
  Transactions on Information Theory}, vol.~64, no.~3, pp. 1666--1698, 2018.

\bibitem{girvan2002community}
M.~Girvan and M.~E.~J. Newman, ``Community structure in social and biological
  networks,'' \emph{Proceedings of the National Academy of Sciences}, vol.~99,
  no.~12, pp. 7821--7826, 2002.

\bibitem{jiang2004cluster}
{Daxin Jiang}, {Chun Tang}, and {Aidong Zhang}, ``Cluster analysis for gene
  expression data: a survey,'' \emph{IEEE Transactions on Knowledge and Data
  Engineering}, vol.~16, no.~11, pp. 1370--1386, 2004.

\bibitem{lee2010biclustering}
M.~Lee, H.~Shen, J.~Z. Huang, and J.~S. Marron, ``Biclustering via sparse
  singular value decomposition,'' \emph{Biometrics}, vol.~66, no.~4, pp.
  1087--1095, 2010.

\bibitem{richard2012estimation}
E.~Richard, P.-A. Savalle, and N.~Vayatis, ``Estimation of simultaneously
  sparse and low rank matrices,'' in \emph{Proceedings of the 29th
  International Conference on Machine Learning (ICML)}, 2012.

\bibitem{pawar2018FFAST}
S.~Pawar and K.~Ramchandran, ``Ffast: An algorithm for computing an exactly
  $k$-sparse {DFT} in $o( k\log k)$ time,'' \emph{IEEE Transactions on
  Information Theory}, vol.~64, no.~1, pp. 429--450, 2018.

\bibitem{janakiraman2015exploring}
N.~T. Janakiraman, S.~Emmadi, K.~Narayanan, and K.~Ramchandran, ``Exploring
  connections between sparse {F}ourier transform computation and decoding of
  product codes,'' in \emph{Proc. 53rd Annual Allerton Conference on
  Communication, Control, and Computing}, 2015, pp. 1366--1373.

\bibitem{li2015SPRIGHT}
X.~Li, J.~K. Bradley, S.~Pawar, and K.~Ramchandran, ``{SPRIGHT:} {A} fast and
  robust framework for sparse {W}alsh-{H}adamard transform,'' arXiv:1508.06336,
  2015.

\bibitem{scheibler2015Fast}
R.~Scheibler, S.~Haghighatshoar, and M.~Vetterli, ``A fast {H}adamard transform
  for signals with sublinear sparsity in the transform domain,'' \emph{IEEE
  Transactions on Information Theory}, vol.~61, no.~4, pp. 2115--2132, 2015.

\bibitem{chen2015robust}
X.~Chen and D.~Guo, ``Robust sublinear complexity {W}alsh-{H}adamard transform
  with arbitrary sparse support,'' in \emph{Proc. IEEE International Symposium
  on Information Theory}, 2015, pp. 2573--2577.

\bibitem{ShengSUPER14}
S.~Cai, M.~Bakshi, S.~Jaggi, and M.~Chen, ``Super: Sparse signals with unknown
  phases efficiently recovered,'' in \emph{Proc. IEEE International Symposium
  on Information Theory}, 2014, pp. 2007--2011.

\bibitem{pedarsani2017phaseCode}
R.~Pedarsani, D.~Yin, K.~Lee, and K.~Ramchandran, ``Phasecode: Fast and
  efficient compressive phase retrieval based on sparse-graph codes,''
  \emph{IEEE Transactions on Information Theory}, vol.~63, no.~6, pp.
  3663--3691, 2017.

\bibitem{pedarsani2015sparse}
R.~Pedarsani, K.~Lee, and K.~Ramchandran, ``Sparse covariance estimation based
  on sparse-graph codes,'' in \emph{53rd Annual Allerton Conference on
  Communication, Control, and Computing}, 2015, pp. 612--619.

\bibitem{li2015active}
X.~Li and K.~Ramchandran, ``An active learning framework using sparse-graph
  codes for sparse polynomials and graph sketching,'' in \emph{Advances in
  Neural Information Processing Systems}, vol.~28, 2015.

\bibitem{yin2019learning}
D.~Yin, R.~Pedarsani, Y.~Chen, and K.~Ramchandran, ``Learning mixtures of
  sparse linear regressions using sparse graph codes,'' \emph{IEEE Transactions
  on Information Theory}, vol.~65, no.~3, pp. 1430--1451, 2019.

\bibitem{luby2001efficient}
M.~G. {Luby}, M.~{Mitzenmacher}, M.~A. {Shokrollahi}, and D.~A. {Spielman},
  ``Efficient erasure correcting codes,'' \emph{IEEE Transactions on
  Information Theory}, vol.~47, no.~2, pp. 569--584, 2001.

\bibitem{richardson2008modern}
T.~Richardson and R.~Urbanke, \emph{\BIBforeignlanguage{eng}{Modern Coding
  Theory}}.\hskip 1em plus 0.5em minus 0.4em\relax New York: Cambridge
  University Press, 2008.

\bibitem{dubhashi1998balls}
D.~Dubhashi and D.~Ranjan, ``Balls and bins: A study in negative dependence,''
  \emph{Random Structures \& Algorithms}, vol.~13, no.~2, pp. 99--124, 1998.

\bibitem{davenport2016anoverview}
M.~A. Davenport and J.~Romberg, ``An overview of low-rank matrix recovery from
  incomplete observations,'' \emph{IEEE Journal of Selected Topics in Signal
  Processing}, vol.~10, no.~4, pp. 608--622, Jun 2016.

\bibitem{recht2010guaranteed}
B.~Recht, M.~Fazel, and P.~A. Parrilo, ``Guaranteed minimum-rank solutions of
  linear matrix equations via nuclear norm minimization,'' \emph{SIAM review},
  vol.~52, no.~3, pp. 471--501, 2010.

\bibitem{candes2011tight}
E.~J. {Cand{\`e}s} and Y.~{Plan}, ``Tight oracle inequalities for low-rank
  matrix recovery from a minimal number of noisy random measurements,''
  \emph{IEEE Transactions on Information Theory}, vol.~57, no.~4, pp.
  2342--2359, 2011.

\bibitem{wimalajeewa2013recovery}
T.~Wimalajeewa, Y.~C. Eldar, and P.~K. Varshney, ``Recovery of sparse matrices
  via matrix sketching,'' arXiv:1311.2448, 2013.

\bibitem{dasarathy2015sketching}
G.~Dasarathy, P.~Shah, B.~N. Bhaskar, and R.~D. Nowak, ``Sketching sparse
  matrices, covariances, and graphs via tensor products,'' \emph{IEEE
  Transactions on Information Theory}, vol.~61, no.~3, pp. 1373--1388, 2015.

\bibitem{oymak2015simultaneously}
S.~Oymak, A.~Jalali, M.~Fazel, Y.~C. Eldar, and B.~Hassibi, ``Simultaneously
  structured models with application to sparse and low-rank matrices,''
  \emph{IEEE Transactions on Information Theory}, vol.~61, no.~5, pp.
  2886--2908, 2015.

\bibitem{bahmani2016nearopt}
S.~Bahmani and J.~Romberg, ``Near-optimal estimation of simultaneously sparse
  and low-rank matrices from nested linear measurements,'' \emph{Information
  and Inference: A Journal of the IMA}, vol.~5, no.~3, pp. 331--351, 2016.

\bibitem{chen2015exact}
Y.~Chen, Y.~Chi, and A.~J. Goldsmith, ``Exact and stable covariance estimation
  from quadratic sampling via convex programming,'' \emph{IEEE Transactions on
  Information Theory}, vol.~61, no.~7, pp. 4034--4059, 2015.

\bibitem{bahmani2015sketching}
S.~Bahmani and J.~Romberg, ``Sketching for simultaneously sparse and low-rank
  covariance matrices,'' in \emph{IEEE 6th International Workshop on
  Computational Advances in Multi-Sensor Adaptive Processing}, 2015, pp.
  357--360.

\bibitem{candes2013PhaseLift}
E.~J. Cand{\`e}s, T.~Strohmer, and V.~Voroninski, ``Phaselift: Exact and stable
  signal recovery from magnitude measurements via convex programming,''
  \emph{Communications on Pure and Applied Mathematics}, vol.~66, no.~8, pp.
  1241--1274, 2013.

\bibitem{johnstone2009consistency}
I.~M. Johnstone and A.~Y. Lu, ``On consistency and sparsity for principal
  components analysis in high dimensions,'' \emph{Journal of the American
  Statistical Association}, vol. 104, no. 486, pp. 682--693, 2009.

\bibitem{AminiWain09}
A.~A. Amini and M.~J. Wainwright, ``{High-dimensional analysis of semidefinite
  relaxations for sparse principal components},'' \emph{The Annals of
  Statistics}, vol.~37, no.~5B, pp. 2877 -- 2921, 2009.

\bibitem{BirnbaumSPCA13}
A.~Birnbaum, I.~M. Johnstone, B.~Nadler, and D.~Paul, ``{Minimax bounds for
  sparse PCA with noisy high-dimensional data},'' \emph{The Annals of
  Statistics}, vol.~41, no.~3, pp. 1055 -- 1084, 2013.

\bibitem{ahmed2014blind}
A.~Ahmed, B.~Recht, and J.~Romberg, ``Blind deconvolution using convex
  programming,'' \emph{IEEE Transactions on Information Theory}, vol.~60,
  no.~3, pp. 1711--1732, 2014.

\bibitem{cvx}
M.~Grant and S.~Boyd, ``{CVX}: Matlab software for disciplined convex
  programming, version 2.1,'' \url{http://cvxr.com/cvx}, Mar. 2014.

\bibitem{grant2008graph}
------, ``Graph implementations for nonsmooth convex programs,'' in
  \emph{Recent Advances in Learning and Control}, ser. Lecture Notes in Control
  and Information Sciences, V.~Blondel, S.~Boyd, and H.~Kimura, Eds.\hskip 1em
  plus 0.5em minus 0.4em\relax Springer-Verlag Limited, 2008, pp. 95--110.

\bibitem{mosek}
\BIBentryALTinterwordspacing
\emph{MOSEK Optimization Toolbox for MATLAB 10.0.25}, Oct. 2022. [Online].
  Available: \url{https://docs.mosek.com/latest/toolbox/index.html}
\BIBentrySTDinterwordspacing

\bibitem{tfocs}
\BIBentryALTinterwordspacing
S.~Becker, E.~Candes, and M.~Grant, \emph{TFOCS user guide Version 1.3 release
  2}, Oct. 2014. [Online]. Available: \url{http://cvxr.com/tfocs/doc}
\BIBentrySTDinterwordspacing

\bibitem{anderson2009introduction}
G.~W. Anderson, A.~Guionnet, and O.~Zeitouni, \emph{An Introduction to Random
  Matrices}, ser. Cambridge Studies in Advanced Mathematics.\hskip 1em plus
  0.5em minus 0.4em\relax Cambridge University Press, 2009.

\bibitem{yang2011alternating}
J.~Yang and Y.~Zhang, ``Alternating direction algorithms for $\ell_1$-problems
  in compressive sensing,'' \emph{SIAM Journal on Scientific Computing},
  vol.~33, no.~1, pp. 250--278, 2011.

\bibitem{yall1}
\BIBentryALTinterwordspacing
Y.~Zhang, W.~Deng, J.~Yang, and W.~Yin, \emph{YALL1: Your ALgorithms for L1},
  Jul. 2011. [Online]. Available: \url{https://yall1.blogs.rice.edu}
\BIBentrySTDinterwordspacing

\bibitem{cover2006elements}
T.~M. Cover and J.~A. Thomas, \emph{Elements of Information Theory}.\hskip 1em
  plus 0.5em minus 0.4em\relax USA: Wiley-Interscience, 2006.

\bibitem{joagDev1983negative}
K.~Joag-Dev and F.~Proschan, ``{Negative Association of Random Variables with
  Applications},'' \emph{The Annals of Statistics}, vol.~11, no.~1, pp. 286 --
  295, 1983.

\bibitem{mitzenmacher2017probabilityAndComputing}
M.~Mitzenmacher and E.~Upfal, \emph{Probability and Computing: Randomization
  and Probabilistic Techniques in Algorithms and Data Analysis}, 2nd~ed.\hskip
  1em plus 0.5em minus 0.4em\relax USA: Cambridge University Press, 2017.

\end{thebibliography}
\end{document}